\title{Exact Separation Logic (Extended Version)}
\titlerunning{Exact Separation Logic}
\author
	{Petar Maksimovi\'c}
	{Imperial College London, United Kingdom \and Runtime Verification Inc., United States of America}{}{}{}
\author
	{Caroline Cronj\"ager}
	{Ruhr-Universit\"at Bochum, Germany}{}{}{}
\author
	{Andreas L\"o\"ow}
	{Imperial College London, United Kingdom}{}{}{}
\author
	{Julian Sutherland}
	{Nethermind, United Kingdom}{}{}{}
\author
	{Philippa Gardner}
	{Imperial College London, United Kingdom}{}{}{}
\authorrunning{P.~Maksimovi\'c, C.~Cronj\"ager, A.~L\"o\"ow, J.~Sutherland, and P.~Gardner} 
\keywords{Separation logic, program correctness, program incorrectness, abstraction}
\lstdefinestyle{code}{
    commentstyle=\color{ForestGreen},
    keywordstyle=\color{Blue},
    stringstyle=\color{OliveGreen},
    basicstyle=\ttfamily\footnotesize,
    showstringspaces=false,
    upquote=true,
    numbers=left,
    numberstyle=\scriptsize\color{Gray},
    frame=lines
}
\lstdefinelanguage{wisl}{
  morekeywords={predicate, +, lemma, proof, apply, statement, 
                forall, if, assert, bind, unfold, function, return},
  keywordstyle=\color{Blue}\bfseries,
  sensitive=false,
  comment=[l]{//}
}
\lstdefinelanguage{ocaml}{
 language=caml,
 columns=[c]fixed,
 keywordstyle=\bfseries,
 upquote=true,
 commentstyle=,
 breaklines=true,
 showstringspaces=false,
 stringstyle=\color{blue},
 literate={'"'}{\textquotesingle "\textquotesingle}3
}
\begin{document}

\maketitle

\begin{abstract}
Over-approximating (OX) program logics, such as separation logic (SL), are used for {\it verifying} properties of heap-manipulating programs: all terminating behaviour is characterised, but established results and errors need not be~reachable. OX function specifications are thus incompatible with true bug-finding supported by symbolic execution tools such as Pulse and Pulse-X. In contrast, under-approximating (UX) program logics, such as incorrectness separation logic, are used to {\it find} true results and bugs: established results and errors are reachable, but there is no mechanism for understanding if all terminating behaviour has been characterised.

We introduce exact separation logic (ESL), which provides fully-verified function specifications compatible with both OX verification and UX true bug-funding: all terminating behaviour is characterised and all established results and errors are reachable. We prove soundness for ESL with mutually recursive functions, demonstrating, for the first time, function compositionality for a UX logic. We show that UX program logics require subtle definitions of internal and external function specifications compared with the familiar definitions of OX logics. We investigate the expressivity of ESL and, for the first time, explore the role of abstraction in UX reasoning by verifying abstract ESL specifications of various data-structure algorithms. In doing so, we highlight the difference between {\it abstraction} (hiding information) and {\it over-approximation} (losing information). Our findings demonstrate that abstraction cannot be used as freely in UX logics as in OX logics, but also that it should be feasible to use ESL to provide tractable function specifications for self-contained, critical code, which would then be used for both verification and true bug-finding.
\end{abstract}


\newcommand{\even}{\predd{even}}
\newcommand{\odd}{\predd{odd}}
\newcommand{\llmeas}{\mathbin{3|\lstvs|+2}}
\newcommand{\fmeas}{\mathbin{3|v:\lstvs'|+(v \bmod 2)}}
\newcommand{\gmeas}{\mathbin{3|v:\lstvs'|+1-(v \bmod 2)}}

\newcommand{\strlit}[1]{\mathtt{``#1"}}

\newcommand{\nlist}{\predd{list_{\Nat}}}

\section{Introduction}

Over-approximating (OX) program logics were introduced to reason about program correctness, starting with Hoare logic~\cite{hoare} and evolving to separation logic (SL) \cite{seplogic,reyseplogic}. 
SL is used for {\it verification} and features function specifications of the form $\triple P {f( \pvvar x)} Q$, the meaning of which is that all terminating executions of the function $f$ that start from a state in the pre-condition $P$ end in a state covered by the post-condition $Q$. 
SL has the standard  rule of  {\it forward consequence}, which allows one to {\it lose information}  (for example, if we had a post-condition with $x = 42$, we could soundly weaken this precise information to the less precise $x > 0$ or even to the non-informative $\true$).
In essence, the philosophy underlying the OX approach in general can be stated as: 
\begin{center}
\it no paths can be cut, but information can be lost.
\end{center} 

A key property of SL is that function specifications are {\it compositional}, enabling {\it scalable} reasoning about the heap. This is due to their {\it locality}, which allows the 
pre-condition to describe only the partial state sufficient for the function to execute,
and the {\it frame} property, which allows the function to be called in any larger state. 
SL function specifications have been used for verification of complex, real-world code in tools such as VeriFast~\cite{verifast}, Iris~\cite{jung:popl:2015}, and Gillian~\cite{gillianpldi,gilliancav}. However, given that their post-conditions may describe states that are not reachable from their pre-conditions, such OX specifications are not compatible with true bug-finding, as found, for example, in Meta's Pulse~\cite{isl} and Pulse-X~\cite{Le22} tools.

Under-approximating (UX) program logics were recently introduced, originating from reverse Hoare logic~(RHL)~\cite{reverselogic} 
for reasoning about correctness of probabilistic programs, and coming to prominence with incorrectness logic~\cite{il} and incorrectness separation logic (ISL)~\cite{isl}, which identified
their bug-finding potential. ISL function specifications are of the form $\isltripleok P {f(\pvvar x)} \Qok$ and $\isltripleerr P {f(\pvvar x)} \Qerr$, the meaning of which is that any state in the success post-condition $\Qok$ or the error post-condition $\Qerr$ is reachable from some state in the pre-condition $P$ by executing the function $f$; this guarantees that all results and bugs reported in the post-conditions will be true. In contrast to SL, ISL uses the rule of {\it backward consequence}, which allows one to {\it cut paths} (for example, if we had a post-condition with $x > 0$, we could soundly strengthen this information to consider only the path in which $x = 42$). Therefore, the philosophy underlying UX logics in general can be summarised as:
\begin{center}
\it paths can be cut, but no information can be lost.
\end{center}

When it comes to the use of ISL function specifications, whilst this has been implemented in Pulse-X, as far as we are aware, ISL does not feature function-call rules, 
and function compositionality for ISL and  UX logics has not been proven. Moreover, as it is not possible to determine if UX specifications cover all terminating behaviour, they remain incompatible with verification and cannot therefore be used in tools such as VeriFast, Iris, and~Gillian.

Our challenge is to develop a program logic in which we can state and prove function specifications that are compatible with {\it both} verification and true bug-finding.
Our motivation comes from the unique flexibility and expressivity that such specifications would provide, as they could be used by verification and bug-finding tools alike, closing the gap between these two contrasting paradigms. From our experience in program logics and associated tool-building, we believe that the main use case for exact specification should be self-contained, critical code, such as widely-used data-structure libraries. 

We introduce exact separation logic (ESL), with {\it exact} (EX)  function specifications of the form $\uquadruple P {f(\pvvar x)} \Qok \Qerr$, whose meaning combines that of SL and ISL specifications:
all terminating executions of the function that start from a state in the pre-condition $P$ end in a state covered by the post-conditions; {\it and} all states in two post-conditions are reachable from a state in the pre-condition by executing the function. The exactness of ESL can be captured by the~slogan:
\begin{center}
\it no paths can be cut and no information can be lost.
\end{center}
The slogan is supported by the rule of {\it equivalence}, which combines the forward consequence of SL and the backward consequence of ISL. In fact, ESL proof rules form a common core of SL and ISL, and ESL should therefore be a familiar setting to those acquainted with either.

We prove soundness for ESL with mutually recursive functions, which we believe is the first proof of  function compositionality for a  UX logic, and which transfers immediately to ISL. In doing so, drawing inspiration from InsecSL~\cite{insecsl}, we provide formal definitions of {\it external} and {\it internal} function specifications, which describe, respectively, the interface a function exposes towards its clients and towards its implementation, and highlight the difference in complexity between these two types of specifications in OX and UX reasoning.

Using numerous examples, we demonstrate (in the main text and in Appendix~\ref{apdx:examples}) how ESL can be used to reason about data-structure libraries, language errors, mutual recursion, and non-termination. In doing so, we introduce, for the first time, abstract predicates to UX reasoning and provide abstract function specifications for a number of data-structure algorithms, focussing on singly-linked lists and binary trees. In doing so, we highlight an important difference between the concepts of abstraction and over-approximation: in particular, abstraction corresponds to {\it hiding} information whereas over-approximation corresponds to {\it losing} it. Our findings demonstrate that, while abstraction cannot be used as freely in UX logics as in OX logics, sometimes resulting in less abstract specifications and more complex proofs, it should be feasible to  use ESL to provide  tractable function specifications  for  self-contained, critical code that can then be used for both verification and true bug-finding. 

\section{Exact Separation Logic by Example}
\label{sec:overview}

%
We guide the reader through what it means to write ESL specifications and proofs by intuition and example, contrasting our findings with those known from SL and ISL. 

%

\subparagraph*{Illustrative Example.} 
Consider the command $C \defeq \pifelses{ \pvar{x} > 0 }  \passign{\pvar{y}}{42}  \pifelsem \passign{\pvar{y}}{21} \}$, which can be specified, starting from the pre-condition $\pvar x \in \mathbb{Z}$, in ESL, SL, and ISL as follows: \\[-2.5mm]
\begin{minipage}{0.33\textwidth}
\[
\begin{array}{l}
\especlines{\pvar x \in \mathbb{Z} } \\
\pifelses{ \pvar{x} > 0 } \\
\tab\especlines{\pvar x > 0} \\ 
\tab\passign{\pvar{y}}{42} \\
\tab\especlines{Q_1 : \pvar x > 0 \land \pvar y \doteq 42} \\
 \pifelsem \\
\tab\especlines{\pvar x \leq 0 } \\ 
\tab\passign{\pvar{y}}{21} \\
\tab\especline{Q_2: \pvar x \leq 0 \land \pvar y \doteq 21} \\
\pifelsee\\
\especlines{Q_1 \lor Q_2} 
\end{array}
\]
\end{minipage}
\begin{minipage}{0.33\textwidth}
\[
\begin{array}{l}
\specline{\pvar x \in \mathbb{Z} } \\
\pifelses{ \pvar{x} > 0 } \\
\tab\ldots \\
\tab\polish{\text{// Same as ESL}} \\
\tab\ldots \\
\pifelsee\\
\specline{Q_1 \lor Q_2} \\
\polish{\text{// Losing information }} \\
\specline{\pvar x \in \mathbb{Z} \land \pvar y > 0}
\end{array}
\]
\end{minipage}
\begin{minipage}{0.33\textwidth}
\[
\begin{array}{l}
\islspeclineok[blue]{\pvar x \in \mathbb{Z} } \\
\pifelses{ \pvar{x} > 0 } \\
\tab\islspeclineok[blue]{\pvar x > 0} \\ 
\tab\passign{\pvar{y}}{42} \\
\tab\islspeclineok[blue]{Q_1: \pvar x > 0 \land \pvar y \doteq 42} \\
 \pifelsem~\passign{\pvar{y}}{21}~\pifelsee\\
 \polish{\text{// Path cutting}} \\
\islspeclineok[blue]{\pvar x > 0 \land \pvar y \doteq 42} 
\end{array}
\]
\end{minipage} \\

As ESL specifications  must neither cut paths nor lose information (in this example, about the values of $\pvar x$ and $\pvar y$), the ESL post-condition of $C$ must be equivalent to  $(\pvar x > 0 \land \pvar y \doteq 42) \lor ( \pvar x \leq 0 \land \pvar y \doteq 21 )$. In SL, it is possible to use forward consequence to weaken this information and obtain, for example,  $\pvar x \in \mathbb{Z} \land \pvar y > 0$, or just $\pvar x \in \mathbb{Z}$, or even just $\true$. 
In ISL, it is possible to cut, for example, the $\mathtt{else}$ branch of the $\mathtt{if}$ statement, but the values of $\pvar x$ and $\pvar y$ must be maintained  in the post-condition of the $\mathtt{then}$ branch, $\pvar x > 0 \land \pvar y \doteq 42$. 

One question that we have been often asked is whether it is simpler to prove an exact specification $\uquadruple {P} {\cmd}  {\Qok} {\Qerr}$ in ESL,  or to prove it separately in SL and ISL. The answer is that it is simpler to prove the specification in ESL.  If a specification is exact, then  it does not cut paths and it does not lose information. Therefore, the tools that make SL and ISL proofs simpler than ESL proofs, namely forward consequence and backward consequence, can only be used in very limited ways, if at all.
From our experience, the ISL proof of an exact specification will turn out to be almost identical to the ESL one, and an SL proof on top of that would duplicate a large part of the work. In fact, if one were to try to prove the exact specification $\utripleq {\pvar x \in \mathbb{Z}} C {(\pvar x > 0 \land \pvar y \doteq 42) \lor ( \pvar x \leq 0 \land \pvar y \doteq 21 )}$ from the above example in either SL or ISL, they would obtain exactly the same proof as in ESL. 

We also emphasise that ESL is not meant to replace either SL or ISL. If one is interested in only verification or only bug-finding, then one should use a formalism tailored to that type of analysis to  exploit the available shortcuts. However, if one wanted to use the same codebase for both  verification and bug-finding, then ESL offers a way of providing specifications useful for both. One example of such a codebase would be a widely-used data-structure library, where some of the users use it for verification and others for bug-finding.

\subparagraph*{List-length in ESL: Intuition.}
We consider a list-length function, $\mathtt{LLen}(\pvar x)$, which takes a list at~$\pvar x$, does not modify it, and returns its length, and the following
ESL specification:
\[
\utripleok{\pvar x\doteq x \lstar \llist{x,n} }{~\mathtt{LLen}(\pvar x)~}{\llist{x,n} \lstar\pvar{ret}\doteq n}
\]
This specification uses a standard list-length predicate, $\llist{x, n}$, which states that the length of the list at $x$ equals $n$ and is defined as follows:
\[
	\llist{x, n} \defeq (x\doteq\nil \lstar  n\doteq0) \lor (\exsts{v,x'} x\mapsto v,x' \lstar\llist{x',n - 1} ),
\]
      hiding the information about the values and internal node addresses of the list. Before proving this specification, we establish some intuition about why it holds. Let us assume that it does not hold and try to find a counter-example: by the meaning of ESL specifications, it is either not OX-valid or it is not UX-valid. The former, however, is not possible, as the analogous SL specification holds. The latter means that it is possible to 
find a state in the post-condition not reachable by the execution of $f$ from any state in the pre-condition, and may be unfamiliar to the reader as UX program logics have been introduced only recently.

We start looking for such a state in the post-condition (post-model) by choosing some values for $x$ and $n$: say, $x = 0$ and $n = 2$. This also fixes $\pvar{ret}$ to 2.
Then, we fully unfold $\llist{0,2}$ to obtain $\exists v_1, x_1, v_2.~0 \mapsto v_1, x_1 \lstar x_1 \mapsto v_2, \nil$, and instantiate the existentials $v_1$, $x_1$, and $v_2$: say, with $1$, $4$, and $9$, respectively. In this way, we obtain the state described by the assertion $0 \mapsto 1, 4 \lstar 4 \mapsto 9, \nil$.
When it comes to the pre-condition, $x$ and $n$ (and also $\pvar x$) are fixed by the post-model choices, and when we unfold the list, the pre-condition becomes $\pvar x = 0 \lstar \exists v_1, x_1, v_2.~0 \mapsto v_1, x_1 \lstar x_1 \mapsto v_2, \nil$. As the algorithm does not modify the list, it becomes clear that if we choose $v_1$, $x_1$, and $v_2$ as for the post-model (that is, 1, 4, and 9, respectively), the algorithm will reach our post-model. Given that the same reasoning would apply for any choice of $x$ and $n$, we realise that the given specification is, in fact, also UX-valid and hence exact. This reveals an important observation, which is that
\begin{center}
\it abstraction does not always equate to over-approximation, that is,\\hiding information does not always mean losing information.
\end{center}

For those used to OX reasoning, it might appear that the post-condition $\llist{x,n} \lstar\pvar{ret}\doteq n$ loses information about the structure of the list, but the insight here is that this information was never known in the pre-condition in the first place, as we also only had $\llist{x,n}$ there.



\subparagraph*{List-length in ESL: Proof Sketch.}
Reasoning about function specifications in the UX/EX setting has not been studied previously and requires subtle definitions of {\it external function specifications}, which provide the interface that the function exposes to the client, and {\it internal function specifications}, which provide the interface to the function’s implementation. With OX logics, these are well-understood and the gap between them is small. For UX/EX logics, this gap is larger. We illustrate these concepts informally using the list-length example, and give the corresponding formal definitions in~\S\ref{ch:proofsys}.

The proof sketch of the ESL external specification of the list-length algorithm is given in Figure~\ref{fig:esl-list-length}.
It is more complex than its SL counterpart (see Appendix~\ref{apdx:sl-list-length}), but is manageable and comes with the benefit that this ESL specification can be used for both verification and~bug-finding.

First, as the function is recursive, we have to provide a measure and prove the specification extended with this  measure: in this case, the measure is $\alpha = n$, given by the length of the list. This measure is necessary to ensure the finite reachability property for mutually recursive functions in UX logics, and is a known technique from the work on total correctness specifications for OX logics~\cite{Floyd1967Flowcharts,total-tada}. Recursive function calls are then allowed only if they use specifications of a strictly smaller~measure, represented in the proof sketch by the function specification context $\Gamma (\alpha)$, which contains the specification of $\mathtt{LLen}(\pvar x)$ for all $\beta < \alpha$. 

The move from the external to the internal pre-condition initialises the local function variables to $\nil$. The ESL rule for the $\mathtt{if}$ statement, just like in~SL, adds the condition to the then-branch, its negation to the else-branch, and collects the branch post-conditions using disjunction. The rules for the basic commands (here, the assignments $\passign{\pvar r}0$ and $\passign{\pvar r}{\pvar r + 1}$ and the lookup $\passign{\pvar{x}}{[\pvar x + 1]}$) are also the same as in SL, as these are already exact. 
The unfolding of the list is also done in the same way, as unfolding always preserves equivalence; note how the condition of the $\mathtt{if}$ statement determines the appropriate disjunct for the list predicate. The recursive function call is allowed to go through as it is used with measure $n - 1$ (with the parts of the assertion representing the pre- and the post-condition highlighted).

The major difference between ISL/ESL and SL proofs is that we cannot lose
information about the function parameters and local variables in the middle of the former. Therefore, we cannot simplify the assertions $Q'_1$ and $Q'_2$ further and cannot fold back the list predicate within the internal specification, as we would do in SL (cf. corresponding proof in Appendix~\ref{apdx:sl-list-length}).

\begin{figure}[!t]
\[
\small
\begin{array}{r@{~}l}
& \hspace*{-1.15cm} \polish{\text{// Function is recursive and requires a measure: $\alpha \doteq n$}} \\
\Gamma (\alpha) \vdash &
\especlines{\pvar x \doteq x \lstar \llist{x, n} \lstar \alpha \doteq n} \\
& \mathtt{LLen}(\pvar x)~\{ \\
& \tab\polish{\text{// Transition from external to internal pre-condition: initialise locals  to $\nil$}} \\
& \tab\especlines{\pvar x \doteq x \lstar \llist{x, n} \lstar \alpha \doteq n \lstar \pvar r \doteq \nil } \\
& \tab\pifelses{ \pvar{x} = \nil } \\
& \tab\tab\especlines{\pvar x \doteq x \lstar \llist{x, n} \lstar \alpha \doteq n \lstar \pvar r\doteq\nil \lstar {\pvar x \doteq \nil}} \\ 
& \tab\tab\passign{\pvar{r}}{0} \\
& \tab\tab\especlines{Q'_1 : \pvar x \doteq x \lstar \llist{x, n} \lstar \alpha \doteq n \lstar {\pvar r \doteq 0}\lstar \pvar x \doteq \nil } \\
& \tab \pifelsem \\
& \tab\tab\especlines{\pvar x \doteq x \lstar \shade{\llist{x,n}} \lstar \alpha \doteq n \lstar \pvar r\doteq\nil \lstar \shade{\pvar x \dotneq \nil}} \\ 
& \tab\tab\text{\polish{// Unfold $\llist{x, n}$ using the equivalence }} \\
& \tab\tab\text{\polish{// $\models \llist{\pvar x, n} \lstar \pvar x \dotneq \nil \Leftrightarrow \exists v, x'.~\pvar x \mapsto v, x' \lstar \llist{x', n-1}$}} \\
& \tab\tab\especlines{\shade{\exists v, x'}.~\pvar x \doteq x \lstar  \shade{x \mapsto v, x' \lstar \llist{x', n-1}} \lstar \alpha \doteq n \lstar \pvar r \doteq \nil} \\ 
& \tab\tab\passign{\pvar{x}}{[\pvar x + 1]}; \\
& \tab\tab\especlines{\exists v, x'.~\pvar x \doteq x' \lstar  x \mapsto v, x' \lstar \llist{x', n-1} \lstar \alpha \doteq n \lstar \pvar r \doteq \nil} \\ 
& \tab\tab\text{\polish{// As $\alpha - 1 < \alpha$, we can use the specification of $\mathsf{LLen}(\pvar x)$ with measure $\alpha - 1$}} \\
& \tab\tab\especlines{\exists v, x'.~\shade{\pvar x \doteq x'} \lstar  x \mapsto v, x' \lstar \shade{\llist{x', n-1} \lstar \alpha - 1 \doteq n - 1} \lstar \pvar r \doteq \nil} \\ 
& \tab\tab\passign{\pvar{r}}{\mathtt{LLen}(\pvar x)}; \\
& \tab\tab\especlines{\exists v, x'.~\pvar x \doteq x' \lstar  x \mapsto v, x' \lstar \llist{x', n-1} \lstar  \alpha - 1 \doteq n - 1 \lstar \shade{\pvar r \doteq n -1}} \\ 
& \tab\tab\passign{\pvar{r}}{\pvar r + 1} \\
& \tab\tab\especlines{Q_2' : \exists v, x'.~\pvar x \doteq x' \lstar  x \mapsto v, x' \lstar \llist{x', n-1} \lstar  \alpha - 1 \doteq n - 1 \lstar \pvar r \doteq n} \\ 
& \tab\pifelsee;\\
& \tab\especlines{Q': Q_1' \lor Q_2'} \\
& \tab\preturn{\pvar r} \\
& \tab\especlines{ Q' \lstar \pvar{ret} = \pvar r} \\
& \tab\polish{\text{// Transition from internal to external post-condition given in text}} \\
& \} \\
& \especlines{\llist{x, n} \lstar \pvar{ret} \doteq n \lstar \alpha = n}
\end{array}
\]
\caption{ESL proof sketch: $\utripleok{\pvar x\doteq x \lstar \llist{x,n} }{~\mathtt{LLen}(\pvar x)~}{\llist{x,n} \lstar\pvar{ret}\doteq n}$.}%
\label{fig:esl-list-length}
\end{figure}

The most complex part of the proof sketch is the transition from the internal to the external post-condition, in which we have to somehow forget the local variables of the function, given that they must not spill out into the calling context. This is done by replacing them with fresh, existentially quantified logical variables, which in this case also allows us to use equivalence to fold back the list predicate and reach the target post-condition. The details of this transition, in which we denote $\pvar{ret} \doteq n \lstar \alpha \doteq n$ by $R$, are as follows:
\[
\begin{array}{@{}l}
\exsts{x_q, r_q} Q'[x_q/ \pvar x][r_q / \pvar r]\lstar\pvar{ret}\doteq \pvar r[x_q/ \pvar x][r_q / \pvar r] \\[0.3mm]
 \quad \Leftrightarrow  ((x \doteq \nil \lstar n \doteq 0) \lor 
		   (\exists x_q, r_q, v, x'.~x_q \doteq x' \lstar  x \mapsto v, x' \lstar \llist{x', n-1} \lstar  r_q \doteq n)) \lstar R
	   \\
 \quad \Leftrightarrow  ((x \doteq \nil \lstar n \doteq 0) \lor 
		   (\exists v, x'.~x \mapsto v, x' \lstar \llist{x', n-1})) \lstar R ~\text{\polish{// can fold now}} \\
 \quad \Leftrightarrow  \llist{x, n} \lstar (n \doteq 0 \lor n \dotgt 0) \lstar R \\
 \quad \Leftrightarrow \llist{x, n} \lstar \pvar{ret} \doteq n \lstar \alpha \doteq n
\end{array}
\]

Observe that, since we are proving an EX specification, we are not allowed to cut paths. This means that the ISL proof of the analogous ISL specification of $\texttt{LLen} (\pvar x)$ would be identical, noting that the use of equivalence would technically be replaced by backward consequence.

\subparagraph*{List-insert in ESL: Intuition.} The list-length function, $\texttt{LLen} (\pvar x)$, is an example of an algorithm where the EX specification is analogous to the  traditional OX specification. At times, however, ESL specifications have to be more complex.
Consider, for example, the list-insert algorithm $\texttt{LInsertFirst}(\pvar x, \pvar v)$, which inserts the element~$\pvar v$ at the beginning of the list $\pvar x$. Its traditional OX specification~is:
\[
\tripleq{\pvar x\doteq x \lstar \pvar v \doteq v \lstar \llist{x, \lstvs} }{~\mathtt{LInsertFirst}(\pvar x, \pvar v)~}{\llist{\pvar{ret}, v \cons \lstvs}}
\]
where $\llist{x, \lstvs}$ is the standard list predicate that exposes the values of the list:
\[ \llist{x, \lstvs} \defeq (x\doteq\nil \lstar \lstvs \doteq \emplist) \lor (\exsts{v, x', \lstvs'} x \mapsto v, x' \lstar\llist{x',\lstvs'}\lstar \lstvs \doteq v \cons \lstvs') \]
Using the counter-example approach to check if this specification is EX-valid, we easily see that it loses information: in particular, no end-state where $x$ is not the \textit{second} pointer in the returned list $\pvar{ret}$ is reachable from the given pre-condition.
Consequently, for EX validity, we are required to use the following, less abstract, ESL specification for $\texttt{LInsertFirst}$:
\[
\utripleq{\pvar x\doteq x \lstar \pvar v \doteq v \lstar \llist{x, \lstxs, \lstvs} }{~\mathtt{LInsertFirst}(\pvar x, \pvar v)~}{\llist{\pvar{ret}, \pvar{ret} \cons \lstxs, v \cons \lstvs} \lstar \listptr{x, \lstxs}} \\
\]
where $\llist{x, \lstxs, \lstvs}$ is a predicate that exposes the internal pointers of a given list in addition to the values, and $\listptr{x, \lstxs}$ states that the list $\lstxs$ starts with $x$. 

\subparagraph*{Further Examples.}
In~\S\ref{ch:examples} and Appendix~\ref{apdx:examples}, we give many additional examples of ESL specifications and proofs to illustrate reasoning about  
list algorithms and binary trees, as well as language errors, mutual recursion, non-termination, and client programs.

\newcommand{\serr}{\sto[\pvar{err} \rightarrow \verr]}

\section{The Programming Language}
\label{ch:wisl}

We introduce ESL using a simple programming language, the syntax of which is given below.

\smallskip
\begin{display}{Language Syntax}
    $\begin{array}{r@{~~}c@{~~}l}
      \gv \in \vals & ::= & \nv \in \nats \mid \bv \in \bools \mid \sv \in \strings \mid \nil \mid \lst{\gv} \qquad\quad~ \pvar x \in \pvars \\
      \pexp \in \pexps & ::= & \gv \mid \pvar x \mid \pexp + \pexp \mid \pexp - \pexp \mid ... \mid \pexp = \pexp \mid \pexp < \pexp 
      \mid \neg \ \pexp \mid  \pexp \wedge \pexp \mid ... \mid \pexp \cons \pexp \mid \pexp \lcat \pexp \mid ... \\[2mm]
      \cmd  \in \cmds & ::= &\pskip \mid \passign{\pvar{x}}{\expr{\pexp}} \mid \passign{\pvar{x}}{\prandom}               \mid \perror(\pexp) \mid \pifelse{\pexp}{\cmd}{\cmd} \mid \pwhile{\pexp}{\cmd} \mid \cmd; \cmd \mid\\
                            && \pfuncall{\pvar{y}}{\fid}{\vec{\expr{\pexp}}} \mid \pderef{\pvar{x}}{\expr{\pexp}} \mid \pmutate{\expr{\pexp}}{\expr{\pexp}}   \mid \palloc{\pvar{x}}{\pexp}   \mid  \pdealloc{\expr{\pexp}}                           
\end{array}  
$
\end{display}

Values, $\gv \in \vals$, include: natural numbers, $\nv \in
\nats$; Booleans, $\bv \in \bools \defeq$ $\{ \true,
\false \}$; strings, $\sv \in \strings$; a dedicated value
$\nil$; and lists of values, $\lst{\gv} \in \mathsf{List}$. 
Expressions, $\pexp \in \Exp$, comprise values, program variables, $\pvar x \in \pvars$, and various unary and binary operators (e.g., addition, equality, negation, conjunction, list prepending, and list concatenation).
Commands comprise:
the variable assignment; non-deterministic number generation; error raising;
the $\mathtt{if}$ statement; the $\mathtt{while}$ loop; command sequencing;
function call; and memory management commands, that is, lookup,
mutation, allocation, and deallocation.  The sets of program variables
for  expressions and commands, denoted by $\pv{\pexp}$ and $\pv{\cmd}$
respectively, and the 
sets of modified variables for commands, denoted by $\updt(\cmd)$, are
defined in the standard~way.

\begin{definition}[Functions]
A function, denoted by
$\pfunction{\fid}{\pvvar x}{\cmd; \preturn{\pexp}}$, comprises:
a function identifier, $\fid \in \strings$; the function
parameters, $\pvvar x$,  given by 
a list of distinct program variables; a  function
body, $\cmd \in \Cmd$; and a {return expression}, $\pexp \in \PExp$, with  $\pv{\pexp} \subseteq
\{\vec{\pvar x}\} \cup \pv{\cmd}$. 
\end{definition}

Program variables in function bodies that are not the function 
parameters are treated as local variables initialised to $\nil$, with their scope not 
extending beyond the function. 

\begin{definition}[Function Implementation Contexts]
A function implementation context, $\fictx : \strings \rightharpoonup_{\mathit{fin}} \pvars~\mathsf{List} \times \cmds \times \pexps$, is a finite partial function from function identifiers to their implementations. For $\fictx(\fid)=(\vec{\pvar{x}}, \cmd, {\pexp})$,  we also write
$\fid(\vec{\pvar{x}})\{\cmd; \preturn{\pexp}\} \in \fictx$. \end{definition}


We next define an operational semantics that gives a complete account of
the behaviour of commands and does not get stuck on any input, as
we explicitly account for language errors and missing resource errors.

\begin{definition}[Stores, Heaps, States]
\label{def:stohsta}
Variable stores, $\sto : \pvars \rightharpoonup_{\mathtt{fin}} \vals$, are partial finite functions from program variables to values. Heaps, $\hp : \nats \rightharpoonup_{\mathtt{fin}} (\vals
\uplus \cfreed)$, are partial finite
functions from natural numbers to values extended with a dedicated
symbol $\cfreed \notin \vals$. Program states, $\cst = (\sto, \hp)$, consist of a store and a heap. 
\end{definition}

Heaps are used to model the memory, and the dedicated
symbol $\cfreed \notin \vals$ is required for UX frame preservation\footnote{UX frame preservation  means that if a program runs with a non-missing outcome to a given final state, then it also runs with the same outcome to an extended final state, with the extension (the \emph{frame}) unaffected by the execution. From ISL~\cite{isl}, it is known that losing deallocation information breaks UX frame preservation; the solution is to keep track of deallocated cells, which we achieve by using $\cfreed$.} to hold
(cf.~Definition~\ref{def:gvalspec}). In particular, $\hp(\nv) = \gv$ means that an
allocated heap cell with address~$\nv$  contains the value $\gv$;
and $\hp(\nv) = \cfreed$ means that a heap cell with address $\nv$ has been
deallocated~\cite{cosette,javert,javert2,jslogic,isl}. This linear memory model is used in much of the
SL literature, including
ISL~\cite{isl}. 
Onward, $\emptyset$ denotes the empty heap,  $\hp_1 \uplus \hp_2$ denotes heap disjoint union, and $\hp_1 \hpdisj \hp_2$ denotes that $\hp_1$ and $\hp_2$ are disjoint. 

\begin{definition}[Expression Evaluation]
    \label{def:evalfunc}
    The evaluation of an expression $\pexp$ with respect to a store $\sto$,
    denoted $\esem{\pexp}{\sto}$, 
    results in either a value or a dedicated symbol denoting an
    evaluation error, $\undefd \notin \vals$.
    Some illustrative
    cases are:
\[
        \esem{\gv}{\sto} = \gv \quad
      \esem{\pvar{x}}{\sto} = \begin{cases} \sto(\pvar{x}), & \mbox{$\pvar x \in \dom(\sto)$} \\ {\undefd}, & \mbox{otherwise} \end{cases}
      \quad \esem{\pexp_1 ~{+}~ \pexp_2}{\sto} = \begin{cases} \esem{\pexp_1}{\sto} ~{+}~ \esem{\pexp_2}{\sto}, & \mbox{$\esem{\pexp_1}{\sto}, \esem{\pexp_2}{\sto} \in \nats$} \\ {\undefd}, & \mbox{otherwise} 
            \end{cases}\]

\end{definition}

The big-step operational semantics uses judgements of the form 
$\strans \cst \cmd {\cst'} \outcome$, read: given implementation context
$\fictx$ and starting from state $\cst$, the execution of command $\cmd$ 
results in outcome $\outcome \in \outcomes = {\{\oxok, \oxerr, \oxm\}}$ and
 state  $\cst'$. The outcome can
either equal: $\oxok$ (elided where possible), denoting a 
 successful execution;  $\oxerr$, denoting an execution
faulting with a language error, or $\oxm$, denoting an
execution faulting with a missing resource error.

\begin{definition}[Operational Semantics]
\label{def:wislopsem}
The representative cases of the big-step operational semantics are given in
Figure~\ref{fig:whilesem}. The complete semantics is given in Appendix~\ref{apdx:cos}.
\end{definition}

The successful transitions are straightforward: for
example, the $\mathsf{nondet}$ command generates an arbitrary natural number; the
function call executes the function body in a 
store where the function parameters are given the values of the function arguments
and the function locals are initialised to $\nil$; 
and the control flow statements behave as expected. Allocation
requires the specified amount of contiguous cells (always available
as heaps are finite), and lookup, mutation, and deallocation
require the targeted cell not to have been freed.

The semantics stores error information in a dedicated program variable $\pvar{err}$, not available to the programmer. For simplicity of error messages, we assume to have a function $\mathsf{str}: \pexps \rightarrow \strings$, which serialises program expressions into strings. The faulting semantic transitions are split into {\it language errors}, which can be captured by program-logic reasoning, and
{\it missing resource errors}, which cannot, as such errors break the frame property. Language errors arise due to, for example, expressions being incorrectly typed
(e.g. ~$\nil + 1$) or an attempt to access deallocated cells (that is, the use-after-free error). 
On the other hand, missing resource errors arise from accessing cells that are not present in memory.

\begin{figure*}[!t]
\small
    \begin{mathpar}
       \infer{
          \sthreadp{ \sto }{ \hp }, \passign{\pvar{x}}{\pexp} \baction_{\fictx} \sthreadp{\sto'}{ \hp }
       }{
       \esem{\pexp}{\sto} = \gv \quad \sto' = \sto [\pvar{x} \storearrow \gv]
       } \qquad
       \infer{
          \sthreadp{ \sto }{ \hp }, \passign{\pvar{x}}{\prandom} \baction_{\fictx} \sthreadp{\sto'}{ \hp }
       }{
       \nv \in \Nat \quad \sto' = \sto [\pvar{x} \storearrow \nv]
       } \qquad
       \infer{
          \sthreadp{ \sto }{ \hp }, \pwhile{\pexp}{\cmd} \baction_{\fictx} \sthreadp{ \sto }{ \hp }
       }{
          \esem{ \pexp }{ \sto } = \false
       } \\
       \infer{
          \sthreadp{ \sto }{ \hp }, \pwhile{\pexp}{\cmd} \baction_{\fictx} \outcome : \cst'
       }{
          \begin{array}{c}
          \esem{ \pexp }{ \sto } = \true \quad \sthreadp{ \sto }{ \hp }, \cmd \baction_{\fictx} \cst'' \\
          \cst'', \pwhile{\pexp}{\cmd} \baction_{\fictx} \outcome : \cst'
          \end{array}
       }
       \quad
       \infer{
     \sthreadp{ \sto }{ \hp },
     \pfuncall{\pvar{y}}{\fid}{\vec{\pexp}} \baction_{\fictx}
     \sthreadp{\sto [\pvar{y} \storearrow \gv' ] }{ \hp' }
  }{
    \begin{array}{c}
    \pfunction{\procname}{\vec{\pvar{x}}}{\cmd; \preturn{\pexp'}} \in \scontext
    \quad
    \esem{\vec{\pexp}}{\sto} = \vec{\gv} 
     \quad \pv{\cmd} \setminus
      \{\pvvar{x}\} = \{\pvvar{z} \}
    \\
 \sto_p  =  \emptyset [ \vec{\pvar{x}} \storearrow \vec{v}] [ \vec{\pvar{z}} \storearrow \nil]
     \quad (\sto_p, \hp), \cmd
      \baction_{\fictx} \sthreadp{ \sto_q }{ \hp' }  \quad  \esem{\pexp'}{\sto_q} =v' 
    \end{array}
  }
  \\
      \infer{
         \sthreadp{ \sto }{ \hp }, \pderef{\pvar{x}}{\pexp} \baction_{\fictx} \sthreadp{\sto [\pvar{x} \storearrow v] }{ \hp }
      }{
      \esem{\pexp}{\sto} = \nv \quad \hp(\nv) = \gv
      }
      \qquad
       \infer{
          \sthreadp{ \sto }{ \hp }, \pmutate{\pexp_1}{\pexp_2}
          \baction_{\fictx} \sthreadp{ \sto }{ \hp' }
       }{
       \esem{\pexp_1}{\sto} = \nv \quad \hp(\nv) \in \vals \quad \esem{\pexp_2}{\sto} = \gv \quad \hp' = \hp [\nv \mapsto \gv] 
       } \\
       \infer{
          \sthreadp{ \sto }{ \hp }, \palloc{\pvar{x}}{\pexp} \baction_{\fictx} \sthreadp{\sto [\pvar{x} \storearrow \nv'] }{ \hp' }
       }{
       \begin{array}{c}
       		 (\nv' + i \notin \dom (\hp))|_{i = 1}^{\esem{\pexp}{\sto} - 1} \quad
       		\hp' = \hp \uplus \{ (\nv' + i \mapsto \nil)|_{i = 1}^{\esem{\pexp}{\sto} - 1} \} 
       \end{array}
       } \quad
       \infer{
          \sthreadp{ \sto }{ \hp }, \pdealloc{\pexp} \baction_{\fictx} \sthreadp{ \sto }{ \hp[\nv \mapsto \cfreed]}
       }{
       \esem{\pexp}{\sto} = \nv \quad  \hp(\nv) \in \vals 
       } \\
       \infer{
          \sthreadp{ \sto }{ \hp },  \pderef{\pvar{x}}{\pexp} \baction_{\fictx} {\oxerr} : \sthreadp{ \sto_{\oxerr} }{ \hp }
       }{
       \begin{array}{c}
       \esem{\pexp}{\sto} = \undefd \quad \verr = [``\mathsf{ExprEval}", \stringify {\pexp}]
       \end{array}
       }
       \qquad
        \infer{
          \sthreadp{ \sto }{ \hp },  \pderef{\pvar{x}}{\pexp}
          \baction_{\fictx} {\oxm} : \sthreadp{ \sto_{\oxerr} }{ \hp }
       }{
       \begin{array}{c}\esem{\pexp}{\sto} = \nv  \notin \dom (\hp) \quad \verr = [``\mathsf{MissingCell}", \stringify{\pexp}, \nv]\end{array}
       } 
       \\
       \infer{
          \sthreadp{ \sto }{ \hp },  \pderef{\pvar{x}}{\pexp}
          \baction_{\fictx} {\oxerr} : \sthreadp{ \sto_{\oxerr} }{ \hp }
       }{
       \hp(\esem{\pexp}{\sto}) = \cfreed \quad \verr = [``\mathsf{UseAfterFree}", \stringify{\pexp_1}, \esem{\pexp}{\sto}]
       } \qquad
       \infer{
       \sthreadp{ \sto }{ \hp }, \perror(\pexp) \baction_\fictx \oxerr :  \sthreadp{ \sto_{\oxerr} }{ \hp }
       }{\esem{\pexp}{\sto} = \gv \quad \verr = [``\mathsf{Error}", \gv]}
    \end{mathpar}
    \caption{Operational semantics (excerpt),
      with $\sto_{\oxerr} \defeq \serr$ and 
    $\mathsf{str}: \PExp \rightarrow \strings$.}
    \label{fig:whilesem}
    \end{figure*}


\newcommand{\admiss}[1]{\mathcal{A}\mkern-1mu\mathit{dm}(#1)}

\section{Exact Separation Logic}
\label{ch:proofsys}

We introduce an  exact separation logic for our programming language,
giving the assertion language in \S\ref{sec:asrtlang}, specifications in \S\ref{sec:specesl}, and the program logic rules in \S\ref{sec:proglog}.

\subsection{Assertion Language}
\label{sec:asrtlang}

To define assertions and their meaning, we introduce \emph{logical variables}, $x, y, z, \in \LVar$, distinct from program variables, and define the set of \emph{logical expressions} as follows:
\[
    \begin{array}{l@{~}l@{~}lr}
    \lexp \in \LExp & \defeq & \gv \mid x \mid \pvar{x}  \mid \lexp + \lexp \mid \lexp - \lexp \mid ... \mid \lexp = \lexp \mid \neg \ \lexp \mid  \lexp \land \lexp \mid ... \mid \lexp \cdot \lexp \mid \lexp \cons \lexp \mid... \\
    %
    \end{array}
  \]

Note that we can use program expressions in assertions (for example, $\pexp \in \Val$), as they form a proper subset of logical expressions.

\begin{definition}[Assertion Language]
\label{def:logasrt}
The assertion language is defined as follows:
\[
	\begin{array}{r@{~}c@{~}l}
   \pc \in \PAssert & \defeq  & \lexp_1 = \lexp_2 \mid
   \lexp_1 < \lexp_2 \mid
   \lexp \in X \mid \ldots \mid \lnot \pc \mid \pc_1 \Rightarrow \pc_2

   \\
   P \in \Assert & \defeq &

   \pc 
     \mid \AssFalse \mid P_1 \Rightarrow P_2   
  \mid \exists x \ldotp P 
    \mid \emp \mid  
   \lexp_1 \mapsto \lexp_2 \mid \lexp \mapsto \cfreed \mid  P_1 \lstar P_2 \mid \bigoast_{\lexp_1 \leq x < \lexp_2} P 
 \end{array}
 \]
 where $\lexp, \lexp_1, \lexp_2  \in \LExp$,  $X \subseteq \Val$, and $\lvar{x}
 \in \LVar$.
\end{definition}

\medskip
Boolean  assertions, $\pc \in \PAssert$, lift Boolean logical
expressions to assertions.
Assertions, $P \in \Assert$, contain Boolean assertions,
standard first-order connectives and quantifiers, and spatial
assertions.
Spatial assertions include: the empty memory assertion, $\emp$; the
positive cell assertion, $\lexp_1 \mapsto \lexp_2$; the negative cell
assertion, $\lexp\mapsto\cfreed$ (as in~\cite{javert,cosette,javert2,jslogic} and denoted in ISL by
$\lexp \negheap$~\cite{isl}), the separating conjunction (star); and its iteration (iterated~star).

\medskip
To define assertion satisfiability, we introduce \emph{substitutions},
$\subst : \LVar \rightharpoonup_{\mathtt{fin}} \Val$, which are partial
finite mappings from logical variables to values,
 extending expression evaluation of Definition \ref{def:evalfunc} to
$\esem{\lexp}{\subst,\sto}$ straightforwardly, with a new base case for logical variables:
\[
\esem{x}{\subst,\sto} = \subst(x), \mbox{if $x \in \domain(\subst)$} \qquad \esem{x}{\subst, \sto} = \undefd, \mbox{if $x \notin \domain(\subst)$}
\]

\begin{definition}[Satisfiability]%
\label{def:lsat}

The assertion satisfiability relation, denoted by $\subst, \stt \models P$, is defined as follows:
\[
\begin{array}{@{}l@{~}c@{~\ }l}
\subst, (\sto, \hp) \models \pc & \Leftrightarrow & \esem{\pc}{\subst, \sto} = \true \land \hp = \emptyset \\
\subst, (\sto, \hp) \models \AssFalse &\Leftrightarrow& \text{never} \\
\subst, (\sto, \hp) \models P_1 \Rightarrow P_2 &\Leftrightarrow& \subst, (\sto, \hp) \models  P_1 \Rightarrow \subst, (\sto, \hp) \models  P_2  \\
\subst, (\sto, \hp) \models \exists \lvar{x} \ldotp P &\Leftrightarrow& \exists v \in \Val \ldotp \subst[\lvar{x} \mapsto v], (\sto, \hp) \models P \\
\subst, (\sto, \hp) \models \emp &\Leftrightarrow& {\hp = \emptyset } \\
\subst, (\sto, \hp) \models {\lexp_1 \mapsto \lexp_2} &\Leftrightarrow& \hp = \{ \esem{\lexp_1}{\subst, \sto} \mapsto \esem{\lexp_2}{\subst, \sto}\} \\
\subst, (\sto, \hp) \models {\lexp_1 \mapsto \cfreed} &\Leftrightarrow& \hp = \{ \esem{\lexp_1}{\subst, \sto} \mapsto \cfreed\} \\
\subst, (\sto, \hp) \models P_1 \lstar P_2 &\Leftrightarrow& \exists \hp_1, \hp_2 \ldotp \hp = \hp_1 \uplus \hp_2 \land \subst, (\sto, \hp_1) \models P_1 \land \subst, (\sto, \hp_2) \models P_2 \\
\subst, (\sto, \hp) \models \bigoast_{\lexp_1 \le x < \lexp_2} P &\Leftrightarrow& \exists \hp_i, \ldots, h_{k-1}.\, \hp = \uplus_{j = i}^{k-1} \hp_j \land \forall j.\, i \le j < k  \Rightarrow \subst, (\sto, \hp_j) \models P[j/x] \\
                            && \text{where $i = \esem{\lexp_1}{\subst, \sto}$, $k = \esem{\lexp_2}{\subst, \sto}$, and $x$ is not free in $\lexp_1$ or $\lexp_2$}.
\end{array}
\]
\end{definition}

Assertion satisfiability is defined in the standard way. For convenience, we choose Boolean assertions to be satisfiable only in the empty heap.

\begin{definition}[Validity] 
  An assertion $P$ is \emph{valid}, denoted by $\models P$, iff $\forall \subst, \stt.~\subst, \stt \models P$.
\end{definition}

\subsection{Specifications}
\label{sec:specesl}

We define specifications for commands and functions, focussing in particular on external and internal function specifications and the relationship between them.





\begin{definition}
  {\em Specifications}, $\sspec = \uspec{P}{\Qok}{\Qerr} \in \specs$, comprise a \emph{pre-condition}, $P$, a \emph{success post-condition}, $\Qok$, and a \emph{faulting post-condition},~$\Qerr$.
\end{definition}

We denote that command $\cmd$ has specification~$t$ by $\cmd : t$, or by $\uquadruple{P}{\cmd}{\Qok}{\Qerr}$ in quadruple form. Additionally, we use the following shorthand:
\[
\begin{array}{r@{~\defeq~}l}
\utripleq{P}{\cmd}{Q} & \uquadruple{P}{\cmd}{Q}{\AssFalse} \\
\utripleerr{P}{\cmd}{Q} & \uquadruple{P}{\cmd}{\AssFalse}{Q} \\
\utripleq P \cmd \bigq & \uquadruple P \cmd - -
\end{array}
\]
noting the use of  $\bigq$ for cases in which the
post-condition details are not relevant.
We use quadruples rather than triples since, even though the post-condition could be expressed as a disjunction of $\oxok$- and $\oxerr$-labelled assertions, we find the quadruple distinction  helpful as compound commands (e.g. sequence) treat the two differently (cf.~Figure~\ref{fig:unify}).

The EX-validity of a specification $t$ for a command $\cmd$ in an implementation context $\fictx$ requires both OX and UX frame-preserving validity.

\begin{definition}[$\fictx$-Valid Specifications]
\label{def:gvalspec}
Given implementation context $\fictx$, command $\cmd$, and
specification $t = \uspec{P}{\Qok}{\Qerr}$,
$t$ is {\em $\fictx$-valid} for $\cmd$, denoted by $\fictx\models \cmd:\sspec$ or $\fictx \models
\uquadruple{P}{\cmd}{\Qok}{\Qerr}$,  if and only~if:
\[
    \begin{array}{@{}l}
     \text{\polish{// Frame-preserving over-approximating validity }} \\
 (\forall \subst, \sto, \hp, \hp_f, \outcome, \sto', \hp''.~\subst, (\sto, \hp) \models P \implies \\
\quad  (\sto, \hp \uplus \hp_f), \cmd \baction_{\fictx} \outcome: (\sto', \hp'') \implies
(\outcome \neq \oxm \land
    \exists \hp'.~\hp'' = \hp' \uplus \hp_f \land \subst, (\sto', \hp') \models Q_\outcome))~\land \\
     \text{\polish{// Frame-preserving under-approximating validity}} \\
     (\forall \subst, \sto', \hp', \hp_f, \outcome.~\subst, (\sto', \hp') \models Q_\outcome \implies \hp_f~\sharp~\hp' \implies \\
     \quad  (\exsts{\sto,\hp}~\subst, (\sto, \hp) \models P~\land~(\sto, \hp \uplus \hp_f), \cmd \baction_\fictx \outcome: (\sto', \hp' \uplus \hp_f)))
    \end{array}
  \]
\end{definition}




Observe that the outcome $\outcome$ can either be success or a language error; it cannot be a missing resource error as this would break UX frame preservation.
As our operational semantics is complete, we can also use ESL
to characterise non-termination. In particular, if a
command satisfies a specification with both post-conditions
$\AssFalse$, then the execution of the command is guaranteed to not terminate if executed from a
state satisfying the pre-condition. Were the semantics incomplete (for example, if it did not reason about errors), then such a
specification might  also indicate the absence of a semantic
transition.

Compared to traditional OX reasoning, UX reasoning brings additional complexity to proofs of function specifications. To handle this complexity, we introduce two types of function specifications: {\em external} specifications, which provide the {interface} the function exposes to the client, and the related {\em internal} specifications, which provide the interface to the function implementation. This terminology is also used informally in InsecSL~\cite{insecsl}. We use these in subsequent sections to show that ESL exhibits function compositionality.

\begin{definition}[External Specifications]
\label{def:espec}
A specification $\uspec{P}{\Qok}{\Qerr}$ is
an {\em external function specification} if and only if:
\begin{itemize}
\item $P = (\vec {\pvar x} \doteq \vec x \lstar P')$,
  for some distinct program variables $\pvvar x$, distinct logical variables $\vec x$, and assertion~$P'$, with $\pv{P'} = \emptyset$; and
%
%
\item either $\pv{\Qok} = \{ \pvar{ret} \}$ or $\Qok = \AssFalse$, and 
either $\pv{\Qerr} = \{ \pvar{err} \}$ or $\Qerr = \AssFalse$.
\end{itemize}

The set of external specifications is denoted by $\especs$.

\end{definition}

\begin{definition}[Function Specification Contexts]
\label{def:fspeccontext}
A function specification context, \linebreak $\fsctx : \fids \rightharpoonup_{\mathit{fin}}  \mathcal{P}(\especs)$, is a finite partial function from function
identifiers to a set of external
 specifications, with the more familiar notation $\uquadruple{\pvvar x = \vec x \lstar P}{f(\pvvar x)}{\Qok}{\Qerr} \in \fsctx$ at times used in place of $\uspec{\pvvar x = \vec x \lstar P}{\Qok}{\Qerr} \in \fsctx(f)$.
\end{definition}


%
The constraints on external
specifications are well-known from OX logics and follow the
usual scoping of function parameters and local variables, which are limited to the function body:
the pre-conditions
only contain the function
parameters, $\pvvar x$; and the post-conditions may only have the
(dedicated) program variables $\pvar{ret}$ or $\pvar{err}$, which hold,
respectively, the return value  on successful termination or the
error value on faulting
termination.

Internal function specifications are more  interesting for exact and UX than for OX
reasoning. The internal pre-condition is straightforward, extending the
external pre-condition by instantiating the local variables to $\nil$. The
internal post-condition must therefore include information about the parameters
and local variables, as the internal specification cannot
lose information. This means that the connection between
internal and external post-conditions is subtle, given the constraints on the latter.
To address this, we define an internalisation function, relating
an external function specification with a set of possible internal
specifications. In particular, the external post-condition has to
be equivalent to an internal one in which the parameters and local
variables of the internal post-condition have been replaced by fresh
existentially quantified logical variables.


\begin{definition}[Internalisation]
Given implementation context $\fictx$ and function $\fid \in
\dom(\fictx)$, a  function specification internalisation, $\fext_{\fictx, \fid} : \especs \tmap \mathcal{P}(\specs)$, is defined as follows: 
\[
\begin{array}{l}
    \fext_{\fictx, \fid} (\uspec{P}{\Qok}{\Qerr}) = \\
        \quad \begin{array}{l@{\,\,}l}
	\{ \uspec{P \lstar \vec {\pvar z}\doteq\nil}{\Qok'}{\Qerr'} \mid & \models \Qok' \Rightarrow \pexp \in \Val \lstar \AssTrue~\text{and} \\
        & \models \Qok \Leftrightarrow\exsts{\vec{p}} \Qok'[\vec{p} / \vec{\pvar p}]\lstar\pvar{ret}\doteq \pexp[\vec{p} / \vec{\pvar p}]~\text{and} \\
        & \models \Qerr \Leftrightarrow\exsts{\vec{p}} \Qerr'[\vec{p} / \vec{\pvar p}] \},
        \end{array}
\end{array}
\]
where $f(\vec{\pvar{x}})\{\cmd; \preturn{\pexp}\} \in \fictx$, $\pvvar z=\pv{\cmd} \setminus \pv{P}$, $\vec{\pvar p} = \pv{P} \uplus \{\pvvar z\}$, and the logical variables $\vec p$ are fresh with respect to $\Qok$ and $\Qerr$.
\end{definition}



This approach also works for SL and ISL as well (with $\Leftarrow$ instead of $\Leftrightarrow$ for the post-conditions for SL, and $\Rightarrow$ instead of $\Leftrightarrow$ for ISL). It is not strictly necessary for SL, however, as information about program variables can
be forgotten in the internal post-conditions before the transition to the external post-condition.


\newcommand{\lenvx}{\omega}
\newcommand{\Lenvx}{\Omega}


\begin{definition}[Environments]
	An environment, $(\fictx,\fsctx)$, is a pair consisting of an
        implementation context $\fictx$ and a specification context
        $\fsctx$.
\end{definition}

An environment $(\fictx,\fsctx)$ is {\em valid}
if and only if every function specified
in $\fsctx$ has an implementation in $\fictx$ and every specification in $\fsctx$ has a $\fictx$-valid
internal specification.

\begin{definition}[Valid Environments]
\label{def:validenv}
Given an implementation context $\fictx$ and a
specification context~$\fsctx$, the environment $(\fictx,\fsctx)$
is {\em valid}, written $\models (\fictx,\fsctx)$,   if and only if
\[
\begin{array}{l}
 	\dom(\fsctx)\subseteq\dom(\fictx)~\land \\ \quad
		 (\forall \fid, \pvvar x, \cmd, \pexp.~f(\vec{\pvar{x}})\{\cmd; \preturn{\pexp}\} \in \fictx \implies
		 (\forall \sspec.~\sspec \in \fsctx(\fid) \implies
		  \exists \sspec' \in \fext_{\fictx, \fid}(\sspec).~\fictx\models \cmd:\sspec'))
\end{array}
\]

\end{definition}

Finally, a specification $t $ is valid for a command~$\cmd$ in a
specification context $\fsctx$
if and only if $t$ is $\fictx$-valid for  all implementation contexts
$\fictx$ that validate $\fsctx$.

\begin{definition}[$\fsctx$-Valid Specifications]
\label{def:Gammavalid} Given a
  specification context $\fsctx$, a command $\cmd$, and a specification
  $t = \uspeconecaseq{P}{\bigq}$, the specification $t$ is {\em $\fsctx$-valid} for
  command $\cmd$, written $\fsctx \models \cmd : t$ or $\fsctx \models
  \utripleq{P}{\cmd}{\bigq}$,  if and only if
$
\frall{\fictx}~\models(\fictx,\fsctx) \implies
\fictx\models \utripleq{P}{\cmd}{\bigq}
$.
\end{definition}

\subsection{Program Logic}
\label{sec:proglog}

We give the representative ESL proof rules in Figure~\ref{fig:unify} and all in Appendix~\ref{apdx:esl}. We introduce and discuss in detail the function-related rules, given for the first time in a UX setting. We denote the repetition of the pre-condition in the post-condition by $\pvar{pre}$.
When reading the rules, it  is important to remember that we {\it must not drop paths and must not lose information}.
The judgement
$\fsctx \vdash \uquadruple{P}{\cmd}{\Qok}{\Qerr}$ means that the specification $t$ is {\it derivable} for a command $\cmd$ given the specifications recorded in $\fsctx$.

The basic command rules are fairly straightforward.  The
\prooflab{nondet} rule
existentially quantifies the generated value via
$\pvar{x} \dotint \Nat$ to capture all paths, in contrast with the
 RHL~\cite{reverselogic} and ISL~\cite{isl} rules,
which explicitly choose one value to describe one path. The
$\lexp' \dotint \Val$ in the post-condition is necessary as we know that
$\lexp'$ evaluates to a value from the
pre-condition and cannot lose information;
the same principle applies to many other rules. The
\prooflab{assign} rule requires that the evaluation of $\pexp$
does not fault in the pre-condition via $\dotin{\pexp}{\Val}$.
Strictly speaking, we should
have an additional case in which the assigned variable is not in the store. To avoid
this clutter, we instead assume that program variables are always in
the store as we are analysing function bodies and, in our programming language, all local
variables are initialised on function entry.  The error-related rules
capture cases in which expression evaluation faults
(e.g. \prooflab{lookup-err-val} rule, using $\pexp \notin \Val$), expressions are of the
incorrect type, or memory is accessed after it has been freed (e.g. \prooflab{lookup-err-use-after-free} rule, using
$\pexp \mapsto \cfreed$). Note that missing
resource errors cannot be captured without breaking frame preservation, as the added-on
frame could~contain~the~missing~resource.


\begin{figure*}[!t]
	\small
  \begin{minipage}{\textwidth}
\begin{mathpar}
  \inferrule[\mbox{skip}]
  {}
    { \fsctx \vdash \utripleq{\emp}{\texttt{skip}}{\emp} }
  \qquad
    \inferrule[\mbox{nondet}]
    {\pvar x \notin \pv{\lexp'} \\\\ Q \defeq \lexp' \dotint \Val \lstar \pvar x \dotint {\Nat}}
    { \fsctx \vdash \utripleq{\pvar x \doteq \lexp'}{\passign{\pvar{x}}{\prandom}}{Q} }
  \qquad
  \inferrule[\mbox{assign}]
  {\pvar x \notin \pv{\lexp'} \\\\
  Q \defeq \dotin{\lexp'}{\Val} \lstar  \pvar{x} \doteq \pexp[\lexp'/\pvar x]}
    { \fsctx \vdash \utripleq{{\pvar{x} \doteq \lexp' \lstar \dotin{\pexp}{\Val}}}{\passign{\pvar{x}}{\pexp}}{Q} }
  \\
  \inferrule[\mbox{lookup}]
      { \pvar x \notin \pv{\lexp'} \\\\
      Q \defeq \dotin{\lexp'}{\Val} \lstar \pvar{x} \doteq \lexp_1[\lexp'/ \pvar{x}] \lstar \pexp[\lexp'/ \pvar{x}] \mapsto \lexp_1 [\lexp'/ \pvar{x}]}
    { \fsctx \vdash \utripleq{\pvar{x} \doteq \lexp'  \lstar \pexp \mapsto \lexp_1}{\pderef{\pvar{x}}{\pexp}}{Q} }
  \qquad
  \inferrule[\mbox{mutate}]
      { Q \defeq \expr{\pexp_1} \mapsto  \expr{\pexp_2} \lstar \lexp \dotint \Val}
    { \fsctx \vdash \utripleq{\expr{\pexp_1} \mapsto \lexp \lstar \dotin{\pexp_2}{\Val}}{\pmutate{\expr{\pexp_1}}{\expr{\pexp_2}}}{Q} }
  \\
  \inferrule[\mbox{new}]
    { \pvar{x}  \notin \pv{\lexp'}  \\\\
     Q \defeq \lexp' \dotint \Val \lstar \bigoast{_{0 \le i < \pexp[\lexp'/\pvar x]}} (( \pvar x + i) \mapsto \nil)}
          { \fsctx \vdash \utripleok{\pvar x \doteq \lexp' \lstar \pexp \dotint \Nat}{\palloc{\pvar{x}}{\pexp}}{Q} }
  \qquad
\inferrule[\mbox{error}]
{ \eerr \defeq [``\mathsf{Error}", \pexp] }
{ \fsctx \vdash \utripleerr{\dotin{\pexp}{\Val}}{\perror(\pexp)}{\pvar{err} \doteq \eerr} }
 \\
\inferrule[\mbox{free}]
{Q \defeq \dotin{\lexp'}{\Val}\lstar \pexp \mapsto \cfreed}
{ \fsctx \vdash \utripleok{\pexp \mapsto \lexp'}{\pdealloc{\pexp}}{Q} }
\qquad
  \inferrule[\mbox{lookup-err-val}]
  { P \defeq \pvar x \doteq \lexp' \lstar E \notdotint \Val \\\\ \eerr \defeq [``\mathsf{ExprEval}", \stringify {\pexp}]}
  { \fsctx \vdash \utripleerr{P}{\pderef{\pvar{x}}{\pexp}}{\Qerr^*} }
\qquad
\inferrule[\mbox{lookup-err-use-after-free}]
{ P \defeq \pvar x \doteq \lexp' \lstar \pexp \mapsto \cfreed \\\\ \eerr \defeq [``\mathsf{UseAfterFree}", \stringify{\pexp}, \pexp]}
{ \fsctx \vdash \utripleerr{P }{\pderef{\pvar{x}}{\pexp}}{\Qerr^*} }
     \\
  \inferrule[\mbox{if-then}]
  { \cmd \defeq \pifelse{\pexp}{\cmd_1}{\cmd_2} \\\\ \fsctx \vdash \utripleq{P \lstar \pexp}{\cmd_1}{\bigq} }
  { \fsctx \vdash \utripleq{P \lstar \pexp }{\cmd}{\bigq} }
\qquad
  \inferrule[\mbox{if-else}]
  { \cmd \defeq \pifelse{\pexp}{\cmd_1}{\cmd_2} \\\\\fsctx \vdash \utripleq{P \lstar \lnot \pexp}{\cmd_2}{\bigq} }
  { \fsctx \vdash \utripleq{P \lstar \lnot \pexp }{\cmd}{\bigq} }
\qquad
  \inferrule[\mbox{if-err-val}]
  { \cmd \defeq \pifelse{\pexp}{\cmd_1}{\cmd_2} \\\\ \eerr \defeq ["\mathsf{ExprEval}", \stringify{\pexp}] }
  { \fsctx \vdash \utripleerr{P \lstar \pexp\notdotint\Val}{\cmd}{\Qerr^*} }
\\
 \inferrule[\mbox{seq}]
 { {\begin{array}{c} \fsctx \vdash \uquadruplex{P}{\cmd_1}{R}{\Qerr^1} \\ \fsctx \vdash \uquadruplex{R}{\cmd_2}{\Qok}{\Qerr^2} \end{array}}}
 { \fsctx \vdash \uquadruplex{P}{\psequence{\cmd_1}{\cmd_2}}{\Qok}{\Qerr^1 \lor \Qerr^2} }
\quad
%
	\inferrule[\mbox{while}]
{
\frall{i\in\Nat}~\models P_i \Rightarrow \pexp \in \Bool \lstar \AssTrue \\\\
\frall{i\in\Nat}~\fsctx \vdash \uquadruplex{P_i \lstar \pexp}{\cmd}{P_{i+1} }{Q_i}
  }
  { \fsctx \vdash \uquadruplex{P_0}{\pwhile{\pexp}{\cmd}}{\neg\pexp\lstar\exsts{i} P_i}{\exsts{i}  Q_i}
  }
  \quad
%
\\
 \inferrule[\mbox{equiv}]
  { {\begin{array}{c} \fsctx \vdash \uquadruplex{P'}{\cmd}{\Qok'}{\Qerr'} \\ \models P', \Qok', \Qerr' \Leftrightarrow P, \Qok, \Qerr
  \end{array}}}
 { \fsctx \vdash \uquadruplex{P }{\cmd}{\Qok}{\Qerr}}
 \quad
\inferrule[\mbox{frame}]
{ {\begin{array}{c} \updt(\cmd) \cap \fv{R} = \emptyset \\ \fsctx \vdash \uquadruplex{P}{\cmd}{\Qok}{\Qerr}\end{array}} }
{ \fsctx \vdash \uquadruplex{P \lstar R}{\cmd}{\Qok \lstar R}{\Qerr \lstar R}}
\\
 \inferrule[\mbox{exists}]
{ \fsctx \vdash \uquadruplex{ P}{\cmd}{\Qok}{\Qerr} }
{ \fsctx \vdash \uquadruplex{\exists x.\, P}{\cmd}{\exsts x \Qok} {\exsts x \Qerr}} \quad
  %
 \inferrule[\mbox{disj}]
  { \fsctx \vdash \uquadruplex{P_1}{\cmd}{\Qok^1}{\Qerr^1} \\\\
   \fsctx \vdash \uquadruplex{P_2}{\cmd}{\Qok^2}{\Qerr^2}}
 { \fsctx \vdash \uquadruplex{P_1 \lor P_2}{\cmd}{\Qok^1 \lor \Qok^2}{\Qerr^1 \lor \Qerr^2} }
\end{mathpar}
\end{minipage}
\caption{ESL proof rules (excerpt), with $\Qerr^* = (\pvar{pre} \lstar \pvar{err} \doteq \eerr)$}
\label{fig:unify}
\end{figure*}

When it comes to composite commands, we opt for two
$\mathtt{if}$-rules, covering the branches separately.
The sequencing rule shows how exact quadruples of successive
commands can be joined together, highlighting, in particular, how
errors are collected using disjunction. One interesting aspect of this rule is
what happens when $C_1$ only throws an error or does not terminate, meaning that $R = \AssFalse$.
In both those cases, given the exactness of the rules, it has to be that $\Qok = \Qerr^2 = \AssFalse$,
and the post-condition of the sequence becomes $(\mathit{ok}: \AssFalse)\ (\mathit{err}: \Qerr^1 \lor~\AssFalse)$,
meaning that, if $C_1$ only throws an error (that is, $\Qerr^1 \neq \AssFalse$) then that is
the only error that can come out of the sequence, and if $C_1$ does not terminate (that is, $\Qerr^1 = \AssFalse$)
then the sequence does not terminate either.

The while rule is an adaptation of the RHL while rule~\cite{reverselogic}, generalising the invariant of the SL while rule with two natural-number-indexed families of variants, $P_i$ and $Q_i$, which explicitly maintain the iteration index. Note how the $i$ in the premise is a meta-variable representing a natural number, which in the conclusion gets substituted for an existentially quantified logical variable; a similar principle will be applied later when dealing with environment extension. Interestingly, this rule does not require adjustment to reason about non-termination.

The structural rules are not surprising, with equivalence
replacing the forward/backward consequence of OX/UX reasoning
and with frame, existential
introduction, and disjunction affecting both post-conditions.
Disjunction  allows us to derive the standard SL $\mathtt{if}$ rule, which captures both branches at the same time.
Note that there, however, is no sound conjunction rule,
as the conjunction rules of SL and ISL cannot be combined in
ESL, since conjunction does not distribute over the star in both
directions, breaking frame preservation.




\myparagraph{Function Call}
We discuss the ESL function-call rule in detail, creating it starting from the standard OX-sound SL rule, adapted for quadruples:
\begin{mathpar}
  \small
  \inferrule
  {\quadruple {\pvvar x = \vec x \lstar P}{f(\pvvar x)} \Qok \Qerr \in \fsctx \qquad \pvar{y} \not\in \pv{\lexp_y}}
  {
    \fsctx \vdash
    \quadruple{\pvar{y} \doteq \lexp_y \lstar \vec \pexp = \vec x \lstar P}
                {\pfuncall{\pvar y}{f}{\vec \pexp}}
                {\Qok[\pvar{y} / \mathtt{ret}]}{\pvar y = \lexp_y \lstar \Qerr}
  }
\end{mathpar}
In order to make this rule UX-sound, we only have to ensure that no information from the pre-condition is lost in the post-conditions. In particular, we have to remember:
\begin{itemize}
\item that the evaluation of $\lexp_y$ does not fault, captured by $\lexp_y \in \vals$ and needed in the success post-condition only, as it is already implied by the error post-condition; and
\item that $\vec \pexp = \vec x$ holds (with the substitution $[\lexp_y/\pvar y]$ needed in the success post-condition as the value of $\pvar y$ may change if the function call succeeds);
\end{itemize}
bringing us to the ESL function call rule:
\begin{mathpar}
	\small
  \inferrule
  {\uquadruple {\pvvar x = \vec x \lstar P}{f(\pvvar x)} \Qok \Qerr \in \fsctx \qquad \pvar{y} \not\in \pv{\lexp_y} \\ \Qok' \defeq \lexp_y \dotint \Val \lstar \vec{\pexp}[\lexp_y/\pvar{y}] \doteq \vec x \lstar \Qok[\pvar{y} / \mathtt{ret}] \qquad \Qerr' \defeq \pvar y \doteq \lexp_y \lstar \vec{\pexp} \doteq \vec x \lstar \Qerr}
  {
    \fsctx \vdash
    \uquadruple{\pvar{y} \doteq \lexp_y \lstar \vec \pexp \doteq \vec x \lstar P}
                {\pfuncall{\pvar y}{f}{\vec \pexp}}
                {\Qok'}
                {\Qerr'}
  }
\end{mathpar}

\myparagraph{Environment Formation}
Whereas the ESL function-call rule does not deviate substantially from its OX counterpart, the environment formation rules illustrate the difference in complexity between OX and UX function compositionality. These rules use the judgement $\vdash (\fictx, \fsctx)$ to state that the environment $(\fictx, \fsctx)$ is {\it well-formed}. The base case, $\vdash (\emptyset, \emptyset)$, is trivial and the same as for SL, stating that the environment consisting of an empty implementation context and an empty specification context is well-formed. For illustrative purposes, we give a simplified version of the extension rule, extending the environment with a single, possibly recursive, function. The full rule, which extends the environment with a group of mutually recursive functions, is given in Appendix~\ref{apdx:esl}.
We start from the corresponding OX-sound rule from SL:
\[
	\small
   	\inferrule[env-extend-sl]
{
  		{\begin{array}{c}
  		\vdash (\fictx,\fsctx)
  		\quad
  		f \not\in \dom(\fictx)
  		\quad
		\fictx' = \fictx[f \mapsto (\vec{\pvar x}, \cmd, \pexp)] 
		\quad
  		\fsctx' = \fsctx[f \mapsto \{ t \}]
		\quad
		\exists t' \in \fext_{\fictx', f}(t).~\fsctx' \vdash \cmd : \sspec'
		\end{array}}
	}{
  		\vdash (\fictx', \fsctx')}
\]
which states that a well-formed environment $(\fictx, \fsctx)$ can be extended with a given function $f$ and its external specification $t$ to $(\fictx', \fsctx')$ if some corresponding internal specification of $f$ can be proven for the body of $f$ under the extended specification context $\fsctx'$.\footnote{Internalisation is normally omitted in SL as forward consequence allows information about program variables to be lost, making internal and external post-conditions of SL specifications almost the same.} Note that using $\fsctx'$ means that a specification can be used to prove itself, which is sound in SL but unsound in ISL: specifically, it would allow us to prove UX-invalid specifications of non-terminating functions. For example, we
would be able to prove that the function
$
\mathtt{f}()~\{~\pvar r := \mathtt{f}();~\preturn \pvar r~\}
$
satisfies the EX-valid specification
$\utripleq{\emp}{\mathtt{f}()}{\AssFalse}$, but also the
EX-invalid specification $\utripleq{\emp}{\mathtt{f}()}{\pvar{ret} \doteq
  42}$. The latter is vacuously OX-valid as there are
  no terminating executions, but when considered from the UX viewpoint,
it implies the existence of an execution path from the pre- to the
post-condition, contradicting the non-termination of $\mathtt{f}$.

Therefore, to be soundly usable in UX reasoning, speci\-fications with satisfiable post-conditions (onward: terminating specifications) must come with a mechanism that disallows the above counter-example. We achieve this by following a standard approach for reasoning about termination~\cite{Floyd1967Flowcharts,total-tada}, based on decreasing measures on well-ordered sets.
In particular, we require the terminating specification to be proven, $t \defeq \uspec{P}{\Qok}{\Qerr}$ to be extended with a {\it measure} $\alpha \in \mathbb{N}$, denoting this extension by $\sspec(\alpha)$:\footnote{We can extend the measure beyond natural numbers to computable ordinals, $\ord\defeq\ckord$, allowing us to reason about a broader set of functions, such as those with non-deterministic nested recursion (cf. Appendix~\ref{apdx:esl}).}
\[
	\begin{array}{r@{~\defeq~}l}
	\sspec(\alpha) & \uspec{P \lstar \alpha = E_\mu}{\Qok \lstar \alpha = E_\mu}{\Qerr \lstar \alpha = E_\mu}
	\end{array}
\]
where $E_\mu$ is a logical expression describing how the measure is computed.
Then, we prove that $\sspec(\alpha)$ holds for every specific $\alpha$, assuming that recursive calls to $f$ can only use the terminating specifications $\sspec(\beta)$ of a measure $\beta$ {\it strictly smaller} than $\alpha$.
This restriction is standard and, if the proof succeeds, ensures that $f$ has at least one terminating~execution. Also, it disallows the above-mentioned counter-example, as no measure given in the pre-condition would be able to decrease before the recursive call. Importantly, the measure is only a tool required for proving specification validity and once this proof has been completed, the specification without the measure is added to $\fsctx$ and can be used in proofs of client code.

In addition, we incorporate reasoning about non-terminating specifications (NT-specifica\-tions). This is relevant in situations in which the operational semantics of the analysed language is complete, which allows non-termination to be captured using the post-condition $\AssFalse$. As NT-specifications are vacuously UX-sound, our focus is on ensuring their OX-soundness, which we do by again imposing a measure $\alpha$, but allowing recursive calls for a measure $\beta$ {\it smaller or equal} than $\alpha$, that is, for an NT-specification to be used to prove itself. This, for example, allows for a proof of the specification $\utripleq{\emp}{\mathtt{f}()~\{~\pvar r := \mathtt{f}();~\preturn \pvar r~\}}{\AssFalse}$ by choosing a constant measure $\alpha$. The ESL environment extension rule, therefore, is as follows:
\begin{mathpar}
	\small
	\inferrule[\mbox{env-extend}]{
		{\begin{array}{l}
		\text{\polish{// Assume valid environment, extend implementation context with new function $f$ }} \\
		\vdash (\fictx,\fsctx)
		\qquad
		\fid \not\in \dom(\fictx)
		\qquad \fictx' = \fictx[\fid \mapsto (\vec{\pvar x}, \cmd, \pexp)]
		\\[1mm]
		\text{\polish{// Extend the specifications of $f$ with a measure $\alpha$}} \\
		\sspec \defeq \uspec{P}{\Qok}{\Qerr} \quad
	\sspec_\infty \defeq \uspeconecase{P_\infty}{\AssFalse} \quad \sspec_\infty(\alpha) \defeq \uspeconecase{P_\infty \lstar \alpha = E_\mu}{\AssFalse} \\
		\sspec(\alpha) \defeq \uquadruple {P \lstar \alpha = E_\mu} {\!} {\Qok \lstar \alpha = E_\mu} {\Qerr \lstar \alpha = E_\mu} \quad
		 \\ ~ \\[-3mm]
		\text{\polish{// Construct $\fsctx(\alpha)$: assume $\sspec$ for measure $\beta < \alpha$ {and $\sspec_\infty$ for measure $\beta \leq \alpha$}}} \\
		\fsctx(\alpha) = \fsctx[\fid\mapsto
				\{ \sspec(\beta) \mid {\beta < \alpha}\} \mathbin\cup \{ t_\infty \mid \beta \leq \alpha \}]
		\\[1mm]
		\text{\polish{// For every $\alpha$, prove internal specifications of $f$ corresponding to $\sspec$ {and $\sspec_\infty$}}} \\
		\frall{\alpha} \exists t' \in \fext_{\fictx', \fid}\big(\sspec(\alpha)\big).~\fsctx(\alpha)\vdash \cmd : \sspec' \qquad
		{\frall{\alpha} \exists t' \in \fext_{\fictx', \fid}(\sspec_\infty(\alpha)).~\fsctx(\alpha)\vdash \cmd : \sspec'}  \\[1mm]
		%
		%
		\text{\polish{// Extend $\fsctx$ with $\sspec$ {and $\sspec_\infty$}}} \\
		\fsctx':=\fsctx[\fid\mapsto\{\sspec, \sspec_\infty\}]
		\end{array}}
	}
	{
		\vdash (\fictx',\fsctx')
	}
\end{mathpar}

\subsection{Soundness}
\label{ss:soundness} We state the soundness results for ESL and give intuition about the proofs; the full proofs can be found in Appendix~\ref{apdx:soundesl}, \ref{apdx:scott}, and \ref{apdx:envsound}.

\begin{theorem}
\label{logicsoundness}
Any derivable specification is valid:
$
\fsctx \vdash \utripleq P \cmd \bigq  \implies   \fsctx \models \utripleq P \cmd \bigq
$.
\end{theorem}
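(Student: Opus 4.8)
The plan is to prove Theorem~\ref{logicsoundness} by rule induction on the derivation $\fsctx \vdash \uquadruple{P}{\cmd}{\Qok}{\Qerr}$, showing for each proof rule that if the premises are $\fsctx$-valid (in the sense of Definition~\ref{def:Gammavalid}), then the conclusion is $\fsctx$-valid. By unfolding Definition~\ref{def:Gammavalid}, it suffices to fix an arbitrary implementation context $\fictx$ with $\models(\fictx,\fsctx)$ and establish $\fictx$-validity of the conclusion, i.e.\ both the frame-preserving OX and UX conditions of Definition~\ref{def:gvalspec}. The two directions are handled largely independently, since the ESL rules are designed as a common core of the SL and ISL rules: the OX direction tracks the usual separation-logic soundness argument, and the UX direction tracks the incorrectness-logic argument, each against the complete operational semantics of Figure~\ref{fig:whilesem}.

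First I would dispatch the basic-command rules (\prooflab{skip}, \prooflab{assign}, \prooflab{nondet}, \prooflab{lookup}, \prooflab{mutate}, \prooflab{new}, \prooflab{free}, \prooflab{error}, and the error rules such as \prooflab{lookup-err-val} and \prooflab{lookup-err-use-after-free}) by direct appeal to the operational semantics: in each case the set of reachable post-states is described \emph{exactly} by $\Qok$ and $\Qerr$ (this is precisely why the post-conditions carry the $\lexp'\dotint\Val$-style residue — no information is lost), and the frame heap $\hp_f$ plays no role in the store update and is preserved verbatim, so both conditions of Definition~\ref{def:gvalspec} fall out. Next I would handle the structural rules: \prooflab{equiv} is immediate since $\models P',\Qok',\Qerr' \Leftrightarrow P,\Qok,\Qerr$ means the satisfying models coincide; \prooflab{exists} and \prooflab{disj} are routine manipulations of existential/disjunctive models through both quantifiers in Definition~\ref{def:gvalspec}; and \prooflab{frame} uses associativity and commutativity of $\uplus$ together with the side condition $\updt(\cmd)\cap\fv{R}=\emptyset$, with the key subtlety that in the UX direction one must re-split a heap of the form $\hp'\uplus\hp_R\uplus\hp_f$ and apply the premise's UX condition with frame $\hp_R\uplus\hp_f$ — this is exactly where the $\cfreed$ symbol and UX frame preservation (the footnote to Definition~\ref{def:stohsta}) earn their keep. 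The compound-command rules \prooflab{if-then}, \prooflab{if-else}, \prooflab{if-err-val}, \prooflab{seq}, and \prooflab{while} are then discharged by inspecting the corresponding semantic rules; for \prooflab{seq} one threads the intermediate assertion $R$ through both directions, noting the $R = \AssFalse$ degenerate cases discussed in the text, and for \prooflab{while} one proves by induction on the number of loop iterations that the indexed family $P_i$ tracks the state after $i$ iterations, with the meta-variable $i$ becoming the existentially quantified $i$ in the conclusion.

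The main obstacle, as flagged in the text, is the interaction between the function-call rule and the environment-formation rule \prooflab{env-extend}, i.e.\ establishing function compositionality — which is the genuinely novel content for a UX logic. I would structure this as a separate lemma: $\vdash(\fictx,\fsctx)$ implies $\models(\fictx,\fsctx)$, proved by induction on the derivation of $\vdash(\fictx,\fsctx)$ using the main rule-induction hypothesis for the body $\cmd$ of the newly added function. The crux is the measure-based well-founded argument: for terminating specifications $\sspec(\alpha)$ one argues by (transfinite, if ordinals are used — cf.\ the footnote) induction on $\alpha$, so that recursive calls inside $\cmd$ invoke $\sspec(\beta)$ only for $\beta<\alpha$ and hence fall under the induction hypothesis, yielding $\fictx$-validity of the internal specification and thereby — via the internalisation equivalences of Definition~\ref{def:espec} (handling the transition from internal to external pre- and post-conditions, including the existential repackaging of parameters and locals, and the initialisation of locals to $\nil$ matching the function-call semantic rule) — $\fictx$-validity of the external specification $\sspec$, which is what Definition~\ref{def:validenv} requires. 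For non-terminating specifications $\sspec_\infty$ one instead shows only the OX half (the UX half being vacuous, as $\Qok=\Qerr=\AssFalse$), and here the self-referential use with $\beta\le\alpha$ is harmless precisely because the OX condition is about terminating executions, of which there are none; the argument that no terminating execution exists is itself a coinductive/fixpoint argument on the operational semantics (this is where Appendix~\ref{apdx:scott} and the Scott-induction/Kleene-fixpoint machinery alluded to would be invoked). I expect the bookkeeping around internalisation — making sure the substitutions $[\vec p/\vec{\pvar p}]$, the freshness side conditions, and the $\pexp\in\Val$ requirement on return expressions all line up with the semantic function-call rule's construction of $\sto_p$ and evaluation of $\pexp'$ — to be the fiddliest part, even though each individual step is routine.
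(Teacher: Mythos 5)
Your overall skeleton matches the paper's: Theorem~\ref{logicsoundness} is indeed proved by rule induction on $\fsctx \vdash \utripleq{P}{\cmd}{\bigq}$, fixing an arbitrary $\fictx$ with $\models(\fictx,\fsctx)$ and discharging the OX and UX halves of Definition~\ref{def:gvalspec} rule by rule against the operational semantics, exactly as you describe for the basic, structural, and compound-command rules (the detailed cases are in Appendix~\ref{apdx:soundesl}).

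However, you have mislocated the difficulty in the \prooflab{fun-call} case, and the structure you propose there does not work for \emph{this} statement. By Definition~\ref{def:Gammavalid}, $\fsctx$-validity quantifies only over implementation contexts $\fictx$ that are already \emph{semantically} valid for $\fsctx$, i.e.\ $\models(\fictx,\fsctx)$ is a hypothesis of the case, not something to be established. Consequently the \prooflab{fun-call} case needs no measures, no transfinite induction and no Scott induction: from $\models(\fictx,\fsctx)$ one directly obtains a $\fictx$-valid internal specification of the function body, and the real work is the internalisation bookkeeping in both directions (constructing $\sto_p$, relating the external call pre-/post-conditions to the internal ones via the substitutions $[\vec p/\vec{\pvar p}]$ and $[\pvar y/\pvar{ret}]$), which you only gesture at and place inside the wrong argument. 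The lemma you propose, $\vdash(\fictx,\fsctx) \implies \models(\fictx,\fsctx)$, is the separate Theorem~\ref{thm:envsound}; it cannot be what discharges the \prooflab{fun-call} case, because the contexts $\fictx$ quantified over by $\fsctx$-validity are arbitrary semantically valid ones, not necessarily well-formed via \prooflab{env-extend}. Worse, making Theorem~\ref{logicsoundness} depend on that lemma while proving the lemma ``using the main rule-induction hypothesis'' inverts the paper's dependency (Theorem~\ref{thm:envsound} invokes Theorem~\ref{logicsoundness} for the function bodies) and sets up a mutual recursion between the two results with no well-founded measure identified; the paper avoids this entirely by keeping them separate and confining the measure/Scott-induction machinery to the environment-formation theorem.
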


\begin{proof}
By induction on $\fsctx \vdash \utripleq P \cmd \bigq$. Most cases are straightforward; the \prooflab{fun-call} rule obtains a valid specification for the function body from the validity of the environment.
\end{proof}

\begin{restatable}{theorem}{restateenvsound}\label{thm:envsound} Any well-formed environment is valid:
	$
		\vdash (\fictx,\fsctx)\implies \;\models (\fictx,\fsctx)
	$.
\end{restatable}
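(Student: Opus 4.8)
The plan is to proceed by induction on the derivation of $\vdash (\fictx,\fsctx)$, establishing $\models (\fictx,\fsctx)$ in the sense of Definition~\ref{def:validenv}. The base case $\vdash (\emptyset,\emptyset)$ is immediate: $\dom(\emptyset)\subseteq\dom(\emptyset)$ and the second conjunct of Definition~\ref{def:validenv} is vacuous. All the work is in the inductive case, the environment-extension rule \prooflab{env-extend}, where $\fictx' = \fictx[\fid\mapsto(\vec{\pvar x},\cmd,\pexp)]$, $\fsctx' = \fsctx[\fid\mapsto\{\sspec,\sspec_\infty\}]$, and the induction hypothesis gives $\models (\fictx,\fsctx)$. (The full rule extends with a group of mutually recursive functions; this only turns the single-function inductions below into simultaneous ones and changes nothing essential.) The domain inclusion $\dom(\fsctx')\subseteq\dom(\fictx')$ is clear, so it remains to exhibit an $\fictx'$-valid internal specification for every specified function.

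For the functions already present, i.e.\ those $g\in\dom(\fsctx)$ with $g\neq\fid$, the induction hypothesis already supplies some $t_g'\in\fext_{\fictx,g}(t_g)$ (and $\fext_{\fictx,g}=\fext_{\fictx',g}$, since $g$ has the same implementation in $\fictx$ and $\fictx'$) with $\fictx\models\cmd_g:t_g'$, so only a transfer from $\fictx$-validity to $\fictx'$-validity is needed. This rests on an auxiliary lemma, proven by a routine induction on $\vdash(\fictx,\fsctx)$: in any well-formed environment $\dom(\fictx)=\dom(\fsctx)$ and every function body calls only functions in $\dom(\fictx)$ (in turn because the only proof rule constraining function identifiers, \prooflab{fun-call}, requires a specification in the context). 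Since $\fid\notin\dom(\fictx)$, no body reachable from $\cmd_g$ mentions $\fid$, so adding the fresh implementation neither adds nor removes transitions for $\cmd_g$, the big-step semantics of $\cmd_g$ under $\fictx'$ coincides with that under $\fictx$, and hence $\fictx'\models\cmd_g:t_g'$ by Definition~\ref{def:gvalspec}.

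It remains to handle the new function $\fid$, for which $\fsctx'(\fid)=\{\sspec,\sspec_\infty\}$. The terminating specification $\sspec$ is obtained by well-founded induction on the measure $\alpha$ (over $\mathbb{N}$, or over a computable ordinal in general): from the inner hypothesis one has, for every $\beta<\alpha$, an $\fictx'$-valid internal specification for $\sspec(\beta)$; together with the analogous fact for $\sspec_\infty(\beta)$, $\beta\le\alpha$ (see below), and the transfer of the already-present functions, this yields $\models(\fictx',\fsctx(\alpha))$, so logic soundness (Theorem~\ref{logicsoundness}) applied to the premise $\fsctx(\alpha)\vdash\cmd:\sspec'$ of \prooflab{env-extend} gives $\fictx'\models\cmd:\sspec'$, i.e.\ an $\fictx'$-valid internal specification for $\sspec(\alpha)$; since the measure variable is fresh, dropping it produces one for $\sspec$ itself. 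The non-terminating specification $\sspec_\infty$ is more delicate: its UX-validity is vacuous (post-conditions $\AssFalse$), while OX-validity says the body $\cmd$ has no big-step result from the internal pre-condition of $\sspec_\infty(\alpha)$, for every $\alpha$. A direct appeal to logic soundness is circular here, as \prooflab{env-extend} permits $\sspec_\infty(\alpha)$ to justify itself, so one instead reasons over the finite unfoldings of $\fid$ in $\baction_{\fictx'}$ — equivalently, runs a Scott induction on the least-fixed-point presentation of the semantics, as in Appendix~\ref{apdx:scott}: the base case uses the fact that ESL cannot cut paths, so derivability of a specification with $\AssFalse$ post-conditions already forces every execution path of $\cmd$ to diverge, and the inductive step is once more an application of logic soundness, now against an environment built from the current approximant of $\fid$.

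Because the well-founded induction for $\sspec$ needs the validity of $\sspec_\infty(\beta)$ (to assemble $\models(\fictx',\fsctx(\alpha))$) while the fixed-point argument for $\sspec_\infty$ needs the lower-measure terminating specifications (to discharge the recursive calls inside the unfoldings), the two specifications of the new function are in fact established by a single intertwined induction in the full proof (Appendices~\ref{apdx:envsound} and~\ref{apdx:scott}), rather than the two separate arguments sketched above. This intertwining, together with the self-referential recursive-call discipline of non-terminating specifications that makes the domain-theoretic detour unavoidable, is the main obstacle; the fresh-function lemma and the internalisation bookkeeping are routine.
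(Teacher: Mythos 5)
Your proposal follows essentially the same route as the paper: induction on the well-formedness derivation, transfer of the previously specified functions' valid internal specifications to the extended implementation context (using that their bodies cannot call the fresh function), a transfinite induction on the measure $\alpha$ that uses Theorem~\ref{logicsoundness} to validate the terminating specifications, and a Scott/least-fixpoint induction (Appendix~\ref{apdx:scott}) for the OX-validity of the non-terminating specification, the two being intertwined exactly as in the paper's key lemma $\models(\fictx,\fsctx) \implies \frall{\alpha}\models(\fictx',\fsctx(\alpha))$, after which the measure is removed by existentialisation. The only slip is your description of the base case of the fixpoint argument: it is simply that the bottom approximant (a diverging command) vacuously OX-satisfies any specification, not that derivability of an $\AssFalse$-post-condition specification already forces every path of $\cmd$ to diverge (that is the conclusion of the whole argument), but this does not change the structure of the proof.
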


\begin{proof}
At the core of the proof is a lemma stating that $\models(\fictx,\fsctx) \implies (\frall{\alpha} \models(\fictx',\fsctx(\alpha)))$, where $\fictx'$ and $\fsctx(\alpha)$ have been obtained from $\fictx$ and $\fsctx$ as per \prooflab{env-extend}. Using this lemma, we derive the desired $\models(\fictx',\fsctx')$, where $\fsctx'$ is obtained from $\fsctx$ and $\fsctx(\alpha)$ as per \prooflab{env-extend}. The proof of this lemma is done by transfinite induction on $\alpha$ and has the standard zero, successor, and limit ordinal cases. We outline the proof for the case in which a single, possibly recursive, function $f$ with body $\cmd_\fid$ is added; the generalisation to $n$ mutually recursive functions is straightforward and can be seen in Appendix~\ref{apdx:envsound}.

In all three cases, the soundness of all specifications except the NT-specification with the highest considered ordinal follows from the inductive hypothesis. This remaining NT-specification is vacuously UX-valid, meaning that we only need to prove its OX-validity.
For this, we use a form of fixpoint induction called Scott induction (see, e.g., Winskel~\cite{scott}), required when specifications can be used to prove themselves (e.g. any SL specification).

We set up the Scott induction by extending the set of commands with two pseudo-commands, $\mathtt{scope}$ and $\mathtt{choice}$, 
with the former modelling the function call but allowing arbitrary commands to be executed in place of the function body, and the latter denoting non-deterministic choice.
We then construct the greatest-fixpoint closure of these extended commands, denoted by $\ccmdplain$, whose elements may contain infinite applications of the command constructors. We define a behavioural equivalence relation $\simeq_{\fictx'}$ on $\ccmdplain$ and denote by $\ccmdpr$ the obtained quotient space. This relation induces a partial order $\poccmd$, and a join operator that coincides with $\mathtt{choice}$, and we show that $(\ccmdpr,\poccmd)$ is a domain.

We next define $S^\alpha$ as the set of all equivalence classes that hold an element that, for every specification in $(\fsctx(\alpha))(f)$, OX-satisfies at least one of its internal specifications, and show that $S^\alpha$ is an admissible subset of $\ccmdpr$, that is, that it contains the least element of $\ccmdpr$ (represented, for example, by the infinite loop $\mathtt{while}~(\true)~\{~\pskip~\}$) and is chain-closed.

We then define the function $h(\cmd)\defeq\fcallsub{\cmd_f}{\cmd, \fictx', \fid}$, which replaces
all function calls to $\fid$ in $\cmd_\fid$ with $\cmd$ using the $\mathtt{scope}$ command, and the function $g$ as the lifting of $h$ to $\ccmdpr$:
$
g(\repr{\cmd}):= \left[h(\cmd)\right]
$.
We next prove that $g$ is continuous (that is, monotonic and supremum-preserving) and that $g(S^\alpha)\subseteq S^\alpha$, from which we can apply  the Scott induction principle, together with a well-known identity of the least-fixpoint, which implies that $\cmd_\fid\in\lfp{g}$, to obtain that $\repr{\cmd_\fid}\in S^\alpha$. From there, we are finally able to prove that $\models(\fictx',\fsctx(\alpha))$.
\end{proof}

These two theorems, to the best of our knowledge, are the first to demonstrate sound function compositionality for UX logics. Previous work on UX logics~\cite{isl,cisl,insecsl} used function specifications in examples but did not include rules in the logic for calling functions and managing a function specification environment. Program logics that do not include function rules and function specification environments in effect delegate soundness responsibilities to the meta-logic within which they are embedded. This might be appropriate in some contexts, such as in interactive theorem provers, whose meta-logic is reliable. Chargu\'{e}raud's clean-slate tutorial SL implementation in Coq~\cite{Chargueraud20}, for example, does not provide either function call rules or program-logic-level infrastructure for a function specification environment; instead, it relies on Coq's induction mechanism and definitional mechanism to use and store function specifications. However, when implementing an SL/ISL/ESL-based tool in a mainstream non-ITP language, such as C++ or OCaml, no reliable meta-logic that can act as a safety net is available. This is particularly concerning for UX logics, which require complex rules for handling functions, including forgetting information about function-local mutable variables. In ESL, the handling of functions is fully internalised into the logic, no meta-logic facilities are required to handle function calls, and the program-logic-level facilities of ESL for handling functions are validated by the soundness~proof.

The proof of Theorem~\ref{thm:envsound} can be adjusted for ISL: the function call rule would remain the same, and \prooflab{env-extend} would not include NT-specifications, removing the need for Scott induction. On the other hand, the Scott induction itself could be easily adapted for~SL.
\section{Examples: ESL in Practice}
\label{ch:examples}

We demonstrate how to use ESL to specify and verify correctness and incorrectness properties of data-structure algorithms, focussing on singly-linked lists and binary search trees. 

We investigate, for the first time, the use of abstract predicates in a UX program logic, decoupling abstraction from over-approximation. Our findings show that UX/EX specifications can soundly incorporate abstraction, but also that it, ultimately, cannot be used as freely as in the OX setting. Firstly, since UX reasoning cannot lose information, not all algorithms can be UX-specified at all levels of abstraction, and hence sometimes specifications have to be less abstract than in OX reasoning. Secondly, because specifications are only composable when expressed at the same level of abstraction, specifications of a library client have to be written at the ``least common level of abstraction'' of the specifications of all of the library functions that the client calls. 

Building on \S\ref{sec:overview}, we give further intuition on how to think informally about UX/EX specifications using a number of list algorithms and predicates describing lists with various degrees of abstraction (\S\ref{ss:lpreds}, \S\ref{ss:uxexabs}), more detail on how to write formal ESL proofs (\S\ref{ss:eslproofs}), and examples of ESL reasoning for binary-search-tree algorithms~(\S\ref{ss:beyond}).


\subsection{List Predicates}
\label{ss:lpreds}

We implement singly-linked lists in the standard way: every list consists of two contiguous cells in the heap (denoted $x \mapsto a, b$, meaning $x \mapsto a \lstar x + 1 \mapsto b$), with the first holding the value of the node, the second holding a pointer to the next node in the list, and the list terminating with a $\nil$ pointer.
To capture lists in ESL, we use a number of list predicates:
\[
\begin{array}{r@{~}c@{~}l}
	\llist{x} & \defeq & (x\doteq\nil)~\lor (\exsts{v, x'} x\mapsto v,x' \lstar\llist{x'}) \\[1mm]
	\llist{x, n} & \defeq & (x\doteq\nil \lstar  n\doteq0) \lor (\exsts{v,x'} x\mapsto v,x' \lstar\llist{x',n - 1} ) \\[1mm]
\llist{x, \lstvs} & \defeq & (x\doteq\nil \lstar \lstvs \doteq \emplist) \lor (\exsts{v, x', \lstvs'} x \mapsto v, x' \lstar\llist{x',\lstvs'}\lstar \lstvs \doteq v \cons \lstvs') \\[1mm]
\llist{x, \lstxs}& \defeq & (x\doteq\nil \lstar \lstxs \doteq \emplist) \lor  (\exsts{v, x', \lstxs'} x \mapsto v, x' \lstar\llist{x',\lstxs'}\lstar \lstxs \doteq x \cons \lstxs') \\[1mm]
\llist{x, \lstxs, \lstvs} & \defeq & (x\doteq\nil \lstar \lstxs \doteq \emplist \lstar \lstvs \doteq \emplist)~\lor \\[0.25mm] && (\exsts{v, x', \lstxs', \lstvs'} x \mapsto v, x' \lstar\llist{x',\lstxs', \lstvs'} \lstar  \lstxs \doteq x \cons \lstxs' \lstar \lstvs \doteq v \cons \lstvs') \\[1mm]
\end{array}
\]
These predicates expose different parts of the list 
structure in their parameters, {\it hiding} the rest via
existential quantification: the 
$\llist{x}$ predicate hides all information about the represented  mathematical list,
just declaring that there is a singly-linked list at address~$x$; 
the
$\llist{x, n}$ predicate hides the internal node addresses and values,
exposing the list length via the parameter $n$;
the $\llist{x, \lstxs} $ predicate hides information about the values
of the mathematical list, exposing the internal addresses of the list
via the parameter $\lstxs$;
the $\llist{x, \lstvs}$ predicate hides   information about the internal
addresses, exposing  the list's values  via the parameter $\lstvs$; 
and the list predicate $\llist{x, \lstxs, \lstvs}$ hides nothing, exposing the entire node-value structure via the parameters $\lstxs$ and $\lstvs$.
These predicates are related to each other via logical equivalence; for example, it holds that:
\[
\begin{array}{c@{\qquad}c}
\models \llist{x} \Leftrightarrow \exists n.~\llist{x, n} & \models \llist{x} \Leftrightarrow \exists \lstvs.~\llist{x, \lstvs} \\[1mm]
\models \llist{x} \Leftrightarrow \exists \lstxs, \lstvs.~\llist{x, \lstxs, \lstvs} & 
\models \llist{x, n} \Leftrightarrow \exists \lstvs.~\llist{x, \lstvs} \lstar |\lstvs| \doteq n
\end{array}
\]

\subsection{Writing UX/EX Abstract Specifications}
\label{ss:uxexabs}
We consider a number of list algorithms, described in words, and guide the reader on how to write correct UX/EX specifications for these algorithms using the list abstractions given in the previous section (\S\ref{ss:lpreds}), comparing how the UX/EX approach and specifications differ from their OX counterparts. We provide detailed proofs and implementations for each type of algorithm (iterative/recursive, allocating/deallocating, pure/mutative, etc.) in Appendix~\ref{apdx:examples}.

An important point is to understand how to look for counter-examples to a given specification: from the definition of OX validity, it follows that breaking OX reasoning amounts to ``finding a state in the pre-condition (pre-model) for which the execution of~$f$ terminates and does not end in a state in the post-condition''; from the definition of UX validity, breaking UX reasoning means ``finding a model of the post-condition (post-model) not reachable by execution of $f$ from any state in the pre-condition''; and from the definition of EX validity, it follows that breaking EX reasoning means breaking either OX or UX reasoning. In addition, it is useful to remember that, for breaking UX validity, it is sufficient to find information known in the pre-condition but lost in the post-condition. 


\subparagraph*{Length.} We first revisit the list-length function $\mathtt{LLen}(\pvar x)$, which takes a list at address~$\pvar x$, does not modify it, and returns its length. In \S\ref{sec:overview}, we have shown that it satisfies the exact specification  
$
\utripleq{\pvar x\doteq x \lstar \llist{x,n} }{~\mathtt{LLen}(\pvar x)~}{\llist{x,n} \lstar\pvar{ret}\doteq n}
$
and observed that
\begin{center}
{\bfseries (O1)}~abstraction does not always equal over-approximation.
\end{center}


Using similar reasoning, we can come to the conclusion that the following, less abstract specifications for $\mathtt{LLen}(\pvar x)$ are also EX-valid:
\[
\begin{array}{c}
\utripleq{\pvar x\doteq x \lstar \llist{x,\lstvs} }{~\mathtt{LLen}(\pvar
  x)~}{\llist{x,\lstvs} \lstar\pvar{ret}\doteq |\lstvs|} \\
\utripleq{\pvar x\doteq x \lstar \llist{x,\lstxs,\lstvs} }{~\mathtt{LLen}(\pvar
  x)~}{\llist{x,\lstxs,\lstvs} \lstar\pvar{ret}\doteq |\lstxs|}
\end{array}
\]
On the other hand, if we consider the following OX-valid specification:
\[
\triple{\pvar x\doteq x \lstar \llist{x} }{~\mathtt{LLen}(\pvar
  x)~}{\exists n \in \Nat.~\llist{x} \lstar\pvar{ret}\doteq n}
\]
we see that it is not UX-valid as the post-condition does not connect the return value to the list.
In particular, if we choose a post-model for $\llist{x}$  that has length 2, 
but then choose, for example, $\pvar{ret} = 42$, we run into a problem: as the algorithm
does not modify the list, we have to choose the same model of $\llist{x}$ for the pre-model to have a chance of reaching the post-model, but then the algorithm will return 2, not 42, meaning that this specification is indeed not UX/EX-valid. From this discussion, we observe that:
\begin{center}
{\bfseries (O2)}~in valid UX/EX specifications, data-structure abstractions used in a post-condition must expose enough information to capture the behaviour of the function being specified with respect to the information given in the pre-condition.
\end{center}

Note that, given a specification less abstract than strictly needed, one can obtain more abstract ones by using validity-preserving transformations on specifications that correspond to the structural rules of the logic. We refer to these as {\it admissible} transformations, give the ones for existential introduction and equivalence below, and the rest in Appendix~\ref{apdx:esl}:
\begin{mathpar}
	\small
 \inferrule[\mbox{adm-exists}]
  { \Gamma \models  \uquadruple{\vec{\pvar x} = \vec x \lstar P} {f(\pvvar x)} \Qok \Qerr  \qquad y \notin \vec x}
  { \Gamma \models \uquadruple{\vec{\pvar x} = \vec x \lstar \exists y.~P}{f(\pvvar x)}{\exists y.~\Qok}{\oxerr : \exists y.~\Qerr} } 
\and
 \inferrule[\mbox{adm-equiv}]
  { {\begin{array}{c} \Gamma \models  \uquadruple{\vec{\pvar x} = \vec x \lstar P'} {f(\pvvar x)} {\Qok'}{\Qerr'}  \qquad \models P', \Qok', \Qerr' \Leftrightarrow P, \Qok, \Qerr \end{array}}}
  { \Gamma \models  \uquadruple{\vec{\pvar x} = \vec x \lstar P} {f(\pvvar x)} \Qok \Qerr}
\and

  \end{mathpar}

For list-length, starting from the least abstract specification using $\llist{x,\lstxs,\lstvs}$, we can derive, for example, the specification using $\llist{x,n}$, as follows:
\[
\begin{array}{@{}r@{\quad}l}
&  \utripleq{\pvar x\doteq x \lstar \llist{x,\lstxs,\lstvs} }{~\mathtt{LLen}(\pvar
  x)~}{\llist{x,\lstxs,\lstvs} \lstar\pvar{ret}\doteq |\lstxs|} \\
\text{\prooflab{adm-exists}} &  \utripleq{\exists \lstvs.~\pvar x\doteq x \lstar \llist{x,\lstxs,\lstvs} }{~\mathtt{LLen}(\pvar
  x)~}{\exists \lstvs.~\llist{x,\lstxs,\lstvs} \lstar\pvar{ret}\doteq |\lstxs|} \\
\text{\prooflab{adm-exists}} &  \utripleq{\exists \lstxs, \lstvs.~\pvar x\doteq x \lstar \llist{x,\lstxs,\lstvs} }{~\mathtt{LLen}(\pvar
  x)~}{\exists \lstxs, \lstvs.~\llist{x,\lstxs,\lstvs} \lstar\pvar{ret}\doteq |\lstxs|} \\
  \text{\prooflab{adm-equiv}} &  \utripleq{\pvar x\doteq x \lstar \llist{x, n}}{~\mathtt{LLen}(\pvar
  x)~}{\llist{x,n} \lstar\pvar{ret}\doteq n}
\end{array}
\]
Interestingly, from one more application of \prooflab{adm-exists} and \prooflab{adm-equiv}, we can derive
\[
\utripleq{\pvar x\doteq x \lstar \llist{x} }{~\mathtt{LLen}(\pvar x)~}{\exists n.~\llist{x,n} \lstar\pvar{ret}\doteq n}
\]
which further illustrates observation (O2), in that even though the pre-condition does not talk about the length of the list, the post-condition has to expose it because the function output depends on it, and hence the post-condition must connect up the return value to the length of the list, here by an existentially quantified variable.

This approach of deriving abstract specifications can be used in general for working with ESL: for a given algorithm, first prove the least abstract specification, which exposes all details, and then adjust the degree of abstraction to fit the needs of the client code. We discuss this further in the upcoming paragraph on reasoning about client programs.

\subparagraph*{Membership.}
\label{spec:mem} 
Next, we consider the list-membership function $\mathtt{LMem}(\pvar x, \pvar v)$, which takes a list at address~$\pvar x$, does not modify it, and returns $\true$ if $\pvar v$ is in the list, and false otherwise. Given (O2) and the fact that the function output depends on the values in the list, we understand that, for its UX/EX specification, we should be using a list abstraction that exposes at least the values, that is, $\llist{x, \lstvs}$ or $\llist{x, \lstxs, \lstvs}$. The corresponding specifications are:
\[
\begin{array}{c}
\utripleq{\pvar x\doteq x \lstar \pvar v \doteq v \lstar \llist{x, \lstvs} }{~\mathtt{LMem}(\pvar
  x, \pvar v)~}{\llist{x, \lstvs} \lstar \pvar{ret} \doteq (v \in \lstvs)} \\
\utripleq{\pvar x\doteq x \lstar \pvar v \doteq v \lstar \llist{x, \lstxs, \lstvs} }{~\mathtt{LMem}(\pvar
  x, \pvar v)~}{\llist{x, \lstxs, \lstvs} \lstar \pvar{ret} \doteq (v \in \lstvs)}
\end{array}
\]
and are proven similarly to list-length. We can check that a more abstract specification, say:
\[
\triple{\pvar x\doteq x \lstar \pvar v \doteq v \lstar \llist{x} }{~\mathtt{LMem}(\pvar x, \pvar v)~}{\exists b \in \Bool.~\llist{x} \lstar \pvar{ret} \doteq b}
\]
is not UX-valid, by choosing, as the post-model, $b$ to be false and the list at $x$ to contain~$v$. As for list-length, since list-membership does not modify the list, we have to choose the same model of $\llist{x}$ for the pre-model to have a chance of reaching the post-model, but then the algorithm will return $\true$, not $\false$, so this post-model is not reachable.

\subparagraph*{Swap-First-Two.} Next, we consider the list-swap-first-two function $\mathtt{LSwapFirstTwo}(\pvar x)$, which takes a list  at address~$\pvar x$, swaps its first two values if the list is of sufficient length, returning $\nil$, and throws an error otherwise without modifying the list. Given (O2), to specify this function we need an abstraction that captures list length and, apparently, also the list values; for example, $\llist{x, \lstvs}$. As this function can throw errors, its full EX specification has to use the ESL quadruple, in which the two post-conditions are constrained with the corresponding, shaded, branching conditions:
\[
\begin{array}{c}
 {\color{blue} (\pvar x\doteq x \lstar \llist{x, \lstvs})} \\
 {\mathtt{LSwapFirstTwo}(\pvar x, \pvar v)} \\
 {\color{blue} (\oxok: \exists v_1, v_2, \lstvs'.~\shade{\lstvs = v_1 : v_2 : \lstvs'} \lstar \llist{x, v_2 : v_1 : \lstvs'} \lstar \pvar{ret} \doteq \nil)} \\
 {\color{blue} (\oxerr: \llist{x, \lstvs} \lstar \shade{|\lstvs| < 2} \lstar \pvar{err} \doteq \strlit{List~too~short!} )}
\end{array}
\]
observing that the success post-condition, given the used abstraction, has to not only state that the length of the list is not less than two, but also how the values are manipulated, and also that the error message is chosen for illustrative purposes.

However, note that the swapped values {\it are not featured in the function output}, but instead remain contained within the predicate. This indicates that a more abstract specification: 
\[
\begin{array}{c}
 {\color{blue} (\pvar x\doteq x \lstar \llist{x, n})} \\
 {\mathtt{LSwapFirstTwo}(\pvar x, \pvar v)} \\
 {\color{blue} (\oxok: \llist{x, n} \lstar \shade{n \geq 2} \lstar \pvar{ret} \doteq \nil)}~
 {\color{blue} (\oxerr: \llist{x, n} \lstar \shade{n < 2} \lstar \pvar{err} \doteq \strlit{List~too~short!} )}
\end{array}
\]
which only reveals the list length, might be EX-valid, and indeed it is. Any list we choose in the error post-model will have length less than two, and can then be used in the pre-model to reach the post-model. On the other hand, whichever list we choose in the success post-model will have length at least two, that is, its values will be of the form $v_1 : v_2 : \lstvs$ and it will have some addresses, and then we can choose a list with the same addresses and values $v_2 : v_1 : \lstvs$ in the pre-model and we will reach the post-model by executing the function.

\subparagraph*{Pointer-Reverse.} 
\label{spec:rev} 
Let us now examine the list-pointer-reverse function, $\mathtt{LPRev}(\pvar x)$, which takes a list at address~$\pvar x$ and reverses it by reversing the direction of the next-pointers, returning the head of the reversed list. Given (O2) and the fact that the algorithm manipulates pointers and returns an address, but the actual values in the list are not exposed, we will try to use the address-only $\llist{x, \lstxs}$ predicate to specify this function as in the following OX triple, where $\lstxs^\dagger$ denotes the reverse of the mathematical list $\lstxs$:
\[
\triple{\pvar x\doteq x \lstar \llist{x, \lstxs} }{~\mathtt{LPRev}(\pvar x)~}{\llist{\pvar{ret}, \lstxs^\dagger}}
\]
which would seem to be UX-valid given our OX experience and previous examples, but is~not. In particular, it has no information about the logical variable $x$, which exists only in the pre-condition. This is not an issue in OX reasoning, but in UX reasoning it would mean that there exists a logical environment that interprets the post-condition but not the pre-condition, and such a specification, by the definition, could never be UX-valid.

To understand which specific information about $x$ is required, we first add the general $x \in \Val$, making the post-condition $\llist{\pvar{ret}, \lstxs^\dagger} \lstar x \in \Val$, and then try to choose a post-model by picking values for $\pvar{ret}$, $\lstxs$, and $x$. Note that, given the definition of $\llist{x, \lstxs}$, we cannot just pick any non-correlated values for $\pvar{ret}$ and $\lstxs$: in particular, either $\lstxs$ is an empty list and $\pvar{ret}$ is $\nil$, or $\lstxs$ is non-empty and $\pvar{ret}$ is its last element. This observation, in fact, reveals the information needed about $x$: either $x$ is $\nil$ and $\lstxs$ is empty, or $\lstxs$ is non-empty and $x$ is its first element. We capture this information using the $\listptr{x, \lstxs}$ predicate:
\[
\listptr{x, \lstxs} \defeq (\lstxs \doteq \emplist \lstar x \doteq \nil) \lor (\exsts{\lstxs'} \lstxs = x \cons \lstxs')
\]
and arrive at the desired EX specification of the list-pointer-reverse algorithm:
\[
\utripleq{\pvar x\doteq x \lstar \llist{x, \lstxs} }{~\mathtt{LPRev}(\pvar
  x)~}{\llist{\pvar{ret}, \lstxs^\dagger} \lstar \listptr{x, \lstxs}}
\]
Let us make sure that this specification is UX-valid. If we pick a post-model with $\lstxs = \emplist$, then $x = \pvar{ret} = \nil$ and the pre-model with the same $x$ and $\lstxs$ will work, as the list holds no values. For a post-model with non-empty $\lstxs$, $x$ must equal the head of $\lstxs$, $\pvar{ret}$ must equal the tail of $\lstxs$, and we also have to pick some arbitrary values $\lstvs$, with $|\lstvs| = |\lstxs|$. Then, given the described behaviour of the algorithm, we know that this post-model is reachable from a pre-model which has the list at $x$ with addresses $\lstxs$ and values $\lstvs^\dagger$.


\subparagraph*{Free.} 
\label{spec:free} 
Next, we take a look at the $\mathsf{LFree}(\pvar x)$ function, which frees a given list at address~$\pvar x$. Its OX specification is
$
\!\!
\begin{array}{l}
\triple{\pvar x \doteq x \lstar \llist{\pvar x} }{\mathtt{LFree}(\pvar x)}{\pvar{ret} \doteq \nil}\!\!\!\!
\end{array}
$, but it does not transfer to UX contexts because no resource from the pre-condition can be forgotten in the post-condition as that would break the UX frame property~\cite{isl}. Instead, we have to keep track of the addresses to be freed, which we can do using the $\llist{x, \lstxs}$ predicate (or $\llist{x, \lstxs, \lstvs}$), and we also have to explicitly state in the post-condition that these addresses have been freed:
\[
		\utripleq{\pvar x\doteq x \lstar \llist{x, \lstxs} }{~\mathtt{LFree}(\pvar x)~}{\pfreed{\lstxs} \lstar \listptr{x, \lstxs} \lstar \pvar{ret} \doteq \nil}
\]	
using the $\pfreed{\lstxs}$ predicate, which is defined as follows:
\[
	\pfreed{\lstxs} \defeq  (\lstxs\doteq \emplist) \lor (\exsts{x, \lstxs'} \lstxs \doteq x : \lstxs' \lstar  x \mapsto \cfreed, \cfreed \lstar \pfreed{xs'})
\]

\subparagraph*{Client Programs and Specification Composition.} 
\label{spec:client}
We discuss the usability of ESL specifications in general and abstraction in particular in the context of client programs that call multiple library functions. Consider the following (slightly contrived) client program, which takes a list and: pointer-reverses it if its length is between 5 and~10; frees it and then throws an error if its length is smaller than 5; and does not terminate otherwise:
\[
\begin{array}{l}
 \mathtt{LClient}(\pvar x)~\{ \\
 \tab\passign{\pvar{l}}{\mathtt{LLen(\pvar x)}}; \\
 \tab\texttt{if}~( \pvar{l} < 5 ) \\ 
 \tab\tab \{~\passign{\pvar{r}}{\mathtt{LFree(\pvar x)}};~\perror(\strlit{List~too~short!})~\}~\mathtt{else}~ \\ 
 \tab\tab \{~\pifelses{ \pvar{l} > 10}~\pwhiles{\true} ~\pskip~\}~\}~\mathtt{else}~\{~\passign{\pvar{r}}{\mathtt{LPRev(\pvar l)}}~\}~\}; \\
 \tab\preturn{\pvar{r}} \\
 \} \\
\end{array}
\]
Our first goal is to understand which is the most abstract list predicate that could be used for reasoning about this client, since we want to minimise the amount of details we need to carry along in the proof, noting that the least abstract one, $\llist{x, \lstxs, \lstvs}$, will always work. Observe that, importantly, only specifications expressed at the same abstraction level are composable with each other, because they must be composed using equivalence. We explore this in more detail in the subsequent formal discussion (see, in particular, observation (O5)).

When it comes to $\mathtt{LClient}(\pvar x)$, for list-length, we need information about the list length, meaning that we can use either $\llist{x, n}$, $\llist{x, \lstxs}$, or $\llist{x, \lstvs}$, but not $\llist{x}$. For list-free, we must have information about the addresses, meaning that $\llist{x, n}$ and $\llist{x, \lstvs}$ will not work, leaving us with $\llist{x, \lstxs}$, which is also usable for list-pointer-reverse. Therefore, we can write the specification of this client using the $\llist{x, \lstxs}$ predicate, as follows:
\[
\begin{array}{c}
 {\color{blue} (\pvar x\doteq x \lstar  \llist{x, \lstxs})} \\
 {\mathtt{LClient}(\pvar x)} \\
 {\color{blue} (\oxok: 5 \dotle |\lstxs| \dotle 10 \lstar \llist{\pvar{ret}, \lstxs^\dagger} \lstar \listptr{x, \lstxs} )} \\
 {\color{blue} \left(\oxerr: |\lstxs| \dotlt 5 \lstar \pfreed{\lstxs} \lstar \listptr{x, \lstxs} \lstar \pvar{err} \doteq \strlit{List~too~short!}\right)}
\end{array}
\]

In general, however, it is sufficient for a client to call one function that works with addresses and another that works with values for the only applicable predicate to be $\llist{x, \lstxs, \lstvs}$, which is still abstract in the sense that it allows for unbounded reasoning about lists, but does not hide any of its internal information. This leads us to the following observation:
\begin{center}
  {\bfseries(O3)}~specifications that use predicates which hide data-structure information, \\ albeit provable,  may have limited use in UX client reasoning.
\end{center}
As a final remark on abstraction, note that we have only considered predicates that expose the data-structure sub-parts (for lists, these sub-parts are values $\lstvs$ and addresses $\lstxs$) either entirely or not at all. It would be also possible to expose \emph{some} of this structure for some of the algorithms, but because of (O3), specifications using such abstractions are only composable with specifications exposing the same partial structure, and hence likely to be of limited use.

\subparagraph*{Non-termination.} We conclude our discussion on specifications with two remarks on EX reasoning about non-terminating behaviour. First,  consider the non-terminating branch of the \texttt{LClient} function, which is triggered when $|\lstxs| > 10$. Observe that this branch is implicit in the client specification, in that it is subsumed by the success post-condition (since $\models P \lor (|\lstxs| \dotgt 10 \lstar \AssFalse) \Leftrightarrow P$). However, to demonstrate that it exists, we can constrain the pre-condition appropriately to prove the specification
$
\utripleq{\pvar x\doteq x \lstar  \llist{x, \lstxs} \lstar |\lstxs| \dotgt 10 } {\mathtt{LClient}(\pvar x)} \AssFalse
$.
This implicit loss of non-terminating branches can be characterised informally as follows:
\begin{center}
{\bfseries (O4)} if the post-conditions do not cover all paths allowed by the pre-condition, \\ then the ``gap'' is non-terminating.
\end{center}
In this case, the pre-condition implies $|\lstxs| \in \Nat$ and the post-conditions cover the cases where $|\lstxs| \leq 10$, leaving the gap when $|\lstxs| > 10$, for which we provably have client non-termination. 

Second, we observe that, in contrast to terminating behaviour, for non-terminating behaviour EX is as expressive as OX; that is, the EX triple $\utripleq{P}{\cmd}{\AssFalse}$ is equivalent to the OX triple $\triple{P}{\cmd}{\AssFalse}$ as the UX triple $\isltripleonecase{P}{\cmd}{\AssFalse}$ is vacuously true. This is not to say that all non-terminating behaviour can be captured by ESL specifications. For example, as in OX, if the code branches on a value that does not come from the pre-condition, and if one of the resulting branches does not terminate, and if the code can also terminate successfully, then the non-terminating branch will be implicit in the pre-condition, but no gap in the sense of (O4) will be present. This is illustrated by the code and specification below, where the pre- and the post-condition are the same, but a non-terminating path still~exists:
\[
\especlines{\pvar x \doteq 0 }~
\passign{\pvar{x}}{\prandom};~
\pifelses{ \pvar{x} > 42 }~
 \pwhiles{\true} \pskip \}~
 \pifelsem~
\passign{\pvar{x}}{0}~
\pifelsee~
\especlines{\pvar x \doteq 0} 
\]

%
%
%

\subsection{More ESL Proofs: Iterative list-length}
\label{ss:eslproofs}

In \S\ref{sec:overview}, we have shown a proof sketch for a recursive implementation of the list-length algorithm, demonstrating how to handle the measure for recursive function calls; how the folding of predicates works in the presence of equivalence; and how to move between external and internal specifications. We highlight again the UX-specific issue that we raised and that is related to predicate folding, which can be formulated generally as follows:
\begin{center}
{\bfseries (O5)} if the code accesses data-structure information that the used predicate hides, then that predicate might not be foldable in a UX-proof in all of the places in which \\ it would be foldable in the corresponding OX-proof.
\end{center}
Here, we show how to write ESL proofs for looping code, using as example an iterative implementation of the list-length algorithm. Proofs for the majority of the other algorithms mentioned in~\S\ref{ss:uxexabs} can be found in Appendix~\ref{apdx:examples}; the rest are similar.

\begin{figure*}[!t]
\[
\small
\begin{array}{@{}l@{~}l}
\Gamma \vdash & \especlines{\pvar x \doteq x \lstar \llist{x, n}} \\
& \mathtt{LLen}(\pvar x)~\{ \\
& \tab\especlines{\pvar x \doteq x \lstar \llist{x, n}  \lstar \pvar r \doteq \nil } \\
&	\tab\passign{\pvar{r}}{0} \\
& \tab\especlines{\pvar x \doteq x \lstar \llist{x, n}  \lstar \pvar r \doteq 0 } \\
& \tab\especlines{{P_0}   } \\
	& \tab\pwhiles{ \pvar{x} \neq \nil } \\
&	\tab\tab\especlines{P_i \lstar \pvar x \dotneq \nil } \\
&	\tab\tab\especline{\exists j, v, x'.~\llseg{x, \pvar x, i} \lstar \pvar x \mapsto v, x' \lstar  \llist{x', j - 1} \lstar n \doteq i + j \lstar \pvar r \doteq i} \\
&	\tab\tab\passign{\pvar{x}}{[\pvar x + 1]}; \\
&	\tab\tab\especline{\exists j, v, x'.~\llseg{x, x', i} \lstar x' \mapsto v, \pvar x \lstar  \llist{\pvar x, j} \lstar n \doteq i + (j + 1) \lstar \pvar r \doteq i} \\
& \tab\tab\polish{\text{// equivalence: $\models \llseg{x, y, n + 1} \Leftrightarrow \exists x', v.~\llseg{x, x', n} \lstar x' \mapsto v, y$]]}} \\
&	\tab\tab\especline{\exists j.~\llseg{x, \pvar x, i + 1} \lstar \llist{\pvar x, j} \lstar n \doteq (i + 1) + j \lstar \pvar r \doteq i} \\
	& \tab\tab\passign{\pvar{r}}{\pvar r + 1} \\
	& \tab\tab\especlines{P_{i+1} } \\
	& \tab\} 
	\\
	&	\tab\especline{\pvar x \doteq \nil \lstar \exists i.~P_i} \\
	&	\tab\especline{\exists i, j.~\llseg{x, \pvar x, i} \lstar \llist{\pvar x, j} \lstar 
		                n \doteq i + j \lstar \pvar r \doteq i \lstar \pvar x \doteq \nil} \\
	&	\tab\especline{\llseg{x, \nil, n} \lstar \pvar r \doteq n \lstar \pvar x \doteq \nil}~\polish{\text{//~equivalence: $\models \llist{\nil, j} \Leftrightarrow j \doteq 0$}} \\
	&	\tab\especline{\llist{x, n} \lstar \pvar{r} \doteq n \lstar \pvar x \doteq \nil}~\polish{\text{//~equivalence: $\models \llseg{x, \nil, n} \Leftrightarrow \llist{x, n}$}} \\
& \tab\preturn{\pvar r} \\
& \tab\especlines{\llist{x, n} \lstar \pvar{r} \doteq n \lstar \pvar x \doteq \nil \lstar \pvar{ret} = \pvar r} \\
& \tab\especlines{\exists x_q, r_q.~\llist{x, n} \lstar r_q \doteq n \lstar x_q \doteq \nil \lstar \pvar{ret} = r_q} \\
& \tab \especlines{\llist{x, n} \lstar \pvar{ret} \doteq n} \\
& \} \\
& \especlines{\llist{x, n} \lstar \pvar{ret} \doteq n}

\end{array}
\]
\caption{ESL proof sketch: iterative list-length.}
\label{ps:llen}
\end{figure*}


\subparagraph*{Iterative list-length in ESL: Proof Sketch.}
In Figure~\ref{ps:llen}, we give an iterative implementation of the list-length algorithm and show that it satisfies the same ESL specification as its recursive counterpart,
$
\utripleok{\pvar x\doteq x \lstar \llist{x,n} }{~\mathtt{LLen}(\pvar x)~}{\llist{x,n} \lstar\pvar{ret}\doteq n}
$. Since there is no recursion, we elide the (trivial) measure. To state the loop variant, we use the list-segment predicate, defined as follows:
\[
	\llseg{x, y, n} \defeq (x\doteq y \lstar  n\doteq0) \lor (\exsts{v,x'} x\mapsto v,x' \lstar \llseg{x',y, n - 1} )
\]
and to apply the \prooflab{while} rule, we define:
\[
P_i \defeq \exists j.~\llseg{x, \pvar x, i} \lstar \llist{\pvar x, j} \lstar n \doteq i + j \lstar \pvar r = i
\]
Note that we could have chosen to elide $i$ from the body of $P_i$ in this simple example, but since this is not necessarily possible or evident in general as well as for instructive purposes, we chose to keep it in the proof. Note how, on exiting the loop, the negation of the loop condition collapses the existentials~$i$ and $j$. This allows us to obtain the given internal post-condition, from which we then easily move to the desired external post-condition. For this proof, we also use three equivalence lemmas, which state that a non-empty list segment can be separated into its last element and the rest, that the length of an empty list equals zero, and that a null-terminated list-segment is~a~list.

\subsection{Beyond List Examples: Binary Search Trees}
\label{ss:beyond}

\newcommand{\bstroot}{BSTRoot}

While list algorithms illustrate many aspects of exact reasoning, it is also important to understand how ESL specification and verification works with other data structures. For this reason, we discuss two algorithms operating over binary search trees (BSTs) that are intended to represent sets of natural numbers. We use two abstractions for BSTs, one in which only their values are considered as a mathematical set:
\[
\begin{array}{@{}r@{~}l}
\bst{x,K} \defeq~& (x\doteq\nil\lstar K \doteq\emptyset) \lor (\exsts{k,l,r,K_l,K_r} x\mapsto k,l,r \lstar \bst{r,K_r} \lstar \bst{l,K_l} \lstar \\ & \qquad K\doteq K_l\uplus\{k\}\uplus K_r \lstar K_l<k \lstar k<K_r)
\end{array}
\]
and another that fully exposes the BST structure:
\[
\begin{array}{@{}r@{~}l}
\bst{x, \tree} \defeq~& (x\doteq\nil\lstar \tree \doteq \emptytree) \lor (\exsts{k, l, r, \tree_l, \tree_r} E \mapsto k, l, r \lstar \bst{r, \tree_r} \lstar \bst{l, \tree_l} \lstar \\ & \qquad \tree \doteq ((x, k), \tree_l, \tree_r) \lstar \tree_l < k \lstar k < \tree_r)
\end{array}
\]
where $\tree$ is a mathematical tree, that is, an algebraic data type with two constructors representing, respectively, an empty tree and a root node with two child trees: $
\tree \in \mathsf{Tree} \defeq \emptytree~|~((x, n), \tree_l, \tree_r)
$,
where the notation $(x, n)$ represents a BST node with address $x$ and value~$n$. Note the overloaded $<$ notation, where one of the operands can be a set or a tree, which carry the intuitive meaning.

\myparagraph{BST algorithms} We first consider the BST-find-minimum algorithm, $\mathtt{BSTFindMin}(\pvar x)$, which takes a tree with root at $\pvar x$, does not modify it, and returns its minimum element or throws an empty-tree error. Since that algorithm operates only on the values in the tree, we are able to state its ESL specification using the $\bst{x,K}$ predicate as follows:
\[
\begin{array}{c}
 {\color{blue} (\pvar x \doteq x \lstar \bst{x,K})}~ 
 {\mathtt{BSTFindMin}(\pvar x)} 
 \begin{array}{l}
 {\color{blue} (\mathit{ok}{:}~x\neq\nil\lstar\bst{x,K}\lstar\pvar{ret}\doteq\min(K))} \\
 {\color{blue} (\mathit{err}{:}~x = \nil \lstar \bst{x,K} \lstar \pvar{err} \doteq \strlit{Empty~tree!})}
 \end{array}
\end{array}
\]
We have also considered the BST-insert algorithm, $\mathtt{BSTInsert}(\pvar x, \pvar v)$, which takes a tree with root at $\pvar x$ and inserts a new node with value $\pvar v$ into it as a leaf if $\pvar v$ is not already in the tree, or leaves the tree unmodified if it is. As this algorithm interacts both with values and addresses in the tree, the appropriate abstraction for it is $\bst{x, \tree}$, and its ESL specification is:
\[
\begin{array}{c}
 {\color{blue} (\pvar x \doteq x \lstar \pvar v \doteq v \lstar \bst{x, \tree})} \\
 {\mathtt{BSTInsert}(\pvar x, \pvar v)} \\
 {\color{blue} (\exists x'.~\bst{\pvar{ret}, \predd{BSTInsert}{\tree, (x', v)}} \lstar \predd{\bstroot}{x, \tree})}
\end{array}
\]
where $\predd{BSTInsert}{\tree, \nu}$ is the mathematical algorithm that inserts the node $\nu$ into the tree $\tree$:
\[
\begin{array}{l@{\qquad}l}
\predd{BSTInsert}{\emptytree, (x', v)} \defeq   
& \predd{BSTInsert}{((x, k), \tree_l, \tree_r), (x', v)} \defeq \\
\qquad ((x', v), \emptytree, \emptytree) & \qquad \mathsf{if}\ v < k\ \mathsf{then}~((x, k), \predd{BSTInsert}{\tree_l, (x', v)}, \tree_r) \\
& \qquad \mathsf{else\ if}\ k < v\ \mathsf{then}~((x, k), \tree_l, \predd{BSTInsert}{\tree_r, (x', v)}) \\
& \qquad \mathsf{else} ((x, k), \tree_l, \tree_r)
\end{array}
\]
and the predicate $\predd{\bstroot}{x, \tree}$ is defined analogously to $\listptr{x, \lstxs}$:
\[
\predd{\bstroot}{x, \tree} \defeq (\tree \doteq \emptytree \lstar x \doteq \nil) \lor (\exsts{k, \tree_l, \tree_r} \tree \doteq ((x, k), \tree_l, \tree_r))
\]

This example shows how in EX verification, just as in OX verification, we end up relating an imperative heap-manipulating algorithm to its mathematical/functional counterpart (cf.~Appel~\cite{Appel22} for a recent reiteration of this idea). The additional work required is that EX mathematical models must be more detailed: we are, yet again, not allowed to lose information. In particular, in OX verification we could relate $\mathtt{BSTInsert}(\pvar x, \pvar v)$ to mathematical sets, but in EX verification we must relate our imperative implementation to tree models, including both values and pointers. Moreover, our mathematical model of the algorithm, $\predd{BSTInsert}{\tree, (x', v)}$, must insert elements in the same way as the imperative implementation, that is, in this case at the leaves of the tree. The proofs for both algorithms are given in Appendix~\ref{apdx:examples}.

%

\section{Related Work}

In the previous sections, we have placed ESL in the context of related work on OX and UX logics and associated tools. Here, we discuss formalisms capable of reasoning both about program correctness and program incorrectness, as well as existing approaches to the use of function specifications (summaries) and abstraction in symbolic execution.

\subparagraph*{Program Logics for Both Correctness and Incorrectness.} Developed in parallel but independently of ESL, Outcome Logic (OL)~\cite{Zilberstein23}, much like ESL, brings together reasoning about correctness and incorrectness into one logic. Both OL and ESL rely on the traditional meaning of correctness, but OL introduces a new approach to incorrectness, based on reachability of sets of states. It has not yet been shown that this approach has the same bug-finding potential as that of ISL: in particular, bi-abduction has not yet been demonstrated to be compatible with~OL. In addition, the OL work, in contrast to ESL, does not discuss function compositionality or the interaction between abstraction, reachability, and incorrectness.


LCL$_A$~\cite{lics2021dist,Bruni23} is a non-function-composi\-tional, first-or\-der logic that combines UX and OX reasoning using abstract interpretation. It is parametric on an abstract domain~$A$, and proves UX triples of the form $\vdash_A [P]~\cmd~[Q]$ where, under certain conditions, the triple also guarantees verification. These conditions, however, normally mean that only a limited number of pre-conditions can be handled. The conditions also have to be checked per-command and if they fail to hold (due to, e.g., issues with Boolean guards, which are known to be a major source of incompleteness), then the abstract domain has to be incrementally adjusted; the complexity of this adjustment and the expressivity of the resulting formalism is unclear.

\subparagraph*{Compositional Symbolic Execution.} There exists a substantial body of work on symbolic execution with function summaries~(e.g.~\cite{godfsmart,demandsym,smash,cse1,cse3,cse4}), which is primarily based on first-order logic.
We highlight the work of Godefroid et al., which initially used exact summaries of bounded program behaviour to drive the compositional dynamic test generation of SMART~\cite{godfsmart}, and later distinguished between may (OX) and must (UX) summaries, leveraging the interaction between them to design the SMASH tool for compositional property checking and test generation~\cite{smash}. {SMASH, however, is limited in its ability to reason about heap-manipulating programs because, for example, it lacks support for pointer arithmetic. Nevertheless, it shows that interactions between OX and UX summaries can be exploited for automation, which is an important consideration for any automation of ESL. For example, SMASH is able to use not-may summaries (which amount to non-reachability) when constructing must-summaries (which amount to reachability), using the former to restrict the latter.}
When it comes to abstraction, for example, Anand et al.~\cite{cse2} implement linked-list and array abstractions for true bug-finding in non-compositional symbolic execution, in the context of the Java PathFinder, and use it to find bugs in list and array partitioning algorithms. True bug-finding is maintained by checking for state subsumption, which requires code modification rather than annotation and a record of all previously visited states.

\section{Conclusions}

We have introduced ESL, a program logic for exact reasoning about
heap-manipulating programs.  ESL specifications provide a sweet spot
between verification and true bug-finding: as SL specifications, they capture all terminating behaviour, and, as
ISL specifications, they describe only results and errors that are
reachable. ESL specifications are therefore compatible with
tools that target OX
verification, such as VeriFast~\cite{verifast} and
Iris~\cite{jung:popl:2015},
tools that target UX true
bug-finding, such as Pulse~\cite{il,isl}, and tools capable of
targeting both, such as Gillian \cite{gillianpldi,gilliancav}. 
ESL supports reasoning about 
mutually recursive functions and comes with a soundness result that 
immediately transfers to SL and ISL, thus demonstrating, for the 
first time, scalable functional compositionality for UX~logics.

We have verified exact specifications for a number of illustrative examples, showing that ESL can  reason about data-structure libraries, language errors, mutual recursion, and non-termination. In doing so, we emphasise the distinction between the often-conflated concepts of abstraction and over-approximation. We have demonstrated that abstract predicates can be soundly used in EX and UX reasoning, albeit not as freely as in OX reasoning.



We believe that ESL reasoning, in its intended context of
semi-automatic verification of functional correctness properties, 
is  useful for the verification of self-contained, critical code that underpins a larger
codebase.
To demonstrate this, we will in the future incorporate UX
and EX verification inside
Gillian~\cite{gillianpldi,gilliancav},
which already has support for function compositionality and
semi-automatic predicate management as part of its OX
verification.
\bibliography{ESL.bib}

\newpage
\appendix
%

%
%
%
%

\section{Complete Operational Semantics}
\label{apdx:cos}
{\small
\begin{mathpar}
       \infer{
       \stt, \pskip \baction_{\fictx} \stt
       }{ \ } \and
       \infer{
          \sthreadp{ \sto }{ \hp }, \passign{\pvar{x}}{\pexp} \baction_{\fictx} \sthreadp{\sto [\pvar{x} \storearrow v] }{ \hp }
       }{
       \esem{\pexp}{\sto} = v
       } \and
       \infer{
          \sthreadp{ \sto }{ \hp }, \passign{\pvar x}{\pexp} \baction_{\fictx} {\oxerr} : \sthreadp{ \sto_{\oxerr} }{ \hp }
       }{
       \begin{array}{c}
       \esem{\pexp}{\sto} = \undefd \quad \verr = [``\mathsf{ExprEval}", \stringify {\pexp}]
       \end{array}
       } \and
       \infer{
          \sthreadp{ \sto }{ \hp }, \passign{\pvar{x}}{\prandom} \baction_{\fictx} \sthreadp{\sto [\pvar{x} \storearrow n] }{ \hp }
       }{
       n \in \Nat
       } \and
       \infer{
       \sthreadp{ \sto }{ \hp }, \perror(\pexp) \baction_\fictx \oxerr :  \sthreadp{ \sto_{\oxerr} }{ \hp }
       }{\esem{\pexp}{\sto} = \gv \quad \verr = [``\mathsf{Error}", \gv]}
       \and
       \infer{
          \sthreadp{ \sto }{ \hp }, \perror(\pexp) \baction_{\fictx} {\oxerr} : \sthreadp{ \sto_{\oxerr} }{ \hp }
       }{
       \begin{array}{c}
       \esem{\pexp}{\sto} = \undefd \quad \verr = [``\mathsf{ExprEval}", \stringify {\pexp}]
       \end{array}
       } \and
       \infer{
          \stt, \pifelse{\pexp}{\scmd_1}{\scmd_2} \baction_{\fictx} \outcome: \stt'
       }{
       \esem{ \pexp }{ \stt } = \true \quad  \stt, \scmd_1 \baction_{\fictx} \outcome: \stt'
       }
       \and
       \infer{
          \stt, \pifelse{\pexp}{\scmd_1}{\scmd_2} \baction_{\fictx} \outcome: \stt'
       }{
          \esem{ \pexp }{ \stt } = \false \quad  \stt, \scmd_2 \baction_{\fictx} \outcome: \stt'
       } \and
       \infer{
          \sthreadp{ \sto }{ \hp }, \pifelse{\pexp}{\scmd_1}{\scmd_2} \baction_{\fictx} {\oxerr} : \sthreadp{ \sto_{\oxerr} }{ \hp }
       }{
       \begin{array}{c}
       \esem{\pexp}{\sto} = \undefd \quad \verr = [``\mathsf{ExprEval}", \stringify {\pexp}]
       \end{array}
       }
       \and
       \infer{
          \sthreadp{ \sto }{ \hp },  \pifelse{\pexp}{\scmd_1}{\scmd_2}
          \baction_{\fictx} {\oxerr} : \sthreadp{ \sto_{\oxerr} }{ \hp }
       }{
              \begin{array}{c}
       \esem{\pexp}{\sto} = v \notin \Bool  \\ \verr = [``\mathsf{Type}", \stringify {\pexp}, v, ``\mathsf{Bool}"]
       \end{array}
       } \and
       \infer{
          \stt, \pwhile{\pexp}{\scmd} \baction_{\fictx} \stt
       }{
          \esem{ \pexp }{ \stt } = \false
       } \and
       \infer{
          \stt, \pwhile{\pexp}{\scmd} \baction_{\fictx} \outcome: \stt'
       }{
          \begin{array}{c}
          \esem{ \pexp }{ \stt } = \true \quad \stt, \scmd \baction_{\fictx} \stt'' \\
          \stt'', \pwhile{\pexp}{\scmd} \baction_{\fictx} \outcome: \stt'
          \end{array}
       } \and
       \infer{
          \sthreadp{ \sto }{ \hp }, \pwhile{\pexp}{\scmd} \baction_{\fictx} {\oxerr} : \sthreadp{ \sto_{\oxerr} }{ \hp }
       }{
       \begin{array}{c}
       \esem{\pexp}{\sto} = \undefd \\ \verr = [``\mathsf{ExprEval}", \stringify {\pexp}]
       \end{array}
       }
       \and
       \infer{
          \sthreadp{ \sto }{ \hp },  \pwhile{\pexp}{\scmd}
          \baction_{\fictx} {\oxerr} : \sthreadp{ \sto_{\oxerr} }{ \hp }
       }{
              \begin{array}{c}
       \esem{\pexp}{\sto} = v \notin \mathbb{N}  \\ \verr = [``\mathsf{Type}", \stringify {\pexp}, v, ``\mathsf{Bool}"]
       \end{array}
       } \and
\infer{
          \stt, \pwhile{\pexp}{\scmd} \baction_{\fictx} \outcome: \stt'
       }{
          \begin{array}{c}
          \esem{ \pexp }{ \sto } = \true \quad \stt, \scmd \baction_{\fictx} \outcome: \stt' \quad \outcome \neq \oxok
          \end{array}
       } \and
       \infer{
    \stt, \scmd_1 ; \scmd_2 \baction_{\fictx} \stt'
       }{
       \begin{array}{c}
       \stt, \scmd_1 \baction_{\fictx} \stt'' \quad  \stt'', \scmd_2 \baction_{\fictx} \outcome:  \stt' \end{array}
       } \and
       \infer{
    \stt, \scmd_1 ; \scmd_2 \baction_{\fictx} \outcome: \stt'
       }{
       \begin{array}{c}
       \stt, \scmd_1 \baction_{\fictx} \outcome:   \stt' \quad  \outcome \neq \oxok \end{array}
       } \and
       \infer{
     \sthreadp{ \sto }{ \hp },
     \pfuncall{\pvar{y}}{\procname}{\vec{\pexp}} \baction_{\fictx}
     \sthreadp{\sto [\pvar{y} \storearrow v' ] }{ \hp' }
  }{
    \begin{array}{c}
    \pfunction{\procname}{\vec{\pvar{x}}}{\scmd; \preturn{\pexp'}} \in \scontext
    \\
    \esem{\vec{\pexp}}{\sto} = \vec{v}
     \quad \pv{\scmd} \setminus
      \{\vec{\pvar{x}}\} = \{\vec{\pvar{z}} \}
    \\
 \sto_p  =  \emptyset [ \vec{\pvar{x}} \storearrow \vec{v}] [ \vec{\pvar{z}} \storearrow \nil]
     \\ (\sto_p, \hp), \scmd
      \baction_{\fictx} \sthreadp{ \sto_q }{ \hp' }  \quad  \esem{\pexp'}{\sto_q} =v'
    \end{array}
  }
  \and
  \infer{
     \sthreadp{ \sto }{ \hp },
     \pfuncall{\pvar{y}}{\procname}{\vec{\pexp}} \baction_{\fictx}
     \oxerr: \sthreadp{\sto_{\oxerr} }{ \hp' }
  }{
    \begin{array}{c}
    \pfunction{\procname}{\vec{\pvar{x}}}{\scmd; \preturn{\pexp'}} \in \scontext
    \\
    \esem{\vec{\pexp}}{\sto} = \vec{v}
     \quad \pv{\scmd} \setminus
      \{\vec{\pvar{x}}\} = \{\vec{\pvar{z}} \}
    \\
 \sto_p  =  \emptyset [ \vec{\pvar{x}} \storearrow \vec{v}] [ \vec{\pvar{z}} \storearrow \nil]
     \\ (\sto_p, \hp), \scmd
      \baction_{\fictx} \sthreadp{ \sto_q }{ \hp' }  \quad
      \esem{\pexp'}{\sto_q} = \undefd \\ \verr = [``\mathsf{ExprEval}", \stringify {\pexp'}]
    \end{array}
  } \and
  \infer{
     \sthreadp{ \sto }{ \hp },
     \pfuncall{\pvar{y}}{\procname}{\vec{\pexp}} \baction_{\fictx}
     \oxerr: \sthreadp{\sto_{\oxerr} }{ \hp }
  }{
    \begin{array}{c}
    \pfunction{\procname}{\vec{\pvar{x}}}{\scmd; \preturn{\pexp'}} \in \scontext
    \\
    |\vec{\pvar x}| \neq |\vec{\pexp}| \\
    \verr = [``\mathsf{ParamCount}", f]
    \end{array}
  } \and
  \infer{
     \sthreadp{ \sto }{ \hp },
     \pfuncall{\pvar{y}}{\procname}{\vec{\pexp}} \baction_{\fictx}
     \outcome: \sthreadp{\sto [\pvar{err} \storearrow \sto_q(\pvar{err}) ] }{ \hp }
  }{
    \begin{array}{c}
    \pfunction{\procname}{\vec{\pvar{x}}}{\scmd; \preturn{\pexp'}} \in \scontext
    \\
    \esem{\vec{\pexp}}{\sto} = \vec{v}
     \quad \pv{\scmd} \setminus
      \{\vec{\pvar{x}}\} = \{\vec{\pvar{z}} \}
    \\
 \sto_p  =  \emptyset [ \vec{\pvar{x}} \storearrow \vec{v}] [ \vec{\pvar{z}} \storearrow \nil]
     \\ (\sto_p, \hp), \scmd
      \baction_{\fictx} \outcome: \sthreadp{ \sto_q }{ \hp' }  \quad
      \outcome \neq \oxok
    \end{array}
  } \and
  \infer{
     \sthreadp{ \sto }{ \hp },
     \pfuncall{\pvar{y}}{\procname}{\pexp_1, \ldots \pexp_n} \baction_{\fictx}
     \oxerr: \sthreadp{\sto_{\oxerr} }{ \hp }
  }{
    \begin{array}{c}
    \pfunction{\procname}{\vec{\pvar{x}}}{\scmd; \preturn{\pexp'}} \in \scontext
    \\
    k \in \{ 1, \ldots n \} \quad (\esem{\pexp_i}{\sto} = v_i)|_{i=1}^{k-1}  \quad \esem{\pexp_k}{\sto} = \undefd  \\
    |\vec{\pvar{x}}| = n \quad \verr = [``\mathsf{ExprEval}", \stringify {\pexp_k}]
    \end{array}
  } \and
        \infer{
     \sthreadp{ \sto }{ \hp }, \pfuncall{\pvar{x}}{\procname}{\vec{\pexp}} \baction_{\fictx} {\oxerr} :  (\sto_{\oxerr}, \hp)
  }{
    \fid \notin \dom(\scontext) \quad
    \verr = [``\mathsf{NoFunc}", \fid]
  } \and
       \infer{
          \sthreadp{ \sto }{ \hp }, \pderef{\pvar{x}}{\pexp} \baction_{\fictx} \sthreadp{\sto [\pvar{x} \storearrow v] }{ \hp }
       }{
       \esem{\pexp}{\sto} = n \quad \hp( n) = v
       }  \and
       \infer{
          \sthreadp{ \sto }{ \hp }, \pderef{\pvar x}{\pexp} \baction_{\fictx} {\oxerr} : \sthreadp{ \sto_{\oxerr} }{ \hp }
       }{
       \begin{array}{c}
       \esem{\pexp}{\sto} = \undefd \\ \verr = [``\mathsf{ExprEval}", \stringify {\pexp}]
       \end{array}
       }
       \and
       \infer{
          \sthreadp{ \sto }{ \hp }, \pderef{\pvar x}{\pexp}
          \baction_{\fictx} {\oxerr} : \sthreadp{ \sto_{\oxerr} }{ \hp }
       }{
              \begin{array}{c}
       \esem{\pexp}{\sto} = v \notin \mathbb{N}  \\ \verr = [``\mathsf{Type}", \stringify {\pexp}, v, ``\mathsf{Nat}"]
       \end{array}
       }
\end{mathpar}
\begin{mathpar}
        \infer{
          \sthreadp{ \sto }{ \hp }, \pderef{\pvar x}{\pexp}
          \baction_{\fictx} {\oxm} : \sthreadp{ \sto_{\oxerr} }{ \hp }
       }{
       \begin{array}{c}\esem{\pexp}{\sto} = n  \notin \dom (\hp) \\ \verr = [``\mathsf{MissingCell}", \stringify{\pexp}, n]\end{array}
       }
       \and
       \infer{
          \sthreadp{ \sto }{ \hp }, \pderef{\pvar x}{\pexp}
          \baction_{\fictx} {\oxerr} : \sthreadp{ \sto_{\oxerr} }{ \hp }
       }{
       \begin{array}{c}\esem{\pexp}{\sto} = n  \quad \hp(n) = \cfreed \\ \verr = [``\mathsf{UseAfterFree}", \stringify{\pexp}, n]\end{array}
       }
       \and
       \infer{
          \sthreadp{ \sto }{ \hp }, \pmutate{\pexp_1}{\pexp_2}
          \baction_{\fictx} \sthreadp{ \sto }{ \hp [n \mapsto v] }
       }{
       \esem{\pexp_1}{\sto} = n \quad \hp(n) \in \Val \quad \esem{\pexp_2}{\sto} = v}
\and
\infer{
          \sthreadp{ \sto }{ \hp }, \pmutate{\pexp_1}{\pexp_2} \baction_{\fictx} {\oxerr} : \sthreadp{ \sto_{\oxerr} }{ \hp }
       }{
       \begin{array}{c}
       \esem{\pexp_1}{\sto} = \undefd \\ \verr = [``\mathsf{ExprEval}", \stringify {\pexp_1}]
       \end{array}
       }
       \and
       \infer{
          \sthreadp{ \sto }{ \hp }, \pmutate{\pexp_1}{\pexp_2}
          \baction_{\fictx} {\oxerr} : \sthreadp{ \sto_{\oxerr} }{ \hp }
       }{
              \begin{array}{c}
       \esem{\pexp_1}{\sto} = v \notin \mathbb{N}  \\ \verr = [``\mathsf{Type}", \stringify {\pexp_1}, v, ``\mathsf{Nat}"]
       \end{array}
       }
       \and
        \infer{
          \sthreadp{ \sto }{ \hp }, \pmutate{\pexp_1}{\pexp_2}
          \baction_{\fictx} {\oxm} : \sthreadp{ \sto_{\oxerr} }{ \hp }
       }{
       \begin{array}{c}\esem{\pexp_1}{\sto} = n  \notin \dom (\hp) \\ \verr = [``\mathsf{MissingCell}", \stringify{\pexp_1}, n]\end{array}
       }
       \and
       \infer{
          \sthreadp{ \sto }{ \hp }, \pmutate{\pexp_1}{\pexp_2}
          \baction_{\fictx} {\oxerr} : \sthreadp{ \sto_{\oxerr} }{ \hp }
       }{
       \begin{array}{c}\esem{\pexp_1}{\sto} = n  \quad \hp(n) = \cfreed \\ \verr = [``\mathsf{UseAfterFree}", \stringify{\pexp_1}, n]\end{array}
       } \and
       \infer{
          \sthreadp{ \sto }{ \hp }, \pmutate{\pexp_1}{\pexp_2}
          \baction_{\fictx} {\oxerr} : \sthreadp{ \sto_{\oxerr} }{ \hp }  }{
       \begin{array}{c}
       \esem{\pexp_1}{\sto} = n \quad \hp(n) \in \Val \quad \esem{\pexp_2}{\sto} = \undefd \\ \verr = [``\mathsf{ExprEval}", \stringify {\pexp_2}]
       \end{array}
       } \and
       \infer{
          \sthreadp{ \sto }{ \hp }, \palloc{\pvar{x}}{\pexp} \baction_{\fictx} \sthreadp{\sto [\pvar{x} \storearrow n'] }{ \hp' }
       }{
       \begin{array}{c}
       		\esem{\pexp}{\sto} = n \qquad (n' + i \notin \dom (\hp))|_{0 \le i < n} \\
       		\hp' = \hp[n' \mapsto \nil]\cdots[n' + n - 1 \mapsto \nil]
       \end{array}
       } \and
       \infer{
          \sthreadp{ \sto }{ \hp }, \palloc{\pvar x}{\pexp} \baction_{\fictx} {\oxerr} : \sthreadp{ \sto_{\oxerr} }{ \hp }
       }{
       \begin{array}{c}
       \esem{\pexp}{\sto} = \undefd \\ \verr = [``\mathsf{ExprEval}", \stringify {\pexp}]
       \end{array}
       }
       \and
       \infer{
          \sthreadp{ \sto }{ \hp }, \palloc{\pvar x}{\pexp}
          \baction_{\fictx} {\oxerr} : \sthreadp{ \sto_{\oxerr} }{ \hp }
       }{
              \begin{array}{c}
       \esem{\pexp}{\sto} = v \notin \mathbb{N}  \\ \verr = [``\mathsf{Type}", \stringify {\pexp}, v, ``\mathsf{Nat}"]
       \end{array}
       } \and
       \infer{
          \sthreadp{ \sto }{ \hp }, \pdealloc{\pexp} \baction_{\fictx} \sthreadp{ \sto }{ \hp[n \mapsto \cfreed]}
       }{
       \esem{\pexp}{\sto} = n \quad  \hp(n) \in \Val
       } \and
       \infer{
          \sthreadp{ \sto }{ \hp }, \pdealloc{\pexp} \baction_{\fictx} {\oxerr} : \sthreadp{ \sto_{\oxerr} }{ \hp }
       }{
       \begin{array}{c}
       \esem{\pexp}{\sto} = \undefd \\ \verr = [``\mathsf{ExprEval}", \stringify {\pexp}]
       \end{array}
       }
       \and
       \infer{
          \sthreadp{ \sto }{ \hp }, \pdealloc{\pexp}
          \baction_{\fictx} {\oxerr} : \sthreadp{ \sto_{\oxerr} }{ \hp }
       }{
              \begin{array}{c}
       \esem{\pexp}{\sto} = v \notin \mathbb{N}  \\ \verr = [``\mathsf{Type}", \stringify {\pexp}, v, ``\mathsf{Nat}"]
       \end{array}
       } \and
        \infer{
          \sthreadp{ \sto }{ \hp }, \pdealloc{\pexp}
          \baction_{\fictx} {\oxm} : \sthreadp{ \sto_{\oxerr} }{ \hp }
       }{
       \begin{array}{c}\esem{\pexp}{\sto} = n  \notin \dom (\hp) \\ \verr = [``\mathsf{MissingCell}", \stringify{\pexp}, n]\end{array}
       }
       \and
       \infer{
          \sthreadp{ \sto }{ \hp }, \pdealloc{\pexp}
          \baction_{\fictx} {\oxerr} : \sthreadp{ \sto_{\oxerr} }{ \hp }
       }{
       \begin{array}{c}\esem{\pexp}{\sto} = n  \quad \hp(n) = \cfreed \\ \verr = [``\mathsf{UseAfterFree}", \stringify{\pexp}, n]\end{array}
       }
    \end{mathpar}}

\vspace{1em}
\noindent
where $\sto_{\oxerr} \defeq \serr$.


%
%
%
%

\newpage
\section{Exact Separation Logic}
\label{apdx:esl}

The complete rules of ESL are as follows (with $\Qerr^* \defeq \pvar{pre} \lstar \pvar{err} \doteq \eerr$):

{\small
\begin{mathpar}
  \inferrule[\mbox{skip}]
  {}
    { \fsctx \vdash \utripleq{\emp}{\texttt{skip}}{\emp} }
  \and
    \inferrule[\mbox{nondet}]
    {\pvar x \notin \pv{\pexp'}}
    { \fsctx \vdash \utripleq{\pvar x \doteq \pexp'}{\passign{\pvar{x}}{\prandom}}{\pexp' \dotint \Val \lstar \pvar x \dotint {\Nat}} }
  \and
  \inferrule[\mbox{assign}]
  {\pvar x \notin \pv{\pexp'} \qquad \subst \defeq [\pexp'/\pvar x]}
    { \fsctx \vdash \utripleq{{\pvar{x} \doteq \pexp' \lstar \dotin{\pexp}{\Val}}}{\passign{\pvar{x}}{\expr{\pexp}}}{{\dotin{\pexp'}{\Val} \lstar  \pvar{x} \doteq \pexp\subst}} }
  \and
    \inferrule[\mbox{assign-err}]{
  \eerr \defeq ["\mathsf{ExprEval}", \stringify{\pexp}]}
    { \fsctx \vdash \utripleerr{{\pvar{x} \doteq \pexp' \lstar \pexp\notdotint\Val}}{\passign{\pvar{x}}{\expr{\pexp}}}{{\Qerr^*}} }
  \and
  \inferrule[\mbox{lookup}]
      { \pvar x \notin \pv{\pexp'} \qquad \subst \defeq [\pexp'/ \pvar{x}]}
    { \fsctx \vdash \utripleq{\pvar{x} \doteq \pexp'  \lstar \pexp \mapsto \pexp_1}{\pderef{\pvar{x}}{\pexp}}{\dotin{\pexp'}{\Val} \lstar \pvar{x} \doteq \pexp_1\subst \lstar \pexp\subst \mapsto   \pexp_1 \subst} }
  \and
      \inferrule[\mbox{lookup-err-val}]
  {\eerr \defeq [``\mathsf{ExprEval}", \stringify {\pexp}]}
  { \fsctx \vdash \utripleerr{\pvar x \doteq \pexp' \lstar \pexp \notdotint \Val}{\pderef{\pvar{x}}{\pexp}}{\Qerr^*} }
  \and
    \inferrule[\mbox{lookup-err-type}]
      { \eerr\defeq["\mathsf{Type}",\stringify{\pexp},\pexp,"\mathsf{Nat}"]}
    { \fsctx \vdash \utripleerr{\pvar{x} \doteq \pexp'   \lstar \pexp \dotint \Val \lstar \pexp \notdotint \Nat}{\pderef{\pvar{x}}{\pexp}}{\Qerr^*} } \qquad
          \inferrule[\mbox{lookup-err-use-after-free}]
  {\eerr \defeq ["\mathsf{UseAfterFree}", \stringify {\pexp}, \pexp]}
  { \fsctx \vdash \utripleerr{\pvar{x} \doteq \expr{\pexp'} \lstar \expr{\pexp}\mapsto\cfreed}{\pderef{\pvar x}{\expr{\pexp}}}{\Qerr^*} }
  \and
  \inferrule[\mbox{mutate}]
      {}
    { \fsctx \vdash \utripleq{\expr{\pexp_1} \mapsto \pexp \lstar \dotin{ \pexp_2}{\Val}}{\pmutate{\expr{\pexp_1}}{\expr{\pexp_2}}}{\expr{\pexp_1} \mapsto  \expr{\pexp_2} \lstar \pexp\dotint\Val} }
  \and
      \inferrule[\mbox{mutate-err-val-1}]
  {\eerr \defeq ["\mathsf{ExprEval}", \stringify {\pexp_1}]}
  { \fsctx \vdash \utripleerr{\expr{\pexp_1} \notdotint\Val}{\pmutate{\expr{\pexp_1}}{\expr{\pexp_2}}}{\Qerr^*} }
  \and
    \inferrule[\mbox{mutate-err-type}]
  {\eerr \defeq ["\mathsf{Type}", \stringify {\pexp_1}, \expr{\pexp_1}, "\mathsf{Nat}"]}
  { \fsctx \vdash \utripleerr{\pexp_1 \dotint \Val \lstar \pexp_1 \notdotint \Nat}{\pmutate{\expr{\pexp_1}}{\expr{\pexp_2}}}{\Qerr^*} }
  \and
      \inferrule[\mbox{mutate-err-use-after-free}]
  {\eerr \defeq ["\mathsf{UseAfterFree}", \stringify {\pexp_1}, \expr{\pexp_1}]}
  { \fsctx \vdash \utripleerr{\expr{\pexp_1}\mapsto\cfreed}{\pmutate{\expr{\pexp_1}}{\expr{\pexp_2}}}{\Qerr^*} }
  \and
    \inferrule[\mbox{mutate-err-val-2}]
  {\eerr \defeq ["\mathsf{ExprEval}", \stringify {\pexp_2}]}
  { \fsctx \vdash \utripleerr{\expr{\pexp_1} \mapsto \pexp\lstar \pexp_2\notdotint\Val}{\pmutate{\expr{\pexp_1}}{\expr{\pexp_2}}}{\Qerr^*} }
  \and
  \inferrule[\mbox{new}]
    { \pvar{x}  \notin \pv{\pexp'} \quad \subst \defeq [\pexp'/\pvar x]}
          { \fsctx \vdash \utripleok{\pvar x \doteq \pexp' \lstar \pexp \dotint \Nat}{\palloc{\pvar{x}}{\pexp}}{\pexp' \dotint \Val \lstar \bigoast{_{0 \le i < \pexp\subst}} (( \pvar x + i) \mapsto \nil)} }

  \and
          \inferrule[\mbox{new-err-eval}]
    { \eerr\defeq["\mathsf{ExprEval}",\stringify{\pexp}]}
          { \fsctx \vdash \utripleerr{\pvar x \doteq \pexp' \lstar \pexp \notdotint \Val}{\palloc{\pvar{x}}{\pexp}}{\Qerr^*} }
          \qquad
	\inferrule[\mbox{new-err-type}]
    { \eerr\defeq["\mathsf{Type}",\stringify{\pexp},\expr{\pexp},"\mathsf{Nat}"]}
          { \fsctx \vdash \utripleerr{\pvar x \doteq \pexp' \lstar \pexp \dotint \Val \lstar \pexp \notdotint \Nat}{\palloc{\pvar{x}}{\pexp}}{\Qerr^*} }
  \and

\inferrule[\mbox{free}]
      {}
   { \fsctx \vdash \utripleok{\pexp \mapsto \pexp'}{\pdealloc{\pexp}}{\dotin{\pexp'}{\Val}\lstar \pexp \mapsto \cfreed} }
   \and
   \inferrule[\mbox{free-err-eval}]
      {\eerr\defeq["\mathsf{ExprEval}",\stringify{\pexp}]}
   { \fsctx \vdash \utripleerr{\pexp \notdotint\Val}{\pdealloc{\pexp}}{\Qerr^*} }
   \and
\inferrule[\mbox{free-err-type}]
      {\eerr\defeq["\mathsf{Type}",\stringify{\pexp},\pexp,"\mathsf{Nat}"]}
   { \fsctx \vdash \utripleerr{\pexp \dotint \Val \lstar \pexp \notdotint \Nat}{\pdealloc{\pexp}}{\Qerr^*} }
   \and
\inferrule[\mbox{free-err-use-after-free}]
      {\eerr\defeq["\mathsf{UseAfterFree}",\stringify{\pexp},\pexp]}
   { \fsctx \vdash \utripleerr{\pexp \mapsto\cfreed}{\pdealloc{\pexp}}{\Qerr^*} }
\and
\inferrule[\mbox{error}]
{ \eerr \defeq [``\mathsf{Error}", \pexp] }
{ \fsctx \vdash \utripleerr{\dotin{\pexp}{\Val}}{\perror(\pexp)}{\pvar{err} \doteq \eerr} }
\and
\inferrule[\mbox{error-err}]{
  \eerr \defeq ["\mathsf{ExprEval}", \stringify{\pexp}]}
    { \fsctx \vdash \utripleerr{{\pexp\notdotint\Val}}{\perror(\pexp)}{{\Qerr^*}} }
\end{mathpar}
}

{\small
\begin{mathpar}
  \inferrule[\mbox{if-then}]
  { \fsctx \vdash \utripleq{P \lstar \pexp}{C_1}{\bigq} }
  { \fsctx \vdash \utripleq{P \lstar \pexp }{\pifelse{\pexp}{C_1}{C_2}}{\bigq} }
\and
  \inferrule[\mbox{if-else}]
  { \fsctx \vdash \utripleq{P \lstar \lnot \pexp}{C_2}{\bigq} }
  { \fsctx \vdash \utripleq{P \lstar \lnot \pexp }{\pifelse{\pexp}{C_1}{C_2}}{\bigq} }
\and
  \inferrule[\mbox{if-err-val}]
  { \eerr \defeq ["\mathsf{ExprEval}", \stringify{\pexp}] }
  { \fsctx \vdash \utripleerr{P \lstar \pexp\notdotint\Val}{\pifelse{\pexp}{C_1}{C_2}}{\Qerr^*} }
\and
    \inferrule[\mbox{if-err-type}]
  { \eerr \defeq ["\mathsf{Type}", \stringify{\pexp}, \pexp, "\mathsf{Bool}"] }
  { \fsctx \vdash \utripleerr{P \lstar \pexp \dotint \Val \lstar \pexp \notdotint \Bool}{\pifelse{\pexp}{C_1}{C_2}}{\Qerr^*} }
\and
 \inferrule[\mbox{seq}]
 { {\begin{array}{c} \fsctx \vdash \uquadruplex{P}{C_1}{R}{\Qerr^1} \\ \fsctx \vdash \uquadruplex{R}{C_2}{\Qok}{\Qerr^2} \end{array}}}
 { \fsctx \vdash \uquadruplex{P}{\psequence{C_1}{C_2}}{\Qok}{\Qerr^1 \lor \Qerr^2} }
\and
  \inferrule[\mbox{while}]
  {
  \frall{i\in\Nat}~\models P_i \Rightarrow \dotin \pexp \Bool \lstar \AssTrue \\\\
  \frall{i\in\Nat}~\fsctx \vdash \uquadruplex{P_i \lstar \pexp}{\cmd}{P_{i+1} }{Q_i}
	}
	{ \fsctx \vdash \uquadruplex{P_0}{\pwhile{\pexp}{\cmd}}{\neg\pexp\lstar\exsts{i\in\Nat} P_i}{\exsts{i\in\Nat}  Q_i}
	}
\and
  	\inferrule[\mbox{while-err-val}]
{
\frall{i\in\Nat^{<j}}~\models P_i \Rightarrow \pexp \lstar \AssTrue \qquad
\models P_j \Rightarrow \pexp \notin \Val \lstar \AssTrue \\
\frall{i\in\Nat^{<j}}~\fsctx \vdash \uquadruple{P_i}{C}{P_{i+1} }{Q_i} \quad
\eerr\defeq["\mathsf{ExprEval}",\stringify{\pexp}]
  }
  { \fsctx \vdash \utripleerr{P_0}{\pwhile{\pexp}{C}}{(\exsts{i < j}  Q_i) \lor (P_{j}\lstar\pvar{err}\doteq\eerr) }
  }
\and
    	\inferrule[\mbox{while-err-type}]
{
\frall{i\in\Nat^{<j}}~\models P_i \Rightarrow \pexp \lstar \AssTrue \qquad
\models P_j \Rightarrow \dotin \pexp (\Val\setminus\Bool) \lstar \AssTrue \\\\
\frall{i\in\Nat^{<j}}~\fsctx \vdash \uquadruple{P_i}{C}{P_{i+1} }{Q_i} \quad
\eerr\defeq["\mathsf{Type}",\stringify{\pexp},\pexp,"\mathsf{Bool}"]
  }
  { \fsctx \vdash \utripleerr{P_0}{\pwhile{\pexp}{C}}{(\exsts{i < j}  Q_i) \lor (P_{j}\lstar\pvar{err}\doteq\eerr)}
  }
 \and
   \inferrule[\mbox{equiv}]
  { \fsctx \vdash \uquadruple{P'}{C}{\Qok'}{\Qerr'} \quad \models P', \Qok', \Qerr' \Leftrightarrow P, \Qok, \Qerr }
 { \fsctx \vdash \uquadruple{P }{C}{\Qok}{\Qerr}}
\and
\inferrule[\mbox{frame}]
{  \fsctx \vdash \uquadruple{P}{C}{\Qok}{\Qerr} \quad \updt(C) \cap \fv{R} = \emptyset }
{ \fsctx \vdash \uquadruple{P \lstar R}{C}{\Qok \lstar R}{\Qerr \lstar R}}
\and
 \inferrule[\mbox{exists}]
{ \fsctx \vdash \uquadruple{ P}{C}{\Qok}{\Qerr} }
{ \fsctx \vdash \uquadruple{\exists x.\, P}{C}{\exsts x \Qok} {\exsts x \Qerr}} \qquad
\and
 \inferrule[\mbox{disj}]
  { \fsctx \vdash \uquadruplex{P_1}{C}{\Qok^1}{\Qerr^1} \quad
   \fsctx \vdash \uquadruplex{P_2}{C}{\Qok^2}{\Qerr^2}}
 { \fsctx \vdash \uquadruplex{P_1 \lor P_2}{C}{\Qok^1 \lor \Qok^2}{\Qerr^1 \lor \Qerr^2} }
\and
   \inferrule[\mbox{fun-call}]{
      \uspecq{\pvvar x \doteq \vec x \lstar P}{\Qok}{\Qerr} \in \admiss{\fsctx(\fid)} \qquad
      \pvar{y} \not\in \pv{\lexp_y} \qquad \subst \eqdef [\lexp_y/\pvar{y}]
  }
  {
      \fsctx \vdash
      \uquadruplex{\pvar{y} \doteq \lexp_y \lstar \vec \pexp \doteq \vec x \lstar P}
      {\pfuncall{\pvar y}{\fid}{\vec \pexp}}
      {\lexp_y \dotint \Val \lstar \vec{\pexp}\subst \doteq \vec x \lstar \Qok[\pvar{y} / \mathtt{ret}]}
      {\pvar y \doteq \lexp_y \lstar \vec{\pexp} \doteq \vec x \lstar \Qerr}
  }
\end{mathpar}
\begin{mathpar}
  \inferrule[\mbox{fun-call-err-fct-nofunc}]{
    	{\fid\notin\dom(\fsctx) \qquad \eerr\defeq["\mathsf{NoFunc}",\fid]}
  }
  {
      \fsctx \vdash
      \utripleerr{\pvar{y} \doteq \lexp_y \lstar \vec \pexp \doteq \vec x}
      {\pfuncall{\pvar y}{\fid}{\vec \pexp}}
      {\Qerr^*}}
\and
\inferrule[\mbox{fun-call-err-param-count}]{
	 \uspecq{\pvvar x' \doteq \vec x' \lstar P}{\Qok}{\Qerr}\in \admiss{\fsctx(\fid)} \\
	|\vec \pexp|\neq|\pvvar x'| \\
\eerr\eqdef["\mathsf{ParamCount}"]
}
{
  \fsctx \vdash
  \utripleerr{\pvar{y} \doteq \lexp_y}
  {\pfuncall{\pvar y}{\fid}{\vec \pexp}}
  {\Qerr^*}}
  \and
\inferrule[\mbox{fun-call-err-param-expr-eval}]{
	 \uspecq{\pvvar x \doteq \vec x \lstar P}{\Qok}{\Qerr}\in \admiss{\fsctx(\fid)} \\
	 n = |\pvvar x| \\
\eerr\eqdef["\mathsf{ParamExprEval}", \stringify{E_k}]
}
{
  \fsctx \vdash
  \utripleerr{\pvar{y} \doteq \lexp_y \lstar (\bigoast{_{0 \le i < k}} (\pexp_i \in \Val)) \lstar \pexp_k \notin \Val}
  {\pfuncall{\pvar y}{\fid}{\pexp_1, \ldots \pexp_n}}
  {\Qerr^*}}
\and
 		\inferrule[\mbox{env-empty}]{\vdash (\emptyset, \emptyset)}{}
		\and
		\inferrule[\mbox{env-extend}]
{
  		{\begin{array}{c}
  		\vdash (\fictx,\fsctx)
  		\qquad
  		I = \{1, \ldots ,n\}
  		\qquad
  		\frall{i \in I} \fid_i \not\in \dom(\fictx)
  		\qquad \fictx' = \fictx[\fid_i \mapsto (\vec{\pvar x}_i, C_i, \pexp_i)]_{i \in I}
		\\[1mm]
  		\fsctx(\alpha) = \fsctx[\fid_i\mapsto
				\{ \uspecq{\tprei(\beta)}{\Qoki(\beta)}{\Qerri(\beta)} \mid {\beta < \alpha}\} \cup \{ \uspeconecaseq{\ntprei(\beta)}{\AssFalse} \mid \beta \leq \alpha\}]_{i \in I}
		\\[1mm]
		\frall{i\in I, \alpha} \exists t \in \fext_{\fictx', \fid_i}\big(\uspecq{\tprei(\alpha)}{\Qoki(\alpha)}{\Qerri(\alpha)}\big).~\fsctx(\alpha)\vdash C_i : \sspec \\
		\frall{i\in I, \alpha} \exists t \in \fext_{\fictx', \fid_i}(\uspeconecaseq{\ntprei(\alpha)}{\AssFalse}).~\fsctx(\alpha)\vdash C_i : \sspec  \\[1mm]
		\tprei \defeq \exsts{\alpha}\tprei(\alpha)\lstar\dotin{\alpha}{\ord}
		\qquad
		\ntprei \defeq \exsts{\alpha}\ntprei(\alpha)\lstar\dotin{\alpha}{\ord}
		\\
		\Qoki \defeq \exsts{\alpha}\Qoki(\alpha)\lstar\dotin{\alpha}{\ord}
		\qquad
		\Qerri \defeq \exsts{\alpha}\Qerri(\alpha)\lstar\dotin{\alpha}{\ord}
		\\[1mm]
		\fsctx'':=\fsctx[\fid_i\mapsto\{\uspecq{\tprei}{\Qoki}{\Qerri}, \uspeconecaseq{\ntprei}{\AssFalse}\}]_{i\in I}
		\end{array}}
	}{
  		\vdash (\fictx',\fsctx'')}
\end{mathpar}}

\vspace{1cm}

The rule for function calls differs slightly from the one presented in the main body. In particular, it has the condition $\uspecq{\pvvar x \doteq \vec x \lstar P}{\Qok}{\Qerr}\in \admiss{\fsctx(\fid)}$ instead of $\uspecq{\pvvar x \doteq \vec x \lstar P}{\Qok}{\Qerr}\in \fsctx(\fid)$, with $\admiss{T}$ defined as follows:
{\small
\begin{mathpar}
  \inferrule[\mbox{adm-in}]
   {T \subseteq \admiss T}
   {}
\and
 \inferrule[\mbox{adm-exists}]
  { \uspecq{\vec{\pvar x} = \vec x \lstar P}{\Qok}{\Qerr} \in \admiss T \qquad y \notin \vec x}
  { \uspecq{\vec{\pvar x} = \vec x \lstar \exists y.~P}{\exists y.~\Qok}{\exists y.~\Qerr} \in \admiss T } \qquad
\and
 \inferrule[\mbox{adm-frame}]
  { \uspecq{P}{\Qok}{\Qerr} \in \admiss T \quad \pv{R} = \emptyset}
  { \uspecq{P \lstar R}{\Qok \lstar R}{\Qerr \lstar R} \in \admiss T }
\and
 \inferrule[\mbox{adm-equiv}]
  { {\begin{array}{c} \uspecq{\vec{\pvar x} = \vec x \lstar P'}{\Qok'}{\Qerr'} \in \admiss T \\ \models P', \Qok', \Qerr' \Leftrightarrow P, \Qok, \Qerr \end{array}}}
  { \uspecq{\vec{\pvar x} = \vec x \lstar P}{\Qok}{\Qerr} \in \admiss T}
\and
 \inferrule[\mbox{adm-disj}]
  { {\begin{array}{c} \uspecq{\vec{\pvar x} = \vec x \lstar P_1}{{\Qok}_1}{{\Qerr}_1} \in \admiss T \quad \uspecq{\vec{\pvar x} = \vec x \lstar P_2}{{\Qok}_2}{{\Qerr}_2} \in \admiss T \end{array}}}
  { \uspec{\vec{\pvar x} = \vec x \lstar (P_1 \lor P_2}{{\Qok}_1 \lor {\Qok}_2}{\oxerr : {\Qerr}_1 \lor {\Qerr}_2} \in \admiss T }
\and
 \inferrule[\mbox{adm-alpha-equiv}]
  { {\begin{array}{c} \uspecq{P'}{\Qok'}{\Qerr'} \in \admiss T \quad \vec y \defeq \fv{P'} \cup \fv\Qok \cup \fv\Qerr \\ \vec z~\text{distinct} \quad
  P \defeq P' [\vec z/\vec y] \quad  \Qok \defeq \Qok' [\vec z/\vec y] \quad \Qerr \defeq \Qerr' [\vec z/\vec y] \end{array}}}
  { \uspecq{P}{\oxok : \Qok}{\oxerr : \Qerr} \in \admiss T}
  \end{mathpar}
}
These admissibility rules are validity-preserving and they lift the logic rules $\mbox{frame}$, \mbox{exists}, \mbox{equiv} and \mbox{disj} to the level of function specifications, as well as allow for the renaming of logical variables. These are commonly implicitly used in OX logics.

The \mbox{env-extend} rule in the figure above differs in several ways from the rule given in the main body (\S\ref{sec:proglog}):

\begin{mathpar}
  \small
    \inferrule[\mbox{env-extend}]{
      {\begin{array}{l}
      \vdash (\fictx,\fsctx)
      \qquad
      \fid \not\in \dom(\fictx)
      \qquad \fictx' = \fictx[\fid \mapsto (\vec{\pvar x}, \cmd, \pexp)]
      \\[1mm]
      \begin{array}{@{}ll}
      \sspec \defeq \uspecq {P} {\Qok} {\Qerr} &
    \sspec_\infty \defeq \uspeconecaseq {P_\infty} {\AssFalse} \\
      \sspec(\alpha) \defeq \uspecq{P \lstar \alpha = E_\mu} {\Qok \lstar \alpha = E_\mu} {\Qerr \lstar \alpha = E_\mu} &
    \sspec_\infty(\alpha) \defeq \uspeconecaseq {P_\infty \lstar \alpha = E_\mu} {\AssFalse}
      \end{array} \\ ~ \\[-3mm]
      \fsctx(\alpha) = \fsctx[\fid\mapsto
          \{ \sspec(\beta) \mid {\beta < \alpha}\} \mathbin\cup \{ t_\infty \mid \beta \leq \alpha \}]
      \\[1mm]
      \frall{\alpha} \exists t' \in \fext_{\fictx', \fid}\big(\sspec(\alpha)\big).~\fsctx(\alpha)\vdash \cmd : \sspec' \qquad
      {\frall{\alpha} \exists t' \in \fext_{\fictx', \fid}(\sspec_\infty(\alpha)).~\fsctx(\alpha)\vdash \cmd : \sspec'}  \\[1mm]
      \fsctx':=\fsctx[\fid\mapsto\{\sspec, \sspec_\infty\}]
      \end{array}}
    }
    {
      \vdash (\fictx',\fsctx')
    }
  \end{mathpar}

The more complex rule allows us to reason about a broader set of functions, and we chose to present a simpler version in the body of the paper for didactical reasons.
The proof of soundness conducted in Appendix \ref{apdx:envsound} proves soundness of the more complex rule,
which has the following strengths over the simplified version:
\begin{itemize}
  \item Instead of adding only one function $f$, we add the finite set $\{f_i ~|~ i=1,..,n \}$, allowing us to reason about \textit{mutally recursive} functions.
  \item We extend the measure beyond natural numbers to computable ordinals, $\ord\defeq\ckord$, allowing us to reason about a broader set of functions, such as those with non-deterministic nested recursion.
  \item We use the more general notation $P(\alpha)$ instead of $P\lstar\alpha\doteq\lexp_\mu$, and similarly for post-conditions. For this reason, we require an explicit transition from $\alpha$ being a meta-variable to it being a logical variable, as in the while rule. It can then be removed from the specification using admissible equivalence.
  \end{itemize}

\newpage
\newcommand{\casex}[1]{\medskip\noindent{\emph{#1}}}

\section{Soundness: ESL}
\label{apdx:soundesl}

In order to prove the soundness result of theorem \ref{logicsoundness}, we require three auxiliary lemmas regarding environment validity. Their proofs are straightforward and will therefore be omitted. For legacy reasons, this appendix and the following appendices may use an alternative notation for satisfiability: $\subst, \sto, \hp \models P$ instead of $\subst, (\sto, \hp) \models P$, where $\subst$ (referred to as the \emph{substitution}) is of type: $\LVar \rightharpoonup_{\mathtt{fin}} \Val$.

\begin{lemma}[Auxiliary properties] The following properties hold:
\label{lem:auxprop}
\begin{enumerate}
\item \label{LemPv}
  $\frall{\subst,\sto,\sto',\hp,P} \subst,\sto,\hp\models P \wedge \sto|_{\pv{P}}=\sto'|_{\pv{P}} \Rightarrow \subst,\sto',\hp\models P$

\item \label{LemMods}
  $\frall{\subst,\sto,\sto',\hp,\hp',\outcome,\scmd,\fictx}
   (\sto,\hp), \scmd \baction_{\fictx} \outcome : (\sto',\hp') \implies \frall{x \in \dom(\sto) \setminus \updt(C)} \sto(x) = \sto'(x)$

\item\label{LemSubst}
  $\frall{\pexp,\subst, \sto,\sto',\pvar y} \sto|_{\pv{\pexp}\backslash\{\pvar y\}}=\sto'|_{\pv{\pexp}\backslash\{\pvar y\}} \Rightarrow \esem{\pexp[\sto'(\pvar y)/\pvar y]}{\subst,\sto}=\esem{\pexp}{\subst,\sto'}$
\end{enumerate}
\end{lemma}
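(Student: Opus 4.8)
The plan is to prove the three properties separately, each by a routine induction; only part~\ref{LemMods} involves real bookkeeping. For part~\ref{LemPv}, I would first establish the analogue for logical expressions: $\sto|_{\pv{\lexp}} = \sto'|_{\pv{\lexp}} \implies \esem{\lexp}{\subst, \sto} = \esem{\lexp}{\subst, \sto'}$, by structural induction on $\lexp$, the only store-dependent base case being a program variable, which lies in $\pv{\lexp}$ (value and logical-variable cases are store-independent, and operator cases follow since each operator's semantics is a function of the operand values). Part~\ref{LemPv} then follows by structural induction on $P$ along Definition~\ref{def:lsat}, using that $\pv{P'} \subseteq \pv{P}$ for every sub-assertion $P'$ of $P$, so that the hypothesis $\sto|_{\pv{P}} = \sto'|_{\pv{P}}$ descends to all sub-assertions: the Boolean case uses the expression lemma; the spatial cases ($\emp$, $\lexp_1 \mapsto \lexp_2$, $\lexp \mapsto \cfreed$, $\lstar$, and the iterated star) reuse the same heap witnesses and apply the expression lemma or the induction hypothesis componentwise; and the first-order cases ($\AssFalse$, implication, $\exists x.\, P$) are immediate from the induction hypotheses, noting for the existential that enlarging $\subst$ does not touch the store and that $x \notin \pv{P}$ because $x$ is a logical variable.

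For part~\ref{LemSubst}, I would induct on the structure of the program expression $\pexp$. In the base cases: a value is unaffected by the substitution and evaluates to itself on both sides; for a program variable equal to $\pvar y$ both sides evaluate to $\sto'(\pvar y)$ (the substituted value becomes a constant); and for a program variable $\pvar z \ne \pvar y$ we have $\pvar z \in \pv{\pexp} \setminus \{\pvar y\}$, so $\esem{\pvar z}{\subst, \sto} = \sto(\pvar z) = \sto'(\pvar z) = \esem{\pvar z}{\subst, \sto'}$. The inductive cases (addition, equality, negation, the list operators, etc.) follow because substitution commutes with the expression constructors and each constructor's semantics is a function of the operand values, to which the induction hypotheses apply.

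Part~\ref{LemMods} is the one that needs care, and it must be proved by induction on the derivation of $(\sto, \hp), \cmd \baction_\fictx \outcome : (\sto', \hp')$ rather than on the structure of $\cmd$, since the while rule (and sequencing) recurse on the same command. For each rule of the operational semantics (Figure~\ref{fig:whilesem} and Appendix~\ref{apdx:cos}) one inspects how $\sto'$ is obtained from $\sto$: the successful basic commands either leave the store untouched ($\pskip$, mutation, deallocation) or update exactly the assigned variable (assignment, lookup, nondeterministic generation), which lies in $\updt(\cmd)$; the faulting rules update only $\pvar{err}$, which is in $\updt(\cmd)$ for the corresponding commands; the $\mathtt{if}$, sequence, and while rules compose the induction hypotheses, using $\updt(\cmd_i) \subseteq \updt(\cmd)$ for the sub-commands (for the while loop, the derivation either exits immediately, leaving the store unchanged, or runs the body once and then recurses); and the function-call rule executes the body in a fresh store, so the only effect on the caller's store is the update of $\pvar y$ on success or of $\pvar{err}$ on failure, both in $\updt(\cmd)$. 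The main obstacle, such as it is, is purely the enumeration: one must run through every semantic rule, track which variables each writes, and handle the dedicated variables $\pvar{err}$ and $\pvar{ret}$ and the fresh-store convention for function calls consistently; no individual case is conceptually difficult, which is why the authors omit the details.
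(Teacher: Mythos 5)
Your proposal is correct and matches the paper's treatment: the paper explicitly omits the proofs of this lemma as straightforward, and your three arguments (an expression-agreement lemma plus structural induction on assertions for part~\ref{LemPv}, structural induction on program expressions for part~\ref{LemSubst}, and rule induction on the big-step derivation with a case sweep over the semantics for part~\ref{LemMods}) are exactly the routine proofs the authors have in mind. Your reading that $\updt(\cmd)$ contains the assigned variable, $\pvar{y}$ for calls, and $\pvar{err}$ for faulting commands is the one under which the lemma holds, so no gap remains.
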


We also require a lemma that states that environment validity is closed under admissibility:
\begin{lemma}[Admissibility and Environment Validity]
\label{lem:admiss}
If $\models(\fictx,\fsctx)$, then $\models(\fictx,\admiss \fsctx)$, where $\admiss \fsctx = \{ f \mapsto \admiss{\fsctx(f)} \mid f \in \dom(\fsctx) \}$.
\end{lemma}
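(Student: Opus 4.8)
The plan is to unfold Definition~\ref{def:validenv} and reduce the claim to a statement about internalisations. Since $\dom(\admiss{\fsctx}) = \dom(\fsctx) \subseteq \dom(\fictx)$, the domain condition for $(\fictx,\admiss{\fsctx})$ is immediate, so the real content is: for every $\fid(\vec{\pvar x})\{\cmd; \preturn{\pexp}\} \in \fictx$ and every $\sspec \in \admiss{\fsctx(\fid)}$ there is some $\sspec' \in \fext_{\fictx, \fid}(\sspec)$ with $\fictx \models \cmd : \sspec'$. First I would observe that, by the shape constraints on external specifications (Definition~\ref{def:espec}), every $\sspec \in \admiss{\fsctx(\fid)}$ has a pre-condition of the form $\vec{\pvar x} \doteq \vec x \lstar P$ with $\pv{P} = \emptyset$, so $\pv{\sspec} = \{\vec{\pvar x}\}$; consequently the set of local variables $\vec{\pvar z} = \pv{\cmd} \setminus \{\vec{\pvar x}\}$ used by internalisation is the same for every member of $\admiss{\fsctx(\fid)}$, which keeps the argument uniform.

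I would then proceed by induction on the derivation of $\sspec \in \admiss{\fsctx(\fid)}$. The base case \prooflab{adm-in} is handed to us directly by $\models(\fictx, \fsctx)$. Each inductive case combines two ingredients. The first is a \emph{semantic} counterpart of the corresponding structural rule for $\fictx$-validity: $\fictx \models \cmd : \uspec{P}{\Qok}{\Qerr}$ is preserved by framing an assertion $R$ with $\fv{R} \cap \updt(\cmd) = \emptyset$, by existential introduction, by replacing $P, \Qok, \Qerr$ with logically equivalent assertions, by taking disjunctions of pre-conditions, and by uniformly renaming free logical variables. Each of these follows directly from Definition~\ref{def:gvalspec}, by essentially the arguments already used for the frame, exists, equiv, and disj cases of Theorem~\ref{logicsoundness}, with items~\ref{LemPv} and~\ref{LemSubst} of Lemma~\ref{lem:auxprop} handling the routine store/substitution bookkeeping. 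The second ingredient is the compatibility of internalisation with these operations: if the induction hypothesis yields $\sspec' = \uspec{P \lstar \vec{\pvar z} \doteq \nil}{\Qok'}{\Qerr'} \in \fext_{\fictx, \fid}(\uspec{P}{\Qok}{\Qerr})$, then, for example, $\uspec{(P \lstar R) \lstar \vec{\pvar z} \doteq \nil}{\Qok' \lstar R}{\Qerr' \lstar R} \in \fext_{\fictx, \fid}(\uspec{P \lstar R}{\Qok \lstar R}{\Qerr \lstar R})$ when $\pv{R} = \emptyset$ (the \prooflab{adm-frame} case), $\uspec{(\exists y.\, P) \lstar \vec{\pvar z} \doteq \nil}{\exists y.\, \Qok'}{\exists y.\, \Qerr'} \in \fext_{\fictx, \fid}(\uspec{\exists y.\, P}{\exists y.\, \Qok}{\exists y.\, \Qerr})$ when $y \notin \vec x$ (the \prooflab{adm-exists} case), and analogously for \prooflab{adm-disj}, \prooflab{adm-equiv}, and \prooflab{adm-alpha-equiv}. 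Verifying each such membership claim reduces to checking the three clauses of $\fext_{\fictx, \fid}$: the clause $\models \Qok' \Rightarrow \pexp \in \Val \lstar \AssTrue$ is inherited, since the return expression $\pexp$ contains no logical variables and is untouched by the extra assertion, the quantifier, or the renaming; the two equivalence clauses follow by transitivity of $\Leftrightarrow$ from the induction hypothesis together with standard assertion identities — moving $R$ (which has no program variables) and the quantifier $\exists y$ in and out of $\lstar$ and of the internalisation quantifier $\exists \vec{p}$, and distributing substitution and $\exists \vec{p}$ over $\lor$.

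I expect the main obstacle to be bookkeeping around the freshness side-condition on the internalisation witnesses rather than anything conceptually hard: for \prooflab{adm-equiv} and \prooflab{adm-alpha-equiv} the new external post-conditions can have a different set of free logical variables than the premise's, so witnesses $\vec{p}$ that were fresh for the premise need not be fresh for the conclusion. I would deal with this by re-choosing $\vec{p}$ and, where necessary, interposing \prooflab{adm-alpha-equiv} to rename logical variables so that the side-conditions line up; since $\admiss{\fsctx(\fid)}$ is closed under all of these moves, this never takes us outside the class under consideration. Everything else is routine, which is presumably why the paper elides the proof.
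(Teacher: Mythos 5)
Your proposal is correct and follows essentially the same route as the paper: induction on the formation of $\admiss{\fsctx(f)}$ for each $f$ in isolation, with the \prooflab{adm-in} base case given by $\models(\fictx,\fsctx)$ and the remaining cases handled by the same arguments as the frame, exists, equiv, and disj cases in the proof of Theorem~\ref{logicsoundness}. Your additional observation that one must also check that the internalisation function $\fext_{\fictx,\fid}$ commutes with each admissible transformation (and your treatment of the freshness of the witnesses $\vec p$) makes explicit a step the paper's one-line proof leaves implicit, but it does not change the approach.
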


\begin{proof}
Assume that $\models(\fictx,\fsctx)$. It is sufficient to prove the claim for an isolated $f \in \dom(\fsctx)$, by induction on the formation of $\admiss{\fsctx(f)}$. The proof of the four non-trivial cases is analogous to the proof of the respective structural rules in the proof of Theorem \ref{logicsoundness}, given below.
\end{proof}

\begin{proof}[Proof of Theorem \ref{logicsoundness}]
	By induction on the derivation $\fsctx \vdash \uquadruple P C {\Qok}{\Qerr}$. We prove a representative selection of rules; the proofs for the remaining ones are analogous.

	\casex{Function Call.} We first prove the successfully terminating case of under-approximation. Our hypotheses are:

	\begin{description}
	\item[(H1)] $\models(\fictx,\fsctx)$,
	\item[(H2)] $\subst,\sto',\hp' \models \lexp_y \dotint \Val \lstar \vec{\pexp}[\lexp_y/\pvar{y}] \doteq \vec x \lstar \Qok[\pvar{y} / \mathtt{ret}]$,
	\item[(H3)] $\dish{\hp'}{\hp_\fid}$,
	\item[(H5)] $\pvar y\not\in\fv{\lexp_y}$,
	\item[(H6)] $\uspec{\pvvar x \doteq \vec x \lstar P}{\Qok}{\Qerr}\in \admiss{\fsctx(\fid)}$ 
	\end{description}

	Our goal is to show that: $$\exsts{\sto,\hp} \subst,\sto,\hp \models \pvar{y} \doteq \lexp_y \lstar \vec \pexp \doteq \vec x \lstar P \wedge(\sto, \hp \uplus \hp_\fid), \passign{\pvar y}{\fid(\vec{\pvar x})} \baction_\fictx ok: (\sto', \hp' \uplus \hp_\fid)$$

\begin{itemize}
\item
	From (H1) and Lemma~\ref{lem:admiss}, we obtain \textbf{(H1a)}~$\models(\fictx,\admiss \fsctx)$
\item
	From (H6) and (H1a), we obtain $C$ and $\pexp$, such that \textbf{(H7)}~$\fid(\vec{\pvar x})\{C;\preturn{\pexp}\}\in\fictx$.
\item
	From (H2), we obtain that \textbf{(H2a)}~$\subst, \sto'\models \vec{\pexp}[\lexp_y/\pvar{y}] = \vec x$,
	and \textbf{(H2b)}~$\subst, \sto', \hp'\models \Qok[\pvar y/\pvar{ret}]$.
\item 
	From (H1a), (H6) and (H7), we obtain that there exists a specification
$$\uspec{\vec {\pvar x} \doteq \vec x  \lstar P \lstar \vec {\pvar z}\doteq\nil}{\Qok'}{\Qerr'}\in\fext_{\fictx,\fid}(\uspec{\pvvar x\doteq\vec x\lstar P}{\Qok}{\Qerr})$$
such that \textbf{(H8)}~$\fictx \models \uquadruple{\vec {\pvar x} \doteq \vec x \lstar P \lstar \vec {\pvar z}\doteq\nil }{C}{\Qok'}{\Qerr'}$, where, from the definition of the internalisation function, we know that \textbf{(H9a)}~$\Qok\Leftrightarrow\exsts{\vec p} \Qok'[\vec p/\vec{\pvar p}]\lstar \pvar{ret}\doteq \pexp[\vec p/\vec{\pvar p}]$, where $\pvvar z=\pv{C}\backslash\{\pvvar x\}$ and $\vec{\pvar p} = \{ \vec{\pvar x} \} \uplus \{ \pvvar z \} = \pv{C}$.

\item
Given (H2b) and (H9a), we derive the following:
\[
\begin{array}{l r}
	\subst,\sto',\hp'\models (\exsts{\vec p} \Qok'[\vec p/\vec{\pvar p}]\lstar \pvar{ret}\doteq \pexp[\vec p/\vec{\pvar p}])[\pvar y/\pvar{ret}] & \\
	\Rightarrow \subst,\sto',\hp'\models \exsts{\vec p} \Qok'[\vec p/\vec{\pvar p}]\lstar \pvar{y}\doteq \pexp[\vec p/\vec{\pvar p}] & \\
	\end{array}
	\]
	from which we obtain that there exist values $\vec w$, such that:
	\[
	\begin{array}{l r}
	\Rightarrow  \subst[\vec p\rightarrow\vec w],\sto',\hp'\models\Qok'[\vec p/\vec{\pvar p}]\lstar \pvar{y}\doteq \pexp[\vec p/\vec{\pvar p}]  & \\
		\Rightarrow  \subst[\vec p\rightarrow\vec w],\sto',\hp'\models\Qok'[\vec w/\pvvar p]\lstar \pvar{y}\doteq \pexp[\vec w/\vec{\pvar p}]  & \\
		\Rightarrow  \subst[\vec p\rightarrow\vec w],\sto'[\pvvar p\rightarrow\vec w],\hp'\models\Qok'\lstar \pvar{y}\doteq \pexp & \\
		\Rightarrow  \subst,\sto'[\pvvar p\rightarrow\vec w],\hp'\models\Qok'\lstar \pvar{y}\doteq \pexp & \textbf{(H10a)} \\
		\Rightarrow \subst,\sto'[\pvvar p\rightarrow\vec w],\hp'\models\Qok' & \textbf{(H10b)} \\
\end{array}
\]

\item
Instantiating (H8) with (H10b) and (H3), we obtain that there exist $\tilde \sto$ and $\hp$, such that
\textbf{(H11)}~$\subst,\tilde \sto,\hp \models \vec {\pvar x} \doteq \vec x \lstar P \lstar \vec {\pvar z}\doteq\nil$ and
\textbf{(H12)}~$(\tilde \sto, \hp \uplus \hp_\fid), C \baction_\fictx ok: (\sto'[\pvvar p\rightarrow \vec w], \hp' \uplus \hp_\fid)$.
\item
Let $\vec v = \subst(\vec x)$. Then, since $\pv{P} = \emptyset$ and given Lemma~\ref{lem:auxprop}(\ref{LemPv}), taking $\sto'':=\emptyset[\pvvar{x}\rightarrow \vec v][\pvvar z\rightarrow\nil]$, we obtain that \textbf{(H13)}~$\subst,\sto'',\hp\models \vec {\pvar x} \doteq \vec x \lstar P \lstar \vec {\pvar z}\doteq\nil$ and also that \textbf{(H14)}~$(\sto'', \hp \uplus \hp_\fid), C \baction_\fictx ok: (\sto'[\pvvar p\rightarrow \vec w], \hp' \uplus \hp_\fid)$.
\item
Let $v'= \esem{\pexp}{\sto'[\pvvar p\rightarrow \vec w]} = \esem{\pexp[\vec w / \pvvar p]}{\subst, \sto'}$, $v_y = \esem{\lexp_y}{\subst, \sto'}$ and \textbf{(H15)}~$\sto = \sto'[\pvar y\rightarrow v_y]$. Therefore, we also have that \textbf{(H16)}~$\sto'=\sto[\pvar y\rightarrow v']$.
\item We now need to prove that $(\sto, \hp \uplus \hp_\fid), \passign{\pvar y}{\fid(\vec \pexp)} \baction_\fictx ok: (\sto[\pvar y\rightarrow v'], \hp' \uplus \hp_\fid)$. For this, we already have:
$\pfunction{\procname}{\vec{\pvar{x}}}{\scmd; \preturn{\pexp}} \in \scontext$, $\pv{\scmd} \setminus \{\vec{\pvar{x}}\} = \{\vec{\pvar{z}} \}$, $\sto''  =  \emptyset [ \vec{\pvar{x}} \storearrow \vec{v}] [ \vec{\pvar{z}} \storearrow \nil]$, $(\sto'', \hp \uplus \hp_\fid), \scmd \baction_{\fictx} \sthreadp{ \sto' }{ \hp' \uplus \hp'_\fid }$ and $\esem{\pexp}{\sto'[\pvvar p\rightarrow \vec w]} =v'$,
and we still need $\esem{\vec{\vexp}}{\sto} = \vec{v}$.
Rewriting (H2a) given (H16), we get $\subst, \sto[\pvar y\rightarrow v'] \models \vec{\pexp}[\lexp_y/\pvar{y}] = \vec x$, that is, \textbf{(H17)}~$\esem{\vec{\pexp}[\lexp_y/\pvar{y}]}{\subst, \sto[\pvar y\rightarrow v']} =~\vec v$.
From (H17), the definition of $\sto$, and Lemma~\ref{lem:auxprop}(\ref{LemSubst}), we then obtain that $\esem{\vec{\vexp}}{\subst, \sto} = \vec{v}$, and from there, as $\vec \pexp$ are program expressions, we obtain the desired $\esem{\vec{\vexp}}{\sto} = \vec{v}$.
\item
Finally, we need to prove that $\subst, \sto, \hp \models \pvar{y} \doteq \lexp_y \lstar \vec \pexp \doteq \vec x \lstar P$. For the first ``starjunct'', we need to show $\esem{\pvar y}{\subst,\sto} = \esem{\lexp_y}{\subst, \sto}$:
$$
	\esem{\pvar y}{\subst,\sto} = \sto(\pvar y) \stackrel{\text{(H15)}}{=} (\sto'[\pvar y\rightarrow v_y])(\pvar y) = v_y = \esem{\lexp_y}{\subst,\sto'} \stackrel{\text{(H16)}}{=} \esem{\lexp_y}{\subst,\sto[\pvar y\rightarrow v']} \stackrel{\text{(H5)}}{=} \esem{\lexp_y}{\subst,\sto}
$$
For the second starjunct we do:
$$
\esem{\vec x}{\subst,\sto} = \subst(\vec x) = \esem{\vec x}{\subst,\sto'} \stackrel{\text{(H2a)}}{=} \esem{\vec \pexp[\lexp_y / \pvar y]}{\subst,\sto'} \stackrel{\text{(H15)}}{\ref{lem:auxprop}.\ref{LemSubst}}{=} \esem{\vec \pexp}{\subst,\sto}
$$
The third starjunct follows from (H13), given that $P$ has no logical variables by construction.
\end{itemize}

Next, we prove the erroneously terminating case of under-approximation. Our hypotheses are:
	\begin{description}
	\item[(H1)] $\models(\fictx,\fsctx)$,
	\item[(H2)] $\subst,\sto',\hp' \models \pvar y\doteq \lexp_y\lstar\vec{\pexp} \doteq \vec x \lstar \Qerr$,
	\item[(H3)] $\dish{\hp'}{\hp_\fid}$,
	\item[(H4)] $\pvar y\not\in\fv{\lexp_y}$,
	\item[(H5)] $\uspec{\pvvar x \doteq \vec x \lstar P}{\Qok}{\Qerr}\in \admiss {\fsctx(\fid)}$.
	\end{description}
	Our goal is to show that: $$\exsts{\sto,\hp} \subst,\sto,\hp \models \pvar{y} \doteq \lexp_y \lstar \vec \pexp \doteq \vec x \lstar P \wedge(\sto, \hp \uplus \hp_\fid), \passign{\pvar y}{\fid(\vec{\pvar x})} \baction_\fictx err: (\sto', \hp' \uplus \hp_\fid)$$

\begin{itemize}
\item
	From (H1) and Lemma~\ref{lem:admiss}, we obtain \textbf{(H1a)}~$\models(\fictx,\admiss \fsctx)$
\item
	From (H5) and (H1a), we obtain $C$ and $\pexp$, such that \textbf{(H7)}~$\fid(\vec{\pvar x})\{C;\preturn{\pexp}\}\in\fictx$.
\item
	From (H2), we obtain that \textbf{(H2a)} $\subst, \sto'\models \vec{\pexp} = \vec x$, \textbf{(H2b)}~$\subst, \sto', \hp'\models \Qerr$ and \textbf{(H2c)}~$\subst,\sto'\models \pvar y = \lexp_y$.
\item 
From (H1a), (H5) and (H7), we obtain that there exists a specification $$\uspec{\vec {\pvar x} \doteq \vec x  \lstar P \lstar \vec {\pvar z}\doteq\nil}{\Qok'}{\Qerr'}\in\fext_{\fictx,\fid}(\uspec{\pvvar x\doteq\vec x\lstar P}{\Qok}{\Qerr})$$ such that \textbf{(H8)}~$\fictx \models \uquadruple{\vec {\pvar x} \doteq \vec x \lstar P \lstar \vec {\pvar z}\doteq\nil }{C}{\Qok'}{\Qerr'}$, where, from the definition of the internalisation function, we know that \textbf{(H9a)}~$\Qerr\Leftrightarrow\exsts{\vec p} \Qerr'[\vec p/\vec{\pvar p}]$ with $\pvvar z=\pv{C}\backslash\{\pvvar x\}$ and $\vec{\pvar p} = \{ \vec{\pvar x} \} \uplus \{ \pvvar z \}$.
\item
Given (H2b) and (H9a), analogous to the previous case we obtain:
$$
	\subst,\sto',\hp'\models (\exsts{\vec p} \Qerr'[\vec p/\vec{\pvar p}]) \quad
	\Longrightarrow\quad\exsts{\vec w} \subst, \sto_{\pvar p}',\hp'\models \Qerr' \quad \textbf{(H10)}
$$
where $\sto_{\pvar p}' = \sto'[\vec{\pvar p}\mapsto \vec w]$
\item
Instantiating (H8) with (H1), (H10), and (H3), we obtain that there exist $\tilde \sto$ and $\hp$, such that
\textbf{(H11)}~$\subst,\tilde \sto,\hp \models \vec {\pvar x} \doteq \vec x \lstar P \lstar \vec {\pvar z}\doteq\nil$ and
\textbf{(H12)}~$(\tilde \sto, \hp \uplus \hp_\fid), C \baction_\fictx err: (\sto_{\pvar p}', \hp' \uplus \hp_\fid)$.
\item
Let $\vec v = \subst(\vec x)$. Then, since $\pv{P} = \emptyset$ and given Lemma~\ref{lem:auxprop}(\ref{LemPv}), taking $\sto'':=\emptyset[\pvvar{x}\rightarrow \vec v][\pvvar z\rightarrow\nil]$, we obtain that \textbf{(H13)}~$\subst,\sto'',\hp\models \vec {\pvar x} \doteq \vec x \lstar P \lstar \vec {\pvar z}\doteq\nil$ and also that \textbf{(H14)}~$(\sto'', \hp \uplus \hp_\fid), C \baction_\fictx err: (\sto_{\pvar p}', \hp' \uplus \hp_\fid)$.
\item
Let
$v_{err} = \esem{\pvar{err}}{\subst, \sto'}$ and \textbf{(H15)}~$\sto = \sto'\backslash\pvar{err}$. Therefore, we also have that \textbf{(H16)}~$\sto'=\sto[\pvar{err}\rightarrow v_{err}]$.
\item We now need to prove that $(\sto, \hp \uplus \hp_\fid), \passign{\pvar y}{\fid(\vec \pexp)} \baction_\fictx err: (\sto[\pvar {err}\rightarrow v_{err}], \hp' \uplus \hp_\fid)$. For this, we already have:
$\pfunction{\procname}{\vec{\pvar{x}}}{\scmd; \preturn{\pexp}} \in \scontext$, $\pv{\scmd} \setminus \{\vec{\pvar{x}}\} = \{\vec{\pvar{z}} \}$, $\sto''  =  \emptyset [ \vec{\pvar{x}} \storearrow \vec{v}] [ \vec{\pvar{z}} \storearrow \nil]$ and $(\sto'', \hp \uplus \hp_\fid), \scmd \baction_{\fictx}err: \sthreadp{ \sto' }{ \hp' \uplus \hp'_\fid }$,
and we still need $\esem{\vec{\vexp}}{\sto} = \vec{v}$.
Rewriting (H2a) given (H16), we get $\subst, \sto'\models \vec{\pexp} = \vec x$, that is, \textbf{(H17)}~$\esem{\vec{\pexp}}{\subst, \sto'} =~\vec v$.
From (H16) and (H17), we then obtain that $\esem{\vec{\vexp}}{\subst, \sto} = \vec{v}$, which yields $\esem{\vec{\vexp}}{\sto} = \vec{v}$ since $\pexp$ is a program expression.
\item
Finally, we need to prove that $\subst, \sto, \hp \models \pvar{y} \doteq \lexp_y \lstar \vec \pexp \doteq \vec x \lstar P$.
The first starjunct is proven as follows:
$$
	\esem{\pvar y}{\subst,\sto} = \sto(\pvar y) \stackrel{\text{(H16)}}{=} \sto'(\pvar y) = \esem{\lexp_y}{\subst,\sto'} \stackrel{\text{(H16)}}{=} \esem{\lexp_y}{\subst,\sto}
$$
For the second starjunct we do:
$$
\esem{\vec x}{\subst,\sto} = \subst(\vec x) = \esem{\vec x}{\subst,\sto'} \stackrel{\text{(H2a)}}{=} \esem{\vec \pexp}{\subst,\sto'} \stackrel{\text{(H15)}}{=} \esem{\vec \pexp}{\subst,\sto}
$$
The third starjunct follows from (H13), given that $P$ has no logical variables by construction.
\end{itemize}

We move on to proving the successfully terminating over-approximation soundness. Our hypotheses are:
\begin{description}
\item[(H1)] $\models(\fictx,\fsctx)$
\item[(H2)] $\subst,\sto,\hp\models \pvar y\doteq \lexp_y\lstar\vec \pexp\doteq\vec x\lstar P$
\item[(H3)] $\hp~\sharp~\hp_\fid$
\item[(H4)] $\pvar y\not\in\fv{\lexp_y}$
\item[(H5)] $\uspec{\pvvar x \doteq \vec x \lstar P}{\Qok}{\Qerr}\in \admiss{\fsctx(\fid)}$.
\end{description}
Our goal is to show that:
	$$
	\begin{array}{l}
	\frall{\sto',\hp''} (\sto, \hp \uplus \hp_\fid), \passign{\pvar y}{\fid(\vec{\pvar x})} \baction_\fictx ok: (\sto', \hp'')\\
	\quad\quad\Rightarrow
	(\outcome\neq\oxm\land\exsts{\hp'}\hp''=\hp'\uplus \hp_\fid\wedge \subst,\sto',\hp'\models \vec \pexp[\lexp_y/\pvar y]\doteq\vec x\lstar \Qok[\pvar y/\pvar{ret}])
	\end{array}
	$$

\begin{itemize}
\item
	From (H1) and Lemma~\ref{lem:admiss}, we obtain \textbf{(H1a)}~$\models(\fictx,\admiss \fsctx)$
\item
	From (H1a) and (H5), we obtain $C$ and $\pexp$ such that \textbf{(H6)}~$\fid(\pvvar x)\{C,\preturn{\pexp}\}\in\fictx$.
\item
	We define $\vec v:=\subst(\vec x)$ and obtain from (H2) that \textbf{(H2a)}~$\subst,\sto\models\vec \pexp=\vec x$ and $\subst,\sto,\hp\models P$, and hence \textbf{(H2b)}~$\subst,\sto[\pvvar x\rightarrow\vec v],\hp\models \pvvar x\doteq\vec x\lstar P$.
\item
	Since $\pv{P}=\emptyset$ we obtain from (H2b) with Lemma \ref{LemPv} that $\subst,\emptyset[\pvvar x\rightarrow\vec v],\hp\models \pvvar x\doteq\vec x\lstar P$ and hence \textbf{(H7)}~$\subst,\emptyset[\pvvar x\rightarrow\vec v][\pvvar z\rightarrow\nil],\hp\models \pvvar x\doteq\vec x\lstar P\lstar\pvvar z\doteq\nil$.
\item
	(H1a), (H5) and (H6) imply the existence of a specification\\ $\uspec{\pvvar x\doteq\vec x\lstar P\lstar\pvvar z\doteq\nil}{\Qok'}{\Qerr'}\in\fext_{\fictx,\fid}(\uspec{\pvvar x\doteq\vec x\lstar P}{\Qok}{\Qerr})$ such that \\
	$$\textbf{(H9)}\quad\fictx\models\uquadruple{\pvvar x\doteq\vec x\lstar P\lstar\pvvar z\doteq\nil}{C}{\Qok'}{\Qerr'}$$
\item
	Instantiating (H9) with (H1), (H7) and (H3) yields
$$
\textbf{(H11)}\quad
\begin{array}{l}
\frall{\sto',\hp''} (\sto'',\hp\uplus \hp_\fid),C\baction_{\fictx}ok:(\sto',\hp'') \\\quad\Rightarrow(\outcome\neq\oxm\land\exsts{\hp'}\hp''=\hp'\uplus \hp_\fid\wedge \subst,\sto',\hp'\models \Qok')
\end{array}
$$
\item
	Defining $v':=\esem{\pexp}{\subst,\sto'}$, we apply the operation semantics of the successfully termination function call, which yields
	$$ (\sto,\hp\uplus \hp_\fid),\passign{\pvar y}{\fid(\vec \pexp)}\baction_{\fictx}(\sto[\pvar y\rightarrow v'],\hp'')$$
\item
	To conclude the proof, it remains to show that $\subst,\sto[\pvar y\rightarrow v'],\hp'\models \vec \pexp[\lexp_y/\pvar y]\doteq\vec x\lstar \Qok[\pvar y/\pvar{ret}]$.
	(H11) implies $\subst,\sto',\hp'\models\Qok'$. Defining $\pvvar p:=\pv{\Qok'}$ and $\vec v:=\sto'(\pvvar p)$, we obtain $\subst[\vec p\rightarrow\vec v],-,\hp'\models\Qok'[\vec p/\pvvar p]$ where $-$ may denote any variable store, since the assertion does not hold any program variables. Therefore, $\subst[\vec p\rightarrow\vec v],\sto,\hp'\models\Qok'[\vec p/\pvvar p]$ and hence \textbf{(H12)}~$\subst,\sto[\pvar y\rightarrow v'],\hp'\models\exsts{\vec p}\Qok'[\vec p/\pvvar p]\lstar\pvar y\doteq v'$ hold.
\item
	From the definitions of $v'$, $\pvvar p$ and $\vec p$ we obtain $v':=\esem{\pexp}{\subst,\sto'}=\esem{\pexp[\vec p/\pvvar p]}{\subst,-}$ and therefore $\subst,\sto[\pvar y\rightarrow v'],\hp'\models\exsts{\vec p}\Qok'[\vec p/\pvvar p]\lstar\pvar y\doteq \pexp[\vec p/\pvvar p]$. Hence $\subst,\sto[\pvar y\rightarrow v'],\hp'\models\Qok[\pvar y/\pvar{ret}]$.
\item
	From (H2a) and Lemma \ref{LemSubst} , we obtain $\subst,\sto[\pvar y\rightarrow v']\models\vec \pexp[\lexp_y/\pvar y]=\vec x$ and therefore $\subst,\sto[\pvar y\rightarrow v'],\hp'\models\vec \pexp[\lexp_y/\pvar y]\doteq\vec x\lstar\Qok[\pvar y/\pvar{ret}]$, which concludes this case of the proof.
\end{itemize}

Finally, we prove the erroneously terminating over-approximation soundness. Our hypotheses are:
\begin{description}
\item[(H1)] $\models(\fictx,\fsctx)$
\item[(H2)] $\subst,\sto,\hp\models \pvar y\doteq \lexp_y\lstar\vec \pexp\doteq\vec x\lstar P$
\item[(H3)] $\hp~\sharp~\hp_\fid$
\item[(H4)] $\pvar y\not\in\fv{\lexp_y}$
\item[(H5)] $\uspec{\pvvar x \doteq \vec x \lstar P}{\Qok}{\Qerr}\in\admiss{\fsctx(\fid)}$.
\end{description}
Our goal is to show that:
	$$
	\begin{array}{l}
	\frall{\sto',\hp''} (\sto, \hp \uplus \hp_\fid), \passign{\pvar y}{\fid(\vec{\pvar x})} \baction_\fictx err: (\sto', \hp'')\\
	\quad\quad\Rightarrow(\outcome\neq\oxm \land \exsts{\hp'}\hp''=\hp'\uplus \hp_\fid\land \subst,\sto',\hp'\models \pvar y\doteq \lexp_y\lstar\vec \pexp\doteq\vec x\lstar \Qerr)
	\end{array} $$

\begin{itemize}
\item
	From (H1) and Lemma~\ref{lem:admiss}, we obtain \textbf{(H1a)}~$\models(\fictx,\admiss \fsctx)$
\item
	From (H1a) and (H5), we obtain $C$ and $\pexp$ such that \textbf{(H6)}~$\fid(\pvvar x)\{C,\preturn{\pexp}\}\in\fictx$.
\item
	We define $\vec v:=\subst(\vec x)$ and obtain from (H2), that \textbf{(H2a)}~$\subst,\sto\models\vec \pexp=\vec x$ and $\subst,\sto,\hp\models P$, and hence (H2b)~$\subst,\sto[\pvvar x\rightarrow\vec v],\hp\models \pvvar x\doteq\vec x\lstar P$.
\item
	Since $\pv{P}=\emptyset$ we obtain from (H2b) with Lemma \ref{LemPv} that $\subst,\emptyset[\pvvar x\rightarrow\vec v],\hp\models \pvvar x\doteq\vec x\lstar P$ and \textbf{(H7)}~$\subst,\emptyset[\pvvar x\rightarrow\vec v][\pvvar z\rightarrow\nil],\hp\models \pvvar x\doteq\vec x\lstar P\lstar\pvvar z\doteq\nil$.
\item
	(H1a), (H5) and (H6) imply the existence of a specification \\ $\uspec{\pvvar x\doteq\vec x\lstar P\lstar\pvvar z\doteq\nil}{\Qok'}{\Qerr'}\in\fext_{\fictx,\fid}(\uspec{\pvvar x\doteq\vec x\lstar P}{\Qok}{\Qerr})$ such that \\
	$$\textbf{(H9)}\quad\fictx\models\uquadruple{\pvvar x\doteq\vec x\lstar P\lstar\pvvar z\doteq\nil}{C}{\Qok'}{\Qerr'}$$
\item
	Instantiating (H9) with (H1), (H7) and (H3) yields
	$$\textbf{(H11)}\quad
	\begin{array}{l}
	\frall{\sto',\hp''} (\sto'',\hp\uplus \hp_\fid),C\baction_{\fictx}err:(\sto',\hp'') \\\quad\Rightarrow(\outcome\neq\oxm\land\exsts{\hp'}\hp''= \hp'\uplus \hp_\fid\wedge \subst,\sto',\hp'\models \Qerr')
	\end{array}
	$$
\item
	Defining $v_{err}:=\esem{\pvar{err}}{\sto'}$, we apply the operation semantics of the erroneously termination function call, which yields
	$$ (\sto,\hp\uplus \hp_\fid),\passign{\pvar y}{\fid(\vec \pexp)}\baction_{\fictx}(\sto[\pvar{err}\rightarrow v_{err}],\hp'')$$
\item
	To conclude the proof, it remains to show that $\subst,\sto[\pvar{err}\rightarrow v_{err}],\hp'\models \pvar y\doteq \lexp_y\lstar\vec \pexp\doteq\vec x\lstar \Qerr$.
	$\subst,\sto'\models \pvar y= \lexp_y$ holds trivially and (H11) implies $\subst,\sto',\hp'\models\Qerr'$. Defining $\pvvar p:=\pv{\Qerr'}$ and $\vec v:=\sto'(\pvvar p)$, we obtain $\subst[\vec p\rightarrow\vec v],-,\hp'\models\Qerr'[\vec p/\pvvar p]$ where $-$ may denote any variable store, since the assertion does not hold any program variables. Therefore, $\subst[\vec p\rightarrow\vec v],\sto[\pvar{err}\rightarrow v_{err}],\hp'\models\Qerr'[\vec p/\pvvar p]$  and hence $\subst,\sto[\pvar{err}\rightarrow v_{err}],\hp'\models\exsts{\vec p}\Qerr'[\vec p/\pvvar p]$ and \textbf{(H12)}~$\subst,\sto[\pvar{err}\rightarrow v_{err}],\hp'\models\Qerr$ holds.
\item
	(H2) implies $\subst(\vec x)=\esem{\vec \pexp}{\subst,\sto}=\esem{\vec \pexp}{\subst,\sto[\pvar{err}\rightarrow v_{err}]}$ and hence $\subst,\sto[\pvar{err}\mapsto v_{err}]\models \vec \pexp=\vec x$. Therefore, we obtain $\subst,\sto[\pvar{err}\rightarrow v_{err}],\hp'\models\pvar y\doteq \lexp_y\lstar\vec \pexp\doteq\vec x\lstar\Qerr$, which concludes the proof.
\end{itemize}

\casex{While.} The while rule is:

	$$
	\inferrule[\mbox{while}]
{
\frall{i\in\Nat}~\models P_i \Rightarrow \dotin \pexp \Bool \lstar \AssTrue \\\\
\frall{i\in\Nat}~\fsctx \vdash \uquadruplex{P_i \lstar \pexp}{\cmd}{P_{i+1} }{Q_i}
  }
  { \fsctx \vdash \uquadruplex{P_0}{\pwhile{\pexp}{\cmd}}{\neg\pexp\lstar\exsts{i\in\Nat} P_i}{\exsts{i\in\Nat}  Q_i}
  }
  $$

	We prove the under-approximation case for successful termination; the faulting case is proven analogously. Our hypotheses are as follows:

	\begin{description}
	\item[(H1)] $\models(\fictx,\fsctx)$;
	\item[(H2a)] $\frall{i\in\Nat}~\models P_i \Rightarrow \dotin \pexp \Bool \lstar \AssTrue$
	\item[(H2b)] $\frall{i\in\Nat} \fsctx\vdash\uquadruple{P_i \lstar \pexp}{C}{P_{i+1}}{Q_i}$;
	\item[(H3)] $\subst,\sto',\hp' \models \neg\pexp\lstar\exsts{i\in\Nat} P_i$;
	\item[(H4)] $\dish{\hp'}{\hp_\fid}$.
	\end{description}

	Our goal is to show that: $$\exsts{\sto,\hp} \subst,\sto,\hp \models P_0 \land (\sto, \hp \uplus \hp_\fid), \pwhile{\pexp}{C} \baction_\fictx ok: (\sto', \hp' \uplus \hp_\fid)$$
\begin{itemize}
\item (H3) implies $\subst,\sto',\hp' \models\exsts{i\in\Nat}\neg\pexp\lstar P_i$ and therefore there exists some $t\in\Nat$ such that \textbf{(H5)}~$\subst[i \mapsto t],\sto',\hp' \models\neg\pexp\lstar P_i$, which implies \textbf{(H6)}~$\subst[i \mapsto t],\sto',\hp' \models P_i$.

\item If $t=0$, (H5) implies $\subst,\sto',\hp' \models P_0$ and $\esem{\pexp}{\sto'}=\pfalse$. The operational semantics then yields $(\sto',\hp'\uplus\hp_f), \pwhile{\pexp}{\cmd}\baction_{\fictx}\oxok: (\sto',\hp'\uplus\hp_f)$ which implies the desired result.

\item Otherwise, we have $t>0$, and the iterated application of (H2a), (H2b), and the induction hypothesis yields the existence of some $\sto$ and $\hp$ such that \textbf{(H7)}~$\subst[i \mapsto t],\sto,\hp \models P_0\lstar\pexp$ and \textbf{(H8)}~$(\sto,\hp\uplus\hp_f), \cmd^t\baction_{\fictx}\oxok: (\sto',\hp'\uplus\hp_f)$. (H7) also implies $\subst,\sto,\hp \models P_0$. Finally, given (H2a), (H8), and the operational semantics of the while loop, we also obtain the desired $(\sto,\hp\uplus\hp_f), \pwhile{\pexp}{\cmd}\baction_{\fictx}\oxok: (\sto',\hp'\uplus\hp_f)$.

%
%
%
\end{itemize}

Second, we prove the over-approximating case. Our hypotheses are as follows:

\begin{description}
	\item[(H1)] $\models(\fictx,\fsctx)$;
	\item[(H2a)] $\frall{i\in\Nat}~\models P_i \Rightarrow \dotin \pexp \Bool \lstar \AssTrue$
	\item[(H2b)] $\frall{i\in\Nat} \fsctx\vdash\uquadruple{P_i \lstar \pexp}{C}{P_{i+1}}{Q_i}$;

	\item[(H3)] $\subst,\sto,\hp\models P_0$,
	\item[(H4)] $\dish{\hp}{\hp_f}$.
\end{description}

and our goal is to show that:
$$
\begin{array}{l}
\frall{\outcome, \sto',\hp''} (\sto,\hp\uplus \hp_\fid),\pwhile{\pexp}{C}\baction_{\fictx}\outcome: (\sto',\hp'')\implies \\ \qquad \exsts{\hp'}\hp''=\hp'\uplus \hp_\fid\land ((\outcome = \oxok \land \subst,\sto',\hp' \models \neg\pexp\land\exsts{i} P_i) \lor (\outcome = \oxerr \land \subst,\sto',\hp' \models \exists i.~Q_i))
\end{array}
$$

Taking \textbf{(H5)}~$(\sto,\hp\uplus \hp_\fid),\pwhile{\pexp}{C}\baction_{\fictx}\outcome: (\sto',\hp'')$, we have the following:

\begin{itemize}

\item If $\esem{\pexp}{\sto}=\pfalse$, the operational semantics yield that $\outcome=\oxok$, $\sto'=\sto$ and $\hp''=\hp_f\uplus\hp$, which, together with (H3) implies the desired result.

\item If $\esem{\pexp}{\sto}=\ptrue$, (H5) implies that there exists a $t>0$ such that $(\sto,\hp\uplus \hp_f),\cmd^t\baction_{\fictx}\outcome: (\sto',\hp'')$ and (H2a), (H2b), and the inductive hypothesis imply \textbf{(H6)}~$\exsts{\hp'}\hp''=\hp'\uplus \hp_f \land ((\outcome = \oxok \land \subst,\sto',\hp' \models P_t) \lor (\outcome = \oxerr \land \subst,\sto',\hp' \models \bigvee_{i=0}^{t-1} Q_{i}))$ and since the final states $(\sto',\hp'')$ coincide for (H5) and (H6) given the operational semantics of the while loop, we have the desired goal.
\end{itemize}

\casex{Frame.} The frame rule is:
\[
  \inferrule[\mbox{frame}]
  { \fsctx \vdash \uquadruple{P}{C}{\Qok}{\Qerr} \qquad \updt(C) \cap \fv{R} = \emptyset }
  { \fsctx \vdash \uquadruple{P \lstar R}{C}{\Qok \lstar R}{\Qerr \lstar R}}
\]
To prove the soundness of this rule, our hypotheses are that for arbitrary $\fictx$:
\begin{description}
\item[(H1)] $\models(\fictx,\fsctx)$
\item[(H2)] $\fsctx \vdash \uquadruple{P}{C}{\Qok}{\Qerr}$
\item[(H3)] $\updt(C) \cap \fv{R} = \emptyset$
\end{description}
From the inductive hypothesis and \textbf{(H2)}, it follows that $\fsctx \models \uquadruple{P}{C}{\Qok}{\Qerr}$ \textbf{(H4)}.
It then suffices to show that $\fsctx \models \uquadruple{P \lstar R}{C}{\Qok \lstar R}{\Qerr \lstar R}$ holds.
We start off by showing the over-approximating case. For this we assume that for some $\subst, \sto, \hp, \hp_\fid, \outcome, \sto', \hp''$:
\begin{description}
\item[(H5)] $\subst, \sto, \hp \models P \lstar R$
\item[(H6)] $(\sto, \hp \uplus \hp_\fid), \scmd \baction_{\fictx} \outcome: (\sto', \hp'')$
\end{description}
From the definition of the satisfiability relation and \textbf{(H5)}, it follows that there exists some heaps, $\bar{\hp}$ and $\bar{\hp}_r$, such that:
\begin{description}
\item[(H7)] $\hp = \bar{\hp} \dunion \bar{\hp}_r$
\item[(H8)] $\subst, \sto, \bar{\hp} \models P$
\item[(H9)] $\subst, \sto, \bar{\hp}_r \models R$
\end{description}
Letting $\bar{\hp}_\fid = \bar{\hp}_r \dunion \hp_\fid$, from \textbf{(H6)} and the associativity of $\dunion$, it follows that
$(\sto, \bar{\hp} \uplus \bar{\hp}_\fid), \scmd \baction_{\fictx} \outcome: (\sto', \hp'')$ \textbf{(H10)}.
From \textbf{(H4)}, \textbf{(H1)}, \textbf{(H8)} and \textbf{(H10)}, it follows that:
\[
  \outcome \neq \oxm \land \exists \hp'.~\hp'' = \hp' \uplus \bar{\hp}_\fid \land \subst, \sto', \hp' \models Q_\outcome
\]
By applying \cref{LemMods} and \cref{LemPv} to \textbf{(H9)}, we can infer that $\subst, \sto', \bar{\hp}_r \models R$.
Then letting $\bar{\hp}' = \hp' \dunion \bar{\hp}_r$, given the definition of the satisfiability relation,
it follows that $\subst, \sto', \bar{\hp}' \models Q_\outcome \lstar R$. We can then infer that:
\[
  \outcome \neq \oxm \land \exists \hp'.~\hp'' = \hp' \uplus \bar{\hp}_\fid \land \subst, \sto', \hp' \models Q_\outcome \lstar R
\]
as required. We now show the under-approximating case. For this we assume that for some $\subst, \sto', \hp', \hp_\fid, \outcome$:
\begin{description}
\item[(H11)] $\subst, \sto', \hp' \models Q_\outcome \lstar R$
\item[(H12)] $\hp_\fid~\sharp~\hp'$
\end{description}
From the definition of the satisfiability relation and \textbf{(H11)}, it follows that there exists some heaps, $\bar{\hp}'$ and $\bar{\hp}_r'$ such that:
\begin{description}
\item[(H13)] $\hp' = \bar{\hp}' \dunion \bar{\hp}_r'$
\item[(H14)] $\subst, \sto', \bar{\hp}' \models Q_\outcome$
\item[(H15)] $\subst, \sto', \bar{\hp}_r' \models R$
\end{description}
Letting $\bar{\hp}_\fid = \hp_\fid \dunion \bar{\hp}_r'$, from \textbf{(H12)} and \textbf{(H13)}, it  follows that $\bar{\hp}_\fid ~\sharp~\bar{\hp}'$ \textbf{(H16)}. From \textbf{(H4)}, applying \textbf{(H1)}, \textbf{(H14)} and \textbf{(H16)}, it follows that:
\[
  \exsts{\sto,\bar{\hp}}~\subst, \sto, \bar{\hp} \models P~\land~(\sto, \bar{\hp} \uplus \bar{\hp}_\fid), \scmd \baction_\fictx \outcome: (\sto', \bar{\hp}' \uplus \bar{\hp}_\fid)
\]
By applying \cref{LemMods} and \cref{LemPv} to \textbf{(H15)}, it follows that $\subst, \sto, \bar{\hp}_r' \models R$.
Letting $\hp = \bar{\hp} \dunion \bar{\hp}_r'$, by the definition of the satisfiability relation, it follows that:
\[
  \exsts{\sto,\bar{\hp}}~\subst, \sto, \hp \models P \lstar R ~\land~(\sto, \hp \uplus \hp_\fid), \scmd \baction_\fictx \outcome: (\sto', \hp' \uplus \hp_\fid)
\]
as required.


\casex{Equivalence.} The equivalence rule is
$$
\inferrule[\mbox{equiv}]
  { \fsctx \vdash \uquadruple{P'}{C}{\Qok'}{\Qerr'} \quad \models P', \Qok', \Qerr' \Leftrightarrow P, \Qok, \Qerr }
 { \fsctx \vdash \uquadruple{P }{C}{\Qok}{\Qerr}}
 $$
Our hypotheses for OX-soundness are
\begin{description}
	\item[(H1)] $\models(\fictx,\fsctx)$
	\item[(H2)] $\fsctx \models \uquadruple{P'}{C}{\Qok'}{\Qerr'}$
	\item[(H3)] $\models P', \Qok', \Qerr' \Leftrightarrow P, \Qok, \Qerr $
	\item[(H4)] $\subst,\sto,\hp\models P$
	\item[(H5)] $(\sto,\hp\uplus \hp_\fid),C\baction_{\fictx}(\sto',\hp'')$
\end{description}
and we aim to show that
$$
\outcome\neq\oxm\land
\exsts{\hp'} \hp=\hp'\uplus \hp_\fid \land \subst,\sto',\hp'\models Q_{\outcome}
$$
(H3) and (H4) implies $\subst,\sto,\hp\models P'$ and with (H2) and (H5) then implies \textbf{(H6)}~$\outcome\neq\oxm\land
\exsts{\hp'} \hp=\hp'\uplus \hp_\fid \land \subst,\sto',\hp'\models Q'_{\outcome}$. (H2) then implies the desired result.

For the UX-soundness, the hypotheses are
\begin{description}
	\item[(H1)] $\models(\fictx,\fsctx)$
	\item[(H2)] $\fsctx \models \uquadruple{P'}{C}{\Qok'}{\Qerr'}$
	\item[(H3)] $\models P', \Qok', \Qerr' \Leftrightarrow P, \Qok, \Qerr $
	\item[(H4)] $\subst,\sto',\hp'\models Q_{\outcome}$
	\item[(H5)] $\hp'~\sharp~\hp_\fid$
\end{description}
(H2) and (H4) implies $\subst,\sto',\hp'\models Q'_{\outcome}$. (H2) and (H5) implies
$\exsts{\sto,\hp} \subst,\sto,\hp\models P' \land (\sto,\hp\uplus \hp_\fid),C\baction_{\fictx}(\sto',\hp'\uplus \hp_\fid)$. (H3) then implies the desired result.

\casex{Existentials.} The existential rule is:
\[
  \inferrule[\mbox{exists}]
  { \fsctx \vdash \uquadruple{ P}{C}{\Qok}{\Qerr} }
  { \fsctx \vdash \uquadruple{\exists x.\, P}{C}{\exsts x \Qok} {\exsts x \Qerr}}
\]
To prove the soundness of this rule, our hypotheses are that for arbitrary $\fictx$:
\begin{description}
\item[(H1)] $\models(\fictx,\fsctx)$
\item[(H2)] $\fsctx \vdash \uquadruple{ P}{C}{\Qok}{\Qerr}$
\end{description}
Using the inductive hypothesis and \textbf{(H2)}, it follows that \textbf{(H3)}
$\fsctx \models \uquadruple{ P}{C}{\Qok}{\Qerr}$.
It then suffices to show that $\fsctx \models \uquadruple{\exists x.\, P}{C}{\exsts x \Qok} {\exsts x \Qerr}$.
We start off by showing the over-approximating case. To do so, we assume that for some $\subst, \sto, \hp, \hp_\fid, \outcome, \sto', \hp''$:
\begin{description}
\item[(H4)] $\subst, \sto, \hp \models \exists x.\, P$
\item[(H5)] $(\sto, \hp \uplus \hp_\fid), \scmd \baction_{\fictx} \outcome : (\sto', \hp' \uplus \hp_\fid)$
\end{description}
From \textbf{(H4)} and the definition of the satisfiability relation,
it follows that, for some $v$, \\ \textbf{(H6)} $\subst[x \mapsto v], \sto, \hp \models P$ holds.
From \textbf{(H3)}, \textbf{(H1)}, \textbf{(H6)} and \textbf{(H5)}, it follows that:
\[
  \outcome \neq \oxm \land \exists \hp'.~\hp'' = \hp' \uplus \hp_\fid \land \subst[x \mapsto v], \sto', \hp' \models Q_\outcome
\]
This trivially entails:
\[
  \outcome \neq \oxm \land \exists \hp'.~\hp'' = \hp' \uplus \hp_\fid \land \subst, \sto', \hp' \models \exists x. \, Q_\outcome
\]
as required. We now show the under-approximating case.
To do so, we assume that for some $\subst, \sto', \hp', \hp_\fid, \outcome$:
\begin{description}
\item[(H7)] $\subst, \sto', \hp' \models \exists x \, Q_\outcome$
\item[(H8)] $\hp_\fid~\sharp~\hp'$
\end{description}
From \textbf{(H7)} and the definition of the satisfiability relation, it follows that,
for some $v$, \\ \textbf{(H9)} $\subst[x \mapsto v], \sto', \hp' \models Q_\outcome$.
From \textbf{(H3)}, \textbf{(H1)}, \textbf{(H9)} and \textbf{(H8)},
it follows that:
\[
  \exsts{\sto,\hp}~\subst[x \mapsto v], \sto, \hp \models P~\land~(\sto, \hp \uplus \hp_\fid), \scmd \baction_\fictx \outcome: (\sto', \hp' \uplus \hp_\fid)
\]
and consequently:
\[
  \exsts{\sto,\hp}~\subst, \sto, \hp \models \exists x. \, P~\land~(\sto, \hp \uplus \hp_\fid), \scmd \baction_\fictx \outcome: (\sto', \hp' \uplus \hp_\fid)
\]
as required.

\casex{Disjunction.} The disjunction rule is
$$
\inferrule[\mbox{disj}]
  { \fsctx \vdash \uquadruplex{P_1}{C}{\Qok^1}{\Qerr^1} \quad
   \fsctx \vdash \uquadruplex{P_2}{C}{\Qok^2}{\Qerr^2}}
 { \fsctx \vdash \uquadruplex{P_1 \lor P_2}{C}{\Qok^1 \lor \Qok^2}{\Qerr^1 \lor \Qerr^2} }
$$
Our hypotheses for OX-soundness are
\begin{description}
	\item[(H1)] $\models(\fictx,\fsctx)$
	\item[(H2)] $\fsctx \models \uquadruplex{P_1}{C}{\Qok^1}{\Qerr^1}$
	\item[(H3)] $\fsctx \models \uquadruplex{P_2}{C}{\Qok^2}{\Qerr^2}$
	\item[(H4)] $\subst,\sto,\hp\models P_1\lor P_2$
	\item[(H5)] $(\sto,\hp\uplus \hp_\fid), C\baction_{\fictx}\outcome: (\sto',\hp'')$
\end{description}
(H4) implies that $(\subst,\sto,\hp\models P_1)\lor(\subst,\sto,\hp\models P_2)$. If the first case of the disjunct holds, (H2) implies \textbf{(H6a)}~$\outcome\neq\oxm\land
	\exsts{\hp'} \hp''=\hp'\uplus \hp_\fid \land \subst,\sto',\hp'\models Q_{\outcome}^1$. Otherwise, the second case holds and (H3) yields \textbf{(H6b)}~$\outcome\neq\oxm\land
	\exsts{\hp'} \hp''=\hp'\uplus \hp_\fid \land \subst,\sto',\hp'\models Q_{\outcome}^2$.
	The disjunction of (H6a) and (H6b) yields the desired result.

	For the UX-soundness, our hypotheses are
\begin{description}
	\item[(H1)] $\models(\fictx,\fsctx)$
	\item[(H2)] $\fsctx \models \uquadruplex{P_1}{C}{\Qok^1}{\Qerr^1}$
	\item[(H3)] $\fsctx \models \uquadruplex{P_2}{C}{\Qok^2}{\Qerr^2}$
	\item[(H4)] $\subst,\sto',\hp'\models Q^1_{\outcome}\lor Q^2_{\outcome}$
	\item[(H5)] $\hp'~\sharp~\hp_\fid$
\end{description}
(H4) implies $(\subst,\sto',\hp'\models Q^1_{\outcome})\lor(\subst,\sto',\hp'\models Q^2_{\outcome})$. If the first case of the disjunct holds, (H2) yields \textbf{(H6a)}~$\exists{\sto,\hp} (\sto,\hp\uplus \hp_\fid),C \baction_{\fictx}(\sto',\hp'\uplus \hp_\fid) \land \subst,\sto,\hp\models P_1$. Otherwise, the second case of the disjunction holds and (H3) implies \textbf{(H6b)}~$\exists{\sto,\hp} (\sto,\hp\uplus \hp_\fid),C \baction_{\fictx}(\sto',\hp'\uplus \hp_\fid) \land \subst,\sto,\hp\models P_2$. The disjunction of (H6a) and (H6b) yields the desired result.

\casex{Lookup.} The Lookup rule is
$$
\inferrule[\mbox{lookup}]
      { \pvar x \notin \pv{\pexp'} \qquad \subst \defeq [\pexp'/ \pvar{x}]}
    { \fsctx \vdash \utripleq{\pvar{x} \doteq \pexp'  \lstar \pexp \mapsto \pexp_1}{\pderef{\pvar{x}}{\pexp}}{\dotin{\pexp'}{\Val} \lstar \pvar{x} \doteq \pexp_1\subst \lstar \pexp\subst \mapsto   \pexp_1 \subst} }
$$
To prove its OX-soundness, the hypotheses are:
\begin{description}
	\item[(H1)] $\models(\fictx,\fsctx)$
	\item[(H2)] $\pvar x\notin\pv{\pexp'}$
	\item[(H4)] $\subst,\sto,\hp\models\pvar{x} \doteq \pexp'  \lstar \pexp \mapsto \pexp_1$
	\item[(H5)] $(\sto,\hp\uplus \hp_\fid),\passign{\pvar x}{[\pexp]}\baction_{\fictx}\outcome: (\sto',\hp'')$
\end{description}
and we aim to show
$$
	\outcome\neq\oxm\land
	\exsts{\hp'} \hp''=\hp'\uplus \hp_\fid \land \subst,\sto',\hp'\models \dotin{\pexp'}{\Val} \lstar \pvar{x} \doteq \pexp_1[\pexp'/ \pvar{x}] \lstar \pexp[\pexp'/ \pvar{x}] \mapsto   \pexp_1 [\pexp'/ \pvar{x}]
$$
(H5) yields \textbf{(H6)}~$\outcome\neq\oxm$, \textbf{(H7)}~$\esem{\pexp}{\sto}=v$ and \textbf{(H8)}~$\sto'=\sto[\pvar x\rightarrow v]$ and \textbf{(H9)}~$\hp''=\hp\uplus \hp_\fid$.
Chosing \textbf{(10)}$\hp'=\hp$, (H4) and (H8) yield \textbf{(H11)}~$\subst,\sto'\models \pexp'\in\Val$.
(H4), (H9), (H10) and (H11) imply the desired result.

For UX-soundness, our hypotheses are:
\begin{description}
	\item[(H1)] $\models(\fictx,\fsctx)$
	\item[(H2)] $\pvar x\notin\pv{\pexp'}$
	\item[(H3)] $\subst,\sto',\hp'\models\dotin{\pexp'}{\Val} \lstar \pvar{x} \doteq \pexp_1[\pexp'/ \pvar{x}] \lstar \pexp[\pexp'/ \pvar{x}] \mapsto   \pexp_1 [\pexp'/ \pvar{x}]$
	\item[(H4)] $\hp'~\sharp~\hp_\fid$
\end{description}
and we aim to show:
\[
\exists \sto,\hp. \, (\sto,\hp\uplus \hp_\fid),\passign{\pvar x}{[\pexp]}\baction_{\fictx}(\sto',\hp'\uplus \hp_\fid) \land \subst,\sto,\hp\models \pvar{x} \doteq \pexp'  \lstar \pexp \mapsto \pexp_1
\]
Letting $v = \esem{\pexp'}{\subst,\sto'}$, $\sto = \sto'[\pvar{x} \mapsto v]$ and $\hp = \hp'$, then:
\[
  (\sto,\hp\uplus \hp_\fid),\passign{\pvar x}{[\pexp]}\baction_{\fictx}(\sto',\hp'\uplus \hp_\fid)
\]
holds. Then given (H2), by applying \cref{LemSubst}, it is clear that $v = \esem{\pexp'}{\subst,\sto}$,
and therefore $\subst,\sto\models \pvar{x} \doteq \pexp'$. Finally, similarly,
$\subst,\sto,\hp\models \pexp \mapsto \pexp_1$, from which we can reach our goal by the definition
of the satisfiability relation.

\casex{Lookup-err-val.} The Lookup-err-val rule is
\[
  \inferrule[\mbox{lookup-err-val}]
  {\eerr \defeq [``\mathsf{ExprEval}", \stringify {\pexp}]}
  { \fsctx \vdash \utripleerr{\pvar x \doteq \pexp' \lstar \pexp \notdotint \Val}{\pderef{\pvar{x}}{\pexp}}{\Qerr} }
\]
where $\Qerr = \pvar x \doteq \pexp' \lstar \pexp \notdotint \Val \lstar \pvar{err} = \eerr$.
To prove OX-soundness, the hypothese are:
\begin{description}
\item[(H1)] $\models(\fictx,\fsctx)$
\item[(H2)] $\subst,\sto,\hp\models\pvar{x} \doteq \pexp'  \lstar \pexp \notdotint \Val$
\item[(H3)] $(\sto,\hp\uplus \hp_\fid),\passign{\pvar x}{[\pexp]}\baction_{\fictx}\outcome: (\sto',\hp'')$
\end{description}
It then suffices to show that $\subst,\sto',\hp''\models\pvar{x} \doteq \pexp'  \lstar \pexp \notdotint \Val \lstar \pvar{err} = \eerr$.
Given (H2) and the definition satisfiability relation, it follows that $\esem{\pexp}{\sto,\hp} \notin \Val$, and therefore $\esem{\pexp}{\sto,\hp} = \undefd$. From this we can infer that the only rule from the big-step operational semantics that
can apply is:
\[
  \infer{
    \sthreadp{ \sto }{ \hp }, \pderef{\pvar x}{\vexp} \baction_{\fictx} {\oxerr} : \sthreadp{ \sto_{\oxerr} }{ \hp }
  }{
    \begin{array}{c}
      \esem{\vexp}{\sto} = \undefd \\ \verr = [``\mathsf{ExprEval}", \stringify {\pexp}]
    \end{array}
  }
\]
From this, we can infer that $\hp'' = \hp$ and $\sto' = \sto[\pvar{err} \mapsto \verr]$.
From (H2) and the definition of the satisfiability relation, it then follows that
$\subst, \sto', \hp'' \models \Qerr$ as required.

For UX-soundness, our hypotheses are:
\begin{description}
\item[(H1)] $\models(\fictx,\fsctx)$
\item[(H2)] $\subst, \sto', \hp' \models \Qerr$
\item[(H3)] $\hp_\fid~\sharp~\hp'$
\end{description}
It then suffices to show that for some $\sto, \hp$:
\[
  \subst, \sto, \hp \models \pvar{x} \doteq \pexp'  \lstar \pexp \notdotint \Val~\land~(\sto, \hp \uplus \hp_\fid), \scmd \baction_\fictx \outcome: (\sto', \hp' \uplus \hp_\fid)))
\]
Letting $\sto = \sto' \setminus \set{\pvar{err}}$ and $\hp = \hp'$, from (H2) and the definition of the satisfiability relation, it follows that $\subst, \sto, \hp \models \pvar{x} \doteq \pexp'  \lstar \pexp \notdotint \Val$ and by applying the same big-step semantics rule as in the OX case, we derive the second starjunct of our goal as required.

\casex{Lookup-err-use-after-free.} The Lookup-err-use-after-free rule is
\[
  \inferrule[\mbox{lookup-err-use-after-free}]
{\eerr \defeq [``\mathsf{UseAfterFree}", \stringify{\pexp}, \pexp]}
{ \fsctx \vdash \utripleerr{\pvar x \doteq \pexp' \lstar \pexp \mapsto \cfreed }{\pderef{\pvar{x}}{\pexp}}{\Qerr} }
\]
where $\Qerr = \pvar x \doteq \pexp' \lstar \pexp \mapsto \cfreed \lstar \pvar{err} = \eerr$.
To prove OX-soundness, the hypothese are:
\begin{description}
\item[(H1)] $\models(\fictx,\fsctx)$
\item[(H2)] $\subst,\sto,\hp\models\pvar{x} \doteq \pexp'  \lstar \pexp \mapsto \cfreed$
\item[(H3)] $(\sto,\hp\uplus \hp_\fid),\passign{\pvar x}{[\pexp]}\baction_{\fictx}\outcome: (\sto',\hp'')$
\end{description}
It then suffices to show that $\subst,\sto',\hp''\models\pvar{x} \doteq \pexp'  \lstar \pexp \mapsto \cfreed \lstar \pvar{err} = \eerr$.
Given (H2), we can infer that $\hp(\esem{\pexp}{\sto,\hp}) = \cfreed$. From this we can infer that the only rule from the big-step operational semantics that can apply is:
\[
  \infer{
    \sthreadp{ \sto }{ \hp }, \pderef{\pvar x}{\vexp}
    \baction_{\fictx} {\oxerr} : \sthreadp{ \sto_{\oxerr} }{ \hp }
  }{
    \begin{array}{c}\esem{\vexp}{\sto} = n  \quad \hp(n) = \cfreed \\ \verr = [``\mathsf{UseAfterFree}", \stringify{\pexp}, n]\end{array}
  }
\]
From this, we can infer that $\hp'' = \hp$ and $\sto' = \sto[\pvar{err} \mapsto \verr]$.
From (H2) and the definition of the satisfiability relation, it then follows that
$\subst, \sto', \hp'' \models \Qerr$ as required. For UX-soundness, our hypotheses are:
\begin{description}
\item[(H1)] $\models(\fictx,\fsctx)$
\item[(H2)] $\subst, \sto', \hp' \models \Qerr$
\item[(H3)] $\hp_\fid~\sharp~\hp'$
\end{description}
It then suffices to show that for some $\sto, \hp$:
\[
  \subst, \sto, \hp \models \pvar{x} \doteq \pexp'  \lstar \pexp \mapsto \cfreed ~\land~(\sto, \hp \uplus \hp_\fid), \scmd \baction_\fictx \outcome: (\sto', \hp' \uplus \hp_\fid)))
\]
Letting $\sto = \sto' \setminus \set{\pvar{err}}$ and $\hp = \hp'$, from (H2) and the definition of the satisfiability relation, it follows that $\subst, \sto, \hp \models \pvar{x} \doteq \pexp'  \lstar \pexp \mapsto \cfreed$ and by applying the same big-step semantics rule as in the OX case, we derive the second conjunct of our goal as required.

\casex{New.} The New rule is
$$
\inferrule[\mbox{new}]
    { \pvar{x}  \notin \pv{\pexp'} \quad \subst \defeq [\pexp'/\pvar x]}
          { \fsctx \vdash \utripleok{\pvar x \doteq \pexp' \lstar \pexp \dotint \Nat}{\palloc{\pvar{x}}{\pexp}}{\pexp' \dotint \Val \lstar \bigoast{_{0 \le i < \pexp\subst}} (( \pvar x + i) \mapsto \nil)} }
$$
For the OX-soundness, our hypotheses are
\begin{description}
	\item[(H1)] $\models(\fictx,\fsctx)$
	\item[(H2)] $\pvar x\notin\pv{\pexp'}$
	\item[(H3)] $\fsctx \vdash \utripleok{\pvar x \doteq \pexp' \lstar \pexp \dotint \Nat}{\palloc{\pvar{x}}{\pexp}}{\pexp' \dotint \Val \lstar \bigoast{_{0 \le i < \pexp[\pexp'/\pvar x]}} (( \pvar x + i) \mapsto \nil)}$
	\item[(H4)] $\subst,\sto,\hp\models \pvar x \doteq \pexp' \lstar \pexp \dotint \Nat$
	\item[(H5)] $(\sto,\hp\uplus \hp_\fid), \palloc{\pvar x}{\pexp}\baction_{\fictx}\outcome: (\sto',\hp'')$
\end{description}
(H5) implies:
\begin{description}
	\item[(H6)] $\outcome\neq\oxm$
	\item[(H7)] $\esem{\pexp}{\sto}=n$
	\item[(H8)] $\frall{i\in\{0,...,n-1\}} n'+i\notin\dom(\hp\uplus \hp_\fid)$
	\item[(H9)] $\sto'=\sto[\pvar x\rightarrow n']$
	\item[(H10)] $\hp''=(\hp\uplus \hp_\fid)[n'\mapsto\nil]\hdots[n'+n-1\mapsto\nil]$
\end{description}
Defining \textbf{(H11)}~$\hp'=\hp[n'\mapsto\nil]\hdots[n'+n-1\mapsto\nil]$ yields with (H10) that \textbf{(H12)}~$\hp''=\hp'\uplus \hp_\fid$.
(H2), (H4) and (H9) implies \textbf{(H13)}~$\subst,\sto'\models \pexp'\in\Val\lstar\pvar x=n'$ and (H11) then implies $\subst,\sto',\hp'\models \pexp' \dotint \Val \lstar \bigoast{_{0 \le i < \pexp[\pexp'/\pvar x]}} ( \pvar x + i) \mapsto \nil$, which is the desired result.

For the UX direction, the hypotheses are
\begin{description}
	\item[(H1)] $\models(\fictx,\fsctx)$
	\item[(H2)] $\pvar x\notin\pv{\pexp'}$
	\item[(H3)] $\fsctx \vdash \utripleok{\pvar x \doteq \pexp' \lstar \pexp \dotint \Nat}{\palloc{\pvar{x}}{\pexp}}{\pexp' \dotint \Val \lstar \bigoast{_{0 \le i < \pexp[\pexp'/\pvar x]}} (( \pvar x + i) \mapsto \nil)}$
	\item[(H4)] $\subst,\sto',\hp'\models \pexp' \dotint \Val \lstar \bigoast{_{0 \le i < \pexp[\pexp'/\pvar x]}} (( \pvar x + i) \mapsto \nil)$
	\item[(H5)] $\hp'~\sharp~\hp_\fid$
\end{description}
(H4) implies that \textbf{(H5)}~$\pvar x\in\dom(\sto')$ and we define
\begin{description}
	\item[(H6)] $n'=\sto'(\pvar x)$
	\item[(H7)] $n=\esem{\pexp[\pexp'/\pvar x]}{\subst,\sto'}$
	\item[(H8)] $\hp=\hp'|_{d}$, where $d=\dom(\hp')\backslash\{\sto(\pvar x),\hdots,\sto(\pvar x+n)\}$
	\item[(H9)] $\sto=\sto'[\pvar x\rightarrow v]$ where $v=\esem{\pexp'}{\subst,\sto'}$
\end{description}
(H5) and (H8) impliy that \textbf{(H10)}~$n'+i\notin\dom(\hp\uplus \hp_\fid)\forall i\in\{0,...,n-1\}$ and \textbf{(H11)}~$\hp'\uplus \hp_\fid=(\hp\uplus \hp_\fid)[n'\mapsto\nil]\hdots[n'+n-1\mapsto\nil]$ and (H9) implies \textbf{(H12)}~$\sto'=\sto[\pvar x\rightarrow n']$.
 (H2), (H7) and (H9) imply \textbf{(H13)}~$\esem{\pexp}{\subst,\sto}=n$.
(H10),(H11),(H12) and (H13)imply
$$
(\sto,\hp\uplus \hp_\fid), \palloc{\pvar x}{\pexp}\baction_{\fictx}\outcome: (\sto',\hp'')
$$
(H2) and (H9) imply $\esem{\pexp}{\sto}=v$, which with (H7) and (H13) implies $\subst,\sto\models \pvar x\doteq \pexp'\lstar \pexp\dotint\Nat$.

(H9), (H10), (H11) and (H13) imply the desired result.

\casex{Free.} The free rule is
$$
\inferrule[\mbox{free}]
      {}
   { \fsctx \vdash \utripleok{\pexp \mapsto \pexp'}{\pdealloc{\pexp}}{\dotin{\pexp'}{\Val}\lstar \pexp \mapsto \cfreed} }
   $$
For OX-Soundness, the hypotheses are
\begin{description}
	\item[(H1)] $\models(\fictx,\fsctx)$
	\item[(H2)] $\fsctx \vdash \utripleok{\pexp \mapsto \pexp'}{\pdealloc{\pexp}}{\dotin{\pexp'}{\Val}\lstar \pexp \mapsto \cfreed}$
	\item[(H3)] $\subst,\sto,\hp\models \pexp \mapsto \pexp'$
	\item[(H4)] $(\sto,\hp\uplus \hp_\fid), \pdealloc{\pexp}\baction_{\fictx}(\sto',\hp'')$
\end{description}
(H4) implies
\begin{description}
	\item[(H5)] $\esem{\pexp}{\sto}=n$
	\item[(H6)] $(\hp\uplus \hp_\fid)(n)\in\Val$
	\item[(H7)] $\sto=\sto'$
	\item[(H8)]	$\hp'' = (\hp\uplus \hp_\fid)[n\mapsto\cfreed]$
\end{description}
(H4) and (H5) impliy that $n\in\dom(\hp)$, which with (H8) implies that $\hp''= \hp[n\mapsto\cfreed]\uplus \hp_\fid$. (H3) and (H7) imply $\subst,\sto'\models \pexp'\in\Val$.
(H5) and (H7) imply $\esem{\pexp}{\sto'}=n$ and
defining $\hp'=\hp[n\mapsto\cfreed]$, we obtain $\subst,\sto',\hp'\models \pexp'\in\Val \lstar \pexp\mapsto\cfreed$, which is the desired result.

For the UX direction, our hypotheses are
\begin{description}
	\item[(H1)] $\models(\fictx,\fsctx)$
	\item[(H2)] $\fsctx \vdash \utripleok{\pexp \mapsto \pexp'}{\pdealloc{\pexp}}{\dotin{\pexp'}{\Val}\lstar \pexp \mapsto \cfreed}$
	\item[(H3)] $\subst,\sto',\hp'\models \pexp'\dotint\Val\lstar \pexp\mapsto\cfreed$
	\item[(H4)] $\hp'~\sharp~\hp_\fid$
\end{description}
Defining $\sto=\sto'$, (H3) yields that $n=\esem{\pexp}{\subst,\sto'}=\esem{\pexp}{\subst,\sto}$ for some $n\in\Nat$. Defining $\hp=\hp'[n\mapsto v]$ for $v=\esem{\pexp'}{\subst,\sto'}=\esem{\pexp'}{\subst,\sto}$, we obtain $\hp'=\hp[n\mapsto\cfreed]$ and therefore $\hp'\uplus \hp_\fid=(\hp\uplus \hp_\fid)[n\mapsto\cfreed]$. The operational semantics of $\mathtt{free}$ then  yields
$$
	(\sto,\hp\uplus \hp_\fid), \pdealloc{\pexp}\baction_{\fictx}(\sto',\hp'\uplus \hp_\fid)
$$
and also obtain $\subst,\sto,\hp\models \pexp\mapsto \pexp'$, which is the desired result.

\end{proof}

\newpage
\section{Basics of Scott Induction}\label{apdx:scott}

The second soundness statement that needs to be proven for ESL is that well-formed environments are valid. This requires reasoning about the use of function specifications in the context of the environment extension rule.

In particular, the use of specifications of non-recursive functions is trivially sound. For recursive functions that always terminate, soundness can be proven by transfinite induction, while establishing a measure on the function pre-conditions and allowing recursive use of specifications only if they have a strictly lower measure. Without this requirement, we could  prove an unsound specification $\utripleok{\emp}{\mathtt{\fid}()}{\pvar{ret}\doteq 42}$ for the function $\mathtt{\fid}()\{ \passign{\pvar x}{\mathtt{\fid}()}; \preturn \pvar x \}$, which does not hold since $\mathtt{\fid}$ never terminates and the (satisfiable) post-condition $\pvar {ret}\doteq 42$ implies the existence of at least one terminating execution. This soundness issue does not arise in over-approximating logics, since, due to the meaning of triples, a satisfiable post-condition does not imply the existence of terminating traces. In these logics, it is always sound to apply a specification to prove itself.

However,
for recursive functions with non-terminating branches due to infinite recursion, we also have to be able to allow recursive use of specifications whose measure does not decrease, and the tool to handle such use is a form of fixpoint induction called Scott induction (see, e.g., Winskel~\cite{scott}), which would normally be the tool for also proving soundness of well-formed environments in over-approximating logics.


%
%

In the following, we give the relevant Scott-induction-related definitions (from \cite{scott}), together with an instantiation that will be applied to prove soundness of well-formed environments in Appendix~\ref{apdx:envsound}.

%
%
%
%
%

\begin{definition}[Domain]
	 A partially ordered set $(D,\sqsubseteq)$ is a {\it domain}, iff
	 \begin{description}
	 \item[(D1)] $\exsts{\bot\in D}\frall{d\in D} \bot\sqsubseteq d $ (least element)
	 \item[(D2)] $\frall{(d_n)_{n\in\Nat}\subseteq D}(\frall{i\in\Nat} d_i\sqsubseteq d_{i+1})~\implies~\bigsqcup\limits_{n\in\Nat}d_n\in D$ (chain-closedness)
	 \end{description}
	 where $\sqcup_{n\in\Nat} d_n$ denotes the least upper bound or the supremum of the set $\{d_n~|~n\in\Nat\}$ with respect to $\sqsubseteq$.
\end{definition}

\begin{definition}[Admissible Subset]
	Given a domain $(D, \sqsubseteq)$ with least element $\bot$, a subset $\sto\subseteq D$ is called admissible, iff
	\begin{description}
	 \item[(S1)] $\bot\in \sto $ (least element)
	 \item[(S2)] $\frall{(\sto_n)_{n\in\Nat}\subseteq \sto}(\frall{i\in\Nat} \sto_i\sqsubseteq \sto_{i+1})~\implies~\bigsqcup\limits_{n\in\Nat}\sto_n\in \sto$  (chain-closedness)
	 \end{description}
\end{definition}

\begin{definition}[Continuity on Domains]
	Assuming two domains $(D,\sqsubseteq_D)$ and $(\pexp,\sqsubseteq_\pexp)$, a function $g:D\longrightarrow \pexp$ is {\it continuous}, iff
	\begin{description}
	\item[(C1)] $\frall{d,d'\in\D} d\po_D d' \Longrightarrow g(d)\po_\pexp g(d')$ (monotonicity)
	\item[(C2)] $\frall{(d_n)_{n\in\Nat}} (\frall{i\in\Nat} d_i\sqsubseteq d_{i+1})~\Rightarrow~ \bigsqcup\limits_{n\in\Nat} g(d_n) = g(\bigsqcup_{n\in\Nat} d_n)$ (supremum-preservation)
	\end{description}
\end{definition}

\begin{theorem}[Least fixpoint]\label{lem:lfpid}
	Given a domain $D$ and a continuous function $g:D\longrightarrow D$, the least fixpoint of $g$, denoted by $\lfp{g}$, has the identity
	$$ \lfp{g}=\bigsqcup_{n\in\Nat} g^n(\bot),$$
	where $\bot$ denotes the least element of $D$ and $g^n$ denotes the $n$-times application of $g$.
\end{theorem}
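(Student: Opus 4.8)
The plan is to establish the Kleene fixpoint identity via the standard three steps: first that the iterates $\bot, g(\bot), g^2(\bot), \dots$ form an ascending $\omega$-chain whose supremum lives in $D$; second that this supremum is a fixpoint of $g$; and third that it lies below every fixpoint of $g$, so it is the least one, which is precisely the asserted identity for $\lfp{g}$.

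First I would verify that $(g^n(\bot))_{n\in\Nat}$ is an ascending chain. Since $\bot$ is the least element by (D1), we have $g^0(\bot)=\bot\sqsubseteq g^1(\bot)$, and a straightforward induction on $n$ using monotonicity (C1) gives $g^n(\bot)\sqsubseteq g^{n+1}(\bot)$ for every $n$. Chain-closedness (D2) then ensures that $d^{\star}\defeq\bigsqcup_{n\in\Nat}g^n(\bot)$ exists in $D$.

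Next I would show $g(d^{\star})=d^{\star}$. Applying supremum-preservation (C2) to the chain above gives $g(d^{\star})=g\bigl(\bigsqcup_{n}g^n(\bot)\bigr)=\bigsqcup_{n}g(g^n(\bot))=\bigsqcup_{n}g^{n+1}(\bot)$, so it suffices to see that $\bigsqcup_{n}g^{n+1}(\bot)=\bigsqcup_{n}g^{n}(\bot)$. This holds because the sets $\{g^{n+1}(\bot)\mid n\in\Nat\}$ and $\{g^{n}(\bot)\mid n\in\Nat\}$ have the same upper bounds: any upper bound of the former bounds $g^{1}(\bot)$ and therefore also $\bot=g^0(\bot)$, hence bounds the latter as well. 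Their least upper bounds thus coincide and $g(d^{\star})=d^{\star}$. I expect this reindexing observation to be the one spot that needs a little care; the rest is routine.

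Finally I would prove minimality. Let $e\in D$ satisfy $g(e)=e$. By induction on $n$: $g^0(\bot)=\bot\sqsubseteq e$ by (D1), and if $g^n(\bot)\sqsubseteq e$ then $g^{n+1}(\bot)=g(g^n(\bot))\sqsubseteq g(e)=e$ by (C1). Thus $e$ is an upper bound of the chain $(g^n(\bot))_n$, and since $d^{\star}$ is by definition the \emph{least} upper bound, $d^{\star}\sqsubseteq e$. Combined with the previous step, $d^{\star}$ is a fixpoint of $g$ lying below every fixpoint of $g$, so $\lfp{g}=d^{\star}=\bigsqcup_{n\in\Nat}g^n(\bot)$, as claimed.
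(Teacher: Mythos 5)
Your proof is correct. The paper does not give its own argument for this theorem—it is stated as a standard result imported from Winskel—and your proof is precisely the classical Kleene iteration argument (the iterates form an ascending chain, continuity plus the reindexing observation shows the supremum is a fixpoint, and induction with monotonicity gives minimality), so it matches the intended, cited proof.
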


\begin{theorem}[Scott Induction Principle]\label{theorem:scottinduction}
	Given a domain $D$, an admissible subset $\sto\subseteq D$, and a continuous function $g:D\longrightarrow D$, it holds that
	$$ g(\sto)\subseteq \sto ~\Longrightarrow~ \lfp{g}\in \sto $$
\end{theorem}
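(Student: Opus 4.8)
The plan is to reduce the statement to the least-fixpoint identity of Theorem~\ref{lem:lfpid} and then check that the chain of finite approximants $g^n(\bot)$ never leaves the admissible subset $\sto$.

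First I would observe that, since $(D,\sqsubseteq)$ is a domain and $g$ is continuous, $\lfp{g}$ exists and Theorem~\ref{lem:lfpid} gives $\lfp{g} = \bigsqcup_{n \in \Nat} g^n(\bot)$, where $\bot$ is the least element of $D$ supplied by (D1). Hence it suffices to prove $\bigsqcup_{n \in \Nat} g^n(\bot) \in \sto$.

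Next I would run a simultaneous induction on $n$ to establish two facts about the approximants. The first is membership: $g^0(\bot) = \bot \in \sto$ by (S1), and if $g^n(\bot) \in \sto$ then $g^{n+1}(\bot) = g(g^n(\bot)) \in g(\sto) \subseteq \sto$, using the hypothesis $g(\sto) \subseteq \sto$. The second is that $(g^n(\bot))_{n \in \Nat}$ is an increasing chain: $\bot \sqsubseteq g(\bot)$ because $\bot$ is least (D1), and if $g^n(\bot) \sqsubseteq g^{n+1}(\bot)$ then monotonicity (C1) of $g$ gives $g^{n+1}(\bot) \sqsubseteq g^{n+2}(\bot)$.

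Finally, with $(g^n(\bot))_{n \in \Nat}$ an increasing chain contained in $\sto$, I would invoke chain-closedness (S2) of $\sto$ to conclude $\bigsqcup_{n \in \Nat} g^n(\bot) \in \sto$, which by the identity above is exactly $\lfp{g} \in \sto$, as required. There is no genuine obstacle here: the proof is a short assembly of the domain-theoretic ingredients already in place, and the only point that needs a little care is tracking that the approximants both form a chain (so that (S2) is applicable) and stay inside $\sto$ (so that its supremum is the element we want).
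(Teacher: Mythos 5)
Your proof is correct: the paper itself states this principle without proof (citing Winskel~\cite{scott}), and your argument is exactly the standard one it relies on — combine the least-fixpoint identity of Theorem~\ref{lem:lfpid} with an induction showing the approximants $g^n(\bot)$ form a chain inside $\sto$ (using (S1), the hypothesis $g(\sto)\subseteq\sto$, (D1) and monotonicity (C1)), and conclude by chain-closedness (S2). No gaps.
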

Before presenting the instantiation of the Scott induction, we require a pseudo-command $\mathtt{scope}$ which models the function call, and pseudo-commands for non-deterministic choice.
\begin{definition}[The $\mathtt{scope}$ pseudo-command]
	We define a pseudo-command which closely models the behaviour of a function call:
	$$\pscope{\pvvar x, \vec \pexp}{C}{\pvar y, \pexp'}$$
	whose arguments are
	\begin{itemize}
	\item
	a pair $(\pvvar x,\vec \pexp)$ consisting of a list of distinct program variables and a list of expressions, such that both are of the same length,
	\item
	a command $C$ which is to be executed within the "scope",
	\item
	a tuple $(\pvar y, \pexp')$ of a program variable and an expression,
	\end{itemize}
	and whose semantics (eliding expression-evaluation fault cases) is given by
	\[
	\infer{
     \sthreadp{ \sto }{ \hp },
     \pscope{\pvvar x, \vec \pexp}{C}{\pvar y, \pexp'} \baction_{\fictx}
     \sthreadp{\sto [\pvar{y} \storearrow v' ] }{ \hp' }
  }{
    \begin{array}{c}
    \esem{\vec{\vexp}}{\sto} = \vec{v}
     \quad \pv{\scmd} \setminus
      \{\vec{\pvar{x}}\} = \{\vec{\pvar{z}} \}
    \\
 \sto_p  =  \emptyset [ \vec{\pvar{x}} \storearrow \vec{v}] [ \vec{\pvar{z}} \storearrow \nil]
     \quad (\sto_p, \hp), \scmd
      \baction_{\fictx} \sthreadp{ \sto_q }{ \hp' }
      \quad \esem{\pexp'}{\sto_q} =v'
	\end{array} }
	\]
		\[
	\infer{
     \sthreadp{ \sto }{ \hp },
     \pscope{\pvvar x, \vec \pexp}{C}{\pvar y, \pexp'} \baction_{\fictx} err:
     (\sto [\pvar{err} \storearrow v_{err} ],\hp')
  }{
    \begin{array}{c}
    \esem{\vec{\vexp}}{\sto} = \vec{v}
     \quad \pv{\scmd} \setminus
      \{\vec{\pvar{x}}\} = \{\vec{\pvar{z}} \}
    \\
 \sto_p  =  \emptyset [ \vec{\pvar{x}} \storearrow \vec{v}] [ \vec{\pvar{z}} \storearrow \nil]
     \quad (\sto_p, \hp), \scmd
      \baction_{\fictx} err:(\sto_q,\hp')
      \quad \esem{\pvar{err}}{\sto_q} =v_{err}
	\end{array} }
	\]
\end{definition}
\begin{definition}[Non-Deterministic Choice (pseudo-commands)]
	We furthermore add pseudo-commands that arbitrarily pick a command form a given set and executes it:
	\[
	\infer{
     \stt,
     \pchoice{C_1}{C_2} \baction_{\fictx} \outcome:
     \stt'
  }{
	  \stt,
     C_1 \baction_{\fictx} \outcome:
     \stt'
     \lor
     \stt,
     C_2 \baction_{\fictx} \outcome:
     \stt'
     }
     \quad
     \infer{
     \stt,
     \pnatchoice{C_n}{n} \baction_{\fictx} \outcome:
     \stt'
  }{
	 \exsts{m\in\Nat}
	 \stt,
     C_m \baction_{\fictx} \outcome:
     \stt'
     }
	\]
\end{definition}
We will now give some general definitions and lemmas, which will later on be instantiated to prove the soundness of the environment extension via Theorem \ref{theorem:scottinduction}.
\begin{definition}[Greatest-Fixpoint Closure of $\Cmd$]
	We define the greatest-fixpoint closure of the set of commands and pseudo-commands, $\Cmd\cup\{\mathtt{scope},\sqcup,\bigsqcup\}$, as the closure of that set under infinite applications of the command constructors, and denote that closure by $\ccmdplain$.
\end{definition}
\begin{definition}[Behavioural Equivalence on $\ccmdplain$]
Given an arbitrary function implementation context $\gammag$, we define the equivalence relation $\eqg$ on $\ccmdplain$ as
	$$ C_1\eqg C_2 \iff \{(\stt,\stt')\in\State^2 ~|~ \exsts{\outcome}~\stt,C_1\baction_{\gammag}\outcome : \stt'\}=\{(\stt,\stt')\in\State^2 ~|~ \exsts{\outcome}~\stt,C_2\baction_{\gammag} \outcome : \stt'\} $$
	where $C_1, C_2 \in \ccmdplain$, effectively meaning that $\eqg$ relates commands that exhibit the same set of behaviours. We denote the resulting quotient space as $\ccmd$ and the corresponding equivalence class of a command $C$ by $\repr{C}$.
 	This relation yields a partial order, denoted by $\pog$ and defined as:
	$$ C_1\pog C_2 \iff \{(\stt,\stt')\in\State^2 ~|~ \exsts{\outcome}~\stt,C_1\baction_{\gammag}\outcome : \stt'\}\subseteq\{(\stt,\stt')\in\State^2 ~|~ \exsts{\outcome}~\stt,C_2\baction_{\gammag}\outcome : \stt'\} $$
	Furthermore, we define the join operator on commands in $\ccmd$, $\joing$, as the non-deterministic choice,
	lift it to quotient space, overloading notation:
	$$ \repr{C_1}\sqcup\repr{C_2} = \repr{C_1\sqcup C_2}$$
	and generalise it to countably infinitely many commands/equivalence classes in the standard way.
	\end{definition}
The relation $\eqg$ is an equivalence relation as it inherits reflexivity, symmetry and transitivity from the equality relation on sets, and
$\pog$ is a partial order on $\ccmd$ as it inherits transitivity and reflexivity from set inclusion, {while $\eqg$ ensures anti-symmetry.}

Furthermore, note that, by design of the language, we do not have to bring the outcome $\outcome$ into the equivalence relation, as faulting states can be distinguished from successful ones by having the dedicated program variable $\pvar{err}$ in the store, and language errors can be distinguished from the missing resource errors by the value that $\pvar{err}$ holds.

\begin{lemma}[Domain Property]
	For any function implementation context $\gammag$, $(\ccmd,\pog)$ is a domain.
\end{lemma}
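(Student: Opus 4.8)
The plan is to check the two defining conditions of a domain directly, working throughout with the behaviour set $B_{\gammag}(\cmd) := \{(\stt,\stt')\in\State^2 \mid \exsts{\outcome} \stt, \cmd \baction_{\gammag} \outcome : \stt'\}$, so that, by the definitions of $\pog$ and $\eqg$, one has $\repr{\cmd_1}\pog\repr{\cmd_2} \iff B_{\gammag}(\cmd_1)\subseteq B_{\gammag}(\cmd_2)$ and $\repr{\cmd_1}=\repr{\cmd_2} \iff B_{\gammag}(\cmd_1)=B_{\gammag}(\cmd_2)$. The partial-order properties of $\pog$ on $\ccmd$ have already been noted, so only the least-element condition (D1) and chain-closedness (D2) remain.

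For the least element, I would take $\bot := \repr{\pwhile{\true}{\pskip}}$ and argue that $B_{\gammag}(\pwhile{\true}{\pskip}) = \emptyset$: the only operational-semantics rule that could conclude $\stt, \pwhile{\true}{\pskip} \baction_{\gammag} \outcome : \stt'$ is the true-guard rule, whose premise again contains a derivation of a judgement $\stt'', \pwhile{\true}{\pskip} \baction_{\gammag} \outcome : \stt'$; since derivation trees are finite, no such derivation exists. Hence $B_{\gammag}(\pwhile{\true}{\pskip})\subseteq B_{\gammag}(\cmd)$ for every $\cmd\in\ccmdplain$, i.e.\ $\bot$ is $\pog$-below every class.

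For chain-closedness, given an ascending chain $(\repr{\cmd_n})_{n\in\Nat}$ in $\ccmd$, I would propose $\repr{\hat\cmd}$ with $\hat\cmd := \pnatchoice{\cmd_n}{n}$ as its supremum. The element $\hat\cmd$ lies in $\ccmdplain$ because that set is closed under the countably-infinite non-deterministic-choice constructor. Reading off the semantics of $\pnatchoice{\cmd_n}{n}$ gives $B_{\gammag}(\hat\cmd) = \bigcup_{n} B_{\gammag}(\cmd_n)$; this identity simultaneously shows that $\repr{\hat\cmd}$ depends only on the classes $\repr{\cmd_n}$ (so the infinite join $\bigsqcup_n$ is well defined), that it is an upper bound of the chain (each $B_{\gammag}(\cmd_n)$ is contained in the union), and that it is the least one (if $\repr{\cmd_n}\pog\repr{\cmd}$ for all $n$, then $\bigcup_n B_{\gammag}(\cmd_n)\subseteq B_{\gammag}(\cmd)$). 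Thus $\bigsqcup_n\repr{\cmd_n} = \repr{\hat\cmd}\in\ccmd$, which gives (D2).

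The main obstacle I expect is the step $B_{\gammag}(\hat\cmd) = \bigcup_n B_{\gammag}(\cmd_n)$ together with the membership $\hat\cmd\in\ccmdplain$: this is precisely where the pseudo-command $\pnatchoice{\cdot}{\cdot}$ and the closure of $\ccmdplain$ under infinite constructor applications are doing real work, since an arbitrary union of behaviour sets need not itself be a behaviour set and we must exhibit a concrete element of $\ccmd$ realising it. Once that identity is in hand, everything else is routine manipulation of set inclusions, using that $\pog$ is literally subset inclusion of behaviour sets and $\eqg$ the induced equality.
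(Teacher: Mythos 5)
Your proposal is correct and follows essentially the same route as the paper: the least element is $\repr{\pwhile{\true}{\pskip}}$ with empty behaviour set, and the supremum of a chain is the class of the countably infinite non-deterministic choice, which lies in $\ccmdplain$ by its closure under infinite constructor applications. You merely spell out two points the paper leaves implicit — the finite-derivation argument for why the infinite loop has no terminating behaviour, and the identity $B_{\gammag}(\hat\cmd)=\bigcup_n B_{\gammag}(\cmd_n)$ witnessing that the choice really is the least upper bound.
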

\begin{proof} Since we have already argued the partial order property, there are only remaining two properties to show:

\myparagraph{Chain-Closedness}
	For any chain $(\repr{C_n})_{n\in\Nat}\subseteq\ccmd$, its supremum is defined as $\repr{\sqcup(C_n |_{n \in \Nat})}$. Per definition of $\ccmdplain$ we have $\sqcup(C_n |_{n \in \Nat})\in\ccmdplain$ which implies $\repr{\sqcup(C_n |_{n \in \Nat})}\in\ccmd$.

	\myparagraph{Least Element}
	The least element of $\ccmd$, denoted by $\leg$, is the equivalence class of commands which, given the function implementation context $\fictx$, do not terminate on any state. One such representative is the command $C=\pwhile{\true}\pskip$. Since $\{(\stt,\stt')\in\State^2 ~|~ \stt, \pwhile{\true}\pskip\baction_{\gammag}\stt'\}=\emptyset$, we trivially obtain $\leg\pog \repr{C}$, for all $\repr{C}\in\ccmd$.
\end{proof}
%
%

%
\begin{lemma}[Scope and Function Call Equivalence]\label{lem:scopefcallequiv}
	Given a function implementation context $\gammag$ and a function $\fid$ such that $\gammag(\fid) = (\pvvar x, C_\fid, \pexp')$, it holds that
	$$ \pscope{\pvvar x,\vec \pexp}{C_\fid}{\pvar y, \pexp'} \eqg \passign{\pvar y}{\fid(\vec \pexp)}$$
\end{lemma}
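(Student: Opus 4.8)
The statement to prove is Lemma~\ref{lem:scopefcallequiv}: given a function implementation context $\gammag$ with $\gammag(\fid) = (\pvvar x, C_\fid, \pexp')$, we have $\pscope{\pvvar x,\vec \pexp}{C_\fid}{\pvar y, \pexp'} \eqg \passign{\pvar y}{\fid(\vec \pexp)}$.

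The plan is to unfold the definition of $\eqg$: two commands are $\eqg$-related iff they induce exactly the same set of state-transition pairs $\{(\stt,\stt') \mid \exists \outcome.~\stt, C \baction_{\gammag} \outcome : \stt'\}$. So I would prove set equality by double inclusion, which (since the two directions are symmetric) really amounts to showing that for every $\stt$, $\stt'$, and $\outcome$, $\stt, \pscope{\pvvar x,\vec \pexp}{C_\fid}{\pvar y, \pexp'} \baction_{\gammag} \outcome : \stt'$ holds if and only if $\stt, \passign{\pvar y}{\fid(\vec \pexp)} \baction_{\gammag} \outcome : \stt'$. First I would look at the operational-semantics rules for both commands side by side. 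The rule for $\pfuncall{\pvar y}{\fid}{\vec\pexp}$ (given in Figure~\ref{fig:whilesem} and completed in Appendix~\ref{apdx:cos}) has premises: $\pfunction{\fid}{\vec{\pvar x}}{\cmd; \preturn{\pexp'}} \in \gammag$, $\esem{\vec{\pexp}}{\sto} = \vec{v}$, $\pv{\cmd} \setminus \{\vec{\pvar x}\} = \{\vec{\pvar z}\}$, $\sto_p = \emptyset[\vec{\pvar x}\storearrow\vec v][\vec{\pvar z}\storearrow\nil]$, $(\sto_p,\hp),\cmd \baction_{\gammag} (\sto_q,\hp')$, and $\esem{\pexp'}{\sto_q}=v'$, concluding with final store $\sto[\pvar y\storearrow v']$. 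The rule for $\pscope{\pvvar x,\vec\pexp}{C_\fid}{\pvar y,\pexp'}$ has exactly the same premises except that it directly mentions the command $C_\fid$ and return expression $\pexp'$ as syntactic arguments rather than looking them up in $\gammag$. Since we assume precisely $\gammag(\fid) = (\pvvar x, C_\fid, \pexp')$, i.e.\ $\pfunction{\fid}{\pvvar x}{C_\fid; \preturn{\pexp'}} \in \gammag$, the premise $\pfunction{\fid}{\vec{\pvar x}}{\cmd; \preturn{\pexp'}}\in\gammag$ of the function-call rule is satisfied exactly when $\cmd = C_\fid$ and the return expression is $\pexp'$. Hence the two sets of derivation rules become literally the same premises, so the induced transition relations coincide.

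Concretely, I would carry this out case by case over the outcome $\outcome$: the successful case (using the $\oxok$ rules above), the erroneous case where the function body faults with $\oxerr$ or $\oxm$ (the semantics propagates $\outcome\neq\oxok$ and copies $\pvar{err}$), the expression-evaluation-fault cases for the arguments $\vec\pexp$ (where $\esem{\pexp_k}{\sto}=\undefd$ for some $k$), the parameter-count-mismatch case ($|\vec{\pvar x}|\neq|\vec\pexp|$), and the return-expression fault case ($\esem{\pexp'}{\sto_q}=\undefd$). For each, the function-call rule has a corresponding $\mathtt{scope}$ rule with identical premises once $\gammag(\fid)=(\pvvar x,C_\fid,\pexp')$ is used to identify body and return expression. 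The one asymmetry worth noting is the $\mathtt{NoFunc}$ case: $\pfuncall{\pvar y}{\fid}{\vec\pexp}$ can fault when $\fid\notin\dom(\gammag)$, whereas $\mathtt{scope}$ never consults $\gammag$ for $\fid$; but our hypothesis $\gammag(\fid)=(\pvvar x,C_\fid,\pexp')$ guarantees $\fid\in\dom(\gammag)$, so that rule never fires and no discrepancy arises.

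The proof is essentially a routine rule-by-rule comparison, so there is no deep obstacle; the main thing to be careful about is completeness of the case analysis — making sure every operational rule for $\pfuncall{\pvar y}{\fid}{\vec\pexp}$ in Appendix~\ref{apdx:cos} is matched by (and matches) exactly one $\mathtt{scope}$ rule, and in particular handling the subtlety that $\mathtt{scope}$'s semantics as stated only elides expression-evaluation fault cases ``for brevity'', so one must appeal to the understanding that those cases are defined analogously. I would therefore state explicitly that the $\mathtt{scope}$ semantics includes the analogous argument-evaluation, parameter-count, and return-evaluation fault rules, and then observe that, since $\gammag(\fid) = (\pvvar x, C_\fid, \pexp')$, substituting $C_\fid$ for the looked-up body and $\pexp'$ for the looked-up return expression turns each function-call rule into the corresponding $\mathtt{scope}$ rule and vice versa, yielding the equality of transition sets and hence $\pscope{\pvvar x,\vec \pexp}{C_\fid}{\pvar y, \pexp'} \eqg \passign{\pvar y}{\fid(\vec \pexp)}$.
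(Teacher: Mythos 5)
Your proposal is correct and follows essentially the same route as the paper: a double-inclusion argument that matches the operational rules for $\mathtt{scope}$ and the function call premise-by-premise, using $\gammag(\fid)=(\pvvar x,C_\fid,\pexp')$ to identify the body and return expression (the paper spells out the successful case and declares the faulting cases analogous). Your extra observations about the elided fault rules for $\mathtt{scope}$ and the inapplicability of the $\mathsf{NoFunc}$ case are sensible bookkeeping but do not change the argument.
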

\begin{proof}
    We show in detail the case of successful execution; the faulting cases are analogous.
	Let $\gammag(\fid)=(\pvvar x, C_\fid, \pexp')$ and $(\sto,\hp),(\sto',\hp')\in\State$, such that $$(\sto,\hp), \pscope{\pvvar x,\vec \pexp}{C_\fid}{\pvar y, \pexp'}\baction_{\gammag} (\sto',\hp').$$
	The operational semantics of $\mathtt{scope}$ implies:

	\begin{itemize}
	\item $\sto'=\sto[\pvar y\rightarrow v']$
	\item $\esem{\vec \pexp}{\sto}=\vec v$
	\item $\esem{\pexp'}{\sto_q}=v'$
	\item $\sto_p = \emptyset[\pvvar x\rightarrow\vec v][\pvvar z\rightarrow\nil]$
	\item $(\sto_p,\hp), C_\fid\baction_{\gammag}(\sto_q,\hp')$
	\end{itemize}

	Due to the assumption $\gammag(\fid)=(\pvvar x, C_\fid, \pexp')$, we fulfil all conditions in the antecedent of the operational semantics of the function call, and therefore obtain $(\sto,\hp), \passign{\pvar y}{\fid(\vec \pexp)}\baction_{\gammag} (\sto',\hp') $.\\
	Now, let $(\sto,\hp),(\sto',\hp')\in\State$ such that $(\sto,\hp), \passign{\pvar y}{\fid(\vec \pexp)}\baction_{\gammag} (\sto',\hp') $. The operational semantics of the function call implies:

	\begin{itemize}
		\item $\sto'=\sto[\pvar y\rightarrow v']$
		\item $\esem{\vec \pexp}{\sto}=\vec v$
		\item $\esem{\pexp'}{\sto_q}=v'$
		\item $\sto_p = \emptyset[\pvvar x\rightarrow\vec v][\pvvar z\rightarrow\nil]$
		\item $(\sto_p,\hp), C_\fid\baction_{\gammag}(\sto_q,\hp')$
		\item $\gammag(\fid)=(\pvvar x, C_\fid, \pexp')$
	\end{itemize}

	Therefore, we fulfil all conditions in the antecedent of the operational semantics of $\mathtt{scope}$, and therefore obtain
	\newline
	$(\sto,\hp), \pscope{\pvvar x,\vec \pexp}{C_\fid}{\pvar y, \pexp'}\baction_{\gammag}(\sto',\hp') $. 
\end{proof}
In the following, let $C_i$ denote the implementation of the function $\fid_i$, and $C^i$ denote the $i$-th component of a vector $C$.

\begin{definition}[Function Call Substitution]
	Given a command $\bar C\in\Cmd$, a vector of $n$ commands $C=(C^1,\hdots,C^n)\in\nccmd$, a vector of $n$ functions $F=(\fid_1,\hdots,\fid_n)$ and a function implementation context $\fictx$, such that $F\subseteq\dom(\fictx)$, we define a function call substitution $\fcallsub{\bar C}{C,\gammag,F}$ recursively on the structure of $\bar C$, with $\gammag(\fid_i) = ( \pvvar x_i, -, \pexp_i )$:
	\begin{itemize}
		\item $\fcallsub{(\pifelse{B}{C_1}{C_2})}{C,\gammag,F}:= \pifelse{B}{\{\fcallsub{C_1}{C,\gammag,F}\}}{\{\fcallsub{C_2}{C,\gammag,F}\}}$
		\item $ \fcallsub{(\pwhile{B}{\bar C})}{C,\gammag,F}:= \pwhile{B}{\{ \fcallsub{\bar C}{C,\gammag,F} \}} $
		\item $ \fcallsub{(C_1;C_2)}{C,\gammag,F}:= \fcallsub{C_1}{C,\gammag,F}; \fcallsub{C_2}{C,\gammag,F} $
		\item $\fcallsub{(\passign{\pvar y}{\mathtt{g}(\vec \pexp)})}{C,\gammag,F}:= \left\{
		\begin{array}{l l}
			\pscope{\pvvar x_i,\vec \pexp}{C^i}{\pvar y, \pexp_i} & \text{if } \fid_i=\mathtt{g}\\
			\passign{\pvar y}{\mathtt{g}(\vec \pexp)} & \text{otherwise},
		\end{array} \right.
		$
			\item $\fcallsub{\bar C}{C,\gammag,F}:= \bar C $, for all other $\bar C$.
	\end{itemize}
\end{definition}

\begin{lemma}[Substitution Preserves $\eqg$]
	Given $I=\{1,...,n\}$, $F=(\fid_1,...,\fid_n)$, and $\gammag$ such that $\frall{i\in I}\gammag(\fid_i)=(-,C_i,-)$, $C=(C_1,...,C_n)$, and $i\in I$, it holds that
	$$ C_i \eqg \fcallsub{C_i}{C,\gammag,F} $$
\end{lemma}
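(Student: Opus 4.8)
The plan is to prove a slightly more general statement by structural induction: for \emph{every} command $\bar C \in \Cmd$ (not only the function bodies $C_i$), $\bar C \eqg \fcallsub{\bar C}{C,\gammag,F}$. Instantiating with $\bar C := C_i$ then yields the lemma. Throughout, I use the hypothesis $\gammag(\fid_i) = (-, C_i, -)$ in the form $\gammag(\fid_i) = (\pvvar x_i, C_i, \pexp_i)$, so that the $i$-th component of the vector $C$ appearing in the substitution is precisely the body $C_i$ recorded in $\gammag$.

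The base cases are the atomic commands ($\pskip$, the various assignments, $\prandom$, $\perror$, and the memory operations). For all of these except function calls, $\fcallsub{\bar C}{C,\gammag,F} = \bar C$ by the catch-all clause of the substitution's definition, so $\bar C \eqg \bar C$ holds by reflexivity of $\eqg$. For a function call $\bar C = \passign{\pvar y}{\mathtt{g}(\vec \pexp)}$, if $\mathtt{g} \neq \fid_i$ for all $i \in I$ the substitution again leaves $\bar C$ unchanged and the claim is immediate; if $\mathtt{g} = \fid_i$ for some $i \in I$, then $\fcallsub{\bar C}{C,\gammag,F} = \pscope{\pvvar x_i,\vec \pexp}{C^i}{\pvar y, \pexp_i}$ with $C^i = C_i$, and since $\gammag(\fid_i) = (\pvvar x_i, C_i, \pexp_i)$, Lemma~\ref{lem:scopefcallequiv} gives $\pscope{\pvvar x_i,\vec \pexp}{C_i}{\pvar y, \pexp_i} \eqg \passign{\pvar y}{\fid_i(\vec \pexp)} = \bar C$, as required.

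For the inductive cases ($\pifelse{B}{C_1}{C_2}$, $\pwhile{B}{\bar C'}$, and $C_1;C_2$), the substitution acts homomorphically on the sub-commands, so it suffices to know that $\eqg$ is a \emph{congruence} with respect to each command constructor: from the inductive hypotheses $C_j \eqg \fcallsub{C_j}{C,\gammag,F}$ one must deduce $\pifelse{B}{C_1}{C_2} \eqg \pifelse{B}{C_1'}{C_2'}$, $\pwhile{B}{\bar C'} \eqg \pwhile{B}{\bar C''}$, and $C_1;C_2 \eqg C_1';C_2'$ for the substituted bodies. For $\mathtt{if}$ and sequencing this is a direct unfolding of the (uniquely applicable) big-step rule for each starting state and outcome, since one step on a sub-command determines the whole behaviour. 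The $\mathtt{while}$ case is the delicate one, as its big-step rules unfold the loop recursively; to show $\stt,\pwhile{B}{\bar C'} \baction_{\gammag} \outcome : \stt' \iff \stt,\pwhile{B}{\bar C''} \baction_{\gammag} \outcome : \stt'$ I would run an inner induction on the derivation of the big-step judgement, invoking $\bar C' \eqg \bar C''$ at each unfolding; non-terminating executions need no separate treatment, since $\eqg$ compares only the sets of terminating $(\stt,\stt')$-pairs.

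The main obstacle is exactly this congruence property for $\mathtt{while}$: establishing that behavioural equivalence of loop bodies lifts to behavioural equivalence of the loops themselves, which requires the nested induction on big-step derivation trees rather than plain structural induction. Everything else — the base cases, the $\mathtt{if}$ and sequencing congruences, and the assembly of the outer induction — is routine unfolding of the operational semantics. It may be worth isolating the congruence of $\eqg$ (and the fact that the semantics of compound commands is compositional in the sub-commands' behaviours) as a standalone lemma, since it is used implicitly here and recurs throughout the Scott-induction development.
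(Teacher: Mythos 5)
Your proposal is correct and takes essentially the same route as the paper: the paper also proceeds by structural induction (applying the inductive hypothesis to sub-commands, i.e., implicitly proving the claim for arbitrary commands), notes that the substitution is the identity on all non-call atomic commands and on calls to functions outside $F$, handles calls to $\fid_i$ via Lemma~\ref{lem:scopefcallequiv}, and unfolds the big-step semantics for the compound commands. The only difference is that you make explicit the nested induction on the big-step derivation needed for the $\mathtt{while}$ case, which the paper dismisses as ``analogous to if-else''.
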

\begin{proof}
	We prove the statement by structural induction on $C_i$. Per definition of function implementation contexts, $C_i\in\Cmd$ and is therefore finite. The only non-trivial cases are the structural commands and the function call on $\fid_i$, as the substitution is the identity otherwise.

	\myparagraph{If-Else} $C_i=\pifelse{\pexp}{\ctrue}{\cfalse}$. Let $\stt,\stt'\in\State$ such that $\stt,\pifelse{\pexp}{\ctrue}{\cfalse}\baction_{\gammag}\stt'$, and let $\stt = (\sto, \hp)$.
	Then, the operational semantics implies
	\[
	\begin{array}{l l}
	& \big(\esem{\pexp}{\sto}=\ptrue\land(\sto,\hp),\ctrue\baction_{\gammag}\stt'\big)\lor\big( \esem{\pexp}{\sto} ={\pfalse}\land (\sto,\hp),\cfalse\baction_{\gammag}\stt'\big) \\
	\stackrel{\text{(IH)}}{\Leftrightarrow} & \big(\esem{\pexp}{\sto}=\ptrue\land(\sto,\hp),\fcallsub{\ctrue}{C,\gammag,F}\baction_{\gammag}\stt'\big)\lor\big( \esem{\pexp}{\sto} = {\pfalse}\land (\sto,\hp),\fcallsub{\cfalse}{C,\gammag,F}\baction_{\gammag}\stt'\big)
\end{array}
	\]
	which is equivalent to
	$ (\sto,\hp),\pifelse{\pexp}{\fcallsub{\ctrue}{C,\gammag,F}}{\fcallsub{\cfalse}{C,\gammag,F}}\baction_{\gammag}\stt'$. The faulting case  is proven analogously to the successful case, and the while loop and the sequencing are proven analogously to if-else.

	\myparagraph{Function Call} We have to show that
	$$ (\passign{\pvar y}{\fid_i(\vec \pexp)})\eqg \pscope{\pvvar x_i,\vec \pexp}{C_i}{\pvar y, \pexp_i} $$
	where $\gammag(fi)=(\pvvar x_i,C_i,\pexp_i),$ but this holds directly due to Lemma \ref{lem:scopefcallequiv}.
\end{proof}
Before moving on to the Scott instantiation, we define a notion of {(recursion)} depth, which keeps track of the maximum number of nested function calls 
during the execution of commands.
\begin{definition}[Depth]
	Given a command $C\in\Cmd$, a vector of functions $F=(\fid_1,...,\fid_n)$ and a derivation $\stt,C\baction_{\gammag}\outcome:\stt'$, we define $\depthF{\stt,C\baction_{\gammag}\outcome:\stt'}$ inductively on the structure of big-step derivations of $C$ as follows, noting that we extend the notion of a maximal element to the empty set by defining it to be zero:

	\myparagraph{If-Else}
		$
		\begin{array}{l}\depthF{(\sto,\hp),\pifelse{\pexp}{\ctrue}{\cfalse}\baction_{\gammag}\outcome:\stt'}~\defeq \\ \qquad
				 \max\left(
			\begin{array}{l}
				\max\{
			\depthF{(\sto,\hp),\ctrue\baction_{\gammag}\outcome:\stt'} \mid \esem{\pexp}{\sto}=\ptrue \}, \\
\max\{
			\depthF{(\sto,\hp),\cfalse\baction_{\gammag}\outcome:\stt'} \mid \esem{\pexp}{\sto}=\pfalse \} \\
			\end{array}
		\right)
		\end{array}$

	\myparagraph{Sequence}
		$
		\begin{array}{l}
		\depthF{\stt,C_1;C_2\baction_{\gammag}\outcome:\stt'} \defeq \\
		\qquad\max\left(
		\begin{array}{l}
		\max\{
			\depthF{\stt,C_1\baction_{\gammag}\bar\stt},~
			\depthF{\bar\stt,C_2\baction_{\gammag}\outcome:\stt'} \mid \\ \qquad\qquad\stt,C_1\baction_{\gammag}\bar\stt \land \bar\stt,C_2\baction_{\gammag}\outcome:\stt'
		\}, \\
		\max\{
			\depthF{\stt,C_1\baction_{\gammag}\outcome:\stt'} \mid \outcome = \oxerr/\oxm \land \stt,C_1\baction_{\gammag}\outcome:\stt'
		\}
		\end{array}\right)
		\end{array}
		$

	\myparagraph{While}
		$
		\begin{array}{l}
		\depthF{(\sto,\hp),\pwhile{\pexp}{C}\baction_{\gammag}\outcome:\stt'}~\defeq\\
		\qquad
		 \max\left(
			\begin{array}{l}
				\max\{
			\depthF{\stt,C\baction_{\gammag}\bar\stt},~
			\depthF{\bar\stt,\pwhile{\pexp}{C}\baction_{\gammag}\outcome:\stt'}~| \\
			\qquad\qquad\esem{\pexp}{\sto}=\ptrue \land \stt,C\baction_{\gammag}\bar\stt \land \bar\stt,\pwhile{\pexp}{C}\baction_{\gammag}\outcome:\stt'\}, \\
			\max\{
			\depthF{\stt,C\baction_{\gammag}\outcome:\stt'} \mid \outcome = \oxerr/\oxm \land  \esem{\pexp}{\sto}=\ptrue \land \stt,C\baction_{\gammag}\outcome:\stt' \}
			\end{array}
		\right)
		\end{array}
		$

	\myparagraph{Function Call}
		$\depthF{(\sto,\hp),\passign{\pvar y}{\mathtt{g}(\vec \pexp),\baction_{\gammag}\outcome:(\sto',\hp')}} \defeq 0, \quad \text{if $\frall{i\in I} \mathtt{g}\neq \fid_i$}$
		$$
		\begin{array}{l}
		\depthF{(\sto,\hp),\passign{\pvar y}{\fid_i(\vec \pexp),\baction_{\gammag}\outcome:(\sto',\hp')}} \defeq \\
		\qquad\max\left(\begin{array}{l}
			\max\{ 1+ \depthF{(\sto_p,\hp),C_i\baction_{\gammag} (\sto_q,\hp')} \mid \\
			\qquad \outcome = \oxok, \esem{\vec \pexp}{\sto} = \vec v, (\sto_p,\hp), C_\fid\baction_{\gammag} (\sto_q,\hp'), \esem{\pexp_i}{\sto_q} = v'\}, \\
			\max\{ 1+ \depthF{(\sto_p,\hp),C_i\baction_{\gammag} \outcome: (\sto_q,\hp')} \mid \\
			\qquad \outcome = \oxerr/\oxm, \esem{\vec \pexp}{\sto} = \vec v, (\sto_p,\hp), C_\fid\baction_{\gammag} \outcome: (\sto_q,\hp') \}, \\
			\max\{ 1+ \depthF{(\sto_p,\hp),C_i\baction_{\gammag} \outcome : (\sto_q,\hp')} \mid \\
			\qquad \outcome = \oxerr, \esem{\vec \pexp}{\sto} = \vec v, (\sto_p,\hp), C_\fid\baction_{\gammag} (\sto_q,\hp'), \esem{\pexp_i}{\sto_q} = \undefd \}, \\
		\end{array}\right)
		\end{array}
		$$
		where $\gammag(\fid_i) = (\pvvar x_i, C_i, \pexp_i)$, and $\sto_p$ and $\sto'$ are defined as in the operational semantics of the function call.

			\myparagraph{Remaining Commands}
		$\depthF{\stt,C\baction_{\gammag}\outcome:\stt'} \defeq 0$.

\end{definition}

\subsection{1-dimensional Scott Instantiation}

The env-extend rule allows us to add a set of $n$ functions to a given valid environment.
Soundness of this rule is proven in Appendix~\ref{apdx:envsound} through transfinite induction. In each of the cases (zero, successor ordinal, limit ordinal), a Scott induction is required.
In \S\ref{sec:ndimscott}, we present and prove the Scott induction required to show soundness of the env-extend rule.
Here, we present the Scott induction as required for the case where only {\it one} function is added to a given environment at a time.

The general proof, as presented in \S\ref{sec:ndimscott} evolves naturally from and relies heavily on this simpler case, while introducing heavier notation.
To minimize clutter in later definitions, we introduce the over-approximation quadruple $\quadruple{P}{C}{\Qok}{\Qerr}$ and define its notion of validity.

\begin{definition}[OX-Validity]
	Given an OX-quadruple $\quadruple{P}{C}{\Qok}{\Qerr}$ and a function we define for an arbitrary implementation context $\fictx$
\[
\begin{array}{l}
	\fictx\models \quadruple{P}{C}{\Qok}{\Qerr} \Longleftrightarrow\\
	\hspace*{1cm}
	\frall{\subst,\sto,\hp,\outcome,\sto',\hp'',\hp_\fid} \subst,\sto,\hp\models P \\
	\hspace*{1cm}\quad\quad~\Longrightarrow~ (\sto,\hp\uplus \hp_\fid),C\baction_{\fictx}\outcome: (\sto',\hp'') \\
	\hspace*{1cm}\quad\quad\quad~\Longrightarrow~ \outcome\neq\oxm \land \exsts{\hp'} \hp''=\hp'\uplus \hp_\fid \land \subst,\sto',\hp'\models Q_{\outcome}
\end{array}
\]
and for an arbitrary specification context $\fsctx$
\[
\begin{array}{l}
	\fsctx\models \quadruple{P}{C}{\Qok}{\Qerr} \Longleftrightarrow \\
	\hspace*{1cm}
	\frall{\fictx} \models(\fictx,\fsctx)\Longrightarrow \fictx\models \quadruple{P}{C}{\Qok}{\Qerr}
\end{array}
\]
\end{definition}

In the following, we will also write $\triple{P}{C}{\bigq}$ as a shorthand for $\quadruple{P}{C}{\Qok}{\Qerr}$. Onward, we assume the following:

\begin{description}
\item[(A1)] a valid environment, $\models(\fictx,\fsctx)$;
\item[(A2)] a function $\fid(\pvvar x) \{ C_\fid; \preturn{\pexp'}\}$ that is not in the domain of $\fictx$;
\item[(A3)] an arbitrary element $\alpha\in\ord$;
\item[(A4)]
a set of terminating (external) specifications for $\fid$, $\{ \uspeconecase{\tpre(\beta)}{\bigq(\beta)} ~|~ \beta<\alpha \}$;
\item[(A5)] a set of non-terminating (external) specifications for $\fid$, $\{ \uspeconecase{\ntpre(\beta)}{\AssFalse} ~|~ \beta\leq\alpha \}$;
\item[(A6)] an extension of $\fictx$ with $\fid$: $\fictx' \defeq \fictx[\fid\mapsto(\pvvar x, C_\fid, \pexp')]$; and
\item[(A7)] an extension of $\fsctx$ with the given specifications of $\fid$: $$\fsctx(\alpha) \defeq \fsctx[\fid\mapsto\{\uspeconecase{\tpre(\beta)}{\bigq(\beta)} ~|~ \beta<\alpha\}\cup\{(\uspeconecase{\ntpre(\beta)}{\AssFalse} ~|~ \beta\leq\alpha\})]$$
\end{description}
We next define the function $g : \ccmdpr\longrightarrow\ccmdpr$, to be used in the upcoming Scott induction, as follows:
\[
		g(\repr{C}) \eqdef \repr{\ghelp(C)}
\]
where $\ghelp:\ccmdplain\longrightarrow\ccmdplain$ is defined as $\ghelp(C):=\fcallsub{C_\fid}{C,\fictx',\fid}$.

Intuitively, $\hp$ takes an arbitrary command $C$ from $\ccmdplain$ as an argument and substitutes it for any function call on $\fid$ in function body $C_\fid$.
The function $g$ then lifts this operation to the quotient space $\ccmdpr$.
The definitions of $\hp$ and $g$ trivially yield the following identities for arbitrary $C\in\ccmdplain$ and $(C_n)_{n\in\Nat}\subseteq\ccmdplain$:
\begin{itemize}
	\item[(G1)]
	$\bigjoinpr{n\in\Nat}g(\repr{C_n})
	= \repr{\bigjoinpr{n\in\Nat} \ghelp(C_n)}$
	\item[(G2)]
	$ \bigjoinpr{n\in\Nat}g^n(\repr{C})
	= \repr{\bigjoinpr{n\in\Nat}\ghelp^n(C)}$
\end{itemize}

\begin{lemma}\label{lem:gcont}
	The function $g$ is continuous.
\end{lemma}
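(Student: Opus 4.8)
The plan is to prove continuity of $g$ by establishing its two defining properties, monotonicity (C1) and supremum-preservation (C2), working through the quotient-space structure via the helper function $\ghelp$. Since $g(\repr{C}) = \repr{\ghelp(C)}$ and $\ghelp(C) = \fcallsub{C_\fid}{C, \fictx', \fid}$, both properties will reduce to statements about how function-call substitution interacts with the behavioural pre-order $\pog$ and with non-deterministic joins on $\ccmdplain$.

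For monotonicity, I would first observe that it suffices to show that if $\repr{C_1} \pog \repr{C_2}$ then $\repr{\ghelp(C_1)} \pog \repr{\ghelp(C_2)}$, i.e.\ that substituting a ``behaviourally larger'' command into the function body yields a behaviourally larger result. The key step is a structural induction (or rather an induction on big-step derivations, using the depth notion) over $C_\fid$: the substitution $\fcallsub{\cdot}{C_1, \fictx', \fid}$ and $\fcallsub{\cdot}{C_2, \fictx', \fid}$ agree on all command forms except the recursive call $\passign{\pvar y}{\fid(\vec\pexp)}$, where the first produces $\pscope{\pvvar x, \vec\pexp}{C_1}{\pvar y, \pexp'}$ and the second $\pscope{\pvvar x, \vec\pexp}{C_2}{\pvar y, \pexp'}$. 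At that point, any behaviour of the $\mathtt{scope}$ pseudo-command running $C_1$ is, by the semantics of $\mathtt{scope}$ and the hypothesis $C_1 \pog C_2$, also a behaviour of $\mathtt{scope}$ running $C_2$; the compound-command cases (sequence, if, while) then propagate this inclusion upward, since each such rule is monotone in the behaviour sets of its sub-derivations. This gives $\ghelp(C_1) \pog \ghelp(C_2)$ and hence $g(\repr{C_1}) \pog g(\repr{C_2})$.

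For supremum-preservation, given a chain $(\repr{C_n})_{n\in\Nat}$, I need $\bigjoinpr{n} g(\repr{C_n}) = g(\bigjoinpr{n} \repr{C_n})$, which by identity (G1) and the definition of $g$ unfolds to $\repr{\bigjoinpr{n} \ghelp(C_n)} = \repr{\ghelp(\bigjoinpr{n} C_n)}$. So it suffices to show these two commands are behaviourally equivalent. One inclusion is immediate from monotonicity (the left side is below the right). For the reverse, I would argue that any single terminating (or faulting) behaviour $\stt, \ghelp(\bigjoinpr{n} C_n) \baction_{\fictx'} \outcome : \stt'$ has some finite recursion depth $d = \depthF{\cdot}$, and that within such a finite derivation only finitely many $\mathtt{scope}$-invocations occur, each running $\bigjoinpr{n} C_n$; each such invocation, by the semantics of $\sqcup$ and $\bigsqcup$, picks out behaviour witnessed by some particular $C_{n_k}$. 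Since the $C_n$ form a chain, all these finitely many indices are dominated by a single $N$, and using monotonicity upward from $C_{n_k} \pog C_N$, the whole behaviour is realised by $\ghelp(C_N)$, hence by $\bigjoinpr{n} \ghelp(C_n)$. The main obstacle I anticipate is precisely this last argument: making rigorous the claim that a finite big-step derivation only consults the joined command at finitely many points and that chain-directedness lets us replace the join by a single chain element — this is where the depth function and a careful induction on derivation structure are essential, and where one must be careful that the $\mathtt{scope}$ command's body is re-entered fresh on each call rather than unfolded lazily.
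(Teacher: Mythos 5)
Your proposal matches the paper's proof: monotonicity is shown by structural induction on $C_\fid$, with the only interesting case being the recursive-call/$\mathtt{scope}$ site where the hypothesis $C_1 \poccmd C_2$ is applied and the compound-command cases simply propagate the inclusion, and supremum-preservation gets the easy inclusion from monotonicity while the hard inclusion argues that a terminating execution of $\ghelp\big(\bigsqcup_{n} C_n\big)$ consults the joined command only finitely often, so chain-directedness yields a single dominating $C_k$ whose substitution already realises the behaviour. Your appeal to the depth of the big-step derivation to justify this finiteness is a slight refinement of the paper's count of the finitely many call sites in the finite body $C_\fid$, but it is the same argument.
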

\begin{proof} We begin by proving monotonicity.

	\myparagraph{Monotonicity}
	We prove the monotonicity of $\ghelp$. We need to show that for all $C_1,C_2\in\ccmdplain$, it holds that
	$$ C_1\poccmd C_2 \Rightarrow \fcallsub{C_\fid}{C_1,\fictx',\fid}\poccmd \fcallsub{C_\fid}{C_2,\fictx',\fid} $$
	Let $\stt,\stt'\in\State$ such that $\stt,\fcallsub{C_\fid}{C_1,\fictx',\fid}\baction_{\fictx'}\outcome:\stt'$. We need to show that
	$$\stt,\fcallsub{C_\fid}{C_2,\fictx',\fid}\baction_{\fictx'}\outcome:\stt'$$
	and we do so by structural induction on $C_\fid$.
	The only non-trivial cases are the compound commands and the function call, as the substitution is the identity for all other cases.

	\casex{If-Else.} $C_\fid=\pifelse{\pexp}{\ctrue}{\cfalse}$. Let $(\sto,\hp),\stt'\in\State$ such that
	$$ (\sto,\hp),\pifelse{\pexp}{\fcallsub{\ctrue}{C_1,\fictx',\fid}}{\fcallsub{\cfalse}{C_1,\fictx',\fid}}\baction_{\fictx'}\outcome:\stt'$$
	The operational semantics yields (for an appropriate $\stt{\oxerr}$):
	\[
	\begin{array}{l l}
	& \big(\esem{\pexp}{\sto}=\ptrue\land(\sto,\hp),\fcallsub{\ctrue}{C_1,\fictx',\fid}\baction_{\fictx}\outcome:\stt'\big)~\lor \\ & \big( \esem{\pexp}{\sto} = {\pfalse}\land (\sto,\hp),\fcallsub{\cfalse}{C_1,\fictx',\fid}\baction_{\fictx}\outcome:\stt'\big)~\lor \\
	& \big( \esem{\pexp}{\sto} = \undefd \land \stt' = \stt{\oxerr} \big) \\
	\stackrel{\text{(IH)}}{\Rightarrow} & \big(\esem{\pexp}{\sto}=\ptrue\land(\sto,\hp),\fcallsub{\ctrue}{C_2,\fictx',\fid}\baction_{\fictx}\outcome:\stt'\big)~\lor \\
	& \big( \esem{\pexp}{\sto} = {\pfalse}\land (\sto,\hp),\fcallsub{\cfalse}{C_2,\fictx',\fid}\baction_{\fictx}\outcome:\stt'\big)~\lor \\
	& \big( \esem{\pexp}{\sto} = \undefd \land \stt' = \stt{\oxerr} \big) \\
	\end{array}
	\]
	which implies the desired
	$$ (\sto,\hp),\pifelse{\pexp}{\fcallsub{\ctrue}{C_2,\fictx',\fid}}{\fcallsub{\cfalse}{C_2,\fictx',\fid}}\baction_{\fictx'}\outcome:\stt'$$
	The while loop and the sequencing cases are proven analogously.

	\casex{Function Call.}  $C_\fid=\passign{\pvar y}{\fid(\vec \pexp)}$. Using Lemma \ref{lem:scopefcallequiv} and considering the successful case only as the faulting cases are proven analogously, let $\stt,\stt'\in\State$ such that
	$$ \stt, \pscope{\pvar x,\vec \pexp}{C_1}{\pvar y, \pexp'}\baction_{\fictx'}\stt' $$
	Letting $\stt = (\sto,\hp)$ and $\stt' = (\sto',\hp')$, the operational semantics for $\mathtt{scope}$ then implies
	\[
	\begin{array}{l l}
	& \sto'=\sto[\pvar y\rightarrow v']
	\land \esem{\vec \pexp}{\sto}=\vec v
	\land \esem{\pexp'}{\sto_q}=v'
	\land \pv{C}\setminus\{\pvvar x\} = \pvvar z \\
	& \land~\sto_p = \emptyset[\pvvar x\rightarrow\vec v][\pvvar z\rightarrow\nil]
	\land (\sto_p,\hp), C_1\baction_{\fictx}(\sto_q,\hp') \\
	\stackrel{C_1 \poccmd C_2}{\implies} & \sto'=\sto[\pvar y\rightarrow v']
	\land \esem{\vec \pexp}{\sto}=\vec v
	\land \esem{\pexp'}{\sto_q}=v'
	\land \pv{C}\setminus\{\pvvar x\} = \pvvar z \\
	& \land~\sto_p = \emptyset[\pvvar x\rightarrow\vec v][\pvvar z\rightarrow\nil]
	\land (\sto_p,\hp), C_2\baction_{\fictx}(\sto_q,\hp') \\
	\end{array}
	\]
	which implies the desired
	$$ \stt, \pscope{\pvar x,\vec \pexp}{C_2}{\pvar y, \pexp'}\baction_{\fictx'}\stt' $$

	The monotonicity of $g$ follows straightforwardly from the monotonicity of $\ghelp$.

\myparagraph{Supremum-Preservation}
Assume a chain $(C_n)_{n\in\Nat}$ in $\ccmdplain$.
First, we show that \\ 
$\bigsqcup\limits_{n\in\Nat} \ghelp(C_n) \poccmd \ghelp\big(\bigsqcup\limits_{n\in\Nat} C_n\big)$:

\[
\begin{array}{r l}
	& \stt, \bigsqcup\limits_{n\in\Nat} \ghelp(C_n) \baction_{\fictx'} \outcome: \stt' \\
	\Rightarrow & \exsts{m\in\Nat} \stt, \ghelp(C_m) \baction_{\fictx'}  \outcome:  \stt' \\
	\Rightarrow & \exsts{m\in\Nat} \stt, \fcallsub{C_\fid}{C_m,\fictx',\fid} \baction_{\fictx'}  \outcome: \stt' \\
	\Rightarrow & \stt, \fcallsub{C_\fid}{\bigsqcup\limits_{n\in\Nat} C_n,\fictx',\fid} \baction_{\fictx'}  \outcome: \stt' \\
	\Rightarrow & \stt, \ghelp\big(\bigsqcup\limits_{n\in\Nat} C_n\big) \baction_{\fictx'}  \outcome: \stt'
\end{array}
\]

%
%

Next, we show that $\ghelp\big(\bigsqcup\limits_{n\in\Nat} C_n\big) \poccmd \bigsqcup\limits_{n\in\Nat} \ghelp(C_n)$.
Let $\stt, \ghelp\big( \bigsqcup\limits_{n\in\Nat} C_n \big) \baction_{\fictx'} \stt'$, \\
i.e.
$\stt, \fcallsub{C_\fid}{\bigsqcup\limits_{n\in\Nat} C_n,\fictx',\fid} \baction_{\fictx'} \stt'$.
Then, since $C_\fid\in\Cmd$, it is a finite command and hence has a finite number $t$ of function calls on~$\fid$. At each function call substitution site, the execution will execute some command $C_n$. Assume $k_1,...,k_t\in\Nat$ such that at the $i$-th execution site, the command $C_{k_i}$ is executed, and let $k \defeq \max{(k_1,...,k_t)}$. Since $(C_n)_{n\in\Nat}$ is a chain, we have that $C_{k_i}\poccmd C_k$ for all $i\in\{1,...,t\}$ and therefore:
\[
\begin{array}{r l}
	& \stt, \fcallsub{C_\fid}{C_k,\fictx',\fid} \baction_{\fictx'} \stt' \\
	\Rightarrow & \stt, \bigsqcup\limits_{n\in\Nat} \fcallsub{C_\fid}{C_n,\fictx',\fid} \baction_{\fictx'} \stt' \\
	\Rightarrow & \stt, \bigsqcup\limits_{n\in\Nat} \ghelp(C_n) \baction_{\fictx'} \outcome: \stt'
\end{array}
\]

The supremum preservation of $g$ follows trivially from the supremum preservation of $\ghelp$.
\end{proof}
Next, we introduce the admissible set we will use in this instantiation of the Scott induction.

%
%

\begin{lemma}[Admissible Subset $\oneS$]\label{lem:oneS}
	The set $\oneS$, defined as
	$$
	\begin{array}{l}
	\oneS := \{ \repr{\bar C}\in\ccmdpr ~|~ \exsts{C\in\repr{\bar C}}
	\frall{t\in(\fsctx(\alpha))(\fid)}
\exsts{ \uspeconecase{P'}{\bigq'}\in\fext_{\fictx',\fid}(t)} \fsctx\models\triple{P'}{C}{\bigq'}
	\}
	\end{array}
	$$
	is an admissible subset of $(\ccmdpr,\poccmd)$.
\end{lemma}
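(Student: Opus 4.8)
The plan is to verify the two defining properties of an admissible subset directly against the definition, namely that $\oneS$ contains the least element $\leg$ of $(\ccmdpr, \poccmd)$ and that $\oneS$ is chain-closed. For the least-element property, recall that $\leg = \repr{\mathtt{while}~(\true)~\{~\pskip~\}}$, i.e.\ the equivalence class of commands with empty behaviour. I will pick $C = \mathtt{while}~(\true)~\{~\pskip~\} \in \leg$ and show that for every $t \in (\fsctx(\alpha))(\fid)$ there is an internal specification $\uspeconecase{P'}{\bigq'} \in \fext_{\fictx',\fid}(t)$ with $\fsctx \models \triple{P'}{C}{\bigq'}$. Every specification in $(\fsctx(\alpha))(\fid)$ — whether a terminating $\sspec(\beta)$ with $\beta < \alpha$ or a non-terminating $\sspec_\infty(\beta)$ with $\beta \le \alpha$ — has, by construction via the internalisation function $\fext$, an internal specification whose $\oxok$- and $\oxerr$-post-conditions are derived from the external ones; since $C$ never terminates, the OX-validity of $\triple{P'}{C}{\bigq'}$ holds vacuously (no transition $\baction_{\fictx'}$ exists from any state running $C$), so the required membership holds. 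A small point to handle carefully: for a terminating external specification $\sspec(\beta)$, the internal $\oxok$-post-condition may be satisfiable, but OX-validity only constrains reachable final states, of which there are none for $C$; hence it still holds.

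For chain-closedness I would take a chain $(\repr{\bar C_n})_{n \in \Nat} \subseteq \oneS$ with $\repr{\bar C_n} \poccmd \repr{\bar C_{n+1}}$ and show $\bigjoinpr{n \in \Nat} \repr{\bar C_n} = \repr{\bigjoinpr{n} \bar C_n} \in \oneS$. By the definition of $\oneS$, for each $n$ there is a representative $C_n \in \repr{\bar C_n}$ and, for each $t \in (\fsctx(\alpha))(\fid)$, an internal specification such that $\fsctx \models \triple{P'_t}{C_n}{\bigq'_t}$; since $(\fsctx(\alpha))(\fid)$ is finite (it is indexed by the finitely-presented set of $\beta$'s, but more importantly the quantifier ranges over a fixed set independent of $n$, and $\fext$ returns, per specification, a set from which we only need one witness), I can choose the witnesses uniformly. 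The candidate representative of the supremum is $\bigjoinpr{n} C_n$, i.e.\ the non-deterministic choice $\bigsqcup_n C_n$, whose behaviour is the union of the behaviours of the $C_n$. The key verification is that OX-validity is preserved under such unions: if every $C_n$ OX-satisfies $\triple{P'_t}{\cdot}{\bigq'_t}$ with respect to $\fsctx$, then so does $\bigsqcup_n C_n$, because any transition of $\bigsqcup_n C_n$ from a $P'_t$-state is a transition of some $C_m$, whose final state is therefore covered by $\bigq'_t$, and frame preservation is inherited componentwise.

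The main obstacle I anticipate is the frame-preserving, heap-splitting bookkeeping in the preservation-of-OX-validity argument for the join, together with making sure the choice of internal-specification witnesses is genuinely uniform along the chain rather than varying with $n$. The first issue is mostly routine: the semantics of $\bigsqcup_n C_n$ dispatches to exactly one $C_m$, so the frame $\hp_\fid$ carried through that execution behaves exactly as it does for $C_m$, and the post-condition satisfaction transfers verbatim. The second issue is handled by observing that the quantifier $\frall{t \in (\fsctx(\alpha))(\fid)}$ in the definition of $\oneS$ ranges over a set that does not depend on $n$, and since $\fext_{\fictx',\fid}(t)$ is likewise fixed, we may — for each of the finitely-relevant (indeed, for each, period) $t$ — fix once and for all a single internal witness $\uspeconecase{P'_t}{\bigq'_t}$ that works for the chain (if the witness could differ with $n$, one still extracts a stable one by noting that $\fext$ is determined by the equivalences $\models \Qok \Leftrightarrow \ldots$, which are independent of which $C_n$ we are looking at — the same internal post-condition works for all of them). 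With these two points dispatched, both (S1) and (S2) hold and $\oneS$ is admissible.

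I would also flag, as a minor preliminary, that one must check $\bigjoinpr{n} \bar C_n \in \ccmdplain$ so that $\repr{\bigjoinpr{n} \bar C_n}$ is a well-defined element of $\ccmdpr$; this is immediate from the definition of the greatest-fixpoint closure $\ccmdplain$ as being closed under (countably) infinite applications of the command constructors, and it is exactly the same fact used in the proof that $(\ccmdpr, \poccmd)$ is a domain. Consequently the supremum referenced in (S2) exists in $\ccmdpr$ and coincides with the join $\bigjoinpr{n} \repr{\bar C_n}$, so the chain-closedness statement is well-posed. Overall the proof is structurally short — two bullet checks — with all the real content sitting in the observation that OX-validity of internal specifications is closed under non-terminating least elements (vacuously) and under non-deterministic joins (by the dispatch argument).
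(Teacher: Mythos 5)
Your least-element check is the paper's own (the class of $\pwhile{\ptrue}{\pskip}$ vacuously OX-satisfies every internal specification), and your dispatch argument for why OX-validity survives a countable non-deterministic join is sound. The genuine gap is your treatment of the witnesses along the chain. You claim that for each external specification $t$ one can fix a \emph{single} internal specification $\uspeconecase{P'_t}{\bigq'_t}\in\fext_{\fictx',\fid}(t)$ that is OX-satisfied by every $C_n$, on the grounds that $\fext$ is determined by equivalences that do not mention $C_n$. That inference does not hold: $\fext_{\fictx',\fid}(t)$ is a \emph{set} of internal specifications whose post-conditions differ precisely in what they assert about the function parameters and locals, which are erased by the existential $\exsts{\vec p}$ on passage to the external post-condition. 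Two commands in the chain can therefore satisfy genuinely different internalisations of the same $t$ --- say one always ends with a local $\pvar r\doteq 0$ and another with $\pvar r\doteq 1$, both projecting to the same external post --- and neither command OX-satisfies the other's internal post-condition. So there is in general no uniform witness, and without one your dispatch argument only shows that the join sends each $P'$-state into \emph{some} $\bigq'_n$, not into a fixed member of $\fext_{\fictx',\fid}(t)$, which is what membership in $\oneS$ requires.

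The missing idea, and the way the paper closes the argument, is to \emph{construct a new witness for the join} rather than reuse one from the chain: by definition of $\fext$ all internal pre-conditions coincide (they are $P\lstar\pvvar z\doteq\nil$), so the join $\bigsqcup_n C_n$ OX-satisfies the quadruple with post-conditions $\bigvee_{n}\Qok^n$ and $\bigvee_{n}\Qerr^n$ --- this is exactly your dispatch argument --- and one then checks that this disjunctive specification is itself an element of $\fext_{\fictx',\fid}(t)$. That check is a short calculation: since each $n$ satisfies $\models\Qok\Leftrightarrow\exsts{\vec p}\,\Qok^n[\vec p/\pvvar p]\lstar\pvar{ret}\doteq\pexp'[\vec p/\pvvar p]$, taking the disjunction over $n$ and commuting $\exsts{\vec p}$ with $\bigvee_n$ yields $\models\Qok\Leftrightarrow\exsts{\vec p}\,(\bigvee_n\Qok^n)[\vec p/\pvvar p]\lstar\pvar{ret}\doteq\pexp'[\vec p/\pvvar p]$, and similarly for the error post-condition. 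With that replacement your proof goes through. As a side remark, $(\fsctx(\alpha))(\fid)$ need not be finite (for infinite $\alpha$ it contains a specification for every $\beta<\alpha$), though nothing in the corrected argument relies on finiteness.
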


\begin{proof}
 	\casex{Least Element.}
 	We know that $\lepr=\repr{\pwhile{\ptrue}{\pskip}}$ and that this commands trivially semantically satisfies any OX-quadruple. 
 	Therefore, $\lepr\in\oneS$.

 	\casex{Chain-Closure.}
 	We need to show that given an arbitrary chain $(\repr{C'_n})_{n\in\Nat}\subseteq\oneS$, it holds that $\pnatchoice{\repr{C'_n}}{n}\in\oneS$. Onwards, we will use the following notation:
 	\begin{itemize}
 	\item $\uspeconecase{P}{\bigq} \defeq \uspec{P}{\Qok}{\Qerr}$
 	\item $\uspeconecase{P_n}{\bigq_n} \defeq \uspec{P_n}{\Qok^n}{\Qerr^n}$
 	\end{itemize}
 	The definition of $\oneS$ yields the existence of a chain $(C_n)_{n\in\Nat}\subseteq\ccmdplain$ such that for all $n\in\Nat$, it holds that $C_n\in\repr{C'_n}$ and
 	$$
 	\frall{n\in\Nat,\uspeconecase{P}{\bigq}\in((\fsctx(\alpha))(\fid)}\exsts{\uspeconecase{P_n}{\bigq_n}\in\fext_{\fictx',\fid}(\uspeconecase{P}{\bigq})} \fsctx\models\triple{P_n}{C_n}{\bigq_n}
 	$$
	Per definition of $\fext$, we know that $P_n = P\lstar\pvvar z\doteq\nil$ for all $n\in\Nat$. Together with the definition of choice, we obtain that
 	 $$
	\fsctx\models\quadruple{P\lstar\pvvar z\doteq\nil}{\bigsqcup (C_n | n\in\Nat)}{ \bigvee_{n \in \Nat}\Qok^n}{ \bigvee_{n \in \Nat}\Qerr^n}
 	$$
 	It remains to show that $(P\lstar\pvvar z\doteq\nil)~(\bigvee_{n \in \Nat} \bigq_n)$ is an internalisation of $(P)~(\bigq)$. Since $(P\lstar\pvvar z\doteq\nil)~(\bigq_n)$ are internalisations of $(P)~(\bigq)$, we obtain
 	$$
 		\frall{n\in\Nat}
 		(\Qok
 		~\Leftrightarrow~
 		\exsts{\vec p} \Qok^n[\vec p / \pvvar p]\lstar \pvar{ret}\doteq \pexp'[\vec p / \pvvar p]) \land (\Qerr
 		~\Leftrightarrow~
 		\exsts{\vec p} \Qerr^n[\vec p / \pvvar p])
 	$$
 	This implies
 	$$
 	\begin{array}{r l}
 		\Qok
 		~\Leftrightarrow~
 		&\bigvee_{n \in \Nat}\big(\exsts{\vec p} \Qok^n[\vec p / \pvvar p]\lstar \pvar{ret}\doteq \pexp'[\vec p / \pvvar p]\big) \\
 		~\Leftrightarrow~
 		&\exsts{\vec p} \bigvee_{n \in \Nat}\big( \Qok^n[\vec p / \pvvar p]\lstar \pvar{ret}\doteq \pexp'[\vec p / \pvvar p]\big) \\
 		~\Leftrightarrow~
 		&\exsts{\vec p} \big(\bigvee_{n \in \Nat}\Qok^n[\vec p / \pvvar p]\big)\lstar \pvar{ret}\doteq \pexp'[\vec p / \pvvar p]\\
 		~\Leftrightarrow~
 		&\exsts{\vec p} \big(\bigvee_{n \in \Nat}\Qok^n\big)[\vec p / \pvvar p]\lstar \pvar{ret}\doteq \pexp'[\vec p / \pvvar p]
 	\end{array}
 	$$
 	and analogously
 	 	$$
 	\begin{array}{r l}
 		\Qerr
 		~\Leftrightarrow~
 		&\bigvee_{n \in \Nat}\big(\exsts{\vec p} \Qerr^n[\vec p / \pvvar p]\big) \\
 		~\Leftrightarrow~
 		&\exsts{\vec p} \bigvee_{n \in \Nat}\big(\Qerr^n[\vec p / \pvvar p]\big)\\
 		~\Leftrightarrow~
 		&\exsts{\vec p} \big(\bigvee_{n \in \Nat}\Qerr^n\big)[\vec p / \pvvar p]
 	\end{array}
 	$$
 	Therefore, $\repr{\sqcup(C_n|n\in\Nat)}\in\oneS$, which yields $\sqcup(\repr{C_n}|n\in\Nat)\in\oneS$ and finally $\sqcup(\repr{C'_n}|n\in\Nat)\in\oneS$.
\end{proof}

This concludes the set-up for Scott induction, which allows us to prove the inductive step.

\begin{lemma}[Scott Condition]\label{lem:scottcondition}
	Under the assumptions (A1)-(A7) and additionally assuming
 \begin{equation}\label{validforallx}
	\frall{t\in(\fsctx(\alpha))(\fid)}\exsts{t'\in \fext_{\fictx', \fid}(t)}
	\fsctx(\alpha) \vdash C_\fid:t'
 \end{equation}
 and
 \begin{equation}\label{validforallnt}
	\exsts{t' \in \fext_{\fictx', \fid}(\uspeconecase{\tpre(\alpha)}{\bigq(\alpha)})}
	\fsctx(\alpha) \vdash C_\fid:t'
 \end{equation}
	it holds that $g(\oneS)\subseteq\oneS$.
\end{lemma}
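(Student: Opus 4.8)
The plan is as follows. This lemma is the inductive (Scott) step that, together with $\bot\in\oneS$ and admissibility of $\oneS$ (Lemma~\ref{lem:oneS}), will let us conclude $\lfp{g}\in\oneS$ via Theorem~\ref{theorem:scottinduction}. So fix $\repr{\bar C}\in\oneS$; we must show $g(\repr{\bar C})=\repr{\ghelp(\bar C)}\in\oneS$. Since $\ghelp$ is monotone (established in the proof of Lemma~\ref{lem:gcont}), it respects $\eqg$, so $g(\repr{\bar C})=\repr{\ghelp(C_0)}$ for the witness $C_0\in\repr{\bar C}$ supplied by $\repr{\bar C}\in\oneS$. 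Unfolding the definition of $\oneS$, it therefore suffices to produce, for each $t\in(\fsctx(\alpha))(\fid)$, some $\uspeconecase{P'}{\bigq'}\in\fext_{\fictx',\fid}(t)$ with $\fsctx\models\triple{P'}{\ghelp(C_0)}{\bigq'}$. I would first note that this is the right internalisation to use: $\ghelp$ replaces each call to $\fid$ by a $\mathtt{scope}$ command with the same program-variable footprint, so $\pv{\ghelp(C_0)}=\pv{C_\fid}$, and the witnesses $C_0$ arising along the Scott chain satisfy $\pv{C_0}\subseteq\pv{C_\fid}$, which is all that is needed for the local-variable initialisations to line up.

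Now fix $t\in(\fsctx(\alpha))(\fid)$. By hypothesis~(\ref{validforallx}) (and, where the measure-$\alpha$ terminating specification is concerned, (\ref{validforallnt})) there is an internal specification $t'=\uspeconecase{P'}{\bigq'}\in\fext_{\fictx',\fid}(t)$ with $\fsctx(\alpha)\vdash C_\fid:t'$, hence, by Theorem~\ref{logicsoundness}, $\fsctx(\alpha)\models C_\fid:t'$. I claim this same $t'$ works. To show $\fsctx\models\triple{P'}{\ghelp(C_0)}{\bigq'}$, take an arbitrary $\fictx''$ with $\models(\fictx'',\fsctx)$, assume w.l.o.g.\ $\fid\notin\dom(\fictx'')$ (harmless, since $\fid\notin\dom(\fsctx)$), and set $\widehat\fictx\defeq\fictx''[\fid\mapsto(\pvvar x,C_0,\pexp')]$. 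The aim is to derive $\fictx''\models\triple{P'}{\ghelp(C_0)}{\bigq'}$ in three moves.

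The first move establishes $\models(\widehat\fictx,\fsctx(\alpha))$. For the specifications of functions other than $\fid$ this reduces to $\models(\fictx'',\fsctx)$ via the routine lemma that extending an implementation context at a fresh identifier preserves environment validity; for the specifications of $\fid$ this is exactly where $\repr{\bar C}\in\oneS$ is used, since it guarantees, for every $t_1\in(\fsctx(\alpha))(\fid)$, an internalisation $\uspeconecase{P_1}{\bigq_1}\in\fext_{\fictx',\fid}(t_1)$ with $\fsctx\models\triple{P_1}{C_0}{\bigq_1}$, hence (since $\models(\widehat\fictx,\fsctx)$ by the same fresh-identifier lemma) $\widehat\fictx\models\triple{P_1}{C_0}{\bigq_1}$. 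The second move applies $\fsctx(\alpha)\models C_\fid:t'$ to the model $\widehat\fictx$ to get $\widehat\fictx\models\triple{P'}{C_\fid}{\bigq'}$, and then transfers this to $\ghelp(C_0)$: by Lemma~\ref{lem:scopefcallequiv}, under $\widehat\fictx$ each call to $\fid$ is behaviourally equivalent to the $\mathtt{scope}$ command wrapping $C_0$, so a structural induction on $C_\fid$ exactly as in the proof that the function-call substitution preserves $\eqg$ shows that $C_\fid$ and $\fcallsub{C_\fid}{C_0,\widehat\fictx,\fid}=\fcallsub{C_\fid}{C_0,\fictx',\fid}=\ghelp(C_0)$ exhibit the same behaviour under $\widehat\fictx$ (the middle equality because $\widehat\fictx$ and $\fictx'$ agree on $\fid$'s parameters and return expression); hence $\widehat\fictx\models\triple{P'}{\ghelp(C_0)}{\bigq'}$.

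The third move — which I expect to be the main obstacle — is to descend from $\widehat\fictx$ to $\fictx''$. The two contexts differ only in the mapping of $\fid$; the copy of $C_\fid$'s skeleton inside $\ghelp(C_0)$ has no residual calls to $\fid$ (they were all replaced by $\mathtt{scope}$ commands), so the only executions that can behave differently are those that re-enter $\fid$ from within a substituted copy of $C_0$. Controlling these is precisely what the $\oneS$-witness property of $C_0$ provides: the relevant specifications of $C_0$ hold under \emph{every} model of $\fsctx$, in particular under both $\fictx''$ and $\widehat\fictx$, so the $\mathtt{scope}$-wrapped copies of $C_0$ contribute the same spec-satisfying behaviour in either context. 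I would make this precise either by an induction on recursion depth, showing that derivations under $\widehat\fictx$ and under $\fictx''$ coincide outside scope-delimited sub-derivations of $C_0$, or, more uniformly, by replaying the derivation $\fsctx(\alpha)\vdash C_\fid:t'$ directly against $\ghelp(C_0)$ under $\fictx''$, discharging each application of the function-call rule for $\fid$ (with some spec $t_1$) using the soundness of the $\mathtt{scope}$ pseudo-command together with the internalisation of $t_1$ satisfied by $C_0$. Either way we obtain $\fictx''\models\triple{P'}{\ghelp(C_0)}{\bigq'}$; since $\fictx''$ was arbitrary, $\fsctx\models\triple{P'}{\ghelp(C_0)}{\bigq'}$, which gives $\repr{\ghelp(C_0)}\in\oneS$ and hence $g(\oneS)\subseteq\oneS$.
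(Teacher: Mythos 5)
Your plan has a genuine gap, and it sits exactly where the paper's machinery of pseudo-commands is doing the real work. Your first two moves install the $\oneS$-witness $C_0$ as the body of $\fid$ in an extended implementation context $\widehat\fictx\defeq\fictx''[\fid\mapsto(\pvvar x,C_0,\pexp')]$ and then invoke environment validity, the internalisation function $\fext_{\widehat\fictx,\fid}$, and $\fsctx(\alpha)\models C_\fid:t'$ at $\widehat\fictx$. But the witnesses supplied by membership in $\oneS$ live in $\ccmdplain$, the greatest-fixpoint closure containing $\mathtt{scope}$, binary and countable choice, and infinite commands; implementation contexts map identifiers into $\cmds$, so $\widehat\fictx$ is in general not a legal implementation context and none of those notions apply to it. This is not a corner case: the Scott step must in particular cover the witnesses produced by chain-closure of $\oneS$, which are infinite choices $\bigsqcup_n C_n$ --- precisely the commands that cannot be function bodies, and precisely why the paper introduces $\mathtt{scope}$ so that the witness is only ever \emph{executed in place}, never registered in a context. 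Even ignoring this, your supporting steps are shakier than you suggest: (i) ``extending the implementation context at a fresh identifier preserves environment validity'' is not routine here, since bodies in $\fictx''$ may syntactically call $\fid$, faulting with a $\mathsf{NoFunc}$ error under $\fictx''$ but executing $C_0$ under $\widehat\fictx$, so their $\fictx''$-valid internal specifications need not remain valid; (ii) $\fext_{\widehat\fictx,\fid}(t_1)$ is computed from $\pv{C_0}$ whereas the $\oneS$ property hands you internalisations in $\fext_{\fictx',\fid}(t_1)$, computed from $\pv{C_\fid}$, and your appeal to ``witnesses arising along the Scott chain'' is unavailable because the lemma quantifies over all of $\oneS$, not over the chain $g^n(\bot)$; (iii) the depth-induction version of your third move does not close either, because the scope-wrapped copies of $C_0$ (which may itself contain calls to $\fid$) genuinely behave differently under $\widehat\fictx$ and $\fictx''$, and bridging them requires being at a pre-state of the relevant specification --- which is information only a proof-replay, not a raw comparison of derivations, provides.

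The good news is that the fallback you mention in passing (``more uniformly, by replaying the derivation $\fsctx(\alpha)\vdash C_\fid:t'$ directly against $\ghelp(C_0)$, discharging each application of the function-call rule for $\fid$ using the semantics of $\mathtt{scope}$ and the internalisation satisfied by $C_0$'') is the paper's actual proof: one shows, by induction on the derivation, that $\fsctx(\alpha)\vdash\utripleq{P}{C_\fid}{\bigq}$ implies $\fsctx\models\triple{P}{\fcallsub{C_\fid}{C_0,\fictx',\fid}}{\bigq}$, handling base commands and calls to other functions via Theorem~\ref{logicsoundness} and $\models(\fictx,\fsctx)$, compound and structural rules by the induction hypothesis (with a separate argument for \mbox{while}), and the call to $\fid$ semantically through the operational semantics of $\mathtt{scope}$ together with the OX triple for $C_0$ given by $\oneS$; the lemma's hypotheses (\ref{validforallx}) and (\ref{validforallnt}) are then instantiated at the end. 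If you promote that replay from a fallback to the whole argument and drop the $\widehat\fictx$ construction, the scaffolding of your first two moves becomes unnecessary and the proof goes through.
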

\begin{proof}
	Let $\repr{C}\in\oneS$. Therefore, there exists a $C'\in\repr{C}$ such that for all $t\in(\fsctx(\alpha))(\fid)$ exists a $(P)~(\bigq)\in\fext_{\fictx',\fid}(t)$ such that $\textbf{(H)}~\fsctx\models\triple{P}{C'}{\bigq}$. This implies that $C'$ does not call on $\fid$, because $\fsctx$ holds no specifications for $\fid$. Hence, $C_\fid[C',\fictx', \fid]$ does not call on $\fid$ either.
	We prove the statement by showing the more general claim
	 $$
	{\fsctx(\alpha)\vdash \utripleq{P}{C_\fid}{\bigq}}  \implies  \fsctx\models\triple{P}{\fcallsub{C_\fid}{C',\fictx',\fid}}{\bigq}
	 $$
	 for arbitrary precondition $P$ and postcondition $\bigq$ by induction over the structure of $C_\fid$.

	 \myparagraph{Base Commands and Function Calls on $g\not=\fid$} In this case, $\fcallsub{C_\fid}{C',\fictx',\fid}=C_\fid$ and $C_\fid$ does not call on $\fid$.
	 Therefore, the proof tree of $\fsctx(\alpha)\vdash \utripleq{P}{C_\fid}{\bigq}$ uses no specifications on $\fid$, which implies that $\fsctx\vdash \utripleq{P}{C_\fid}{\bigq}$. From $\text{(A1)}~\models(\fictx,\fsctx)$, we obtain $\fsctx\models \utripleq{P}{C_\fid}{\bigq}$, which implies $\fsctx\models\triple{P}{C_\fid}{\bigq}$.

	 \myparagraph{Compound Command Rules: if-then, if-else, sequence}
	 If-then, if-else and sequence are proven directly, using the IH.

	\myparagraph{Compound Command Rules: while} 
	The following proof is based on a while rule \mbox{while-iterate}, which is a while rule from an earlier version of our proof rules. In the current version of the proof rules, we have simplified the \mbox{while-iterate} rule to the \mbox{while} rule as introduced in the main body. As the earlier \mbox{while-iterate} rule encompasses the simplified \mbox{while} rule, we have keep the proof here as is.
	Given the rule
	$$
	\inferrule[\mbox{while-iterate}]
{
\frall{i\in\Nat}~\models P_i \Rightarrow \dotin \pexp \Bool \lstar \AssTrue \qquad P_\infty \defeq \AssFalse \\\\
\frall{i\in\Nat}~\fsctx \vdash \uquadruple{P_i \land \pexp}{C}{P_{i+1} }{Q_i} \\\\  m \defeq \min(\{i \in \mathbb{N} \uplus \{ \infty \} \mid~\models P_i \Rightarrow \lnot \pexp \lstar \AssTrue \})
  }
  { \fsctx \vdash \uquadruple{P_0}{\pwhile{\pexp}{C}}{P_m}{\exsts{n < m}  Q_n}
  }
	$$
	we assume
	\begin{description}
	\item[(W1)] $\subst,\sto,\hp\models P_0$
	\item[(W2)] $(\sto,\hp\uplus \hp_\fid),\pwhile{\pexp}{\fcallsub{C}{C',\fictx',\fid}}\baction_{\bar\fictx}(\sto',\hp'')$
	\item[(W3)] $\frall{i\in\Nat}~\models P_i \Rightarrow \dotin \pexp \Bool \lstar \AssTrue$
	\item[(W4)] $P_\infty \defeq \AssFalse$
	\item[(W5)] $\frall{i\in\Nat}~\fsctx(\alpha) \vdash \uquadruple{P_i \land \pexp}{C}{P_{i+1} }{Q_i}$
	\item[(W6)] $m \defeq \min(\{i \in \mathbb{N} \uplus \{ \infty \} \mid~\models P_i \Rightarrow \lnot \pexp \lstar \AssTrue \})$
	\end{description}
	Noting that $\fcallsub{(\pwhile{\pexp}{C})}{C',\fictx',\fid} = \pwhile{\pexp}{\fcallsub{C}{C',\fictx',\fid}}$,
	(W2) implies that
	\begin{description}
	\item[(W7)] $(\esem{\pexp}{\sto,\hp\uplus \hp_\fid}=\pfalse \land (\sto,\hp\uplus \hp_\fid)=(\sto',\hp'')) \lor \\
	(\esem{\pexp}{\sto,\hp\uplus \hp_\fid}=\ptrue \land
	(\sto,\hp\uplus \hp_\fid), \fcallsub{C}{C',\fictx',\fid}\baction_{\fictx'} \bar\stt \land
	\bar\stt, \pwhile{\pexp}{\fcallsub{C}{C',\fictx',\fid}} \baction_{\fictx'} (\sto',\hp''))$
	\end{description}

	\noindent
  	and (W5) and the inductive hypothesis imply
  	$$
  	\textbf{(W8)}\quad \frall{i\in\Nat} \fsctx\models\quadruple{P_i \land \pexp}{\fcallsub{C}{C',\fictx',\fid}}{P_{i+1} }{Q_i}
  	$$
  	 We know that $m \neq \infty$, as otherwise the loop would be non-terminating, contradicting (W2).
If $m = 0$, then $\esem{\pexp}{\subst,\sto}=\pfalse$, and (W7) trivially yields the desired result.

	Otherwise, we have that $m > 0$. Then,
  	(W8) yields
  	$$\fsctx \models \quadruple{P_{i-1} \land \pexp}{\fcallsub{C}{C',\fictx',\fid}}{P_i}{Q_{i-1}}$$
  	for all $0<i\leq m$ and (W6) yields $\fsctx \models \triple{P_{m} }{\pwhile{\pexp}{\fcallsub{C}{C',\fictx',\fid}}}{P_m}$.
  	From here, applying the operational semantics of while $m$ times, similarly to the proof in RHL~\cite{reverselogic}, we obtain the desired
	$$
	\fsctx\models\quadruple{P_0}{\pwhile{\pexp}{\fcallsub{C}{C',\fictx',\fid}}}{P_m}{\exsts{n<m} Q_n}
	$$

	 \myparagraph{Structural rules}
	 All four structural rules (equiv, exists, frame, and disj) are proven trivially, using the inductive hypothesis. We give the proof for the equivalence rule:
	$$
	 \inferrule[\mbox{equiv}]
  { \models P \Leftrightarrow P' \quad  \fsctx(\alpha) \vdash \uquadruple{P'}{C_\fid}{\Qok'}{\Qerr'} \quad  \models \Qok' \Leftrightarrow \Qok \quad \models \Qerr' \Leftrightarrow \Qerr }
 { \fsctx(\alpha) \vdash \uquadruple{P }{C_\fid}{\Qok}{\Qerr}}
 $$
 where the IH gives us $\fsctx \models \quadruple{P'}{\fcallsub{C_\fid}{C',\fictx',\fid}}{\Qok'}{\Qerr'}$, from which the desired claim is obtained trivially.

	 \myparagraph{Function call on $\fid$}
	 $\fcallsub{(\passign{\pvar y}{\mathtt{\fid}(\vec \pexp)})}{C',\fictx',\fid}=\pscope{\pvvar x,\vec \pexp}{C'}{\pvar y, \pexp'}$, where $\fictx'(\fid)=(\pvvar x,C_\fid,\pexp')$. Therefore, we need to show that
	 $$\fsctx\models\triple{P}{\pscope{\pvvar x,\vec \pexp}{C'}{\pvar y, \pexp'}}{\bigq}$$

	 The assumption $\fsctx(\alpha)\vdash\utripleq{P}{\passign{\pvar y}{\mathtt{\fid}(\vec \pexp)}}{\bigq}$ implies via the function call rule that $P = (\pvar y\doteq \lexp_y\lstar \vec \pexp\doteq\vec x\lstar P^*)$, where $P^*$ is the program-variable-free part either the pre-condition $\ntpre$ of the non-terminating or the pre-condition $\tpre$ of the (partially) terminating specification, and that $(\pvvar x \doteq \vec x \lstar P^*)~(\bigq)\in(\fsctx(\alpha))(\fid)$.
	 We assume the following:
	 \begin{description}
	 \item[(F0)] an arbitrary $\bar\fictx$ such that $\models(\bar\fictx,\fsctx(\alpha))$
	 \item[(F1)] $\subst,\sto,\hp\models\pvar y\doteq \lexp_y\lstar \vec \pexp\doteq\vec x\lstar P^*$
	 \item[(F2)] arbitrary $\hp_\fid$ and $\hp''$ such that $(\sto,\hp\uplus \hp_\fid),\pscope{\pvvar x,\vec \pexp}{C'}{\pvar y, \pexp'}\baction_{\bar\fictx}\outcome: (\sto',\hp'')$
	 \end{description}
	 We need to show that
	 $$
	 (\outcome\neq\oxm) \land
	 \big(
	 \exsts{\hp'} \hp''= \hp'\uplus \hp_\fid \land
	 	((
	 	\outcome=ok \land \subst,\sto',\hp'\models\Qok
	 	)
	 \lor
	 	(
	 	\outcome=err \land \subst,\sto',\hp'\models\Qerr
	 	))
	 \big)
	 $$
	 Defining
	 \begin{description}
	 \item[(F3)] $\vec v := \esem{\vec \pexp}{\sto}$
	 \item[(F4)] $\pvvar z := \pv{C'}\backslash\{\pvvar x\}$.
	 \item[(F5)] $\sto_p := \emptyset[\pvvar x\rightarrow \vec v][\pvvar z\rightarrow\nil]$,
	 \end{description}
	 the operational semantics of $\mathtt{scope}$ and (F2) imply
	 \begin{description}
	 \item[(F6)] $(\sto_p,\hp\uplus \hp_\fid),C'\baction_{\bar\fictx}\outcome:(\sto_q,\hp'')$
	 \item[(F7a)] $\outcome=ok\Rightarrow (\sto'=\sto[\pvar y\rightarrow \esem{\pexp'}{\sto_q}])$
	 \item[(F7b)] $\outcome=err\Rightarrow (\sto'=\sto[\pvar{err}\rightarrow \esem{\pvar{err}}{\sto_q}])$
	 \end{description}

	The definition of $\oneS$ implies existence of a $\uspec{P'}{\Qok'}{\Qerr'}\in\fext_{\fictx,\fid}(\uspeconecase{P^*}{\bigq})$ such that \textbf{(H8a)}~$\fsctx\models\triple{P'}{C'}{\bigq'}$. (A1) and (F0) imply $\models(\bar\fictx,\fsctx)$, which yields with (H8a) that  \textbf{(H8b)}~$\bar\fictx\models\triple{P'}{C'}{\bigq'}$

	Per definition of the internalisation, we know that $P'=\pvvar x \doteq \vec x \lstar P^*\lstar\pvvar z\doteq\nil$, which implies with (F3)-(F5) that \textbf{(H9)}~$\subst,\sto_p,\hp\models P'$.
	Then,
	(F6), (F8b) and (F9) imply
	$$
	\textbf{(H10)}~~
	(\outcome\neq\oxm)\land
	 \big(
	 \exsts{\hp'} \hp''= \hp'\uplus \hp_\fid \land
	 	((
	 	\outcome=ok \land \subst,\sto_q,\hp'\models\Qok'
	 	)
	 \lor
	 	(
	 	\outcome=err \land \subst,\sto_q,\hp'\models\Qerr'
	 	))
	 \big)
	$$
	This yields $\subst,\sto',\hp'\models Q_{\outcome}$ as $\Qok'$ and $\Qerr'$ are internalisations of $\Qok$ and $\Qerr$ respectively, which in turn implies the desired result, i.e.
	$$
	\fsctx\models\triple{P}{\pscope{(\pvvar x,\vec \pexp)}{C'}{\pvar y, \pexp'}}{\bigq}
	$$

%
%

	This concludes the proof of the general statement. Instantiating it with the existentially quantified $t'$ from (\ref{validforallx}) and (\ref{validforallnt}) then yields the desired result
	 $$\repr{C}\in\oneS
	 ~\Longrightarrow~
	 \repr{\hp(C)}\in\oneS
	 $$
	 i.e., the Scott condition $g(\oneS)\subseteq\oneS$.

\end{proof}
Finally, we need to show that the function body $C_\fid$ of $\fid$ is indeed equivalent to the least fixpoint of $g$, denoted by $\lfp{g}$.
\begin{lemma}[Scott's last step]\label{lem:lastscott}
	The function body $C_\fid$ is in the least fixpoint of $g$, i.e.
	$$
	C_\fid\in\lfp{g}
	$$
\end{lemma}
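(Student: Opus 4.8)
The plan is to use the least-fixpoint identity from Theorem~\ref{lem:lfpid}, which gives $\lfp{g} = \bigjoinpr{n \in \Nat} g^n(\lepr)$, together with identity (G2), which rewrites this as $\repr{\bigjoinpr{n \in \Nat} \ghelp^n(\bot)}$ where $\bot$ is a non-terminating command such as $\pwhile{\ptrue}{\pskip}$. So it suffices to show that $C_\fid \eqccmd \bigjoinpr{n \in \Nat} \ghelp^n(\bot)$, i.e. that $C_\fid$ and this supremum of finite unfoldings have exactly the same set of input-output behaviours under $\fictx'$.

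First I would establish the ``$\poccmd$'' direction, that every behaviour of $C_\fid$ (under $\fictx'$) is realised by some finite unfolding $\ghelp^n(\bot)$. The key observation is that any terminating big-step derivation $\stt, C_\fid \baction_{\fictx'} \outcome : \stt'$ has a finite recursion depth $\depthF{\stt, C_\fid \baction_{\fictx'} \outcome : \stt'} = d$ with respect to $F = (\fid)$, since big-step derivations are finite objects. I would then prove, by induction on $d$ (and an inner structural induction on the derivation of $C_\fid$), that $\ghelp^{d}(\bot)$ — or any $\ghelp^n(\bot)$ with $n \ge d+1$ — already exhibits the behaviour $(\stt, \stt')$: each recursive call to $\fid$ at depth $d$ bottoms out into a sub-derivation of strictly smaller depth, so after $d$ applications of $\ghelp$ (each of which replaces calls to $\fid$ in $C_\fid$ by the previous unfolding, via the $\mathtt{scope}$ pseudo-command and Lemma~\ref{lem:scopefcallequiv}) the innermost calls land on $\bot$ but are never actually reached on this particular execution. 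Hence $\stt, \ghelp^{d+1}(\bot) \baction_{\fictx'} \outcome : \stt'$, so this behaviour is in $\bigjoinpr{n} \ghelp^n(\bot)$.

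Conversely, for the ``$\poccmdrev$'' direction I would show each $\ghelp^n(\bot) \poccmd C_\fid$ by induction on $n$: the base case $\ghelp^0(\bot) = \bot$ is immediate since $\bot$ has no behaviours, and the inductive step uses monotonicity of $\ghelp$ (established in the proof of Lemma~\ref{lem:gcont}) together with the fixpoint-like property that $\ghelp(C_\fid) \eqccmd C_\fid$ — this last equivalence is exactly ``substitution preserves $\eqg$'' (Lemma on p.~\pageref{lem:scopefcallequiv}ff.), giving $C_\fid \eqccmd \fcallsub{C_\fid}{C_\fid, \fictx', \fid} = \ghelp(C_\fid)$. Chaining, $\ghelp^{n+1}(\bot) \poccmd \ghelp^{n+1}(C_\fid) \poccmd \cdots \poccmd \ghelp(C_\fid) \eqccmd C_\fid$, so the supremum is also $\poccmd C_\fid$. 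Combining both directions yields $C_\fid \eqccmd \lfp{g}$, i.e. $C_\fid \in \lfp{g}$.

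The main obstacle is the ``$\poccmd$'' direction: formalising the depth-based argument that a derivation of recursion depth $d$ is faithfully simulated by the $d$-th (or $(d{+}1)$-th) finite unfolding. This requires carefully threading the $\mathtt{scope}$-command semantics through the structural/nested induction — in particular handling the compound commands (sequence, while) where recursive calls may be distributed across sub-derivations whose depths sum or max together in the way the $\depthF{\cdot}$ definition prescribes — and checking that $\ghelp$ genuinely commutes with the program-constructors so that an unfolding of $C_\fid$ decomposes into unfoldings of its sub-commands. The ``$\poccmdrev$'' direction, by contrast, is routine given monotonicity and the substitution lemma already in hand.
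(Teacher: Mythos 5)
Your proposal is correct, and it diverges from the paper in an interesting way on one of the two inclusions. For $C_\fid \poccmd \bigsqcup_n \ghelp^n(\wts)$ you argue exactly as the paper does: every terminating derivation of $C_\fid$ has finite recursion depth $d$, and an induction on $d$ shows the $(d{+}1)$-st unfolding $\ghelp^{d+1}(\wts)$ realises the same behaviour — this is precisely (one direction of) the paper's auxiliary Lemma~\ref{lem:depth}, which the paper proves beforehand and then cites. For the converse inclusion, however, the paper again invokes Lemma~\ref{lem:depth}, using its other direction (any behaviour of $\ghelp^n(\wts)$ with $n>0$ is a $C_\fid$-behaviour of depth $\le n-1$), whereas you replace this by the abstract fixed-point argument: by the substitution lemma $C_\fid \eqccmd \ghelp(C_\fid)$, so by monotonicity of $\ghelp$ (from the proof of Lemma~\ref{lem:gcont}) and leastness of $\wts$, a simple induction gives $\ghelp^n(\wts) \poccmd C_\fid$ for every $n$, hence the supremum — i.e. the least fixed point — lies below $\repr{C_\fid}$. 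This is the standard domain-theoretic fact that $\lfp{g}$ is below any fixed point, and it buys you a shorter proof of that inclusion, avoiding the converse half of the depth induction (the argument that executions of $\ghelp^{n+1}(\wts)$ never reach $\wts$ and map back to bounded-depth $C_\fid$-derivations); the paper's biconditional depth lemma buys a sharper quantitative correspondence (depth $\le n$ iff the $(n{+}1)$-st unfolding exhibits the behaviour), reused in the $n$-dimensional setting. Two cosmetic points: your phrase ``$\ghelp^{d}(\bot)$ --- or any $\ghelp^n(\bot)$ with $n \ge d+1$'' has an off-by-one (it is $\ghelp^{d+1}(\wts)$, as your final sentence correctly states), and the chain $\ghelp^{n+1}(\wts) \poccmd \ghelp^{n+1}(C_\fid) \poccmd \cdots$ is cleaner phrased as the one-step induction ``$\ghelp^{n}(\wts) \poccmd C_\fid$ implies $\ghelp^{n+1}(\wts) \poccmd \ghelp(C_\fid) \eqccmd C_\fid$''; neither affects correctness.
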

Again, we prove a slightly different statement first, and then apply it to prove the lemma. Onward, we write $\lepr$ to denote the least element of $\ccmdpr$ (and also of $\oneS$).
Onwards, we will use $\wts$ as a shorthand for the command $\pwhile{\ptrue}{\pskip}$. Keep in mind that $\lepr=\repr{\wts}$.
\begin{lemma}\label{lem:depth}
	For all $n\in\Nat$, it holds that
	$$ \frall{\stt,\stt'\in\State}~\stt,C_\fid\baction_{\fictx'}\stt'~\land~\depthF{\stt,C_\fid\baction_{\fictx'}\stt'}\leq n
	\iff
	\stt,\ghelp^{n+1}(\wts)\baction_{\fictx}\stt'	$$
where do not explicitly include the outcome statement $\outcome$ to avoid clutter.
\end{lemma}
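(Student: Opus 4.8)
The plan is to prove, by induction on $n$, the following slightly stronger statement: for every $n \in \Nat$ and every command $\bar C \in \Cmd$,
\[
\frall{\stt,\stt'\in\State}~\big(\stt, \bar C \baction_{\fictx'} \stt' \land \depthF{\stt, \bar C \baction_{\fictx'} \stt'} \leq n\big) \iff \stt, \fcallsub{\bar C}{\ghelp^{n}(\wts), \fictx', \fid} \baction_{\fictx} \stt',
\]
where, as in the lemma, we suppress the outcome $\outcome$ throughout. Instantiating this with $\bar C \eqdef C_\fid$ gives $\fcallsub{C_\fid}{\ghelp^{n}(\wts), \fictx', \fid} = \ghelp(\ghelp^{n}(\wts)) = \ghelp^{n+1}(\wts)$, which is exactly the claim. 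Since each biconditional is a two-way simulation, in the induction I would treat the $\Rightarrow$ direction by induction on the structure of the big-step derivation under $\fictx'$ and the $\Leftarrow$ direction by induction on the derivation under $\fictx$; for brevity these would be packaged as a single induction on the size of the (common) derivation tree, and I would not separate the directions below.

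For the base case $n = 0$ we have $\ghelp^{0}(\wts) = \wts$, so the statement reads $\stt, \bar C \baction_{\fictx'} \stt' \land \depthF{\cdot}\leq 0 \iff \stt, \fcallsub{\bar C}{\wts, \fictx', \fid} \baction_{\fictx} \stt'$. Here $\depthF{\cdot}\leq 0$ is equivalent to ``no call to $\fid$ is executed anywhere in the derivation''. The key observation is that the pseudo-command $\pscope{\pvvar x, \vec \pexp}{\wts}{\pvar y, \pexp'}$ has \emph{no} transitions at all, since its semantics demands a terminating execution of $\wts$ while $\wts$ diverges; hence $\fcallsub{\bar C}{\wts, \fictx', \fid}$ is ``stuck'' at every program point where $\bar C$ would call $\fid$. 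A routine induction on the derivation structure then shows that a terminating derivation of $\fcallsub{\bar C}{\wts, \fictx', \fid}$ under $\fictx$ corresponds exactly to a terminating derivation of $\bar C$ under $\fictx'$ that never reaches an $\fid$-call --- i.e., one of depth $0$ --- and conversely; the argument uses that $\fcallsub{\cdot}{\cdot}$ commutes with all command constructors, that base commands and calls to $\mathtt{g}\neq\fid$ are left untouched and (since $\fid\notin\dom(\fictx)$) behave identically under $\fictx$ and $\fictx'$, and that $\depthF{\cdot}$ is the pointwise maximum over sub-derivations. In the critical case $\bar C = \passign{\pvar y}{\fid(\vec \pexp)}$, both sides are vacuously false (the left side because any such derivation has depth $\geq 1$, the right side because the scope pseudo-command has no transitions).

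For the inductive step $n \to n+1$ we have $\ghelp^{n+1}(\wts) = \fcallsub{C_\fid}{\ghelp^{n}(\wts), \fictx', \fid}$, and we again induct on the derivation structure of $\bar C$. The base commands, calls to $\mathtt{g}\neq\fid$, and the compound commands (\texttt{if}-\texttt{then}, \texttt{if}-\texttt{else}, sequence, \texttt{while}) are handled by the inner hypothesis; for \texttt{while} one uses that $\depthF{\cdot}$ is the maximum of the depths of the individual iterations, each a strictly smaller sub-derivation. The only genuinely new case is $\bar C = \passign{\pvar y}{\fid(\vec \pexp)}$, where $\fcallsub{\bar C}{\ghelp^{n+1}(\wts), \fictx', \fid} = \pscope{\pvvar x, \vec \pexp}{\ghelp^{n+1}(\wts)}{\pvar y, \pexp'}$. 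Unfolding the semantics of $\mathtt{scope}$ on the right and the semantics of the call to $\fid$ under $\fictx'$ on the left (recall $\fictx'(\fid) = (\pvvar x, C_\fid, \pexp')$), both sides reduce --- after the identical store set-up $\sto_p = \emptyset[\pvvar x \mapsto \vec v][\pvvar z \mapsto \nil]$ with $\vec v = \esem{\vec \pexp}{\sto}$ --- to a run of $C_\fid$ on $(\sto_p, \hp)$ (under $\fictx'$ on the left, of $\fcallsub{C_\fid}{\ghelp^{n}(\wts), \fictx', \fid}$ under $\fictx$ on the right) followed by the update $\pvar y \mapsto \esem{\pexp'}{\sto_q}$. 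The \emph{outer} induction hypothesis, i.e., the statement for $n$ instantiated with $\bar C \eqdef C_\fid$, says precisely that the former with body-depth $\leq n$ is equivalent to the latter; and since $\depthF{\stt, \passign{\pvar y}{\fid(\vec\pexp)} \baction_{\fictx'} \stt'} = 1 + \depthF{(\sto_p,\hp), C_\fid \baction_{\fictx'} (\sto_q,\hp')}$, the constraint $\depthF{\cdot}\leq n+1$ on the left corresponds exactly to the body-depth constraint $\leq n$. This closes the nested induction.

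I expect the main obstacle to lie in the \texttt{while} case of the inductive step together with the simultaneous handling of the two directions: the \texttt{while} big-step derivation fans out into a sequence of body iterations plus a residual \texttt{while}, $\depthF{\cdot}$ is the maximum over all of these, and $\fcallsub{\cdot}{\cdot}$ must be pushed through this unfolding, so the inner induction has to be organised by derivation size rather than by the syntax of $\bar C$, and one must verify carefully that the iterated unfoldings on the two sides stay in step. Formulating the whole argument as a bisimulation on a common derivation measure, rather than proving $\Rightarrow$ and $\Leftarrow$ separately, is the cleanest way to keep this bookkeeping under control.
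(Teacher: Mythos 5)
Your proof is correct and follows essentially the same route as the paper's: induction on $n$, with the base case resting on the fact that $\mathtt{scope}$ over $\wts$ admits no terminating transitions, and the inductive step matching each call to $\fid$ of depth $\leq n+1$ with a $\mathtt{scope}$ whose body $\ghelp^{n+1}(\wts)$ is handled by the hypothesis at $n$. The only difference is presentational: you strengthen the statement to arbitrary $\bar C\in\Cmd$ and make the inner induction on derivations explicit, whereas the paper argues the same point informally, directly on $C_\fid$, in terms of the call/substitution sites reached along a terminating execution.
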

\begin{proof} By induction on $n$.

	\casex{Base Case: $n=0$.}

	\medskip
	\noindent
	"$\Rightarrow$": Let $\stt,C_\fid\baction_{\fictx'}\stt'~\land~\depthF{\stt,C_\fid\baction_{\fictx'}\stt'}=0$. Since $C_\fid\in\Cmd$, it cannot include the $\mathtt{scope}$ command. Since the depth is zero, the execution path does not reach a function call on $\fid$. Hence, this call may be replaced by any other command, including $\wts$, i.e.
	\[
	\begin{array}{c l}
		& \stt,\fcallsub{C_\fid}{\wts,\fictx',\fid}\baction_{\fictx'}\stt' \\
		\Leftrightarrow & \stt,\ghelp(\wts)\baction_{\fictx'}\stt' \\
	\end{array}
	 \]

	 \noindent
	 "$\Leftarrow$": Assuming
	 $\stt,\ghelp(wts)\baction_{\fictx'}\stt'
	 $, we obtain $\stt,\fcallsub{C_\fid}{\wts,\fictx',\fid}\baction_{\fictx'}\stt'
	 $.
	 Since $\wts$ does not terminate on any state, this implies that the execution does not reach the command $\wts$, that is, it does not reach any function call site of $\fid$ and, therefore, we obtain $\stt,C_\fid\baction_{\fictx'}\stt'$ and  $\depthF{\stt,C_\fid\baction_{\fictx'}\stt'}=0$.

	 \casex{Inductive Step.} Assume that the equivalence holds for some $n\in\Nat$, and the goal is then to prove that it holds for $n+1$.

	 \medskip
	 \noindent
	 "$\Rightarrow$": Let $\stt,C_\fid\baction_{\fictx'}\stt'~\land~\depthF{\stt,C_\fid\baction_{\fictx'}\stt'}\leq n+1$.
	 If $\depthF{\stt,C_\fid\baction_{\fictx'}\stt'}< n+1$, the inductive hypothesis yields $\stt, \ghelp^{n+1}(\wts) \baction_{\fictx}\stt'$,
	Since $\ghelp^{n+1}(\wts)$ implies that the execution terminates and therefore does not reach any instance of the command $\wts$, we may substitute $\wts$ for any other command, including $\hp(\wts)$. This yields $\stt, \hp^{n+2}(\wts) \baction_{\fictx}\stt'$.


	 Finally, let $\depthF{\stt,C_\fid\baction_{\fictx'}\stt'} = n+1$ and let $\passign{\pvar y}{\fid(\pvvar x)}$ be an arbitrary function call on $\fid$ reached by the execution.
	  That means that there exist states $\stt1$ and $\stt2$, such that the execution of the initial part of $C_\fid$ from $\stt$ up to that function call yields $\stt1$, that $\stt1,\passign{\pvar y}{\fid(\pvvar x)}\baction_{\fictx'}\stt2$, and that the execution of the remaining part of $C_\fid$ on $\stt2$ yields $\stt'$.
	 By instantiating the operational semantics of the function call with $\stt1$ and $\stt2$, we obtain $(\sto_p,\hp),C_\fid\baction_{\fictx'}(\sto_q,\hp')$, where, per definition of the $\mathtt{depth}$ function, $\depthF{(\sto_p,\hp),C_\fid\baction_{\fictx'}(\sto_q,\hp')}=n$.
	 The inductive hypothesis then implies $(\sto_p,\hp),\ghelp^{n+1}(\wts)\baction_{\fictx}(\sto_q,\hp')$. Therefore, $\stt, \fcallsub{C_\fid}{\ghelp^{n+1}(\wts), \fictx',\\fid}\baction_{\fictx'}\stt'$ and since this command does not call on $\fid$ and given the definition of $\ghelp$, we obtain $\stt,\ghelp^{n+2}(\wts)\baction_{\fictx}\stt'$.

	 \medskip\noindent
	 "$\Leftarrow$": Assume $\stt,\ghelp^{n+2}(\wts)\baction_{\fictx}\stt'$: that is, $\stt,\fcallsub{C_\fid}{\ghelp^{n+1}(\wts),\fictx',\fid}\baction_{\fictx}\stt'$.
	If there are no substitution sites in $C_\fid$, the desired goal is obtained trivially, as the substitution is vacuous and the considered depth is zero by definition.
	 Otherwise, consider an arbitrary substitution site in $C_\fid$ reached by the execution starting from $\stt$, and let the state before the substitution site be some $\stt1$. By instantiating the operational semantics of $\mathtt{scope}$, we obtain $\sto_p$, $\sto_q$, $\hp'$ and $\stt2$, such that $(\sto_p,\hp),\ghelp^{n+1}(\wts)\baction_{\fictx}(\sto_q,\hp')$ and the remaining part of $\fcallsub{C_\fid}{\ghelp^{n+1}(\wts),\fictx',\fid}$ executed on $\stt2$ terminates in $\stt'$.
	 Per the inductive hypothesis, we have that $(\sto_p,\hp),C_\fid\baction_{\fictx'}(\sto_q,\hp')$ and $\depthF{(\sto_p,\hp),C_\fid\baction_{\fictx'}(\sto_q,\hp')}\leq n$. This implies, given the operational semantics of function call and the definition of depth, that $\stt,C_\fid\baction_{\fictx'}(\sto_q,\hp')$ and $\depthF{\stt,C_\fid\baction_{\fictx'}\stt'}\leq n+1$, which concludes the proof.
\end{proof}
With this in place, we can prove Lemma~\ref{lem:lastscott}, concluding the overall proof:
\begin{proof}[Proof of Lemma~\ref{lem:lastscott}]
	We know that the least fixpoint of $g$ obeys the following identity: 
	\begin{equation}\label{lfpid}
		\lfp{g} = \bigsqcup\limits_{n\in\Nat}g^n(\lepr) = \repr{ \bigsqcup\limits_{n\in\Nat} \ghelp^n(\wts) }
	\end{equation}

	To prove the statement of the lemma, we will show that
	$$
	\bigsqcup\limits_{n\in\Nat} \ghelp^n(\wts) \eqccmd C_\fid
	$$

	\noindent
	"$\Rightarrow$": Assume $\stt,\stt'\in\State$ such that $\stt,\bigsqcup\limits_{n\in\Nat} \ghelp^n(\wts)\baction_{\fictx'}\stt'$. Therefore, there exists an $n\in\Nat$ such that $\stt,\ghelp^{n}(\wts)\baction_{\fictx'}\stt'$.
	Since the command $\wts$ does not terminate on any state and since $\ghelp^0(\wts)=\wts$, $n$ must be strictly positive. Lemma \ref{lem:depth} then implies that $\stt,C_\fid\baction_{\fictx'}\stt'$, which concludes the proof.

	\medskip\noindent
	"$\Leftarrow$": Assuming $\stt,\stt'\in\State$ such that $\stt,C_\fid\baction_{\fictx'}\stt'$, we know that $C_\fid$ terminates when executed on $\stt$, and therefore has a finite execution depth: that is, $\depthF{\stt,C_\fid\baction_{\fictx'}\stt'}=n$ holds for some $n\in\Nat$.
	Lemma \ref{lem:depth} then implies $\stt,\ghelp^{n+1}(\wts)\baction_{\fictx}\stt'$, and therefore
	$\stt,\bigsqcup\limits_{n\in\Nat}\ghelp^{n}(\wts)\baction_{\fictx}\stt'$.

	This yields $C_\fid\in\lfp{g}$.
\end{proof}

Theorem \ref{theorem:scottinduction} and
Lemmas \ref{lem:scottcondition} and \ref{lem:lastscott} finally fnimply that $\repr{C_\fid}\in\oneS$.

\subsection{n-dimensional Scott Instantiation}%
\label{sec:ndimscott}

The instantiation presented so far suffices to prove soundness for recursive functions which potentially have both terminating and non-terminating specifications. However, we wish to allow clusters of {\it mutually} recursive functions, and hence require a more general instantiation of the Scott induction.
In particular, given an environment $(\fictx,\fsctx)$, we add on a mutually recursive cluster $F:=(\fid_1,...,\fid_n)$ of $n$ functions.

We assume the following:

\begin{description}
\item[(B1)] a valid environment, $\models(\fictx,\fsctx)$;
\item[(B2)] a set of $n$ functions $\fid_i(\pvvar x_i) \{ C_i; \preturn{\pexp_i}\}$ with $i\in I\defeq\{1,...,n\}$ that is not in the domain of $\fictx$;
\item[(B3)] an arbitrary element $\alpha\in\ord$;
\item[(B4)]
a set of terminating (external) specifications for each $\fid_i$, $\{ \uspeconecase{\tprei(\beta)}{\bigq^i(\beta)} ~|~ \beta<\alpha \}$;
\item[(B5)] a set of non-terminating (external) specifications for each  $\fid_i$, $\{ \uspeconecase{\ntprei(\beta)}{\AssFalse} ~|~ \beta\leq\alpha \}$;
\item[(B6)] an extension of $\fictx$: $\fictx' \defeq \fictx[\fid_i\mapsto(\pvvar x_i, C_i, \pexp_i)]_{i\in I}$; and
\item[(B7)] an extension of $\fsctx$ with the given specifications: $$\fsctx(\alpha) \defeq \fsctx[\fid_i\mapsto\{\uspeconecase{\tprei(\beta)}{\bigq^i(\beta)} ~|~ \beta<\alpha\}\cup\{(\uspeconecase{\ntprei(\beta)}{\AssFalse} ~|~ \beta\leq\alpha\}]_{i\in I}$$
\end{description}

\medskip
{\it Note on Notation.} As most elements we will be using in this section are $n$-tuples of commands, or chains of such $n$-tuples, we use the following notation:
\begin{itemize}
	\item $C_i$: a subscript $i$ denotes that the commands $C_i$ is the implementation of the function $\fid_i$, as recorded in the associated function implementation context $\fictx'$,
	\item $C^i$:
	a superscript $i$ denotes the $i$-th component of an $n$-tuple $C\in\nccmd$,
	\item $C(i)$: an index $i$ denotes the $i$-th element of a chain (monotonically increasing sequence) $\chain{C}{m}$.
\end{itemize}

\begin{lemma}[n-Dimensional Domain]
	Given the domain $(\ccmdpr,\poccmd)$, we lift the equivalence relation, partial order,  and the join operator of $\ccmdpr$ to elements of $\ccmdpr^n$ as follows:
	$$
	\begin{array}{l}
	C\neqv \bar C
	~\Longleftrightarrow~
	\frall{i\in I} C^i\eqccmd \bar C^i \\
	C\npo \bar C
	~\Longleftrightarrow~
	\frall{i\in I} C^i\poccmd \bar C^i \\
	C\njoin \bar C := \left( \begin{array}{c}
		C^1\joinccmd \bar C^1 \\
		\vdots \\
		C^n\joinccmd \bar C^n
	\end{array}\right)
	\end{array}
	$$
\end{lemma}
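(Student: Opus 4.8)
The statement to prove is the "n-Dimensional Domain" lemma, which asserts that lifting the equivalence relation $\eqccmd$, the partial order $\poccmd$, and the join operator $\joinccmd$ of $\ccmdpr$ componentwise to $\ccmdpr^n$ again yields a domain.

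The plan is as follows. First I would verify that $\neqv$ is an equivalence relation on $\ccmdpr^n$: reflexivity, symmetry, and transitivity all follow immediately and componentwise from the fact that $\eqccmd$ is an equivalence relation on $\ccmdpr$, established earlier in the excerpt. Next I would check that $\npo$ is a well-defined partial order on $\ccmdpr^n$ (with respect to $\neqv$): reflexivity and transitivity are inherited componentwise from $\poccmd$, and antisymmetry (modulo $\neqv$) holds because $C \npo \bar C$ and $\bar C \npo C$ give $C^i \poccmd \bar C^i$ and $\bar C^i \poccmd C^i$ for every $i$, hence $C^i \eqccmd \bar C^i$ for every $i$, i.e. $C \neqv \bar C$.

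Then I would establish the two defining properties of a domain. For the least element (D1), I would take $\bot^n := (\lepr, \ldots, \lepr)$, the $n$-tuple each of whose components is the least element $\lepr = \repr{\wts}$ of $\ccmdpr$; for any $C \in \ccmdpr^n$ we have $\lepr \poccmd C^i$ for every $i$, hence $\bot^n \npo C$. For chain-closedness (D2), given a chain $\chain{C}{m}$ in $\ccmdpr^n$, I would argue that each of its $n$ projections $(C(m)^i)_{m\in\Nat}$ is a chain in $\ccmdpr$ (since $C(m) \npo C(m+1)$ unfolds to $C(m)^i \poccmd C(m+1)^i$ componentwise), so each projection has a supremum $\bigjoinccmd_{m} C(m)^i$ in $\ccmdpr$ by chain-closedness of $(\ccmdpr,\poccmd)$; I would then show that the componentwise tuple of these suprema, which is exactly $\njoin_m C(m)$ by the definition of the lifted join, is the least upper bound of the chain in $\ccmdpr^n$. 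This last point is a routine verification: it is an upper bound because each component dominates, and it is least because any tuple dominating the whole chain dominates each projection's supremum componentwise.

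The argument has no real obstacle — everything reduces to the componentwise structure and the already-proved domain property of $(\ccmdpr,\poccmd)$. The only point requiring a sentence of care is confirming that the join operator $\njoin$ as defined (componentwise application of $\joinccmd$, which was itself generalised to countably infinitely many equivalence classes in the standard way) genuinely computes suprema in $\ccmdpr^n$, so that the supremum of the lifted chain coincides with the tuple of componentwise suprema; this is immediate once one spells out that least upper bounds in a product order are computed componentwise. Thus the proof is essentially a short bookkeeping exercise that transfers each domain axiom from $\ccmdpr$ to $\ccmdpr^n$ one coordinate at a time.
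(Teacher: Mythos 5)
Your proposal is correct and matches the paper's treatment: the paper dismisses this lemma as trivial, noting only that the lifted structure inherits everything componentwise from $(\ccmdpr,\poccmd)$ with the least element being the tuple holding $\pwhile{\ptrue}{\pskip}$ in every component, which is exactly the componentwise bookkeeping you spell out (equivalence and order properties, least element, and chain-closedness via componentwise suprema).
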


The proof that these satisfy the appropriate properties is trivial with the least element $nle$ being the equivalence class of the $n$-tuple which holds $\pwhile{\ptrue}{\pskip}$ in every component.

\begin{lemma}[n-Dimensional Continuous G]
	The function $\nggx:\nqccmd\longrightarrow\nqccmd$, defined as
	$$
	\nggx(\repr{C}):= \repr{H(C)}
	$$
	where
	$$
	H(C):=\left(\begin{array}{c}
		\hp_1(C) \\
		\vdots \\
		\hp_n(C)
	\end{array}\right)
	$$
	and
	$$
	\hp_i:\ccmdplain^n\longrightarrow\ccmdplain, ~ \hp_i(C):=\fcallsub{C_i}{C,\fictx',F},
	$$
	is continuous.
\end{lemma}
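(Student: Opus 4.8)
The plan is to replay the proof of Lemma~\ref{lem:gcont} componentwise, exploiting that the order, equivalence and join on $\nqccmd$ are all defined componentwise. It suffices to verify the two defining conditions of continuity, (C1) monotonicity and (C2) supremum-preservation, for each underlying map $\hp_i : \ccmdplain^n \longrightarrow \ccmdplain$; monotonicity and supremum-preservation of $H$, and hence of $\nggx$, then follow immediately, since $H$ acts componentwise and the join on $\ccmdplain^n$ is taken componentwise, so that $\bigsqcup_m H(C(m))$ and $H(\bigsqcup_m C(m))$ are, in their $i$-th components, $\bigsqcup_m \hp_i(C(m))$ and $\hp_i(\bigsqcup_m C(m))$ respectively, for any chain $(C(m))_{m\in\Nat}$ in $\ccmdplain^n$.

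For (C1), I would fix $C \npo \bar C$, i.e. $C^j \poccmd \bar C^j$ for every $j$, and prove $\hp_i(C) = \fcallsub{C_i}{C,\fictx',F} \poccmd \fcallsub{C_i}{\bar C,\fictx',F} = \hp_i(\bar C)$ by structural induction on the finite command $C_i \in \Cmd$. Exactly as in Lemma~\ref{lem:gcont}, the function-call substitution is the identity outside the compound commands and the calls $\passign{\pvar y}{\fid_j(\vec{\pexp})}$; the compound-command cases (if-then, if-else, sequence, while) follow directly from the induction hypothesis as there, and a call to $\fid_j$, rewritten to $\pscope{\pvvar x_j,\vec{\pexp}}{C^j}{\pvar y,\pexp_j}$, is handled by combining $C^j \poccmd \bar C^j$ with Lemma~\ref{lem:scopefcallequiv}. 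The only change from the single-function case is that the command plugged in at a site now depends on which of $\fid_1,\dots,\fid_n$ is being called, i.e. on the relevant component $C^j$ of the tuple.

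For (C2), I would fix a chain $(C(m))_{m\in\Nat}$ in $\ccmdplain^n$ --- so each component sequence $(C^j(m))_{m\in\Nat}$ is a $\poccmd$-chain with $\bigsqcup_m C(m)$ equal componentwise to $(\bigsqcup_m C^j(m))_j$ --- and show $\hp_i(\bigsqcup_m C(m)) \eqccmd \bigsqcup_m \hp_i(C(m))$ for each $i$. The inclusion $\bigsqcup_m \hp_i(C(m)) \poccmd \hp_i(\bigsqcup_m C(m))$ is immediate from (C1): any behaviour of the left-hand side is a behaviour of some $\hp_i(C(m))$, and $C(m) \npo \bigsqcup_{m'} C(m')$ together with monotonicity of $\hp_i$ transfers it to the right-hand side. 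For the converse, I would take $\stt, \fcallsub{C_i}{\bigsqcup_m C(m),\fictx',F} \baction_{\fictx'} \outcome : \stt'$; since $C_i$ is a finite command and a terminating execution is a finite big-step derivation, only finitely many invocations of plugged-in commands occur, and each invocation at a $\fid_j$-site is a finite derivation of $\bigsqcup_m C^j(m)$ that, by the non-deterministic-choice semantics, selects some index. Taking $k$ to be the maximum of all (finitely many) such indices over all sites and all components, and using that each $(C^j(m))_m$ is a chain so that the selected command at any site is $\poccmd C^j(k)$, the same execution is realised by $\fcallsub{C_i}{C(k),\fictx',F}$; hence $\stt, \hp_i(C(k)) \baction_{\fictx'} \outcome : \stt'$ and therefore $\stt, \bigsqcup_m \hp_i(C(m)) \baction_{\fictx'} \outcome : \stt'$. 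Passing to equivalence classes and assembling the $n$ components then gives supremum-preservation of $\nggx$ on $\nqccmd$.

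The routine-but-fiddly point --- and the only place where the argument genuinely departs from the single-function Lemma~\ref{lem:gcont} --- is the bookkeeping in this converse inclusion: one must collect the choice-indices used at call sites of all $n$ mutually recursive functions simultaneously, take a single maximum $k$, and invoke the chain property separately in each component to dominate every site by $C^j(k)$; making ``finitely many invocations occur'' precise is cleanest via an induction on the structure of $C_i$ (or the depth measure used earlier). Everything else is a verbatim componentwise copy of the one-dimensional proof, so I do not expect any new conceptual difficulty.
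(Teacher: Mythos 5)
Your proposal is correct and follows essentially the same route as the paper: the paper also reduces continuity of $\nggx$ to componentwise monotonicity and supremum-preservation of each $\hp_i$, proved ``analogously'' to the one-dimensional Lemma~\ref{lem:gcont}, with exactly the structural induction on the finite $C_i$ and the max-index/chain argument over all call sites that you spell out. Your explicit bookkeeping for the $n$ components (a single maximum $k$ dominating the choice-indices at all $\fid_j$-sites) is precisely the content the paper leaves implicit in ``analogously''.
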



\begin{proof} Again, what needs to be proven is monotonicity and supremum-preservation.

	\myparagraph{Monotonicity} Analogously to the one-dimensional case, it is proven that $\hp_i$ is monotonic for all $i\in I$. This implies that $H$ and, therefore, $G$ is monotonic as well.

	\myparagraph{Supremum-Preservation} Given a chain $\chain{C}{m}$, we need to prove:
	$$
	\bigsqcup\limits_{m\in\Nat} \nggx(\repr{C(m)}) =
	\nggx\big( \bigsqcup\limits_{m\in\Nat} \repr{C(m)} \big)
	$$
	 Due to the definition of $\nggx$, we obtain for the left-hand side
	$$
	\bigsqcup\limits_{m\in\Nat}\nggx({\repr{C(m)}})
	= \repr{ \bigsqcup\limits_{m\in\Nat}H(C(m)) } $$
	and for the right-hand side
	$$
	\nggx\big( \bigsqcup\limits_{m\in\Nat} \repr{C(m)} \big)
	= \repr{H(\bigsqcup\limits_{m\in\Nat} C(m) )}
	$$
	It therefore suffices to show the equivalence of $\bigsqcup\limits_{m\in\Nat}H(C(m))$ and $H(\bigsqcup\limits_{m\in\Nat} C(m) )$, i.e. for all $i\in I$:
	$$
	\bigsqcup\limits_{m\in\Nat} \nfcallsub{C_i}{C(m),\fictx',F} \eqccmd \nfcallsub{C_i}{~\sqcup (C(m) | m\in\Nat), \fictx', F~}
	$$
	This is proven analogously to the continuity in lemma \ref{lem:gcont}.
\end{proof}

\begin{lemma}[Admissible Set]
	Defining $\nSi$ analogous to lemma \ref{lem:oneS} as
	\[
	\begin{array}{l}
	\nSi := \{ \repr{\bar C}\in\ccmdpr ~|~ \exsts{C\in\repr{\bar C}}
\frall{ t\in(\fsctx(\alpha))(\fid_i)}
\exsts{\uspeconecase{P}{\bigq}\in\fext_{\fictx',\fid_i}(t)} \fsctx\models\triple{P}{C}{\bigq}
	\}
	\end{array}
	\]
	then the set $\nS$, defined as
	$$
	\nS := \prod\limits_{i\in I} \nSi
	$$
	is an admissible subset of $(\nqccmd,\npo)$.
\end{lemma}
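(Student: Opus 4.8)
The plan is to establish admissibility componentwise. By the lemma giving the $n$-dimensional domain $(\nqccmd,\npo)$, its order, its join operator, and its least element $nle$ are all obtained by taking the corresponding structure on $(\ccmdpr,\poccmd)$ in each of the $n$ coordinates, and by hypothesis $\nS=\prod_{i\in I}\nSi$. Hence it suffices to show that each $\nSi$ is an admissible subset of $(\ccmdpr,\poccmd)$: then $nle\in\nS$ because $\lepr\in\nSi$ for every $i\in I$, and $\nS$ is chain-closed because a chain $(\repr{C(m)})_{m\in\Nat}\subseteq\nS$ projects in each coordinate $i$ to a chain in $\nSi$ whose supremum lies in $\nSi$, while suprema in $\nqccmd$ are themselves formed coordinatewise. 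The proof that a fixed $\nSi$ is admissible is a near-verbatim copy of the argument of Lemma~\ref{lem:oneS} for $\oneS$, with the single change that function-call substitution now ranges over the whole cluster $F=(\fid_1,\dots,\fid_n)$ rather than over a single function; this is harmless here because the representative commands witnessing membership in $\nSi$ call no function from $F$ (as $\fsctx$ holds no specifications for these functions), so every step of the internalisation and disjunction bookkeeping carries over unchanged.

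Concretely, for the least-element condition I would use $\lepr=\repr{\wts}$: since $\wts=\pwhile{\ptrue}{\pskip}$ never terminates it vacuously OX-satisfies every quadruple, so in particular $\fsctx\models\triple{P}{\wts}{\bigq}$ holds for each $t\in(\fsctx(\alpha))(\fid_i)$ and each internal specification $\uspeconecase{P}{\bigq}\in\fext_{\fictx',\fid_i}(t)$, giving $\lepr\in\nSi$. For chain-closedness, given a chain $(\repr{C'(m)})_{m\in\Nat}\subseteq\nSi$, I would extract a chain of representatives $(C(m))_{m\in\Nat}\subseteq\ccmdplain$ such that, for every $t=\uspeconecase{P}{\bigq}\in(\fsctx(\alpha))(\fid_i)$, there is $\uspeconecase{P_m}{\bigq_m}\in\fext_{\fictx',\fid_i}(t)$ with $\fsctx\models\triple{P_m}{C(m)}{\bigq_m}$; by definition of $\fext$ every $P_m$ equals the fixed assertion $P\lstar\pvvar z\doteq\nil$ (with $\pvvar z=\pv{C_i}\setminus\pv{P}$). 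Since OX-validity is preserved by non-deterministic choice of commands and by disjunction of post-conditions, $\bigsqcup(C(m)\mid m\in\Nat)$ satisfies $\quadruple{P\lstar\pvvar z\doteq\nil}{\bigsqcup(C(m)\mid m\in\Nat)}{\bigvee_m\Qok^m}{\bigvee_m\Qerr^m}$, and it then remains to check that $\uspeconecase{P\lstar\pvvar z\doteq\nil}{\bigvee_m\bigq_m}$ is again an internalisation of $t$ — exactly as in Lemma~\ref{lem:oneS}, this follows because $\exists$ distributes over a countable disjunction and the return equation $\pvar{ret}\doteq\pexp_i[\vec p/\pvvar p]$ is independent of $m$, yielding $\Qok\Leftrightarrow\exists\vec p.\,(\bigvee_m\Qok^m)[\vec p/\pvvar p]\lstar\pvar{ret}\doteq\pexp_i[\vec p/\pvvar p]$ and $\Qerr\Leftrightarrow\exists\vec p.\,(\bigvee_m\Qerr^m)[\vec p/\pvvar p]$. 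Hence $\repr{\bigsqcup(C(m)\mid m\in\Nat)}\in\nSi$, so $\bigsqcup(\repr{C'(m)}\mid m\in\Nat)\in\nSi$.

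The main obstacle is the internalisation-closure step in the chain-closedness case: verifying that a countable disjunction of internalisations of $t$ is again an internalisation of $t$. This is the only genuinely non-routine point, and it works precisely because $\fext_{\fictx',\fid_i}$ is parameterised by data that is fixed by $\fid_i$'s body — the locals $\pvvar z$, the tuple $\vec p$ corresponding to $\pv{P}\uplus\{\pvvar z\}$, and the return expression $\pexp_i$ — rather than by the particular command $C(m)$, so the $\pvar{ret}$-equation can be pulled out of the disjunction. Everything else (the least-element check, the reduction to components, and the chain projection) is immediate from the componentwise definitions and from the facts already established for the one-dimensional instantiation.
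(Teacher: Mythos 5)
Your proof is correct and follows essentially the same route as the paper: reduce admissibility of $\nS$ to the components, use the one-dimensional least element $\lepr=\repr{\wts}$, and obtain chain-closedness of each $\nSi$ by the argument of Lemma~\ref{lem:oneS} applied to $\fid_i$, with componentwise suprema giving the result for $\nS$. The only difference is that you spell out the re-run of the Lemma~\ref{lem:oneS} argument (choice, disjunction of post-conditions, and closure of internalisations under countable disjunction) where the paper simply cites that lemma analogically.
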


\begin{proof}

	\casex{Least Element.} The one-dimensional least element $\lepr$, which is equivalent to the command
	$\pwhile{\ptrue}{\pskip}$, trivially semantically satisfies any over-approximating quadruple. We therefore have $\nle\in\nS$.

	\casex{Chain-Closure.} Assume a chain $\chain{\repr{\bar C}}{m}\subseteq\nS$. Then, $\chain{\repr{\bar C^i}}{m}$ is a chain in $\nSi$ for all $i\in I$. Since $\nSi$ is chain-closed (Lemma \ref{lem:oneS}), we have $\bigsqcup (\chain{\repr{\bar C^i}}{m} | m\in\Nat)\in\nSi$. The desired result is then obtained by applying the definition of $\bigsqcup$.
%
\end{proof}

\begin{lemma}[n-Scott Condition]\label{lem:nscottcond}
Under the assumption that for all $i\in I$:
	%
	%
	$$\frall{t\in(\fsctx(\alpha))(\fid_i)}\exsts{t'\in\fext_{\fictx',\fid_i}(t)}
	\fsctx(\alpha)\vdash C_i:t'$$
and
	$$\exsts{t'\in\fext_{\fictx',\fid_i}(\uspeconecase{\tprei(\alpha)}{\bigq^i(\alpha)})}
	\fsctx(\alpha)\vdash C_i:t'$$
it holds that
	$$ \nggx(\nS)\subseteq\nS$$
\end{lemma}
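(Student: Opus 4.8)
The plan is to mimic the proof of Lemma~\ref{lem:scottcondition}, lifting it componentwise to the cluster $F = (\fid_1, \dots, \fid_n)$. Fix $\repr{C} \in \nS = \prod_{i \in I} \nSi$, so $C = (C^1, \dots, C^n)$ with $\repr{C^j} \in \nSi[j]$ for each $j \in I$. By the definition of $\nSi[j]$ and the fact that $\fsctx$ carries no specification for any $\fid_j$, each representative $C^j$ makes no call to a function in $F$, and hence neither does $\fcallsub{C_i}{C, \fictx', F}$. To obtain $\nggx(\repr{C}) = \repr{H(C)} \in \nS$ it therefore suffices to show, for each fixed $i \in I$, that $\repr{\fcallsub{C_i}{C, \fictx', F}} \in \nSi$, i.e.\ that for every $t \in (\fsctx(\alpha))(\fid_i)$ there is an internal specification $\uspeconecase{P}{\bigq} \in \fext_{\fictx', \fid_i}(t)$ with $\fsctx \models \triple{P}{\fcallsub{C_i}{C, \fictx', F}}{\bigq}$.

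The core of the argument is a general claim proved by induction on the structure of the body $C_i$: for all $P$ and $\bigq$,
\[
\fsctx(\alpha) \vdash \utripleq{P}{C_i}{\bigq} \;\implies\; \fsctx \models \triple{P}{\fcallsub{C_i}{C, \fictx', F}}{\bigq}.
\]
The cases split exactly as in the one-dimensional proof. For basic commands and for calls to functions outside $F$, the substitution is the identity and the $\fsctx(\alpha)$-derivation uses no specification of any $\fid_j$; it therefore descends to a $\fsctx$-derivation, which is semantically valid by Theorem~\ref{logicsoundness} together with the hypothesis $\models(\fictx,\fsctx)$ from (B1). The compound-command rules (\prooflab{if-then}, \prooflab{if-else}, \prooflab{seq}, \prooflab{while}) and the structural rules (\prooflab{equiv}, \prooflab{exists}, \prooflab{frame}, \prooflab{disj}) are discharged directly by the inductive hypothesis, with the \prooflab{while} case again appealing to the operational semantics of the loop as in RHL~\cite{reverselogic}.

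The interesting case is a call $\passign{\pvar y}{\fid_j(\vec \pexp)}$ to some $\fid_j \in F$, where the substitution produces $\pscope{\pvvar x_j, \vec \pexp}{C^j}{\pvar y, \pexp_j}$. Here the \prooflab{fun-call} rule forces $P = (\pvar y \doteq \lexp_y \lstar \vec \pexp \doteq \vec x \lstar P^\ast)$ with $\uspeconecase{\pvvar x_j \doteq \vec x \lstar P^\ast}{\bigq} \in \admiss{(\fsctx(\alpha))(\fid_j)}$. Unfolding the operational semantics of $\mathtt{scope}$ reduces the goal to running $C^j$ from the store that initialises the parameters to $\vec \pexp$ and the locals to $\nil$. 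Since $\repr{C^j} \in \nSi[j]$, the command $C^j$ semantically satisfies, under $\fsctx \models$, some internal specification $\uspeconecase{P'}{\bigq'} \in \fext_{\fictx', \fid_j}(\uspeconecase{\pvvar x_j \doteq \vec x \lstar P^\ast}{\bigq})$; as any implementation context validating $\fsctx(\alpha)$ also validates $\fsctx$ (because $\fsctx$ is a submap of $\fsctx(\alpha)$ on the functions outside $F$), this transfers to the relevant $\overline{\fictx}$, and feeding the $\mathtt{scope}$-initialised state through it, together with the fact that $\bigq'$ is an internalisation of $\bigq$ (so $\bigq$ is recovered once the scope returns), yields $\fsctx \models \triple{P}{\pscope{\pvvar x_j, \vec \pexp}{C^j}{\pvar y, \pexp_j}}{\bigq}$. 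Instantiating the general claim with the witnesses $t'$ supplied by the two hypotheses of the lemma then gives $\repr{\fcallsub{C_i}{C, \fictx', F}} \in \nSi$ for every $i$, hence $\nggx(\nS) \subseteq \nS$.

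I expect the main obstacle to be the bookkeeping in this mutually-recursive call case, where one must match the specification drawn from $\admiss{(\fsctx(\alpha))(\fid_j)}$ — possibly after admissible transformations of the external specification — with one for which $\nSi[j]$ actually guarantees a representative of $C^j$ satisfying an internalisation. This is handled by observing that the admissibility rules are validity-preserving (cf.\ Lemma~\ref{lem:admiss}), so that $\nSi[j]$ may be taken (or shown to be) closed under them, or by pushing each admissible step back across the internalisation. All remaining cases are routine liftings of the single-function argument of Lemma~\ref{lem:scottcondition}.
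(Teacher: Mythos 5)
Your proposal matches the paper's proof: the paper likewise reduces the lemma to the general claim $\fsctx(\alpha) \vdash \utripleq{P}{C_i}{\bigq} \implies \fsctx \models \triple{P}{\nfcallsub{C_i}{C,\fictx',F}}{\bigq}$, proved by structural induction on $C_i$ exactly as in Lemma~\ref{lem:scottcondition} (with the $\mathtt{scope}$/call-to-$\fid_j$ case discharged via the definition of $\nSi$ and internalisation, and admissibility handled as in Lemma~\ref{lem:admiss}), and then instantiates with the existentially quantified $t'$ from the hypotheses. So this is essentially the same argument, only spelled out in more detail than the paper's componentwise appeal to the one-dimensional case.
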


\begin{proof}
	Assume $\repr{\bar C}\in\nS$, i.e. $\repr{\bar C^i}\in\nSi$, then for all $i\in I$ there exists a $C^i\in\repr{\bar C^i}$ such that
	$$
	\frall{ t\in(\fsctx(\alpha))(\fid_i)}
	\exsts{\uspeconecase{P}{\bigq}\in\fext_{\fictx',\fid_i}(t)}
	 \fsctx\models\triple{P}{C^i}{\bigq}
	$$
	We aim to show $\repr{H(C)}\in\nS$, i.e. for all $i\in I$
	$$
	\repr{\hp_i(C)}\in\nSi
	$$
	where $C:=(C^1,...,C^n)$. What we prove is the general claim
	$$
	\fsctx(\alpha) \vdash \utripleq{P}{C_i}{\bigq}
	~\Longrightarrow~
	\fsctx \models \triple{P}{\nfcallsub{C_i}{C,\fictx',F}}{\bigq}
	$$
	which is shown by induction over the structure of $C_i$ analogously to the proof of Lemma \ref{lem:scottcondition}. Then, instantiation with the existentially quantified $t'$ yields the desired result.
\end{proof}

\begin{lemma}[Auxiliary lemma]\label{lem:nlaststepaux}
	For all $m\in\Nat$, $i\in I$ and $\stt,\stt'\in\State$ it holds that
	$$ \depthF{\stt,C_i\baction_{\fictx'}\stt'}\leq m \land \stt,C_i\baction_{\fictx'}\stt'
	~\Longleftrightarrow~
	\stt,\nfcallsub{C_i}{H^m(C),\fictx',F}\baction_{\fictx'}\stt'
	$$
\end{lemma}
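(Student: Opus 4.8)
The plan is to prove Lemma~\ref{lem:nlaststepaux} by induction on $m \in \Nat$, following closely the proof of Lemma~\ref{lem:depth} but accounting for the fact that the body $C_i$ of $\fid_i$ may call any function $\fid_j$ of the cluster $F$, not only $\fid_i$ itself. Throughout, $C$ denotes the $n$-tuple whose every component is the divergent loop $\wts$, i.e.\ a representative of $\nle$; note the identity $\nfcallsub{C_i}{H^m(C), \fictx', F} = \big(H^{m+1}(C)\big)^i$, and that since this command contains no calls to functions of $F$ (each having been replaced by a $\mathtt{scope}$ block) the choice between $\baction_{\fictx}$ and $\baction_{\fictx'}$ on the right-hand side is immaterial, as $\fictx$ and $\fictx'$ agree outside $F$.

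For the base case $m = 0$ we have $H^0(C) = C$, so $\nfcallsub{C_i}{C, \fictx', F}$ replaces each call $\passign{\pvar y}{\fid_j(\vec\pexp)}$ by $\pscope{\pvvar x_j, \vec\pexp}{\wts}{\pvar y, \pexp_j}$. Forward: if $\stt, C_i \baction_{\fictx'} \stt'$ with $\depthF{\stt, C_i \baction_{\fictx'} \stt'} \le 0$, then by definition of $\depthF{\cdot}$ the derivation reaches no call to any $\fid_j$, so replacing those unreached calls by $\mathtt{scope}$ blocks does not affect it, giving $\stt, \nfcallsub{C_i}{C, \fictx', F} \baction_{\fictx'} \stt'$. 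Backward: since $\wts$ diverges on every state, a terminating run of $\nfcallsub{C_i}{C, \fictx', F}$ cannot enter any of these $\mathtt{scope}$ blocks, hence it coincides with a run of $C_i$ that reaches no $F$-call and therefore has depth $0$.

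The inductive step assumes the equivalence for $m$ and proves it for $m+1$, using the $\mathtt{scope}$/function-call correspondence (Lemma~\ref{lem:scopefcallequiv}, read under $\fictx'$) and the componentwise monotonicity of $H$ (so that $C \npo H(C) \npo H^2(C) \npo \cdots$, which yields behaviour inclusion between $\nfcallsub{C_i}{H^m(C), \fictx', F}$ and $\nfcallsub{C_i}{H^{m+1}(C), \fictx', F}$). In the "$\Rightarrow$" direction, take $\stt, C_i \baction_{\fictx'} \stt'$ with depth $\le m+1$; if the depth is $\le m$, the inductive hypothesis together with monotonicity of $H$ gives $\stt, \nfcallsub{C_i}{H^{m+1}(C), \fictx', F} \baction_{\fictx'} \stt'$ directly, while if the depth is exactly $m+1$, then each nested call to some $\fid_j$ along the derivation spawns a sub-derivation $(\sto_p, \hp), C_j \baction_{\fictx'} (\sto_q, \hp')$ of depth $\le m$; applying the inductive hypothesis for the index $j$ turns this into $(\sto_p, \hp), \big(H^{m+1}(C)\big)^j \baction_{\fictx'} (\sto_q, \hp')$, which is exactly what the $\mathtt{scope}$ block at that site runs in $\nfcallsub{C_i}{H^{m+1}(C), \fictx', F}$, so stitching the sites back together yields $\stt, \nfcallsub{C_i}{H^{m+1}(C), \fictx', F} \baction_{\fictx'} \stt'$. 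The "$\Leftarrow$" direction is the mirror image: in a terminating run of $\nfcallsub{C_i}{H^{m+1}(C), \fictx', F}$, each $\mathtt{scope}$ block entered runs $\big(H^{m+1}(C)\big)^j = \nfcallsub{C_j}{H^m(C), \fictx', F}$, which by the inductive hypothesis corresponds to a run of $C_j$ of depth $\le m$; replaying these as genuine $\fid_j$-calls under $\fictx'$ reconstructs a derivation $\stt, C_i \baction_{\fictx'} \stt'$ of depth $\le 1 + m = m+1$.

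The main obstacle is bookkeeping rather than conceptual: one must be careful to apply the inductive hypothesis \emph{across} the cluster — at a call site to $\fid_j$ inside $C_i$ one invokes the hypothesis instance for $j$, not for $i$ — and to observe that a single application of the hypothesis at each call site suffices even though the $\mathtt{scope}$ body $\big(H^{m+1}(C)\big)^j$ is itself a fully substituted term containing further $\mathtt{scope}$ blocks with $H^m(C)$ bodies, since the hypothesis is stated for $\nfcallsub{C_j}{H^m(C), \fictx', F}$ as a whole and thus already absorbs all of that nesting in one step. Everything else — the treatment of compound commands, the "unreached calls may be replaced arbitrarily" observation, and the matching of $\mathtt{scope}$ and function-call semantics — carries over essentially verbatim from the one-dimensional Lemma~\ref{lem:depth}, the $n$-dimensional generalisation introducing only heavier indexing.
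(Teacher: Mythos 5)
Your proof is correct and follows essentially the same route as the paper, which simply proves this lemma ``analogously to Lemma~\ref{lem:depth}'', i.e.\ by induction on the depth bound using the $\mathtt{scope}$/function-call correspondence; your added care in applying the inductive hypothesis at index $j$ for calls to $\fid_j$, the identity $\nfcallsub{C_i}{H^m(C),\fictx',F}=(H^{m+1}(C))^i$, and the reading $C=\wts^n$ are exactly the intended $n$-dimensional adjustments (and your observation that $\baction_\fictx$ and $\baction_{\fictx'}$ coincide on the substituted commands tidies up a notational wrinkle the paper leaves implicit).
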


This lemma is proven analogously to lemma \ref{lem:depth}. We now proceed to the last step of the proof:

\begin{lemma}[n-Dimensional Last Step] \label{lem:nlaststep}
It holds that
$$
\left(\begin{array}{l}
	C_1 \\
	\vdots \\
	C_n
\end{array}\right)\in\lfp{G}
$$
\end{lemma}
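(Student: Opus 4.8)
The plan is to replay, in the $n$-dimensional setting, the argument used for Lemma~\ref{lem:lastscott}, now working in the domain $(\nqccmd,\npo)$, with the continuous map $G$ (that is, $\nggx$) in place of $g$, and with Lemma~\ref{lem:nlaststepaux} in place of Lemma~\ref{lem:depth}. Write $W$ for the $n$-tuple $(\wts,\ldots,\wts)$, so that $\repr{W}=\nle$ and, by the definitions of $G$ and $H$, $G^m(\nle)=\repr{H^m(W)}$ for every $m\in\Nat$. Since $G$ is continuous (hence $H$ is monotone) and $\nle\npo G(\nle)$ holds trivially, $(H^m(W))_{m\in\Nat}$ is a $\npo$-chain, so for each $i\in I$ the projection $(H^m(W))^i$ is a $\poccmd$-chain in $\ccmdpr$ and the joins below are well defined.

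The first step is to apply the least-fixpoint identity of Theorem~\ref{lem:lfpid}: $\lfp{G}=\bigsqcup_{m\in\Nat}G^m(\nle)=\repr{\bigsqcup_{m\in\Nat}H^m(W)}$, where, by the definition of $\njoin$, the join on the right is componentwise, that is, $\big(\bigsqcup_{m}H^m(W)\big)^i=\bigsqcup_{m}(H^m(W))^i$. It therefore suffices to prove, for each fixed $i\in I$, the behavioural equivalence $\bigsqcup_{m\in\Nat}(H^m(W))^i\eqccmd C_i$; from this, $\bigsqcup_{m}G^m(\nle)\neqv\repr{(C_1,\ldots,C_n)}$, which is exactly the claim $(C_1,\ldots,C_n)\in\lfp{G}$.

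For $\bigsqcup_{m}(H^m(W))^i\poccmd C_i$, I would take any $\stt,\stt'\in\State$ with $\stt,\bigsqcup_{m}(H^m(W))^i\baction_{\fictx'}\stt'$; the semantics of non-deterministic choice gives some $m$ with $\stt,(H^m(W))^i\baction_{\fictx'}\stt'$, and since $(H^0(W))^i=\wts$ never terminates we must have $m=k+1$ for some $k\in\Nat$. As $(H^{k+1}(W))^i=\hp_i(H^k(W))=\nfcallsub{C_i}{H^k(W),\fictx',F}$, the $\Leftarrow$ direction of Lemma~\ref{lem:nlaststepaux} (instantiated at $k$, with the tuple $W$) yields $\stt,C_i\baction_{\fictx'}\stt'$. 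For the converse $C_i\poccmd\bigsqcup_{m}(H^m(W))^i$, take $\stt,C_i\baction_{\fictx'}\stt'$; since $C_i\in\Cmd$ is a finite command and this derivation terminates, its recursion depth $\depthF{\stt,C_i\baction_{\fictx'}\stt'}$ equals some $m\in\Nat$, so the $\Rightarrow$ direction of Lemma~\ref{lem:nlaststepaux} gives $\stt,\nfcallsub{C_i}{H^m(W),\fictx',F}\baction_{\fictx'}\stt'$, that is, $\stt,(H^{m+1}(W))^i\baction_{\fictx'}\stt'$, and then $(H^{m+1}(W))^i\poccmd\bigsqcup_{m}(H^m(W))^i$ gives $\stt,\bigsqcup_{m}(H^m(W))^i\baction_{\fictx'}\stt'$. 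Combining the two inclusions proves the componentwise equivalence, and hence the lemma.

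The substantive content is all packaged into Lemma~\ref{lem:nlaststepaux}, the $n$-dimensional analogue of Lemma~\ref{lem:depth}, which is obtained by the same induction on the depth bound $m$ but now tracking all of the call sites $\fid_1,\ldots,\fid_n$ simultaneously: the base case uses that $\wts$ has no terminating behaviour, so a depth-$0$ derivation of $C_i$ cannot reach a call in $F$ and the substitution is behaviourally vacuous, while the inductive step instantiates the operational semantics of $\mathtt{scope}$ at each substitution site and applies the induction hypothesis to the sub-derivation executed inside the scope. Granting that lemma, the main thing to get right here is the off-by-one between $H^m$ and $H^{m+1}$ arising from $\hp_i(D)=\fcallsub{C_i}{D,\fictx',F}$, together with the fact that the componentwise join commutes with projection and that $\repr{\cdot}$ is compatible with $\njoin$ — all immediate from the definitions of $H$, $G$, and $\njoin$. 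So the main obstacle is really Lemma~\ref{lem:nlaststepaux} itself; the present argument on top of it is bookkeeping.
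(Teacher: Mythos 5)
Your proof is correct and takes essentially the same route as the paper: the least-fixpoint identity $\lfp{G}=\bigsqcup_{m}G^m(\nle)=\repr{\bigsqcup_{m}H^m(\wts^n)}$, reduction to componentwise behavioural equivalence, and Lemma~\ref{lem:nlaststepaux} to relate depth-bounded executions of $C_i$ to the iterates of $H$ applied to $\wts^n$. You in fact spell out both inclusions, whereas the paper's $n$-dimensional write-up only details the direction from a terminating execution of $C_i$ into the join (the converse being analogous to the one-dimensional Lemma~\ref{lem:lastscott}), so your argument is, if anything, slightly more complete.
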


\begin{proof}
	Given the known identities about least fixpoints and the definitions of $G$ and $H$, we obtain:
	$$
	\lfp{G}
	= \bigsqcup\limits_{m\in\Nat} G^m(\nle)
	= \bigsqcup\limits_{m\in\Nat} \repr{H^m(\wts^n)}
	= \repr{\bigsqcup\limits_{m\in\Nat} H^m(\wts^n)},
	$$
	where $\wts^n$ denotes the $n$-tuple whose every component is $\pwhile{\ptrue}{\pskip}$. For this proof, given a vector $v$, we write $(v)_i$ to denote its $i$-th component.
	%
	%
	%
%
 %
	 Assume $i\in I$ and $\stt,\stt'\in\State$ such that $\stt,C_i\baction_{\fictx'}\stt'$. This implies that the execution terminates and therefore there exists  $m > 0$ such that $\depthF{\stt,C_i\baction_{\fictx'}\stt'}\leq m$. Lemma \ref{lem:nlaststepaux} then yields
	 $$
	 \begin{array}{r l}
	 & \stt,\nfcallsub{C_i}{H^{m-1}(\wts^n)}\baction_{\fictx} \stt' \\
	 \Rightarrow & \stt,(H^m(\wts^n))_i\baction_{\fictx} \stt' \\
	 \Rightarrow & \stt,\bigsqcup\limits_{m\in\Nat}(H^m(\wts^n))_i\baction_{\fictx} \stt' \\
	 \Rightarrow & \stt,\big(\bigsqcup\limits_{m\in\Nat}H^m(\wts^n)\big)_i\baction_{\fictx} \stt' \\
	 \end{array}
	 $$
	 which concludes the proof.
\end{proof}

Theorem \ref{theorem:scottinduction} and Lemmas \ref{lem:nscottcond} and \ref{lem:nlaststep} then imply for all $i\in I$ that $\repr{C_i}\in\nSi$.

\newpage
\section{Soundness: Environment Formation}\label{apdx:envsound}

To prove Theorem \ref{thm:envsound}, we first prove the following, slightly weaker, lemma:

\begin{lemma}\label{fictxalpha}
	Assume an environment $(\fictx,\fsctx)$, a finite set of indices $I=\{1,...,n\}$ and
	let $\fictx' = \fictx[\fid_i\mapsto (\vec{\pvar x_i}, C_i, \pexp_i) | i \in I]$,
	with $\fid_i \not\in \dom(\fictx)$ for all $i \in I$.
	Then, for any ordinal $\alpha$, defining
	$$
		\fsctx(\alpha) = \fsctx[\fid_i\mapsto
				\{ \uspeconecase{\tprei(\beta)}{\mathcal{Q}_{i}(\beta)} \mid {\beta < \alpha}\}\cup\{\uspeconecase{\ntprei(\beta)}{\AssFalse} \mid {\beta \leq \alpha}\} \mid i \in I ],
	$$
	and assuming that
 \begin{equation}\label{validforall}
	\frall{i\in I, \alpha} \exsts{t'\in \fext_{\fictx', \fid_i}(\uspeconecase{\tprei(\alpha)}{\mathcal{Q}^{i}(\alpha)})}
	\fsctx(\alpha) \vdash C_i:t'
 \end{equation}
 and
 \begin{equation}\label{validforallntx}
	\frall{i\in I, \alpha} \exsts{t' \in \fext_{\fictx', \fid_i}(\uspeconecase{\ntprei(\alpha)}{\AssFalse})}
	\fsctx(\alpha) \vdash C_i:t'
 \end{equation}
it holds that
	\begin{equation*}
		\models(\fictx,\fsctx) \implies (\forall \alpha. \models(\fictx',\fsctx(\alpha))).
	\end{equation*}
\end{lemma}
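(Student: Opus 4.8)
The plan is to argue by transfinite induction on $\alpha$: we fix $\alpha$, assume $\models(\fictx,\fsctx)$ together with the induction hypothesis $\models(\fictx',\fsctx(\beta))$ for every $\beta<\alpha$ (this single hypothesis subsumes the zero, successor, and limit cases), and establish $\models(\fictx',\fsctx(\alpha))$. By Definition~\ref{def:validenv}, the domain inclusion $\dom(\fsctx(\alpha))\subseteq\dom(\fictx')$ is immediate from the construction, so it suffices to produce, for every $f(\vec{\pvar x})\{C;\preturn{\pexp}\}\in\fictx'$ and every $t\in\fsctx(\alpha)(f)$, an internal specification $t'\in\fext_{\fictx',f}(t)$ with $\fictx'\models C:t'$. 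We split this into three families of obligations: the old functions $f\in\dom(\fictx)$; the new functions' terminating and non-terminating specifications of measure \emph{strictly below} $\alpha$; and the new functions' non-terminating specifications of measure \emph{exactly} $\alpha$.

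For an old function $f\in\dom(\fictx)$ we have $\fictx'(f)=\fictx(f)$, hence $\fext_{\fictx',f}=\fext_{\fictx,f}$, and $\models(\fictx,\fsctx)$ already supplies $t'\in\fext_{\fictx,f}(t)$ with $\fictx\models C:t'$; this lifts to $\fictx'\models C:t'$ because extending the implementation context with fresh functions leaves the big-step behaviour of $C$ (and of everything it transitively calls) untouched. For a new function $\fid_i$ and a terminating specification $\uspeconecase{\tprei(\beta)}{\mathcal{Q}^i(\beta)}$ with $\beta<\alpha$, instantiating~(\ref{validforall}) at $\beta$ yields $t'\in\fext_{\fictx',\fid_i}(\uspeconecase{\tprei(\beta)}{\mathcal{Q}^i(\beta)})$ with $\fsctx(\beta)\vdash C_i:t'$; Theorem~\ref{logicsoundness} then gives $\fsctx(\beta)\models C_i:t'$, and since $\models(\fictx',\fsctx(\beta))$ holds by the induction hypothesis, $\fictx'\models C_i:t'$ follows by Definition~\ref{def:Gammavalid}. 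The same argument with~(\ref{validforallntx}) in place of~(\ref{validforall}) settles every non-terminating specification $\uspeconecase{\ntprei(\beta)}{\AssFalse}$ with $\beta<\alpha$; the self-referential calls that such a derivation may make (to $\ntprej(\gamma)$ with $\gamma\le\beta$) do no harm, since all of those specifications belong to $\fsctx(\beta)$, which the induction hypothesis validates wholesale.

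The remaining obligation --- and the crux of the proof --- is the non-terminating specification $\uspeconecase{\ntprei(\alpha)}{\AssFalse}$ of the top measure, simultaneously for all $i\in I$: here~(\ref{validforallntx}) at $\alpha$ provides a derivation $\fsctx(\alpha)\vdash C_i:t'$ that may call $\fid_j$ under the very specification $\ntprej(\alpha)$, so the induction hypothesis no longer applies. Since $\ntprei(\alpha)$ and its internalisation both have post-condition $\AssFalse$, they are vacuously UX-valid, and only OX-validity needs to be shown; for this we invoke the $n$-dimensional Scott induction of~\S\ref{sec:ndimscott} with the data~(B1)--(B7) for the current $\alpha$. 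Concretely: the lifted operator $G$ on $(\nqccmd,\npo)$ is continuous and $\nS=\prod_{i\in I}\nSi$ is an admissible subset; the Scott condition $G(\nS)\subseteq\nS$ is Lemma~\ref{lem:nscottcond}, whose two hypotheses follow from~(\ref{validforall}) and~(\ref{validforallntx}) together with monotonicity of $\vdash$ in the specification context (which promotes each $\fsctx(\beta)\vdash C_i:t'$ with $\beta<\alpha$ to $\fsctx(\alpha)\vdash C_i:t'$, as $\fsctx(\beta)(\fid_i)\subseteq\fsctx(\alpha)(\fid_i)$); and $(C_1,\dots,C_n)\in\lfp{G}$ is Lemma~\ref{lem:nlaststep}. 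Theorem~\ref{theorem:scottinduction} then delivers $\repr{C_i}\in\nSi$ for each $i$, i.e.\ $C_i$ OX-satisfies, w.r.t.\ $\fsctx$, some internalisation of each external specification of $\fid_i$ recorded in $\fsctx(\alpha)$; combining with $\models(\fictx',\fsctx)$ (which holds by $\models(\fictx,\fsctx)$ and the weakening remark above) upgrades this to $\fictx'\models$ those internal specifications over-approximately, and adjoining the vacuous UX direction for the $\ntprei(\alpha)$ completes the induction. The main obstacle is exactly this last part: setting up the greatest-fixpoint closure $\ccmdpr$, the behavioural-equivalence quotient, and the $\mathtt{scope}$ pseudo-command, and tailoring the admissible set $\nS$ so that membership of the tuple of function bodies yields precisely the OX-validity required --- plus the $n$-dimensional bookkeeping underlying Lemmas~\ref{lem:nscottcond} and~\ref{lem:nlaststep}.
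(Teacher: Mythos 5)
Your proof is correct and follows essentially the same route as the paper's: transfinite induction on $\alpha$, with old functions and lower-measure specifications handled via $\models(\fictx,\fsctx)$, the hypotheses (\ref{validforall})/(\ref{validforallntx}), Theorem~\ref{logicsoundness} and the induction hypothesis, and the top-measure non-terminating specification discharged by the $n$-dimensional Scott induction (Lemmas~\ref{lem:nscottcond} and~\ref{lem:nlaststep} with Theorem~\ref{theorem:scottinduction}). The only differences are presentational: you use a single strong induction hypothesis where the paper spells out the zero, successor and limit cases separately, and you make explicit the weakening of $\vdash$ under specification-context extension that the paper leaves implicit.
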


\begin{proof}
    By transfinite induction on the ordinal $\alpha$, reasoning about the zero, successor and limit-ordinal cases.

    \myparagraph{Zero case}
    When $\alpha=0$, by definition we have $\fsctx(0)=\fsctx[\fid_i\mapsto\{\uspeconecase{\ntprei}{\AssFalse}\} | i\in I]$. Let $\fid\in\dom(\fictx')$: then,
    if $\fid\neq \fid_i, \forall i\in I$ and for any $t\in(\fsctx(0))(\fid)$, it holds that $t\in\fsctx(\fid)$ and from $\models(\fictx,\fsctx)$ we obtain the existence of a $t'\in\fext_{\fictx',\fid}(t)$ such that $\models C:t'$.

	Otherwise, $\fid = \fid_i$ for some $i\in I$, in which case $(\fsctx(0))(\fid)$ is a singleton set only holding the specification $t=\uspeconecase{\ntprei(0)}{\AssFalse}$.
	We instantiate the $n$-dimensional Scott induction of \S\ref{sec:ndimscott} with $\alpha=0$, which yields
	$$
	\frall{i\in I}
	\repr{C_i}\in\nSi
	$$
	that is, there exists a $C\in\repr{C_i}$ such that
	$$
	\fsctx\models \utripleq{\ntprei(0)}{C}{\AssFalse}
	$$
	The assumption $\models(\fictx,\fsctx)$ implies $\fictx\models\utripleq{\ntprei(0)}{C}{\AssFalse}$. Therefore, $C$ only calls on functions from $\dom(\fictx)$, allowing us to arbitrarily extend the domain of $\fictx$, yielding $\fictx'\models\utripleq{\ntprei(0)}{C}{\AssFalse}$, that is:
	\[
    \begin{array}{l}
\quad \forall \subst, \sto, \hp, \hp_\fid, \outcome, \sto', \hp''.~\subst, \sto, \hp \models \ntprei(0)\lstar\pvvar z\doteq\nil \implies \\
\qquad  (\sto, \hp \uplus \hp_\fid), \scmd \baction_{\fictx'} \outcome: (\sto', \hp'') \implies \\
    \qquad\quad
    \outcome \neq \oxm \land (\exists \hp'.~\hp'' = \hp' \uplus \hp_\fid \land \subst, \sto', \hp' \models \AssFalse)
    \end{array}
\]
	As $C$ and $C_i$ are behaviourally equivalent, we trivially obtain the same statement for $C_i$, i.e.
	$$
	\fictx'\models\utripleq{\ntprei(0)}{C_i}{\AssFalse}
	$$
	which concludes this case of the proof.

    \myparagraph{Successor case}
    In the successor case, we assume the inductive hypothesis (IH)~$\models(\fictx',\fsctx(\alpha))$ for an arbitrary ordinal $\alpha$, and we need to prove that $\models(\fictx',\fsctx(\alpha+1))$. By definition, (H1)~$\dom(\fsctx(\alpha)) = \dom(\fsctx(\alpha+1))$ and (H2)~$(\fsctx(\alpha))(\fid)\subset(\fsctx(\alpha+1))(\fid)$ holds for all $\fid$. Now, let $\fid$ be in $\dom(\fsctx(\alpha+1))$. From the definition, we obtain that
	\begin{equation*}
		(\fsctx(\alpha+1))(\fid) = \begin{cases}
		\fsctx(\fid), & \textrm{ for } \fid\neq \fid_i, \forall i \in I \\
			(\fsctx(\alpha))(\fid_i)\cup\{ \uspeconecase{P^{i}(\alpha)}{\bigq^{i}(\alpha)}\}, & \textrm{ for } \fid=\fid_i, \textrm{ for some } i\in I \\
			\hspace*{0.4cm}\cup~\{\uspeconecase{\ntprei(\alpha+1)}{\AssFalse}\} &
		\end{cases}
	\end{equation*}
	We prove the various cases separately.
	First, if $\fid\neq \fid_i\;\forall i \in I$, then the inductive hypothesis $\models(\fictx',\fsctx(\alpha))$ implies that
	\begin{equation*}
	\forall \fid \in \dom{(\fsctx(\alpha))}, \sspec \in (\fsctx(\alpha))(\fid).~\fid(\vec{\pvar{x}})\{C; \preturn{\pexp}\} \in \fictx' \Rightarrow \exists \sspec' \in \fext_{\fictx', \fid}(\sspec).~\fictx'\models C:\sspec'
	\end{equation*}
	Since in this case $(\fsctx(\alpha+1))(\fid) = \fsctx(\fid) = (\fsctx(\alpha))(\fid)$ and $\fictx(\fid)=\fictx'(\fid)$ since the domains of $\fictx$ and $\fsctx$ coincide, we immediately obtain the desired
	\begin{equation*}
		\frall{t \in (\fsctx(\alpha+1))(\fid)} \exsts{ t' \in \fext_{\fictx', \fid}(t)} \fictx'\models C:t'.
	\end{equation*}

	Next, let $\fid=\fid_i$ for some $i \in I$ and let $t\in(\fsctx(\alpha))(\fid_i)$. Then, the inductive hypothesis $\models(\fictx',\fsctx(\alpha))$ immediately implies the existence of a $t'\in \fext_{\fictx', \fid}(t)$ such that $\fictx'\models C_i:t'$.

	Next, let $t = \uspeconecase{P^{i}(\alpha)}{\bigq^{i}(\alpha)}$.
	From (\ref{validforall}), we know that there exists a $t'\in\fext_{\fictx',\fid_i}(t)$ such that
	$\fictx'\vdash C_i:t'$. From there,
	given Theorem~\ref{logicsoundness} and the inductive hypothesis $\models(\fictx',\fsctx(\alpha))$, we obtain that $\fictx'\models C_i:t'$.

	Finally, let $t = \uspeconecase{\ntprei(\alpha+1)}{\AssFalse}$.
	Per definition, we have \\
	$\fext_{\fictx',\fid_i}(t) = \{ \uspeconecase{\ntprei(\alpha+1)\lstar\pvvar z\doteq\nil}{\AssFalse} \}$, where $\pvvar z=\pv{C_i}\backslash\{\pvvar x_i\}$.
	We instantiate the $n$-dimensional Scott induction from \S\ref{sec:ndimscott} with $\alpha+1$ and obtain
	$$
	\repr{C_i}\in \sto_i^{\alpha+1}
	$$
	Analogously to the argumentation in the zero case, we obtain the desired
	$$
	\fictx'\models\utripleq{\ntprei(\alpha+1)}{C_i}{\AssFalse}
	$$
	concluding the successor case.
	\myparagraph{Limit Ordinal Case} Given an arbitrary limit ordinal~$\alpha$, we assume the inductive hypothesis (IH)~$\forall\lambda<\alpha.~\models(\fictx',\fsctx(\lambda))$, and our goal is to prove that $\models(\fictx',\fsctx(\alpha))$.

	Let $\fid$ be an arbitrary element from $\dom(\fsctx(\alpha))$. First, if $\fid\neq \fid_i, \forall i \in I$, then we have $(\fsctx(\alpha))(\fid)=\fsctx(\fid)$ and $\fictx'(\fid)=\fictx(\fid)$. The reasoning for this case is the same as for the first part of the successor case.
	Otherwise, we have that $\fid=\fid_i$ for some $i \in I$.
	For this case, we first prove that the following two sets are equal:
	\[
	\begin{array}{l}
		A =\{ \uspeconecase{P^{i}(\beta)}{\bigq^{i}(\beta)} \;|\; \beta<\alpha \} \cup \{ \uspeconecase{\ntprei(\beta)}{\AssFalse} \;|\; \beta\leq\alpha \}
		\text{~and~} \\
		B = \big(\bigcup\limits_{\beta<\alpha} (\fsctx(\beta))(\fid_i)\big) \cup \{\uspeconecase{\ntprei(\alpha)}{\AssFalse}\}.
	\end{array}
	\]
	For the left-to-right inclusion, let $t\in A$.

	\casex{Case 1.}
	$t=(P^{i}(\lambda))~(\bigq^{i}(\lambda))$ for some $\lambda<\alpha$. As $\alpha$ is a limit ordinal, there exists another ordinal $\beta$ such that $\lambda<\beta<\alpha$. By construction it holds that $t\in(\fsctx(\beta))(\fid_i)$ and hence $t\in B$.

	\casex{Case 2.}
	$t=\uspeconecase{\ntprei(\lambda)}{\AssFalse}$ for some $\lambda<\alpha$. Per construction, we have $t\in(\fsctx(\lambda))(\fid_i)$ and therefore $t\in B$.

	\casex{Case 3.}
	$t=\uspeconecase{\ntprei(\alpha)}{\AssFalse}$. Trivially, we have $t\in B$.

	\medskip\noindent
	For the right-to-left inclusion, let $t\in B$.

	\casex{Case 1.}
	$t\in(\fsctx(\lambda))(\fid_i)$ for some $\lambda<\alpha$. Then, per construction, $t\in A$.

	\casex{Case 2.}
	$t=\uspeconecase{\ntprei(\alpha)}{\AssFalse}$. Again, per construction, $t\in A$.

	\medskip
	This yields the equality of $A$ and $B$, which gives us that
	\begin{equation*}
		(\fsctx(\alpha))(\fid_i) = \big(\bigcup\limits_{\beta<\alpha} (\fsctx(\beta))(\fid_i)\big) \cup \{\uspeconecase{\ntprei(\alpha)}{\AssFalse}\}.
	\end{equation*}

	Now, let $t\in(\fsctx(\alpha))(\fid_i)$, and consider the following two cases:

	\casex{Case 1.} $t\in(\fsctx(\lambda))(\fid_i)$ for some $\lambda<\alpha$. The inductive hypothesis implies $\models(\fictx',\fsctx(\lambda))$ and therefore that $\fictx'\models C_i:t'$ for some $t'\in \fext_{\fictx', \fid_i}(t)$.

	\casex{Case 2.} $t=\uspeconecase{\ntprei(\alpha)}{\AssFalse}$. This is proven by instantiating the $n$-dimensional Scott induction from \S\ref{sec:ndimscott} with $\alpha$, which analogously to the zero case, yields $\fictx'\models\utripleq{\ntprei(\alpha)}{C_i}{\AssFalse}$, which conludes the proof.
\end{proof}

Before the final soundness proof, we require one further lemma:

\begin{lemma}[Existentialisation]\label{lem:existentialise}
Let $\models(\fictx,\fsctx)$, $\fictx(\fid)=(\pvvar x,C_\fid,\pexp')$, and let \\ $X =\{\xspec ~|~ x\in\ord\}$. Then, if $X \subseteq \fsctx(\fid)$, it holds that $\models(\fictx,\bar\fsctx)$, where
\[
\begin{array}{r l}
	\bar\fsctx = &\fsctx[\fid\mapsto (\fsctx(\fid)\\
	&\quad\quad\cup\{\exspec\} \setminus X]
\end{array}
\]
\end{lemma}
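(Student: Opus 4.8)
The plan is to prove Lemma~\ref{lem:existentialise} by showing that the single existentialised specification $\exspec$ is $\fictx$-valid for $\fid$, given that all the instances $\xspec$ (for $x \in \ord$) are. Since environment validity $\models(\fictx,\bar\fsctx)$ requires, for each $\fid' \in \dom(\bar\fsctx)$ and each specification $t \in \bar\fsctx(\fid')$, the existence of an $\fictx$-valid internal specification $t' \in \fext_{\fictx,\fid'}(t)$, and since $\bar\fsctx$ differs from $\fsctx$ only at $\fid$ by removing the family $X$ and adding the single specification $\exspec$, all obligations except the one for $\exspec$ follow immediately from $\models(\fictx,\fsctx)$. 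So the whole content is: construct an $\fictx$-valid internal specification for $\exspec$.

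First I would unpack what $\exspec$ is: from the surrounding context (cf.\ the \prooflab{env-extend} rule, where $\Qok \defeq \exsts{\alpha}\Qoki(\alpha)\lstar\dotin{\alpha}{\ord}$, etc.), $\exspec$ is the quadruple obtained by existentially quantifying the measure-parametrised pre- and post-conditions over $x$ and adding the conjunct $x \in \ord$; concretely, if $\xspec = \uspec{P(x)}{\Qok(x)}{\Qerr(x)}$ then $\exspec = \uspec{\exists x.\,P(x) \lstar x \dotint \ord}{\exists x.\,\Qok(x) \lstar x \dotint \ord}{\exists x.\,\Qerr(x) \lstar x \dotint \ord}$. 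For each $x$, $\models(\fictx,\fsctx)$ gives an internal specification $t'_x = \uspec{P(x) \lstar \vec{\pvar z} \doteq \nil}{\Qok'_x}{\Qerr'_x} \in \fext_{\fictx,\fid}(\xspec)$ with $\fictx \models C_\fid : t'_x$. The plan is to take the internalisation of $\exspec$ to be $\uspec{(\exists x.\,P(x) \lstar x \dotint \ord) \lstar \vec{\pvar z} \doteq \nil}{\exists x.\,\Qok'_x \lstar x \dotint \ord}{\exists x.\,\Qerr'_x \lstar x \dotint \ord}$, i.e.\ the pointwise existential join of the $t'_x$, and verify two things: (i) this is genuinely a member of $\fext_{\fictx,\fid}(\exspec)$ (the equivalence conditions relating internal and external post-conditions are preserved under existential quantification over $x$, since $x$ is fresh for $\vec p$, $\pexp'$, and the quantifier commutes with substitution and separating conjunction — exactly the calculation already carried out in the chain-closure part of Lemma~\ref{lem:oneS}); and (ii) $\fictx \models C_\fid : t'$, which is the \prooflab{exists} rule's validity statement lifted to the function-body level, and follows because over- and under-approximating validity are both preserved under existential quantification of logical variables in pre- and post-conditions (cf.\ the \prooflab{exists} case of the proof of Theorem~\ref{logicsoundness} and the admissibility rule \prooflab{adm-exists}).

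A subtlety I would need to handle carefully is the choice of the $t'_x$: a priori these are chosen by some selection, and the resulting $\Qok'_x$, $\Qerr'_x$ might involve logical variables that clash across different $x$, or might not vary measurably in $x$. This is the main obstacle. I would address it by noting that $\fext_{\fictx,\fid}$ is closed under $\alpha$-renaming of logical variables (rule \prooflab{adm-alpha-equiv}, and the fact that internalisation only constrains the program-variable footprint and the equivalences up to logical-variable naming), so without loss of generality the $t'_x$ can be chosen to share a common vector of logical variables disjoint from $x$; and since the family $X$ is indexed uniformly by $x \in \ord$ — in the intended use it arises as a measure-parametrised family where $P(x)$, $\Qok(x)$, $\Qerr(x)$ are given by fixed formulas with $x$ a free variable — the internal specifications inherit this uniform shape. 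Once the $t'_x$ are uniform, forming their existential join and invoking the preservation of both directions of validity and of the internalisation equivalences under $\exists x$ completes the argument. I would then conclude $\models(\fictx,\bar\fsctx)$ by reassembling: the single modified entry $\fid \mapsto (\fsctx(\fid) \cup \{\exspec\}) \setminus X$ has, for every member, an $\fictx$-valid internal specification (old members from $\models(\fictx,\fsctx)$, the new member $\exspec$ from the construction above), and all other entries are unchanged.
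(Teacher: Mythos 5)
Your proposal takes essentially the same route as the paper's proof: the paper also reduces the claim to exhibiting a single valid internal specification for $\exspec$, takes its post-conditions to be the $\exists x$-quantification (with $x \in \ord$) of the internal post-conditions of the instances over the canonical internal pre-condition $(\Pex)\lstar\pvvar z\doteq\nil$, checks that the internalisation equivalences commute with the existential under the assumption $x\notin\vec p$ (your step (i), the same calculation as in the chain-closure of Lemma~\ref{lem:oneS}), and establishes OX/UX validity by extracting and re-packaging the witness $v\in\ord$ (your step (ii)). The uniformity subtlety you single out is present in the paper's proof as well and is treated there no more rigorously (a single $Q'_{\outcome}$ is written with $x$ handled as a parameter), so your treatment is on par with the paper's own.
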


\begin{proof}
We need to prove that
$$
\exsts{t\in\fext_{\fictx,\fid}(\exspec)} \fictx\models C_\fid:t
$$
\myparagraph{Over-approximation}
Per construction, we know that the pre-condition of any internalisation of $\exspec$ equals $(\Pex)\lstar\pvvar z\doteq\nil$, where $\pvvar z = \pv{C}\backslash\pv{\Pex}$, which is equivalent to $\Pex\lstar\pvvar z\doteq\nil$. We assume $\subst,\sto,\hp,\sto',\hp'',\hp_\fid,\outcome$ such that
\begin{description}
\item[(O1)] $\subst,\sto,\hp\models \Pex\lstar\pvvar z\doteq\nil$
\item[(O2)] $(\sto,\hp\uplus \hp_\fid),C_\fid\baction_{\fictx}\outcome: (\sto',\hp'')$
\end{description}
and aim to show:
$$
\outcome\neq\oxm \land \exsts{\hp'} \hp''= \hp'\uplus \hp_\fid \land \subst,\sto',\hp'\models \Qeps
$$
where $\Qeps$ is an internal postcondition of \\ $\exspec$.
\begin{itemize}
	\item
(O1) implies that $\exsts{v\in\ord} \subst[x\rightarrow v],\sto,\hp\models\Px\lstar\pvvar z\doteq\nil$.
	\item
The assumptions of the lemma then imply that \textbf{(O3)}~$\outcome\neq\oxm \land \exsts{\hp'} \hp''= \hp'\uplus \hp_\fid \land \subst[x\rightarrow v],\sto',\hp'\models Q'_{\outcome}$ for some $Q'_{\outcome}$ that is an internal post-condition of \\ $\xspec$, where $x\in\ord$, i.e. we have either
$$
\begin{array}{l c}
\textbf{(O4a)}&\qquad (\outcome=ok) \land (\Qok(x)~\Leftrightarrow~\exsts{\vec p}\Qok'[\vec p / \pvvar p]\lstar\pvar{ret}\doteq \pexp'[\vec p / \pvvar p]) \\
\textbf{(O4b)}&\qquad (\outcome=err) \land (\Qerr(x)~\Leftrightarrow~\exsts{\vec p}\Qok'[\vec p / \pvvar p])
\end{array}
$$
where, without loss of generality, we may assume that $x\notin\vec p$.
	\item
(O3) implies that $\subst,\sto',\hp'\models \exsts{x}Q'_{\outcome}\lstar\dotin{x}{\ord}$.
	\item
To conclude the OX direction of the proof, we show that $\exsts{x}Q'_{\outcome}\lstar\dotin{x}{\ord}$ is an internal post-condition of $\exspec$
which is implied by (O4a) in case of successful, and by (O4b) in case of erroneous termination:
\begin{align*}
	\exsts{x}\Qok(x)\lstar\dotin{x}{\ord}
	~\Leftrightarrow~
	&\exsts{x}\exsts{\vec p}\Qok'[\vec p / \pvvar p]\lstar\pvar{ret}\doteq \pexp'[\vec p / \pvvar p]\lstar\dotin{x}{\ord} \\
	~\Leftrightarrow~
	&\exsts{\vec p}\exsts{x}\Qok'[\vec p / \pvvar p]\lstar\dotin{x}{\ord}\lstar\pvar{ret}\doteq \pexp'[\vec p / \pvvar p] \\
	~\Leftrightarrow~
	&\exsts{\vec p}(\exsts{x}\Qok'[\vec p / \pvvar p]\lstar\dotin{x}{\ord})\lstar\pvar{ret}\doteq \pexp'[\vec p / \pvvar p])\\
	~\Leftrightarrow~
	&\exsts{\vec p}(\exsts{x}\Qok'\lstar\dotin{x}{\ord})[\vec p / \pvvar p]\lstar\pvar{ret}\doteq \pexp'[\vec p / \pvvar p] \\
\end{align*}
\begin{align*}
	\exsts{x}\Qerr(x)\lstar\dotin{x}{\ord}
	~\Leftrightarrow~
	&\exsts{x}\exsts{\vec p}\Qerr'[\vec p / \pvvar p]\lstar\dotin{x}{\ord} \\
	~\Leftrightarrow~
	&\exsts{\vec p}\exsts{x}\Qerr'[\vec p / \pvvar p]\lstar\dotin{x}{\ord}\\
	~\Leftrightarrow~
	&\exsts{\vec p}(\exsts{x}\Qerr'\lstar\dotin{x}{\ord})[\vec p / \pvvar p]\\
\end{align*}
\end{itemize}
This concludes the OX direction.

\myparagraph{Under-approximation} We assume $\subst,\sto',\hp',\hp_\fid,\outcome$ such that $\hp'\sharp \hp_\fid$ and $\subst,\sto',\hp'\models\exsts{x} Q'_{\outcome}\lstar\dotin{x}{\ord}$, where $Q'_{\outcome}$ is obtained from the OX case. This implies the existence of a $v\in\ord$ such that
$$
	\subst[x\rightarrow v],\sto',\hp'\models Q'_{\outcome}
$$
From the assumptions, we obtain the existence of $\sto,\hp$ such that $\subst[x\rightarrow v],\sto,\hp\models P(x)\lstar\pvvar z\doteq\nil$, which implies
$$
	\subst,\sto,\hp\models \exsts{x} P(x)\lstar\pvvar z\doteq\nil\lstar\dotin{x}{\ord}
$$
This concludes the proof as the last assertion is the (only) internal pre-condition of of $\exspec$.

\end{proof}

With these lemmas, we can now easily prove~\cref{thm:envsound}:

\restateenvsound*

\begin{proof}[Proof of Theorem~\ref{thm:envsound}]
By induction on $\vdash (\fictx,\fsctx)$. When the last rule applied was the base rule for environments, we have that $(\fictx,\fsctx)=(\emptyset, \emptyset)$, meaning that $\dom(\fsctx)$ is empty, and the statement to prove is trivially true.
Otherwise,
we assume the following hypotheses and inductive hypothesis:
\begin{description}
\item[(H1)] $\vdash (\fictx, \fsctx)$
\item[(H2)] $\fictx' = \fictx[\fid_i\mapsto (\vec{\pvar x_i}, C_i, \pexp_i) | i \in I]$,
	with $\fid_i \not\in \dom(\fictx)$ for all $i \in I$
\item[(H3)] $\fsctx(\alpha) = \fsctx[\fid_i\mapsto
				\{ \uspec{\tprei(\beta)}{\Qoki(\beta)}{\Qerri(\beta)} \mid {\beta < \alpha}\} \cup \{\uspeconecase{\ntprei(\beta)}{\AssFalse} \mid \beta \leq \alpha\}]_{i \in I}$
\item[(H4)] $\frall{i\in I, \alpha\in\ord} \exists t \in \fext_{\fictx', \fid_i}\big(\uspec{\tprei(\alpha)}{\Qoki(\alpha)}{\Qerri(\alpha)}\big).~\fsctx(\alpha)\vdash C_i : \sspec$
\item[(H5)]
		$\frall{i\in I, \alpha\in\ord} \exists t \in \fext_{\fictx', \fid_i}(\uspeconecase{\ntprei(\alpha)}{\AssFalse}).~\fsctx(\alpha)\vdash C_i : \sspec$
\item[(IH)] $\models(\fictx, \fsctx)$
\end{description}
We will first prove that $\models(\fictx',\cup_{\alpha}\fsctx(\alpha))$, that is,
$$
\begin{array}{l}
\dom(\cup_{\alpha}\fsctx(\alpha))\subseteq\dom(\fictx') \land
\frall{\fid,\pvvar x, C, \pexp} \fid(\pvvar x)\{C;\preturn{\pexp}\}\in\fictx' \\
\qquad\Rightarrow
(\frall{t} t\in(\cup_{\alpha}\fsctx(\alpha))(\fid) \\
\qquad\qquad\Rightarrow
\exsts{t'\in\fext_{\fictx',\fid}(t)} \fictx'\models C:t')
\end{array}
$$
where the notation $\cup_{\alpha}\fsctx(\alpha)$ denotes the function which maps $\fid$ to the set $\cup_{\alpha}((\fsctx(\alpha))(\fid))$. The fist conjunct holds since Lemma \ref{fictxalpha} implies that $\dom(\fsctx(\alpha))\subseteq\fictx'$ for all $\alpha$. Now, assume $\fid,\pvvar x, C, \pexp, t$ such that
\begin{description}
	\item[(H6)] $\fid(\vec{\pvar{x}})\{C; \preturn{\pexp}\} \in \fictx'$
	\item[(H7)] $t\in(\cup_{\alpha}\fsctx(\alpha))(\fid)$
\end{description}
 By (H7), we have that there exists $\alpha'\in\ord$ such that \textbf{(H8)}~$t\in(\fsctx(\alpha'))(\fid)$.
Then, from Lemma~\ref{fictxalpha} applied to (IH), (H2), (H4), and (H5), we obtain
\textbf{(H9)}~$\models(\fictx',\fsctx(\alpha'))$. By construction, we know that $\dom(\cup_{\alpha}\fsctx(\alpha))=\dom(\fsctx(\alpha'))$, meaning that \textbf{(H10)}~$\fid \in \dom(\fsctx(\alpha'))$. Therefore, instantiating (H9) with (H10), (H8), and (H6), we obtain that there exists $t'\in \fext_{\fictx', \fid}(t)$ such that $\models C:t'$, which implies $\models(\fictx',\cup_{\alpha}\fsctx(\alpha))$. Defining
\begin{itemize}
\item $\tprei = \exsts{\alpha}\tprei(\alpha)\lstar\dotin{\alpha}{\ord}$
\item $\ntprei = \exsts{\alpha}\ntprei(\alpha)\lstar\dotin{\alpha}{\ord}$
\item $\Qoki = \exsts{\alpha}\Qoki(\alpha)\lstar\dotin{\alpha}{\ord}$
\item $\Qerri = \exsts{\alpha}\Qerri(\alpha)\lstar\dotin{\alpha}{\ord}$
\item $\fsctx'':=\fsctx[\fid_i\mapsto\{\uspec{\tprei}{\Qoki}{\Qerri}, \uspeconecase{\ntprei}{\AssFalse}\}]_{i\in I}$,
\end{itemize}
and applying Lemma \ref{lem:existentialise} twice to $\cup_{\alpha}\fsctx(\alpha)$ (once to the set of partially terminating specifications and once to the set of non-terminating specifications), we obtain $\models(\fictx',\fsctx'')$, concluding the soundness proof.
\end{proof}

\newpage
\section{Further Examples}
\label{apdx:examples}

\subsection{Mutual Recursion: Even/Odd}

ESL allows us to reason about mutually recursive function. We illustrate this by using a simple example consisting of two functions which determine whether a natural number is even or odd, whose implementations are given below.
\[
\begin{array}{l}
	\pfunctions{isEven}{\pvar n} \\
	\tab \pifelses{\pvar n=0} \\
	\tab\tab \passign{\pvar b}{\ptrue} \\
	\tab \pifelsem \\
	\tab\tab \passign{\pvar n}{\pvar n -1}; \\
	\tab\tab \passign{\pvar b}{\mathtt{isOdd}(\pvar n)} \\
	\tab \pifelsee; \\
	\tab \preturn{\pvar b} \\
	\}
\end{array}
\quad\quad\quad
\begin{array}{l}
	\pfunctions{isOdd}{\pvar n} \\
	\tab \pifelses{\pvar n=0}; \\
	\tab\tab \passign{\pvar b}{\pfalse} \\
	\tab \pifelsem \\
	\tab\tab \passign{\pvar n}{\pvar n -1}; \\
	\tab\tab \passign{\pvar b}{\mathtt{isEven}(\pvar n)} \\
	\tab \pifelsee; \\
	\tab \preturn{\pvar b} \\
	\}
\end{array}
\]

To reason about these two functions, we introduce two (also mutually recursive) predicates:
\[
\begin{array}{l@{~\defeq~}l}
	\even{n} & n\doteq 0\vee\odd{n-1} \\
	\odd{n} & n\doteq 1\vee\even{n-1}. \\
\end{array}
\]
and give the pre-condition and the external ($Q_{even}(\alpha)$ and $Q_{odd}(\alpha)$) and internal ($Q'_{even}(\alpha)$ and $Q'_{odd}(\alpha)$) post-conditions for the two functions, noting that both share the same pre-condition $P(\alpha)$, and that we again use $\alpha=n$ for the decreasing measure, just as in the list length case:
\[
\begin{array}{r@{~}c@{~}l}
	P(\alpha) & \defeq & \pvar n\doteq n\lstar n\dotint\Nat\lstar n\doteq\alpha\\[2mm]
	Q_{even}(\alpha) & \defeq &  (\pvar{ret}\doteq\pfalse\lstar\odd{n}\lstar n\doteq\alpha)\vee(\pvar{ret}\doteq\ptrue\lstar\even{n}\lstar n\doteq\alpha) \\
	Q_{odd}(\alpha) & \defeq &  (\pvar{ret}\doteq\pfalse\lstar\even{n}\lstar n\doteq\alpha)\vee(\pvar{ret}\doteq\ptrue\lstar\odd{n}\lstar n\doteq\alpha) \\[2mm]
	Q'_{even}(\alpha) & \defeq &  (\pvar n\doteq n\lstar n\doteq 0\lstar\pvar b\doteq\ptrue\lstar n\doteq\alpha) \\
	&& \quad\lor (\pvar n\doteq n-1\lstar\pvar b\doteq\pfalse\lstar \odd{n}\lstar n\doteq\alpha) \\
	&& \quad\quad\lor(\pvar n\doteq n-1\lstar\pvar b\doteq\ptrue\lstar \even{n}\lstar n\doteq\alpha) \\
	Q'_{odd}(\alpha) & \defeq &  (\pvar n\doteq n\lstar n\doteq 0\lstar\pvar b\doteq\pfalse\lstar n\doteq\alpha) \\
	&& \quad\lor(\pvar n\doteq n-1\lstar\pvar b\doteq\ptrue\lstar \odd{n}\lstar n\doteq\alpha) \\
	&& \quad\quad\lor(\pvar n\doteq n-1\lstar\pvar b\doteq\pfalse\lstar \even{n}\lstar n\doteq\alpha) \\
\end{array}
\]

We further assume a well-formed environment $(\fictx,\fsctx)$ such that $\mathtt{isEven},\mathtt{isOdd}\not\in\dom(\fictx)$ and extend it as follows:
\[
\begin{array}{r c l}
\fictx' &\defeq& \fictx[\mathtt{isEven}\mapsto (\{ \pvar n \}, C_{even}, \pvar b), \mathtt{isOdd}\mapsto (\{ \pvar n \}, C_{odd}, \pvar b)] \\
\fsctx(\alpha) &\defeq& \fsctx[\mathtt{isEven}\mapsto
				\{ (P(\beta))~(\oxok: Q_{even}(\beta)) \mid {\beta < \alpha}\}, \\
				&&\quad\mathtt{isOdd}\mapsto
				\{ (P(\beta))~(\oxok: Q_{odd}(\beta))\mid {\beta < \alpha}\}]
\end{array}
\]
where $C_{even}$ and $C_{odd}$ denote the appropriate function bodies. Our goal is to prove
\[
\begin{array}{c}
\fsctx(\alpha)\vdash\utripleok{P(\alpha)}{C_{even}}{Q'_{even}(\alpha)} \\
\fsctx(\alpha)\vdash\utripleok{P(\alpha)}{C_{odd}}{Q'_{odd}(\alpha)}
\end{array}
\]
which we do in the proof sketches given below, first for $\mathtt{isEven}$ and then for $\mathtt{isOdd}$:

\[
\small
\begin{array}{r@{~}l}
	\fsctx(\alpha)~\vdash & \smallespecline{P(\alpha)\lstar\pvar b\doteq\nil}\\
	& \smallespecline{\pvar n\doteq n\lstar \dotin{n}{\Nat}\lstar n\doteq\alpha\lstar\pvar b\doteq\nil}\\
	& \pifelses{\pvar n=0} \\
	& \tab\smallespecline{\pvar n\doteq n\lstar n\doteq 0\lstar n\doteq\alpha \lstar\pvar b\doteq\nil} \\
	& \tab \passign{\pvar b}{\ptrue} \\
	& \tab\smallespecline{\pvar n\doteq n\lstar n\doteq 0\lstar n\doteq\alpha\lstar\pvar b\doteq\ptrue} \\
	& \pifelsem \\
	& \tab\smallespecline{\pvar n\doteq n\lstar n\dotgr0\lstar n\doteq\alpha\lstar\pvar b\doteq\nil} \\
	& \tab \passign{\pvar n}{\pvar n -1} \\
	& \tab\smallespecline{\pvar n\doteq n-1\lstar n\dotgr0\lstar n\doteq\alpha\lstar\pvar b\doteq\nil} \\
	& \tab\smallespecline{\pvar n\doteq n-1\lstar \dotin{n-1}{\Nat}\lstar n-1\doteq\alpha-1\lstar\pvar b\doteq\nil} \\
	& \tab \text{\polish{// as $\alpha-1<\alpha$, we can apply the specification for $\alpha-1$}} \\
	& \tab \passign{\pvar b}{\mathtt{isOdd}(\pvar n)} \\
	& \tab \smallespecline{\pvar n\doteq n-1\lstar\exsts{b} \big( (b\doteq\pfalse\lstar\even{n-1})\vee(b\doteq\ptrue\lstar\odd{n-1})\big)\lstar n\doteq\alpha\lstar \pvar b\doteq b}\\
	& \tab \smallespecline{\pvar n\doteq n-1\lstar\big( (\pvar b\doteq\pfalse\lstar\odd{n}\lstar n\doteq\alpha)\vee(\pvar b\doteq\ptrue\lstar\even{n}\lstar n\doteq\alpha)\big)}\\
	& \tab \smallespecline{(\pvar n\doteq n-1\lstar\pvar b\doteq\pfalse\lstar\odd{n}\lstar n\doteq\alpha)\vee(\pvar n\doteq n-1\lstar\pvar b\doteq\ptrue\lstar\even{n}\lstar n\doteq\alpha)}\\
	& \pifelsee \\
	& \especline{(\pvar n\doteq n\lstar n\doteq 0\lstar\pvar b\doteq\ptrue\lstar n\doteq\alpha)
	\vee (\pvar n\doteq n-1\lstar\pvar b\doteq\pfalse\lstar\odd{n}\lstar n\doteq\alpha)~\vee \\
	(\pvar n\doteq n-1\lstar\pvar b\doteq\ptrue\lstar\even{n}\lstar n\doteq\alpha)}
\end{array}
\]

\[
\small
\begin{array}{r@{~}l}
	\fsctx(\alpha)\vdash & \smallespecline{P(\alpha)\lstar\pvar b\doteq\nil}\\
	& \smallespecline{\pvar n\doteq n\lstar \dotin{n}{\Nat}\lstar n\doteq\alpha\lstar\pvar b\doteq\nil}\\
	& \pifelses{\pvar n=0} \\
	& \tab\smallespecline{\pvar n\doteq n\lstar n\doteq 0\lstar n\doteq\alpha\lstar\pvar b\doteq\nil} \\
	& \tab \passign{\pvar b}{\pfalse} \\
	& \tab\smallespecline{\pvar n\doteq n\lstar n\doteq 0\lstar n\doteq\alpha\lstar\pvar b\doteq\pfalse} \\
	& \pifelsem \\

	& \tab\smallespecline{\pvar n\doteq n\lstar n\dotgr0\lstar n\doteq\alpha\lstar\pvar b\doteq\nil} \\
	& \tab \passign{\pvar n}{\pvar n -1} \\
	& \tab\smallespecline{\pvar n\doteq n-1\lstar n\dotgr0\lstar n\doteq\alpha\lstar\pvar b\doteq\nil} \\
	& \tab\smallespecline{\pvar n\doteq n-1\lstar \dotin{n-1}{\Nat}\lstar n-1\doteq\alpha-1\lstar\pvar b\doteq\nil} \\
	& \tab \text{\polish{// since $\alpha-1<\alpha$, we can apply the specification for $\alpha-1$}} \\
	& \tab \passign{\pvar b}{\mathtt{isEven}(\pvar n)} \\
	& \tab \smallespecline{\pvar n\doteq n-1\lstar\exsts{b} \big( (b\doteq\pfalse\lstar\odd{n-1})\vee(b\doteq\ptrue\lstar\even{n-1})\big)\lstar n\doteq\alpha\lstar \pvar b\doteq b}\\
	& \tab \smallespecline{\pvar n\doteq n-1\lstar\big( (\pvar b\doteq\pfalse\lstar\even{n}\lstar n\doteq\alpha)\vee(\pvar b\doteq\ptrue\lstar\odd{n}\lstar n\doteq\alpha)\big)}\\
	& \tab \smallespecline{(\pvar n\doteq n-1\lstar\pvar b\doteq\pfalse\lstar\even{n}\lstar n\doteq\alpha)\vee(\pvar n\doteq n-1\lstar\pvar b\doteq\ptrue\lstar\odd{n}\lstar n\doteq\alpha)}\\
	& \pifelsee \\
	& \especline{(\pvar n\doteq n\lstar n\doteq 0 \lstar n\doteq\alpha\pvar b\doteq\pfalse) \vee
	 (\pvar n\doteq n-1\lstar\pvar b\doteq\pfalse\lstar\even{n}\lstar n\doteq\alpha)~\vee \\
	(\pvar n\doteq n-1\lstar\pvar b\doteq\ptrue\lstar\odd{n}\lstar n\doteq\alpha)}
\end{array}
\]

To complete the proof, we need to show that $Q_{i}\Leftrightarrow\exsts{\vec p} Q'_{i}[\vec p/\pvvar p]\lstar\pvar{ret}\doteq \pvar b[\vec p/\pvvar p]$ for $i\in\{\text{even, odd}\}$. As the two cases are analogous, we only show the $\mathtt{even}$ case in detail:
\[
\begin{array}{l l}
	& \exsts{\vec p} Q'_{even}(\alpha)[\vec p/\pvvar p]\lstar\pvar{ret}\doteq \pexp[\vec p/\pvvar p] \\
	\Leftrightarrow & \exists p_n, p_b. \begin{array}{l} ((p_n\doteq n\lstar n\doteq 0\lstar p_b\doteq\ptrue\lstar n\doteq\alpha)~\vee \\
                                                              (p_n\doteq n-1\lstar p_b\doteq\pfalse\lstar \odd{n}\lstar n\doteq\alpha)~\vee \\
                                                              (p_n\doteq n-1\lstar p_b\doteq\ptrue\lstar \even{n}\lstar n\doteq\alpha)) \lstar \pvar{ret}\doteq p_b \\
	\end{array}\\
	\Leftrightarrow & (n\doteq 0\lstar\pvar{ret}\doteq\ptrue\lstar\even{n}\lstar n\doteq\alpha) \vee (n~\dotgt~0\lstar\pvar{ret}\doteq\pfalse\lstar \odd{n}\lstar n\doteq\alpha)~\vee \\
	& (n~\dotgt~0\lstar\pvar{ret}\doteq\ptrue\lstar \even{n}\lstar n\doteq\alpha) \\
	\Leftrightarrow &  (n~\dotgt~0\lstar \pvar{ret}\doteq\pfalse\lstar \odd{n}\lstar n\doteq\alpha) \vee (n \in \Nat\lstar\pvar{ret}\doteq\ptrue\lstar \even{n}\lstar n\doteq\alpha) \\
	\Leftrightarrow & (\pvar{ret}\doteq\pfalse\lstar \odd{n}\lstar n\doteq\alpha)\vee(\pvar{ret}\doteq\ptrue\lstar \even{n}\lstar n\doteq\alpha) \\
	\Leftrightarrow & Q_{even}(\alpha)
\end{array}
\]

\subsection{More Complex Mutual Recursion: Even/Odd/List Length}
In the examples in the main body and the mutually recursive example in the previous section, the choice of the measure is straightforward. For, e.g., list length and list free, we traverse a non-cyclic list, therefore decreasing the distance to the end of the list in every step. For even/odd, each function decreases the function argument before passing it on to the other function, therefore also creating a natural measure.

In the real world, however, we might come across clusters of mutually recursive functions where not every function reduces the obvious measure (e.g., wrapper functions). As long as any function call terminates, we can still reason about such clusters by defining an appropriate measure. To illustrate this, we will look at a collection of three functions, which compute the length of a list in a convoluted, mutually-recursive way:
\[
\begin{array}{l}
	\pfunctions{LL}{\pvar x} \\
	\tab \pifelses{\pvar x=\nil} \\
	\tab\tab \passign{\pvar r}{0}; \\
	\tab \pifelsem \\
	\tab\tab \pderef{\pvar v}{\pvar x}; \\
	\tab\tab \pifelses{\even{v}} \\
	\tab\tab\tab \passign{\pvar r}{\pvar g(\pvar x)}; \\
	\tab\tab \pifelsem \\
	\tab\tab\tab \passign{\pvar r}{\pvar \fid(\pvar x)}; \\
	\tab\tab \pifelsee; \\
	\tab \pifelsee; \\
	\tab \preturn{\pvar r} \\
	\}\\
\end{array}
\hspace{1.5cm}
\begin{array}{l}
	\pfunctions{\fid}{\pvar x}\\
	\tab \pderef{\pvar v}{\pvar x}; \\
	\tab \pifelses{\even{v}} \\
	\tab\tab \pderef{\pvar x}{\pvar x+1}; \\
	\tab\tab \passign{\pvar r}{\mathtt{LL}(\pvar x)}; \\
	\tab\tab \passign{\pvar r}{\pvar r+1}; \\
	\tab \pifelsem \\
	\tab\tab \passign{\pvar r}{\pvar g(\pvar x)}; \\
	\tab \pifelsee; \\
	\tab \preturn{\pvar r} \\
	\}\\
	\quad \\
	\quad \\
\end{array}
\hspace{1.5cm}
\begin{array}{l}
	\pfunctions{g}{\pvar x}\\
	\tab \pderef{\pvar v}{\pvar x}; \\
	\tab \pifelses{\odd{v}} \\
	\tab\tab \pderef{\pvar x}{\pvar x+1}; \\
	\tab\tab \passign{\pvar r}{\mathtt{LL}(\pvar x)}; \\
	\tab\tab \passign{\pvar r}{\pvar r+1}; \\
	\tab \pifelsem \\
	\tab\tab \passign{\pvar r}{\pvar \fid(\pvar x)}; \\
	\tab \pifelsee; \\
	\tab \preturn{\pvar r} \\
	\} \\
	\quad \\
	\quad \\
\end{array}
\]

Intuitively, whenever either of the functions is called with an argument $\pvar x$, which is the head of a (non-cyclic) list, it computes the length of the list. The $\mathtt{LL}$ function calls $g$ if the first value of the list is even and $\fid$ otherwise. The function $g$, however, does the same test and calls $\fid$ if the input was even. Otherwise, it moves one element down the list and calls $\mathtt{LL}$ on the now shortened list. The function $\fid$ moves down the list by one element and calls $\mathtt{LL}$ on the shortened list if the first value is even, and calls $g$ on the initial list, if not.

As the functions branch on whether or not values of the list are divisible by 2, we adjust the $\llist{x,\lstvs}$ predicate slightly to include the condition that $\lstvs$ is a list of natural numbers:
$$\nlist{x,\lstvs}\defeq (x\doteq\nil\lstar\lstvs\doteq\epsilon) \vee
(\exsts{v,x,\lstvs'} x\mapsto v,x'\lstar v \dotint \Nat \lstar \nlist{x',\lstvs'}\lstar \lstvs\doteq v:\lstvs')$$
and furthermore require a trivial property of the previously introduced $\even{n}$ and $\odd{n}$ predicates, stating that
$\even{v}\vee\odd{v}\Leftrightarrow v\in\Nat$. Assuming a valid environment $\vdash(\fictx,\fsctx)$, we extend it as follows
\[
\begin{array}{r c l}
	\fictx' &\defeq& \fictx[\mathtt{LL}\mapsto(\{\pvar x\},C_{LL},\pvar r),
	\quad \mathtt{\fid}\mapsto(\{\pvar x\},C_{\fid},\pvar r),
	\quad \mathtt{g}\mapsto(\{\pvar x\},C_{g},\pvar r)] \\
	\fsctx(\alpha) &\defeq& \fsctx[
		\mathtt{LL}\mapsto\{(P_{LL}(\beta))(ok:Q_{LL})| \beta<\alpha \}, \\
		& & \quad\quad\mathtt{\fid}\mapsto\{(P_{\fid}(\beta))(ok:Q_{\fid})| \beta<\alpha \}, \mathtt{g}\mapsto\{(P_{g}(\beta))(ok:Q_{g}) | \beta<\alpha \}
	]
\end{array}
\]
Furthermore, we define
$$
\begin{array}{l}
\fsctx''\defeq \fsctx[
		\mathtt{LL}\mapsto\{(\nlist{x,\lstvs}\lstar\pvar x\doteq x \lstar \dotin{(3|\lstvs|+2)}{\ord}) \\
		\hspace*{3cm}(ok: \nlist{x,\lstvs}\lstar\pvar{ret}\doteq vs \lstar \dotin{(3|\lstvs|+2)}{\ord})\}, \\
		\hspace*{1.35cm}\mathtt{\fid}\mapsto\{(\nlist{x,v:\lstvs'}\lstar\pvar x\doteq x \lstar\dotin{3|v:\lstvs'|+(v \bmod 2)}{\ord})\\
		\hspace*{3cm}(ok:\nlist{x,v:\lstvs'}\lstar\pvar{ret}\doteq|v:\lstvs'|\lstar\dotin{3|v:\lstvs'|+(v \bmod 2))}{\ord} \}, \\
		\hspace*{1.35cm}\mathtt{g}\mapsto\{(\nlist{x,v:\lstvs'}\lstar\pvar x\doteq x \lstar\dotin{3|v:\lstvs'|+1-(v \bmod 2)}{\ord})\\
		\hspace*{3cm}(ok: \nlist{x,v:\lstvs'}\lstar\pvar{ret}\doteq|v:\lstvs'|\lstar\dotin{3|v:\lstvs'|+1-(v \bmod 2)}{\ord}) \}
	]
\end{array}
$$
and wish to prove $\vdash(\fictx',\fsctx'')$. To this end, we prove the following three specifications:
\[
\begin{array}{l}
	\fsctx(\alpha)\vdash\utripleok{P_{LL}(\alpha)\lstar\pvar r,\pvar v\doteq\nil}{C_{LL}}{Q'_{LL}(\alpha)}\\
	\fsctx(\alpha)\vdash\utripleok{P_{\fid}(\alpha)\lstar\pvar r,\pvar v\doteq\nil}{C_{\fid}}{Q'_{\fid}(\alpha)}\\
	\fsctx(\alpha)\vdash\utripleok{P_{g}(\alpha)\lstar\pvar r,\pvar v\doteq\nil}{C_{g}}{Q'_{g}(\alpha)}
\end{array}
\]
where $C_{LL}$, $C_{\fid}$, and $C_{g}$ denote the appropriate function bodies, and the function pre-conditions, capturing the function pre-conditions (with and without measure) and post-conditions (internal and external), are defined as follows:
\[
\begin{array}{r@{~}c@{~}l}
	P_{LL}(\alpha) &=& \nlist{x,\lstvs}\lstar\pvar x\doteq x \lstar\alpha\doteq 3|\lstvs|+2\\ 
	P_{\fid}(\alpha) &=& \nlist{x,v:\lstvs'}\lstar\pvar x\doteq x \lstar\alpha\doteq 3|v:\lstvs'|+(v \bmod 2)\\
	P_{g}(\alpha) &=& \nlist{x,v:\lstvs'}\lstar\pvar x\doteq x \lstar\alpha\doteq 3|v:\lstvs'|+1-(v \bmod 2) \\[2mm]
	Q'_{LL}(\alpha) &=& (x\doteq\nil\lstar\lstvs\doteq\epsilon\lstar\pvar x\doteq x\lstar\pvar r\doteq|\lstvs|\lstar\pvar v\doteq\nil\lstar\alpha\doteq 3|\lstvs|+2)~\vee \\
	& & (\exsts{v,\lstvs'}\nlist{x,v:\lstvs'}\lstar \lstvs\doteq v:\lstvs'\lstar\pvar x\doteq x\lstar \pvar v\doteq v\lstar\pvar r\doteq |\lstvs|\lstar\alpha\doteq 3|\lstvs|+2) \\
	Q'_{\fid}(\alpha) &=& \exsts{x'} x\mapsto v,x'\lstar\nlist{x',\lstvs'}\lstar\pvar v\doteq v\lstar\pvar r\doteq|v\cons\lstvs'|\\[-0.5mm] &&\lstar(\pvar x\doteq x'\lstar\even{v}\vee\pvar x\doteq x\lstar\odd{v})\lstar\alpha\doteq 3|v:\lstvs'|+(v \bmod 2) \\
	Q'_{g}(\alpha) &=& \exsts{x'} x\mapsto v,x'\lstar\nlist{x',\lstvs'}\lstar\pvar v\doteq v\lstar\pvar r\doteq|v\cons\lstvs'| \\ && \lstar(\pvar x\doteq x'\lstar\odd{v}\vee\pvar x\doteq x\lstar\even{v})\lstar\alpha\doteq 3|v:\lstvs'|+1-(v \bmod 2) \\[2mm]
	Q_{LL}(\alpha) &=& \nlist{x,\lstvs}\lstar\pvar{ret}\doteq|\lstvs|\lstar\alpha\doteq 3|\lstvs|+2\\
	Q_{\fid}(\alpha)  &=& \nlist{x,v:\lstvs'}\lstar\pvar{ret}\doteq|v:\lstvs'|\lstar\alpha\doteq 3|v:\lstvs'|+(v \bmod 2) \\
	Q_{g}(\alpha)  &=& \nlist{x,v:\lstvs'}\lstar\pvar{ret}\doteq|v:\lstvs'|\lstar\alpha\doteq 3|v:\lstvs'|+1-(v \bmod 2) \\
\end{array}
\]

We give the proof sketch for list length and $\mathtt{\fid}$ below. The proof sketch for $\mathtt{g}$ is analogous to that of $\mathtt{\fid}$.

\begingroup
\small
\allowdisplaybreaks
\setlength{\jot}{0pt}
\begin{align*}
	\fsctx(\alpha)\vdash
        &~\smallespecline{\nlist{x,\lstvs}\lstar\pvar x\doteq x\lstar\alpha\doteq 3|\lstvs|+2\lstar\pvar r,\pvar v\doteq\nil} \\
	&\tab\pifelses{\pvar x=\nil} \\
	&\tab\tab \smallespecline{\nlist{x,\lstvs}\lstar\pvar x\doteq x\lstar\alpha\doteq 3|\lstvs|+2\lstar\pvar r,\pvar v, \pvar x\doteq\nil} \\
	&\tab\tab \passign{\pvar r}{0}; \\
	&\tab\tab \smallespecline{\nlist{x,\lstvs}\lstar\pvar x\doteq x\lstar\alpha\doteq 3|\lstvs|+2\lstar\pvar r\doteq0\lstar\pvar x,\pvar v\doteq\nil} \\
	&\tab\tab \smallespecline{x\doteq\nil\lstar\lstvs\doteq\epsilon\lstar\pvar x\doteq x\lstar\pvar r\doteq|\lstvs|\lstar\alpha\doteq 3|\lstvs|+2\lstar\pvar v\doteq\nil} \\
	&\tab\pifelsem \\
	&\tab\tab \especline{\exsts{v,x',\lstvs'} x\mapsto v,x'\lstar v\dotint\Nat \lstar \nlist{x',\lstvs'}\lstar\lstvs\doteq v\cons\lstvs'\lstar\pvar x\doteq x\lstar \\ \alpha\doteq 3|\lstvs|+2\lstar\pvar r,\pvar v\doteq\nil} \\
	&\tab\tab \pderef{\pvar v}{\pvar x}; \\
	&\tab\tab \especline{\exsts{v, x',\lstvs'} x\mapsto v,x'\lstar v\dotint\Nat \lstar\nlist{x',\lstvs'}\lstar\lstvs\doteq v\cons\lstvs'\lstar\\
	                    \pvar x\doteq x\lstar\pvar v\doteq v\lstar\alpha\doteq 3|\lstvs|+2\lstar\pvar r\doteq\nil} \\
	&\tab\tab \smallespecline{\exsts{v, \lstvs'} \nlist{x,v\cons\lstvs'}\lstar\lstvs\doteq v\cons\lstvs'\lstar\pvar x\doteq x\lstar\pvar v\doteq v\lstar\alpha\doteq 3|\lstvs|+2\lstar\pvar r\doteq\nil} \\
	&\tab\tab\begin{leftvruled} { exists, equiv }
	\smallespecline{\nlist{x,v\cons\lstvs'}\lstar\lstvs\doteq v\cons\lstvs'\lstar\pvar x\doteq x\lstar\pvar v\doteq v\lstar \dotin{v}{\Nat}\lstar\alpha\doteq 3|\lstvs|+2\lstar\pvar r\doteq\nil} \\
	\pifelses{\even{\pvar v}} \\
	\tab \especline{\nlist{x,v\cons\lstvs'}\lstar\lstvs\doteq v\cons\lstvs'\lstar\pvar x\doteq x\lstar\pvar v\doteq v\lstar \even{v} \\
	\lstar\alpha-1\doteq 3|v:\lstvs'|+1-(v\bmod 2)\lstar\pvar r\doteq\nil} \\
	\tab\text{\polish{// as $\alpha-1<\alpha$, we can apply $g$'s specification for $\alpha-1$}} \\
	\begin{leftvruled} {fr, eq}
	\smallespecline{\pvar r\doteq\nil \lstar \pvar x\doteq x \lstar \nlist{x,v:\lstvs'}\lstar\alpha-1\doteq 3|v:\lstvs'|+1-(v\bmod 2) }\\
	\passign{\pvar r}{\mathtt{g}(\pvar x)}; \\
	\smallespecline{\pvar x\doteq x\lstar\nlist{x,v:\lstvs'}\lstar\pvar{r}\doteq|v:\lstvs'|\lstar\alpha-1\doteq 3|v:\lstvs'|+1-(v\bmod 2)}
	\end{leftvruled} \\
	\tab \smallespecline{\nlist{x,v:\lstvs'}\lstar \lstvs\doteq v:\lstvs'\lstar\pvar x\doteq x\lstar \pvar v\doteq v\lstar\even{v}\lstar\pvar r\doteq |\lstvs|\lstar\alpha\doteq 3|\lstvs|+2} \\
	\pifelsem \\
	\tab \especline{\nlist{x,v\cons\lstvs'}\lstar\lstvs\doteq v\cons\lstvs'\lstar\pvar x\doteq x\lstar\pvar v\doteq v\lstar \\
	\odd{v}\lstar\pvar r\doteq\nil\lstar\alpha-1\doteq 3|\lstvs|+(v\bmod 2)} \\
	\tab\text{\polish{// as $\alpha-1<\alpha$, we can apply $\fid$'s specification for $\alpha-1$ }} \\
	\begin{leftvruled} {fr, eq}
	\smallespecline{\nlist{x,v:\lstvs'}\lstar\pvar x\doteq x\lstar\pvar r\doteq\nil\lstar\alpha-1\doteq 3|\lstvs|+(v\bmod 2)}\\
	\passign{\pvar r}{\mathtt{\fid}(\pvar x)}; \\
	\smallespecline{\nlist{x,v:\lstvs'}\lstar\pvar x\doteq x\lstar\pvar{r}\doteq|v:\lstvs'|\lstar\alpha-1\doteq 3|\lstvs|+(v\bmod 2)}
	\end{leftvruled} \\
	\tab\smallespecline{\nlist{x,v:\lstvs'}\lstar \lstvs\doteq v:\lstvs'\lstar\pvar x\doteq x\lstar \pvar v\doteq v\lstar\odd{v}\lstar\pvar r\doteq |\lstvs|\lstar\alpha\doteq 3|\lstvs|+2} \\
	 \pifelsee \\
	\especline{\nlist{x,v:\lstvs'}\lstar \lstvs\doteq v:\lstvs'\lstar\pvar x\doteq x\lstar \pvar v\doteq v\lstar\pvar r\doteq |\lstvs|\lstar(\even{v}\vee\odd{v})\lstar \\ \alpha\doteq 3|\lstvs|+2}
	\tab\end{leftvruled} \\
	&\tab\tab\smallespecline{\exsts{v\in\Nat,\lstvs'}\nlist{x,v:\lstvs'}\lstar \lstvs\doteq v:\lstvs'\lstar\pvar x\doteq x\lstar \pvar v\doteq v\lstar\pvar r\doteq |\lstvs|\lstar\alpha\doteq 3|\lstvs|+2} \\
	&\tab\pifelsee \\
	&\tab\especline{(x\doteq\nil\lstar\lstvs\doteq\epsilon\lstar\pvar x\doteq x\lstar\pvar r\doteq|\lstvs|\lstar\pvar v\doteq\nil\lstar\alpha\doteq 3|\lstvs|+2)~\vee \\
	                (\exsts{v\in\Nat,\lstvs'}\nlist{x,v:\lstvs'}\lstar \lstvs\doteq v:\lstvs'\lstar\pvar x\doteq x\lstar \pvar v\doteq v\lstar\pvar r\doteq |\lstvs|\lstar\alpha\doteq 3|\lstvs|+2)}
\end{align*}

\begin{align*}
	\fsctx(\alpha)\vdash
	&~\smallespecline{\nlist{x,v:vs'}\lstar\pvar x\doteq x\lstar\alpha\doteq 3|v:vs'|+(v \bmod 2)\lstar\pvar v,\pvar r\doteq\nil} \\
	&\smallespecline{\exsts{x'} x\mapsto v,x'\lstar\dotin{v}{\Nat}\lstar \nlist{x',vs'}\lstar\pvar x\doteq x\lstar\alpha\doteq 3|v:vs'|+(v \bmod 2)\lstar\pvar v,\pvar r\doteq\nil} \\
	&\pderef{\pvar v}{\pvar x}; \\
	&\smallespecline{\exsts{x'} x\mapsto v,x'\lstar\dotin{v}{\Nat}\lstar \nlist{x',vs'}\lstar\pvar x\doteq x\lstar\pvar v\doteq v\lstar\alpha\doteq 3|v:vs'|+(v \bmod 2)\lstar\pvar r\doteq\nil} \\
	&\smallespecline{\exsts{x'} x\mapsto v,x'\lstar \nlist{x',vs'}\lstar\pvar x\doteq x\lstar\pvar v\doteq v\lstar\dotin{v}{\Nat}\lstar\alpha\doteq 3|v:vs'|+(v \bmod 2)\lstar\pvar r\doteq\nil} \\
	&\pifelses{\even{\pvar v}} \\
	&\tab \smallespecline{\exsts{x'} x\mapsto v,x'\lstar \nlist{x',vs'}\lstar\pvar x\doteq x\lstar\pvar v\doteq v\lstar\even{v}\lstar\alpha\doteq 3|v:vs'|\lstar\pvar r\doteq\nil} \\
	&\tab \pderef{\pvar x}{\pvar x+1}; \\
	&\tab \smallespecline{\exsts{x'} x\mapsto v,x'\lstar \nlist{x',vs'}\lstar\pvar x\doteq x'\lstar\pvar v\doteq v\lstar\even{v}\lstar\alpha-1\doteq 3|vs'|+2\lstar\pvar r\doteq\nil} \\
	&\tab \text{\polish{// as $3\alpha-1<\alpha$, we can apply LL's specifications for $\alpha-1$ }} \\
	&\tab\begin{leftvruled} {fr + ex}
	\tab \smallespecline{\nlist{x',vs'}\lstar\pvar x\doteq x'\lstar\pvar r\doteq\nil\lstar\alpha-1\doteq 3|vs'|+2} \\
	\tab \passign{\pvar r}{\mathtt{LL}(\pvar x)}; \\
	\tab \smallespecline{\nlist{x',vs'}\lstar\pvar x\doteq x'\lstar\pvar r\doteq|vs'|\lstar\alpha-1\doteq 3|vs'|+2} \\
	\tab \passign{\pvar r}{\pvar r+1}; \\
	\tab \smallespecline{\nlist{x',vs'}\lstar\pvar x\doteq x'\lstar\pvar r\doteq|vs'|+1\lstar\alpha-1\doteq 3|vs'|+2}
	\tab\end{leftvruled}\\
	&\tab \especline{\exsts{x'} x\mapsto v,x'\lstar\nlist{x',vs'}\lstar\pvar x\doteq x'\lstar\pvar v\doteq v\lstar\\
	                 \pvar r\doteq|v\cons\lstvs'|\lstar\even{v}\lstar\alpha\doteq 3|v:vs'|+(v\bmod 2)} \\
	&\pifelsem \\
	&\tab \especline{\exsts{x'} x\mapsto v,x'\lstar \nlist{x',vs'}\lstar\pvar x\doteq x\lstar\pvar v\doteq v\lstar\\
	                 \odd{\pvar v}\lstar\alpha-1\doteq 3|v:vs'|+1-(v\bmod 2)\lstar\pvar r\doteq\nil} \\
	&\tab \text{\polish{// as $\alpha-1<\alpha$, we can apply $g$'s specification for $\alpha-1$}} \\
	&\tab\begin{leftvruled} {fr + ex}
	\tab \smallespecline{\nlist{x,v:vs'}\lstar\pvar x\doteq x\lstar\pvar r\doteq\nil\lstar\alpha-1\doteq 3|v:vs'|+1-(v\bmod 2)} \\
	\tab \passign{\pvar r}{\mathtt{g}(\pvar x)}; \\
	\tab \smallespecline{\nlist{x,v:vs'}\lstar\pvar x\doteq x\lstar\pvar r\doteq|v:vs'|\lstar\alpha-1\doteq 3|v:vs'|+1-(v\bmod 2)}
	\tab\end{leftvruled} \\
	&\tab \especline{\exsts{x'} x\mapsto v,x'\lstar \nlist{x',vs'}\lstar\pvar x\doteq x\lstar\pvar v\doteq v\lstar\odd{\pvar v}\lstar\pvar r\doteq|v\cons\lstvs|\lstar \\ \alpha\doteq3|v:vs'|+(v\bmod 2)} \\
	&\pifelsee
	\especline{\exsts{x'} x\mapsto v,x'\lstar\nlist{x',vs'}\lstar\pvar v\doteq v\lstar\pvar r\doteq|v\cons\lstvs'|\lstar\\
                   (\pvar x\doteq x'\lstar\even{v}\vee\pvar x\doteq x\lstar\odd{v})\lstar\alpha\doteq3|v:vs'|+(v\bmod 2)}
\end{align*}
\endgroup

To conclude the proof, we need to show that $Q'_{LL}(\alpha)$, $Q'_{\fid}(\alpha)$ and $Q'_{g}(\alpha)$ are in the internalisations of their external counterparts, which is done as follows, again eliding the proof for $Q_g(\alpha)$ as it is analogous to that of $Q_\fid(\alpha)$.

\[
\begin{array}{l l}
	& \exsts{\vec p}Q'_{LL}(\alpha)[\vec p/\pvvar p]\lstar \pvar{ret}\doteq \pvar r[\vec p/\pvvar p] \\
	\Leftrightarrow & \exsts{p_x,p_r,p_v} \big((x\doteq\nil\lstar\lstvs\doteq\epsilon\lstar p_x\doteq x\lstar p_r\doteq|\lstvs|\lstar p_v\doteq\nil\lstar\alpha\doteq 3|\lstvs|+2)~\lor \\
	& \quad\quad(\exsts{v,\lstvs'}\nlist{x,v:\lstvs'}\lstar \lstvs\doteq v:\lstvs'\lstar p_x\doteq x\lstar p_v\doteq v\lstar p_r\doteq |\lstvs|\\
	& \quad\quad\quad\quad\lstar\alpha\doteq 3|\lstvs|+2)\big)\lstar\pvar{ret}\doteq p_r \\
	\Leftrightarrow & \nlist{x,\lstvs}\lstar\pvar{ret}\doteq|\lstvs|\lstar\alpha\doteq 3|\lstvs|+2 \\
\end{array}
\]

\[
\begin{array}{l l}
	& \exsts{\vec p}Q'_{\fid}(\alpha)[\vec p/\pvvar p]\lstar \pvar{ret}\doteq \pvar r[\vec p/\pvvar p] \\
	\Leftrightarrow & \exsts{p_x,p_v,p_r,x'}
	x\mapsto v,x'\lstar\nlist{x',\lstvs'}\lstar p_v\doteq v\lstar p_r\doteq|v\cons\lstvs'| \\
	& \quad\quad\lstar(p_x\doteq x'\lstar\even{v}\vee p_x\doteq x\lstar\odd{v})\lstar\alpha\doteq 3|v:\lstvs'|+(v \bmod 2) \\
	\Leftrightarrow & \exsts{x'} x\mapsto v,x'\lstar\nlist{x',\lstvs'}\lstar\pvar{ret}\doteq|v:\lstvs'|\\
	& \quad\quad\lstar(\even{v}\lstar\odd{v})\lstar\alpha\doteq 3|v:\lstvs'|+(v \bmod 2) \\
	\Leftrightarrow & \exsts{x'} x\mapsto v,x'\lstar\nlist{x',\lstvs'}\lstar\pvar{ret}\doteq|v:\lstvs'|\lstar\dotin{v}{\Nat}\lstar\alpha\doteq 3|v:\lstvs'|+(v \bmod 2)\\
	\Leftrightarrow &
	\nlist{x,v:\lstvs'}\lstar\pvar{ret}\doteq|v:\lstvs'|\lstar\alpha\doteq 3|v:\lstvs'|+(v \bmod 2)
\end{array}
\]



\subsection{List Swap}

We now consider a list-swap function that swaps the two first elements of a given list; if the given list is too short, an error is triggered.


\begingroup
\allowdisplaybreaks
\setlength{\jot}{0pt}
\begin{align*}
 &\especlines{\pvar x \doteq x \lstar \llist{x, n}} \\
 &\mathtt{LSwapFirstTwo}(\pvar x)~\{ \\
 &\tab\especlines{\oxok: \pvar x \doteq x \lstar \llist{x, n} \lstar \pvar a, \pvar b, \pvar y \doteq \nil} \\
 &\tab\pifelses{ \pvar{x} = \nil } \\
 &\tab\tab\especlines{\oxok: \pvar x \doteq x \lstar \llist{x, n} \lstar \pvar a, \pvar b, \pvar x, \pvar y \doteq \nil} \\
 &\tab\tab\perror{(\strlit{List~too~short!})} \\
 &\tab\tab\especlines{\oxerr: \pvar x \doteq x \lstar \llist{x, n} \lstar \pvar a, \pvar b, \pvar x, \pvar y \doteq \nil \lstar \pvar{err} \doteq \strlit{List~too~short!}} \\
 &\tab\pifelsem \\
 &\tab\tab\especlines{\oxok: \pvar x \doteq x \lstar \llist{x, n} \lstar \pvar a, \pvar b, \pvar y \doteq \nil \lstar \pvar x \dotneq \nil} \\
 &\tab\tab\especlines{\oxok: \exsts{v, x'} \pvar x \doteq x \lstar x \mapsto v, x' \lstar \llist{x', n - 1} \lstar \pvar a, \pvar b, \pvar y \doteq \nil} \\
 &\tab\tab\passign{\pvar y}{[\pvar x + 1]}; \\
 &\tab\tab\especlines{\oxok: \exsts{v} \pvar x \doteq x \lstar x \mapsto v, \pvar y \lstar \llist{\pvar y, n - 1} \lstar \pvar a, \pvar b \doteq \nil} \\
 &\tab\tab\pifelses{ \pvar{y} = \nil } \\
 &\tab\tab\tab\especlines{\oxok: \exsts{v} \pvar x \doteq x \lstar x \mapsto v, \pvar y \lstar \llist{\pvar y, n - 1} \lstar \pvar a, \pvar b \doteq \nil \lstar \pvar y \doteq \nil} \\
 &\tab\tab\tab\perror{(\strlit{List~too~short!})} \\
 &\tab\tab\tab\especline{\oxerr: \exsts{v} \pvar x \doteq x \lstar x \mapsto v, \pvar y \lstar \llist{\pvar y, n - 1} \lstar \pvar a, \pvar b \doteq \nil \lstar \pvar y \doteq \nil \\
                        \lstar \pvar{err} \doteq \strlit{List~too~short!}} \\
 &\tab\tab\pifelsem \\
 &\tab\tab\tab\especlines{\oxok: \exsts{v} \pvar x \doteq x \lstar x \mapsto v, \pvar y \lstar \llist{\pvar y, n - 1} \lstar \pvar a, \pvar b \doteq \nil \lstar \pvar y \dotneq \nil} \\
 &\tab\tab\tab\especline{\oxok: \exsts{v, v', x'} \pvar x \doteq x \lstar x \mapsto v, \pvar y \lstar \pvar y \mapsto v', x' \lstar \llist{x', n - 2} \lstar \pvar a, \pvar b \doteq \nil} \\
 &\tab\tab\tab\passign{\pvar a}{[\pvar x]}; \\
 &\tab\tab\tab\especline{\oxok: \exsts{v, v', x'} \pvar x \doteq x \lstar x \mapsto v, \pvar y \lstar \pvar y \mapsto v', x' \lstar \llist{x', n - 2} \lstar \pvar a \doteq v \lstar \pvar b \doteq \nil} \\
 &\tab\tab\tab\passign{\pvar b}{[\pvar y]}; \\
 &\tab\tab\tab\especline{\oxok: \exsts{v, v', x'} \pvar x \doteq x \lstar x \mapsto v, \pvar y \lstar \pvar y \mapsto v', x' \lstar \llist{x', n - 2} \lstar \pvar a \doteq v \lstar \pvar b \doteq v'} \\
 &\tab\tab\tab\passign{[\pvar y]}{\pvar a}; \\
 &\tab\tab\tab\especline{\oxok: \exsts{v, v', x'} \pvar x \doteq x \lstar x \mapsto v, \pvar y \lstar \pvar y \mapsto v, x' \lstar \llist{x', n - 2} \lstar \pvar a \doteq v \lstar \pvar b \doteq v'} \\
 &\tab\tab\tab\passign{[\pvar x]}{\pvar b} \\
 &\tab\tab\tab\especline{\oxok: \exsts{v, v', x'} \pvar x \doteq x \lstar x \mapsto v', \pvar y \lstar \pvar y \mapsto v, x' \lstar \llist{x', n - 2} \lstar \pvar a \doteq v \lstar \pvar b \doteq v'} \\
 &\tab\tab\tab\especline{\oxok: \exsts{x'} \pvar x \doteq x \lstar x \mapsto \pvar b, \pvar y \lstar \pvar y \mapsto \pvar a, x' \lstar \llist{x', n - 2}} \\
 &\tab\tab\} \\
 &\tab\tab\especline{\oxok: \exsts{x'} \pvar x \doteq x \lstar x \mapsto \pvar b, \pvar y \lstar \pvar y \mapsto \pvar a, x' \lstar \llist{x', n - 2} \\
                     \oxerr: \exsts{v} \pvar x \doteq x \lstar x \mapsto v, \pvar y \lstar \llist{\pvar y, n - 1} \lstar \pvar a, \pvar b \doteq \nil \lstar \pvar y \doteq \nil \lstar \\
                     \hspace*{1cm}\pvar{err} \doteq \strlit{List~too~short!}} \\
 &\tab\}; \\
 &\tab\especline{\oxok: \exsts{x'} \pvar x \doteq x \lstar x \mapsto \pvar b, \pvar y \lstar \pvar y \mapsto \pvar a, x' \lstar \llist{x', n - 2} \\
                 \oxerr: (\pvar x \doteq x \lstar \llist{x, n} \lstar \pvar a, \pvar b, \pvar x, \pvar y \doteq \nil \lstar \pvar{err} \doteq \strlit{List~too~short!})~\lor \\
                         \hspace*{1cm}(\exsts{v} \pvar x \doteq x \lstar x \mapsto v, \pvar y \lstar \llist{\pvar y, n - 1} \lstar \pvar a, \pvar b \doteq \nil \lstar \pvar y \doteq \nil \lstar \\
                         \hspace*{1cm}\pvar{err} \doteq \strlit{List~too~short!})} \\
 &\tab\preturn{\pvar \nil} \\
 &\} \\
 &\especline{\oxok: \llist{x, n} \lstar n \geq 2 \lstar \pvar{ret} \doteq \nil \\
            \oxerr: \llist{x, n} \lstar n < 2 \lstar \pvar{err} \doteq \strlit{List~too~short!}} \\
\end{align*}
\endgroup

\subsection{List Insert First}

We consider a list algorithm that inserts a new element $\pvar v$ at the beginning of a given list $\pvar x$:
\[
\begin{array}{l}
 \especlines{\pvar x\doteq x \lstar \pvar v \doteq v \lstar \llist{x, \lstxs, \lstvs}} \\
 \mathtt{LInsertFirst}(\pvar x, \pvar v)~\{ \\
 \tab\especlines{\pvar x\doteq x \lstar \pvar v \doteq v \lstar \llist{x, \lstxs, \lstvs} \lstar \pvar y \doteq \nil} \\
 \tab\palloc{\pvar y}{2}; \\
 \tab\especlines{\pvar x\doteq x \lstar \pvar v \doteq v \lstar \llist{x, \lstxs, \lstvs} \lstar \pvar y \mapsto \nil, \nil} \\
 \tab\pmutate{\pvar y}{v}; \\
 \tab\especlines{\pvar x\doteq x \lstar \pvar v \doteq v \lstar \llist{x, \lstxs, \lstvs} \lstar \pvar y \mapsto v, \nil} \\
 \tab\pmutate{\pvar y + 1}{x}; \\
 \tab\especlines{\pvar x\doteq x \lstar \pvar v \doteq v \lstar \llist{x, \lstxs, \lstvs} \lstar \pvar y \mapsto v, x} \\
 \tab\preturn{\pvar y} \\
 \} \\
 \especlines{\llist{x, \lstxs, \lstvs} \lstar \pvar{ret} \mapsto v, x} \\
 \especlines{\llist{\pvar{ret}, \pvar{ret} \cons \lstxs, v \cons \lstvs} \lstar \listptr{x, \lstxs}}
\end{array}
\]

\subsection{List Insert Last}

We consider an algorithm for inserting a new element $\pvar v$ at the end of a given list $\pvar x$:
\begingroup
\allowdisplaybreaks
\setlength{\jot}{0pt}
\begin{align*}
 &\especlines{\pvar x\doteq x \lstar \pvar v \doteq v \lstar \llist{x, \lstxs, \lstvs}} \\
 &\mathtt{LInsertLast}(\pvar x, \pvar v)~\{ \\
 &\tab\especlines{\pvar x\doteq x \lstar \pvar v \doteq v \lstar \llist{x, \lstxs, \lstvs} \lstar \pvar n, \pvar p, \pvar y \doteq \nil} \\
 &\tab\palloc{\pvar y}{2}; \\
 &\tab\especlines{\pvar x\doteq x \lstar \pvar v \doteq v \lstar \llist{x, \lstxs, \lstvs} \lstar \pvar y \mapsto \nil, \nil \lstar \pvar n, \pvar p \doteq \nil} \\
 &\tab\pmutate{\pvar y}{\pvar v}; \\
 &\tab\especlines{\pvar x\doteq x \lstar \pvar v \doteq v \lstar \llist{x, \lstxs, \lstvs} \lstar \pvar y \mapsto v, \nil \lstar \pvar n, \pvar p \doteq \nil} \\
 &\tab\pifelses{ \pvar{x} = \nil } \\
 &\tab\tab\especlines{\pvar x \doteq x \lstar \pvar v \doteq v \lstar \lstxs, \lstvs \doteq \emplist \lstar \pvar y \mapsto v, \nil \lstar \pvar n, \pvar p, \pvar x \doteq \nil} \\
 &\tab\tab\passign{\pvar x}{\pvar y} \\
 &\tab\tab\especlines{\pvar x \doteq \pvar y \lstar \pvar v \doteq v \lstar \llist{\pvar x, \lstxs \concat [\pvar x], \lstvs \concat [v]} \lstar \listptr{x, \lstxs} \lstar \pvar n, \pvar p, x \doteq \nil} \\
 &\tab\pifelsem \\
 &\tab\tab\especlines{\pvar x \doteq x \lstar \pvar v \doteq v \lstar \llist{x, \lstxs, \lstvs} \lstar \pvar y \mapsto v, \nil \lstar \pvar n, \pvar p \doteq \nil \lstar \pvar x \dotneq \nil} \\
 &\tab\tab\passign{\pvar p}{\pvar x}; \\
 &\tab\tab\especlines{\pvar x \doteq x \lstar \pvar v \doteq v \lstar \llist{x, \lstxs, \lstvs} \lstar \pvar y \mapsto v, \nil \lstar \pvar n \doteq \nil \lstar \pvar p \doteq \pvar x \lstar \pvar x \dotneq \nil} \\
 &\tab\tab\pderef{\pvar n}{\pvar x + 1}; \\
 &\tab\tab\especlines{P_0 \lstar \pvar x \doteq x \lstar \pvar v \doteq v \lstar \pvar y \mapsto v, \nil \lstar \pvar x \dotneq \nil \lstar \listptr{x, \lstxs}} \\
 &\tab\tab\pwhiles{\pvar n~\texttt{!=}~\nil} \\
 &\tab\tab\tab\especlines{P_i \lstar \pvar n \dotneq \nil} \\
 &\tab\tab\tab\passign{\pvar p}{\pvar n}; \\
 &\tab\tab\tab\especline{\exsts{\lstxs_\alpha, \lstxs_\beta, \lstvs_\alpha, v', \lstvs_\beta, p} \llseg{x, p, \lstxs_\alpha, \lstvs_\alpha} \lstar p \mapsto v', \pvar n \lstar \llist{\pvar n, \lstxs_\beta, \lstvs_\beta} \lstar \\
                        \lstxs \doteq \lstxs_\alpha \concat [p] \concat \lstxs_\beta \lstar \lstvs \doteq \lstvs_\alpha \concat [v'] \concat \lstvs_\beta \lstar |\lstxs_\alpha| = i \lstar \pvar n \dotneq \nil \lstar \pvar p \doteq \pvar n} \\
 &\tab\tab\tab\pderef{\pvar n}{\pvar n + 1} \\
 &\tab\tab\tab\especline{\exsts{\lstxs_\alpha, \lstxs_\beta, \lstvs_\alpha, v', v'', \lstvs_\beta, p} \llseg{x, p, \lstxs_\alpha, \lstvs_\alpha} \lstar p \mapsto v', \pvar p \lstar \pvar p \mapsto v'', \pvar n \lstar \llist{\pvar n, \lstxs_\beta, \lstvs_\beta} \lstar \\
                        \lstxs \doteq \lstxs_\alpha \concat [p, \pvar p] \concat \lstxs_\beta \lstar \lstvs \doteq \lstvs_\alpha \concat [v', v''] \concat \lstvs_\beta \lstar |\lstxs_\alpha| = i} \\
 &\tab\tab\tab\especline{\exsts{\lstxs_\alpha, \lstxs_\beta, \lstvs_\alpha, v'', \lstvs_\beta} \llseg{x, \pvar p, \lstxs_\alpha \concat [p], \lstvs_\alpha \concat [v']} \lstar \pvar p \mapsto v'', \pvar n \lstar \llist{\pvar n, \lstxs_\beta, \lstvs_\beta} \lstar \\
                        \lstxs \doteq (\lstxs_\alpha \concat [p]) \concat [\pvar p] \concat \lstxs_\beta \lstar \lstvs \doteq (\lstvs_\alpha \concat [v']) \concat [v''] \concat \lstvs_\beta \lstar |\lstxs_\alpha| = i} \\
 &\tab\tab\tab\especlines{P_{i+1}} \\
 &\tab\tab\pwhilee; \\
 &\tab\tab\especlines{\exsts{m} P_m \lstar \pvar n \doteq \nil \lstar \pvar x \doteq x \lstar \pvar v \doteq v \lstar \pvar y \mapsto v, \nil \lstar \pvar x \dotneq \nil \lstar \listptr{x, \lstxs}} \\
 &\tab\tab\especline{\exsts{v'} \llseg{x, \pvar p, \lstxs', \lstvs'} \lstar \lstxs \doteq \lstxs' \concat [\pvar p] \lstar \lstvs \doteq \lstvs' \concat [v'] \lstar \pvar p \mapsto v', \pvar n \lstar \\
                    \pvar n \doteq \nil \lstar \pvar x \doteq x \lstar \pvar v \doteq v \lstar \pvar y \mapsto v, \nil \lstar \pvar x \dotneq \nil \lstar \listptr{x, \lstxs}} \\
 &\tab\tab\pmutate{\pvar p + 1}{\pvar y} \\
 &\tab\tab\especline{\llist{\pvar x, \lstxs \concat [\pvar y], \lstvs \concat [v]} \lstar \pvar n \doteq \nil \lstar \pvar x \doteq x \lstar \pvar v \doteq v \lstar \pvar x \dotneq \nil \lstar \listptr{x, \lstxs}} \\
 &\tab\}; \\
 &\especline{(\pvar x \doteq \pvar y \lstar \pvar v \doteq v \lstar \llist{\pvar x, \lstxs \concat [\pvar x], \lstvs \concat [v]} \lstar \listptr{x, \lstxs} \lstar \pvar n, \pvar p, x \doteq \nil)~\lor \\
            (\llist{\pvar x, \lstxs \concat [\pvar y], \lstvs \concat [v]} \lstar \pvar n \doteq \nil \lstar \pvar x \doteq x \lstar \pvar v \doteq v \lstar \pvar x \dotneq \nil \lstar \listptr{x, \lstxs})} \\
 &\tab\preturn{\pvar x} \\
 &\} \\
 &\especlines{\exsts{x'} \llist{\pvar{ret}, \lstxs \concat [x'], \lstvs \concat [v]} \lstar \listptr{x, \lstxs}}
\end{align*}
\endgroup
where $P_i$ denotes the loop invariant $\exsts{\lstxs_\alpha, \lstxs_\beta, \lstvs_\alpha, v', \lstvs_\beta} \llseg{x, \pvar p, \lstxs_\alpha, \lstvs_\alpha} \lstar \pvar p \mapsto v', \pvar n \lstar \llist{\pvar n, \lstxs_\beta, \lstvs_\beta} \lstar \lstxs \doteq \lstxs_\alpha \concat [\pvar p] \concat \lstxs_\beta \lstar \lstvs \doteq \lstvs_\alpha \concat [v'] \concat \lstvs_\beta \lstar |\lstxs_\alpha| = i$.

\subsection{List Free}

We consider the list-free algorithm, $\mathtt{LFree}(\pvar x)$, which frees all the nodes of a given singly-linked list starting at $\pvar x$. The algorithm is implemented as follows:
\[
\begin{array}{l}
	\mathtt{LFree}(\pvar x)~\{ \\
	\tab\pifelses{ \pvar{x} = \nil } \\
	\tab\tab\pskip \\ \tab\pifelsem \\
	\tab\tab\passign{\pvar{y}}{\pvar x}; \passign{\pvar x}{[\pvar x+1]}; \\
	\tab\tab\pdealloc{\pvar y}; \pdealloc{\pvar y+1}; \\
	\tab\tab\passign{\pvar r}{\mathtt{LFree}(\pvar x)} \\
	\tab\}; \\
	\tab\preturn{\nil} \\
	\} \\
\end{array}
\]
The proof sketch of the body of the algorithm is given below. The measure we use to handle recursion is the length of the list, which corresponds to the number of pointers, $\alpha \defeq |\lstxs|$. 
\[
\small
\begin{array}{r@{~}l}
	\fsctx(\alpha)\vdash
	& \smallespecline{\pvar x\doteq x\lstar\llist{x,\lstxs}\lstar\alpha\doteq|\lstxs|\lstar\pvar r,\pvar y\doteq\nil} \\
	& \pifelses{ \pvar{x} = \nil }\\
	& \tab\smallespecline{\pvar x\doteq x\lstar x\doteq\nil\lstar\llist{x,\lstxs}\lstar\alpha\doteq|\lstxs|\lstar\pvar r,\pvar y\doteq\nil} \\
	& \tab\pskip \\
	& \tab\smallespecline{\pvar x\doteq x\lstar x\doteq\nil\lstar\llist{x,\lstxs}\lstar\alpha\doteq|\lstxs|\lstar\pvar r,\pvar y\doteq\nil} \\
	& \tab\smallespecline{\pvar x\doteq x\lstar x\doteq\nil\lstar\lstxs\doteq\epsilon \lstar \alpha\doteq|\lstxs|\lstar\pvar r,\pvar y\doteq\nil} \\
	& \pifelsem \\
	& \tab\smallespecline{\pvar x\doteq x\lstar x\dotneq\nil\lstar\llist{x,\lstxs}\lstar\alpha\doteq|\lstxs|\lstar\pvar r,\pvar y\doteq\nil} \\
	& \tab\smallespecline{\exsts{x',v,\lstxs'}\pvar x\doteq x\lstar x\mapsto v,x'\lstar\llist{x',\lstxs'}\lstar\lstxs\doteq x:\lstxs'\lstar\alpha\doteq|\lstxs|\lstar\pvar r,\pvar y\doteq\nil} \\
	& \tab\passign{\pvar{y}}{\pvar x}; \\
	& \tab\smallespecline{\exsts{x',v,\lstxs'}\pvar x\doteq x\lstar\pvar y\doteq x\lstar x\mapsto v,x'\lstar\llist{x',\lstxs'}\lstar\lstxs\doteq x:\lstxs'\lstar\alpha\doteq|\lstxs|\lstar\pvar r\doteq\nil} \\
	& \tab\passign{\pvar x}{[\pvar x+1]}; \\
	& \tab\smallespecline{\exsts{x',v,\lstxs'}\pvar x\doteq x'\lstar\pvar y\doteq x\lstar x\mapsto v,x'\lstar\llist{x',\lstxs'}\lstar\lstxs\doteq x:\lstxs'\lstar\alpha\doteq|\lstxs|\lstar\pvar r\doteq\nil} \\
	& \tab\smallespecline{\exsts{x',v,\lstxs'}\pvar x\doteq x'\lstar\pvar y\doteq x\lstar \pvar y\mapsto v,x'\lstar\llist{x',\lstxs'}\lstar\lstxs\doteq x:\lstxs'\lstar\alpha\doteq|\lstxs|\lstar\pvar r\doteq\nil} \\
	& \tab\pdealloc{\pvar y}; \\
	& \tab\smallespecline{\exsts{x',\lstxs'}\pvar x\doteq x'\lstar\pvar y\doteq x\lstar \pvar y\mapsto \cfreed,x'\lstar\llist{x',\lstxs'}\lstar\lstxs\doteq x:\lstxs'\lstar\alpha-1\doteq|\lstxs'|\lstar\pvar r\doteq\nil} \\
	& \tab\pdealloc{\pvar y+1}; \\
	& \tab\text{\polish{// as $\alpha-1<\alpha$, we can apply the specification for $\alpha-1$ }} \\
	& \tab\smallespecline{\exsts{x',\lstxs'}\pvar x\doteq x'\lstar\pvar y\doteq x\lstar \pvar y\mapsto \cfreed,\cfreed\lstar\llist{x',\lstxs'}\lstar\lstxs\doteq x:\lstxs'\lstar\alpha-1\doteq|\lstxs'|\lstar\pvar r\doteq\nil} \\
	& \tab\passign{\pvar r}{\mathtt{LFree}(\pvar x)} \\
	& \tab\especline{ \exsts{x',\lstxs'}\pvar x\doteq x'\lstar\pvar y\doteq x\lstar x \mapsto \cfreed,\cfreed\lstar\pfreed{\lstxs'} \lstar \listptr{x', \lstxs'} \lstar\lstxs\doteq x:\lstxs'\lstar \\
                          \alpha-1\doteq|\lstxs'|\lstar\pvar r\doteq\nil } \\
	& \}; \\
	& \especline{ (\pvar x\doteq x\lstar x\doteq\nil\lstar\lstxs\doteq\epsilon\lstar\pvar r,\pvar y\doteq\nil\lstar \alpha\doteq|\lstxs|)~\vee \\
	              (\exsts{x',\lstxs'}\pvar x\doteq x'\lstar\pvar y\doteq x\lstar x \mapsto \cfreed,\cfreed\lstar\pfreed{\lstxs'} \lstar \listptr{x', \lstxs'} \lstar\lstxs\doteq x:\lstxs'\lstar \\
                       \pvar r\doteq\nil\lstar\alpha\doteq|\lstxs|)}
\end{array}
\]

We conclude the proof by moving from the internal to the external specification:
$$
\begin{array}{r l}
	&\exists p_x, p_y, p_r.~\pvar{ret} \doteq \nil \lstar
	(p_x\doteq x\lstar x\doteq\nil\lstar\lstxs\doteq\epsilon\lstar p_r,p_y\doteq\nil\lstar\alpha\doteq|\lstxs|)~\vee \\
	 &\quad\quad(\exsts{x',\lstxs'}p_x\doteq x'\lstar p_y\doteq x\lstar x \mapsto \cfreed,\cfreed\lstar\pfreed{\lstxs'}\lstar\listptr{x', \lstxs'}\lstar\lstxs\doteq x:\lstxs'\lstar \\
	&\quad\quad p_r\doteq\nil\lstar\alpha\doteq|\lstxs|)\\
\Leftrightarrow &
\pvar{ret} \doteq \nil \lstar\alpha\doteq|\lstxs|\lstar
( (x\doteq\nil\lstar\lstxs\doteq\epsilon) \lor (\exsts{\lstxs'}x \mapsto \cfreed,\cfreed\lstar\pfreed{\lstxs'}\lstar\lstxs\doteq x:\lstxs')) \\
 \Leftrightarrow&\pfreed{\lstxs} \lstar \listptr{x, \lstxs} \lstar \pvar{ret} \doteq \nil \lstar\alpha\doteq|\lstxs|
\end{array}
$$

\subsection{List Algorithm Client}
We consider the following client of our proven-correct list algorithms
\[
\begin{array}{l}
 \mathtt{LClient}(\pvar x)~\{ \\
 \tab\passign{\pvar{l}}{\mathtt{LLen(\pvar x)}}; \\
 \tab\pifelses{ \pvar{l} < 5 } \\
 \tab\tab\passign{\pvar{r}}{\mathtt{LFree(\pvar x)}}; \\
 \tab\tab\perror(\strlit{List~too~short!}) \\
 \tab\pifelsem \\
 \tab\tab\pifelses{ \pvar{l} > 10} \\
 \tab\tab\tab\pwhiles{\true} ~\pskip~\} \\
 \tab\tab\pifelsem \\
 \tab\tab\tab\passign{\pvar{r}}{\mathtt{LPRev(\pvar l)}} \\
 \tab\tab\} \\
 \tab\}; \\
 \tab\preturn{\pvar{r}} \\
 \} \\
\end{array}
\]
and prove that it satisfies the following ESL specification:
\[
\begin{array}{c}
 \smallespecline{\pvar x\doteq x \lstar  \llist{x, \lstxs}} \\
 {\mathtt{LClient}(\pvar x)} \\
 \smallespecline{\oxok: 5 \dotle |\lstxs| \dotle 10 \lstar \llist{\pvar{ret}, \lstxs^\dagger} \lstar \listptr{x, \lstxs}} \\
 \smallespecline{\oxerr: |\lstxs| \dotlt 5 \lstar \pfreed{\lstxs} \lstar \listptr{x, \lstxs} \lstar \pvar{err} \doteq \strlit{List~too~short!}}
\end{array}
\]

We prove the three branches of the client separately, exposing the non-terminating case, and then join the obtained specifications through the admissible disjunction property, to obtain the above specification. We give two of the three proofs for the branches below; the third is analogous to the first. We denote the client's success post-condition by $\Qok$, the client's faulting post-condition by $\Qerr$, and assume a specification context $\fsctx$ that has the appropriate specifications of the called functions.
\begingroup
\small
\allowdisplaybreaks
\setlength{\jot}{0pt}
\begin{align*}
	\fsctx \vdash
	&~ \smallespecline{\pvar x \doteq x \lstar \llist{x, \lstxs} \lstar |\lstxs| \dotlt 5} \\
	&\mathtt{LClient}(\pvar x)~\{ \\
	&\tab\smallespecline{\pvar x \doteq x \lstar \llist{x, \lstxs} \lstar |\lstxs| \dotlt 5 \lstar \pvar l, \pvar r \doteq \nil} \\
	&\tab\passign{\pvar{l}}{\mathtt{LLen(\pvar x)}}; \\
	&\tab\especline{{\pvar x \doteq x \lstar \llist{x, \lstxs}} \lstar |\lstxs| \dotlt 5 \lstar \pvar l \doteq |\lstxs| \lstar \pvar r \doteq \nil} \\
	&\tab\pifelses{ \pvar{l} < 5 } \\
	&\tab\tab\especline{\pvar x \doteq x \lstar \llist{x, \lstxs} \lstar |\lstxs| \dotlt 5 \lstar \pvar l \doteq |\lstxs| \lstar \pvar r \doteq \nil} \\
	&\tab\tab\passign{\pvar{r}}{\mathtt{LFree(\pvar x)}}; \\
	&\tab\tab\especline{\pvar x \doteq x \lstar \pfreed{\lstxs} \lstar \listptr{x, \lstxs} \lstar |\lstxs| \dotlt 5 \lstar \pvar l \doteq |\lstxs| \lstar \pvar r \doteq \nil} \\
	&\tab\tab\perror(\strlit{LTS}) \\
	&\tab\tab\smallespecline{\oxerr:~\Qerr \lstar \pvar x \doteq x \lstar \pvar l \doteq |\lstxs| \lstar \pvar r \doteq \nil} \\
	&\tab \pifelsem \\
	&\tab\tab\smallespecline{\AssFalse}~\ldots~\smallespecline{\AssFalse} \\
	&\tab\}; \\
	&\tab\smallespecline{\oxerr:~\Qerr \lstar \pvar x \doteq x \lstar \pvar l \doteq |\lstxs| \lstar \pvar r \doteq \nil } \\
	&\tab\preturn{\pvar{r}} \\
	&\} \\
	& \especline{\oxerr: \Qerr \lstar \exists p_x, p_l, p_r.~p_x \doteq x \lstar p_l \doteq |\lstxs| \lstar p_r \doteq \nil} \\
	& \smallespecline{\oxerr: \Qerr }
\end{align*}
\begin{align*}
	\fsctx \vdash
	 &~ \smallespecline{\pvar x \doteq x \lstar \llist{x, \lstxs} \lstar |\lstxs| \dotgt 10} \\
	&\mathtt{LClient}(\pvar x)~\{ \\
	&\tab\especline{\pvar x \doteq x \lstar \llist{x, \lstxs} \lstar |\lstxs| \dotgt 10 \lstar \pvar l, \pvar r \doteq \nil} \\
	&\tab\passign{\pvar{l}}{\mathtt{LLen(\pvar x)}}; \\
	&\tab\especline{{\pvar x \doteq x \lstar \llist{x, \lstxs}} \lstar |\lstxs| \dotgt 10 \lstar \pvar l \doteq |\lstxs| \lstar \pvar r \doteq \nil} \\
	&\tab\pifelses{ \pvar{l} < 5 } \\
	&\tab\tab\smallespecline{\AssFalse}~\ldots~\smallespecline{\AssFalse} \\
	&\tab \pifelsem \\
	&\tab\tab\especline{{\pvar x \doteq x \lstar \llist{x, \lstxs}} \lstar |\lstxs| \dotgt 10 \lstar \pvar l \doteq |\lstxs| \lstar \pvar r \doteq \nil} \\
	&\tab\tab\pifelses{ \pvar{l} > 10} \\
	&\tab\tab\tab\especline{{\pvar x \doteq x \lstar \llist{x, \lstxs}} \lstar |\lstxs| \dotgt 10 \lstar \pvar l \doteq |\lstxs| \lstar \pvar r \doteq \nil} \\
	&\tab\tab\tab \pwhiles{\true} ~\pskip~\} \\
	&\tab\tab\tab\smallespecline{\AssFalse} \\
	&\tab\tab\pifelsem \\
	&\tab\tab\tab\smallespecline{\AssFalse}~\ldots~\smallespecline{\AssFalse} \\
	&\tab\tab\} \\
	&\tab\}; \\
        &\tab\smallespecline{\AssFalse} \\
	&\tab\preturn{\pvar{r}} \\
	&\} \\
        &\smallespecline{\AssFalse}
\end{align*}%
\endgroup

The three obtained specifications then yield via disjunction:
$$
\small
\uquadruple{(\pvar x\doteq x \lstar  \llist{x, \lstxs} \lstar |\lstxs| \dotlt 5)~\lor \\ (\pvar x\doteq x \lstar  \llist{x, \lstxs} \lstar 5 \dotle |\lstxs| \dotle 10)~\lor \\ (\pvar x\doteq x \lstar  \llist{x, \lstxs} \lstar |\lstxs| \dotgt 10) }{\mathtt{LClient}(\pvar x)}{\AssFalse \lor \Qok \lor \AssFalse}{\Qerr \lor \AssFalse \lor \AssFalse}
$$
and via equivalence the desired
$$
\small
\uquadruple{\pvar x\doteq x \lstar \llist{x, \lstxs}}{\mathtt{LClient}(\pvar x)}{\Qok}{\Qerr}
$$

\subsection{Binary Search Tree: BSTFindMin}

Now, we illustrate reasoning about binary search tree (BST) algorithms. We start with the following search algorithm, where we omit the case when the tree is empty to make the proof shorter and more focused:
\[
\begin{array}{l}
 	\mathtt{BSTFindMin}(\pvar x)~\{ \\
	\tab \pderef{\pvar l}{\pvar x +1};\\
 	\tab \pifelses{\pvar l = \nil} \\
 	\tab\tab \pderef{\pvar{min}}{\pvar x} \\
	\tab \pifelsem \\
 	\tab\tab \passign{\pvar{min}}{\mathtt{BSTFindMin}(\pvar l)} \\
	\tab \}; \\
	\tab \preturn{\pvar{min}} \\
	\pifelsee \\
\end{array}
\]
The specification we would like to prove is:
\[
\begin{array}{c}
 {\color{blue} (\pvar x \doteq x \lstar x \neq \nil \lstar \bst{x,K})} \\
 {\mathtt{BSTFindMin}(\pvar x)} \\
 {\color{blue} (x\neq\nil\lstar\bst{x,K}\lstar\pvar{ret}\doteq\min(K))}
\end{array}
\]
The proof sketch is as follows, using the measure that $\alpha$ is set-cardinality $\card{K}$:
\[
\small
\begin{array}{l}
	\fsctx(\alpha)\vdash \\
	\smallespecline{\pvar x \doteq x \lstar x \neq \nil \lstar \bst{x,K} \lstar \alpha\doteq\card{K} \lstar \pvar l,\pvar{min}\doteq\nil} \\
	\especline{\exsts{k,l,r,K_l,K_r} \pvar x\doteq x\lstar x\mapsto k,l,r \lstar \bst{r,K_r} \lstar \bst{l,K_l} \lstar K\doteq K_l\uplus\{k\}\uplus K_r
	\lstar K_l<k \lstar k<K_r\lstar \\
	\alpha\doteq\card{K}\lstar\pvar l,\pvar{min}\doteq\nil} \\
	\pderef{\pvar l}{\pvar x +1};\\
	\especline{\exsts{k,l,r,K_l,K_r} \pvar x\doteq x\lstar x\mapsto k,l,r \lstar \bst{r,K_r} \lstar \bst{l,K_l} \lstar K\doteq K_l\uplus\{k\}\uplus K_r
	\lstar K_l<k \lstar k<K_r\lstar \\
	\alpha\doteq\card{K}\lstar\pvar l\doteq l\lstar\pvar{min}\doteq\nil} \\
 	\pifelses{\pvar l = \nil} \\
	\tab \especline{\exsts{k,l,r,K_l,K_r} \pvar x\doteq x\lstar x\mapsto k,l,r \lstar \bst{r,K_r} \lstar \bst{l,K_l} \lstar K\doteq K_l\uplus\{k\}\uplus K_r
	\lstar K_l<k \lstar k<K_r\lstar \\
	\alpha\doteq\card{K}\lstar\pvar l\doteq l\lstar l\doteq\nil\lstar\pvar{min}\doteq\nil} \\
 	\tab \pderef{\pvar{min}}{\pvar x} \\
	\tab \especline{\exsts{k,l,r,K_l,K_r} \pvar x\doteq x\lstar x\mapsto k,l,r \lstar \bst{r,K_r} \lstar \bst{l,K_l} \lstar K\doteq K_l\uplus\{k\}\uplus K_r
	\lstar K_l<k \lstar k<K_r\lstar \\
	\alpha\doteq\card{K}\lstar\pvar l\doteq l\lstar l\doteq\nil\lstar\pvar{min}\doteq k} \\
	\tab \text{\polish{// $\bst{l,K_l}\lstar l\doteq\nil\Rightarrow K_l \doteq \emptyset \lstar \AssTrue$ which with $K\doteq K_l\uplus\{k\}\uplus K_r$ and $k<K_r$ implies $k=\min{K}$}} \\
	\tab \especline{\exsts{k,l,r,K_l,K_r} \pvar x\doteq x\lstar x\mapsto k,l,r \lstar \bst{r,K_r} \lstar \bst{l,K_l} \lstar K\doteq K_l\uplus\{k\}\uplus K_r
	\lstar K_l<k \lstar k<K_r\lstar \\
	\alpha\doteq\card{K}\lstar\pvar l\doteq l\lstar l\doteq\nil\lstar\pvar{min}\doteq \min(K)}\\
	\pifelsem \\
	\tab \especline{\exsts{k,l,r,K_l,K_r} \pvar x\doteq x\lstar x\mapsto k,l,r \lstar \bst{r,K_r} \lstar \bst{l,K_l} \lstar K\doteq K_l\uplus\{k\}\uplus K_r
	\lstar K_l<k \lstar k<K_r\lstar \\
	\alpha\doteq\card{K}\lstar\pvar l\doteq l\lstar l\neq\nil\lstar\pvar{min}\doteq\nil} \\
	\tab \text{\polish{// $K_l$ is a strict subset of $K$ and therefore of strictly smaller cardinality}} \\
	\tab\begin{leftvruled} {fr + ex}
		\tab \smallespecline{\pvar{min}\doteq\nil\lstar\pvar l\doteq l\lstar l\neq\nil\lstar\bst{l,K_l}\lstar\alpha>\card{K_l}} \\
		\tab \passign{\pvar{min}}{\mathtt{BSTFindMin}(\pvar l)} \\
		\tab \smallespecline{\pvar{min}\doteq\min(K_l)\lstar\pvar l\doteq l\lstar l\neq\nil\lstar\bst{l,K_l}\lstar\alpha>\card{K_l}}
	\tab\end{leftvruled}\\
	\tab \especline{\exsts{k,l,r,K_l,K_r} \pvar x\doteq x\lstar x\mapsto k,l,r \lstar \bst{r,K_r} \lstar \bst{l,K_l} \lstar K\doteq K_l\uplus\{k\}\uplus K_r
	\lstar K_l<k \lstar k<K_r\lstar \\
	\alpha\doteq\card{K}\lstar\pvar l\doteq l\lstar l\neq\nil\lstar\pvar{min}\doteq\min(K_l)} \\
	\tab \text{\polish{// $K\doteq K_l\uplus\{k\}\uplus K_r$, $K_l<k$, $k<K_r$ and $\min(K_l)=\min(K)$}} \\
	\tab \especline{\exsts{k,l,r,K_l,K_r} \pvar x\doteq x\lstar x\mapsto k,l,r \lstar \bst{r,K_r} \lstar \bst{l,K_l} \lstar K\doteq K_l\uplus\{k\}\uplus K_r
	\lstar K_l<k \lstar k<K_r\lstar \\
	\alpha\doteq\card{K}\lstar\pvar l\doteq l\lstar l\neq\nil\lstar\pvar{min}\doteq\min(K)} \\
	\}; \\
	\especline{(\exsts{k,l,r,K_l,K_r} \pvar x\doteq x\lstar x\mapsto k,l,r \lstar \bst{r,K_r} \lstar \bst{l,K_l} \lstar K\doteq K_l\uplus\{k\}\uplus K_r
	\lstar K_l<k \lstar k<K_r\lstar \\
	\alpha\doteq\card{K}\lstar\pvar l\doteq l\lstar l\doteq\nil\lstar\pvar{min}\doteq \min(K))~\vee \\
	(\exsts{k,l,r,K_l,K_r} \pvar x\doteq x\lstar x\mapsto k,l,r \lstar \bst{r,K_r} \lstar \bst{l,K_l} \lstar K\doteq K_l\uplus\{k\}\uplus K_r
	\lstar K_l<k \lstar k<K_r\lstar \\
	\alpha\doteq\card{K}\lstar\pvar l\doteq l\lstar l\neq\nil\lstar\pvar{min}\doteq\min(K))} \\
	\especline{\exsts{k,l,r,K_l,K_r} \pvar x\doteq x\lstar x\mapsto k,l,r \lstar \bst{r,K_r} \lstar \bst{l,K_l} \lstar K\doteq K_l\uplus\{k\}\uplus K_r
	\lstar K_l<k \lstar k<K_r\lstar \\
	\alpha\doteq\card{K}\lstar\pvar l\doteq l\lstar \pvar{min}\doteq \min(K)\lstar (l\doteq\nil\vee l \neq\nil)} \\
	\especline{\exsts{k,l,r,K_l,K_r} \pvar x\doteq x\lstar x\mapsto k,l,r \lstar \bst{r,K_r} \lstar \bst{l,K_l} \lstar K\doteq K_l\uplus\{k\}\uplus K_r
	\lstar K_l<k \lstar k<K_r\lstar \\
	\alpha\doteq\card{K}\lstar\pvar l\doteq l\lstar \pvar{min}\doteq \min(K)} \\
 	\preturn{\pvar{min}} \\
\end{array}
\]
Lastly, we need to prove the connection between the interior and exterior post-conditions:
\[
\begin{array}{l l}
	&\exsts{p_l,p_{min},p_x,k,l,r,K_l,K_r} p_x\doteq x\lstar x\mapsto k,l,r \lstar \bst{r,K_r} \lstar \bst{l,K_l} \lstar \\
	& K\doteq K_l\uplus\{k\}\uplus K_r \lstar K_l<k \lstar k<K_r\lstar\alpha\doteq\card{K}\lstar p_l\doteq l\lstar p_{min}\doteq \min(K) \lstar\pvar{ret}\doteq p_{min}\\
	\Leftrightarrow & \exsts{k,l,r,K_l,K_r} x\mapsto k,l,r \lstar \bst{r,K_r} \lstar \bst{l,K_l} \lstar \\
	& K\doteq K_l\uplus\{k\}\uplus K_r \lstar K_l<k \lstar k<K_r\lstar\alpha\doteq\card{K} \lstar\pvar{ret}\doteq \min(K)\\
	\Leftrightarrow & x\neq\nil \lstar \bst{x,K}\lstar\alpha\doteq\card{K} \lstar\pvar{ret}\doteq \min(K)\\
\end{array}
\]

\subsection{Binary Search Tree: BSTInsert}

Next, we consider a function that inserts a value into a BST:
\[
\begin{array}{l}
 	\mathtt{BSTInsert}(\pvar v,\pvar x)~\{ \\
 	\tab \pifelses{\pvar x = \nil} \\
 	\tab\tab \palloc{\pvar{x}}{3}; \pmutate{\pvar x}{\pvar v} \\
	\tab \pifelsem \\
	\tab\tab \pderef{\pvar k}{\pvar x}; \\
	\tab\tab \pifelses{\pvar k = \pvar v} \\
	\tab\tab\tab \pskip \\
	\tab\tab \pifelsem \\
	\tab\tab\tab \pifelses{\pvar v<\pvar k} \\
	\tab\tab\tab\tab \pderef{\pvar y}{\pvar x+1}; \passign{\pvar y}{\mathtt{BSTInsert}(\pvar v,\pvar y)}; \pmutate{\pvar x+1}{\pvar y} \\
	\tab\tab\tab \pifelsem \\
	\tab\tab\tab\tab \pderef{\pvar y}{\pvar x+2};  \passign{\pvar y}{\mathtt{BSTInsert}(\pvar v, \pvar y)};  \pmutate{\pvar x+2}{\pvar y} \\
	\tab\tab\tab \pifelsee \\
	\tab\tab \pifelsee \\
	\tab \pifelsee; \\
	\tab \preturn{\pvar x} \\
	\}
\end{array}
\]
For the specification, we require the $\predd{\bstroot}{x, \tree}$ stating that $x$ is the root of the tree $\tau$, for the same reason we need the $\listptr{x, \alpha}$ predicate for some list algorithms:
\[
\predd{\bstroot}{x, \tree} \defeq (\tree \doteq \emptytree \lstar x \doteq \nil) \lor (\exsts{k, \tree_l, \tree_r} \tree \doteq ((x, k), \tree_l, \tree_r))
\]
For completeness, we also give a functional description of our mathematical $\predd{BSTInsert}{\tree, (x', v)}$ function:
\[
\begin{array}{l}
\predd{BSTInsert}{\emptytree, (x', v)} \defeq ((x', v), \emptytree, \emptytree) \\[2mm]
\predd{BSTInsert}{((x, k), \tree_l, \tree_r), (x', v)} \defeq \\
\qquad \mathsf{if}\ v < k\ \mathsf{then} \\
\qquad\quad ((x, k), \predd{BSTInsert}{\tree_l, (x', v)}, \tree_r) \\
\qquad \mathsf{else\ if}\ k < v\ \mathsf{then} \\
\qquad\quad ((x, k), \tree_l, \predd{BSTInsert}{\tree_r, (x', v)}) \\
\qquad \mathsf{else} \\
\qquad\quad ((x, k), \tree_l, \tree_r)
\end{array}
\]
The specification we would like to prove is:
\[
\small
\begin{array}{c}
 {\color{blue} (\pvar x \doteq x \lstar \pvar v \doteq v \lstar \bst{x, \tree})} \\
 {\mathtt{BSTInsert}(\pvar x, \pvar v)} \\
 {\color{blue} (\exists x'.~ \bst{\pvar{ret}, \predd{BSTInsert}{\tree, (x', v)}} \lstar \predd{\bstroot}{x, \tree})}
\end{array}
\]
The proof sketch is as follows, using the measure that $\alpha$ equals $\predd{height}{\tree}$:
\[
\small
\begin{array}{l}
 	\mathtt{BSTInsert}(\pvar v,\pvar x)~\{ \\
	\hspace*{-0.63cm}\Gamma(\alpha) \vdash \specline{ \pvar x \doteq x \lstar \pvar v \doteq v \lstar \bst{x, \tree} \lstar \pvar k, \pvar y \doteq \nil \lstar \alpha \doteq \predd{height}{\tree}} \\
 	\tab \pifelses{\pvar x = \nil} \\
	\tab\tab \specline{ \pvar x \doteq x \lstar \pvar v \doteq v \lstar \bst{x, \tree} \lstar \pvar k, \pvar y \doteq \nil \lstar \shade{\pvar x \doteq \nil} \lstar \alpha \doteq \predd{height}{\tree}} \\
	\tab\tab \specline{ \pvar x \doteq x \lstar \pvar v \doteq v \lstar \shade{\predd{\bstroot}{x, \tree} \lstar \tree \doteq \emptytree} \lstar \pvar k, \pvar y \doteq \nil  \lstar \alpha \doteq \predd{height}{\tree}} \\
 	\tab\tab \palloc{\pvar{x}}{3}; \pmutate{\pvar x}{\pvar v} \\
	\tab\tab \specline{\shade{\pvar x \mapsto v, \nil, \nil} \lstar \pvar v \doteq v \lstar \predd{\bstroot}{x, \tree} \lstar \tree \doteq \emptytree \lstar \pvar k, \pvar y \doteq \nil  \lstar \alpha \doteq \predd{height}{\tree}} \\
	\tab\tab \text{\polish{// $\pvar x \mapsto v, \nil, \nil$ is equivalent to $\bst{\pvar x, ((\pvar x, v), \emptytree, \emptytree)}$}} \\
	\tab\tab \text{\polish{// When $\tree \doteq \emptytree$, $\predd{BSTInsert}{\tree, (\pvar x, v)} = ((\pvar x, v), \emptytree, \emptytree)$}} \\
	\tab\tab \specline{Q_1:~\shade{\bst{\pvar x, \predd{BSTInsert}{\tree, (\pvar x, v)}}} \lstar \predd{\bstroot}{x, \tree} \lstar \alpha \doteq \predd{height}{\tree} \lstar \tree \doteq \emptytree \lstar \pvar v \doteq v \lstar \pvar k, \pvar y \doteq \nil } \\
	\tab \pifelsem \\
	\tab\tab \specline{ \pvar x \doteq x \lstar \pvar v \doteq v \lstar \bst{x, \tree} \lstar \pvar k, \pvar y \doteq \nil \lstar \shade{\pvar x \dotneq \nil} \lstar \alpha \doteq \predd{height}{\tree}} \\
	\tab\tab \text{\polish{// Unfold BST, add on $\predd{\bstroot}{x, \tree}$ by equivalence }} \\
	\tab\tab \specline{\shade{\exsts {k, l, r, \tree_l, \tree_r}}~\pvar x \doteq x \lstar \pvar v \doteq v \lstar \shade{x \mapsto k, l, r \lstar \bst{r, \tree_r} \lstar \bst{l, \tree_l} \lstar \tree \doteq ((x, k), \tree_l, \tree_r) \lstar} \\ \shade{\tree_l < k \lstar k < \tree_r \star \predd{\bstroot}{x, \tree}} \lstar \pvar k, \pvar y \doteq \nil \lstar \alpha \doteq \predd{height}{\tree}} \\
	\tab\tab \pderef{\pvar k}{\pvar x}; \\
	\tab\tab \text{\polish{// $\pvar k$ in place of $k$, dropping the existential}} \\
	\tab\tab \specline{\exsts {l, r, \tree_l, \tree_r} \pvar x \doteq x \lstar \pvar v \doteq v \lstar x \mapsto \pvar k, l, r \lstar \bst{r, \tree_r} \lstar \bst{l, \tree_l} \lstar \tree \doteq ((x, \pvar k), \tree_l, \tree_r) \lstar \\ \tree_l < \pvar k \lstar \pvar k < \tree_r \star \predd{\bstroot}{x, \tree} \lstar \pvar y \doteq \nil \lstar \alpha \doteq \predd{height}{\tree}} \\
	\tab\tab \pifelses{\pvar k = \pvar v} \\
	\tab\tab\tab \specline{\exsts {l, r, \tree_l, \tree_r} \pvar x \doteq x \lstar \pvar v \doteq v \lstar x \mapsto \pvar k, l, r \lstar \bst{r, \tree_r} \lstar \bst{l, \tree_l} \lstar \tree \doteq ((x, \pvar k), \tree_l, \tree_r) \lstar \\ \tree_l < \pvar k \lstar \pvar k < \tree_r \star \predd{\bstroot}{x, \tree} \lstar \shade{\pvar k \doteq \pvar v} \lstar \pvar y \doteq \nil \lstar \alpha \doteq \predd{height}{\tree}} \\
	\tab\tab\tab \pskip \\
	\tab\tab\tab \specline{Q_2: \exsts {l, r, \tree_l, \tree_r} \pvar x \doteq x \lstar \pvar v \doteq v \lstar x \mapsto \pvar k, l, r \lstar \bst{r, \tree_r} \lstar \bst{l, \tree_l} \lstar \tree \doteq ((x, \pvar k), \tree_l, \tree_r) \lstar \\ \tree_l < \pvar k \lstar \pvar k < \tree_r \star \predd{\bstroot}{x, \tree} \lstar \pvar k \doteq \pvar v \lstar \pvar y \doteq \nil \lstar \alpha \doteq \predd{height}{\tree}} \\
	\tab\tab \pifelsem \\
	\tab\tab\tab \specline{\exsts {l, r, \tree_l, \tree_r} \pvar x \doteq x \lstar \pvar v \doteq v \lstar x \mapsto \pvar k, l, r \lstar \bst{r, \tree_r} \lstar \bst{l, \tree_l} \lstar \tree \doteq ((x, \pvar k), \tree_l, \tree_r) \lstar \\ \tree_l < \pvar k \lstar \pvar k < \tree_r \star \predd{\bstroot}{x, \tree}  \lstar \shade{\pvar k \dotneq \pvar v} \lstar \pvar y \doteq \nil \lstar \alpha \doteq \predd{height}{\tree}} \\
	\tab\tab\tab \pifelses{\pvar v<\pvar k} \\
	\tab\tab\tab\tab \specline{\exsts {l, r, \tree_l, \tree_r} \pvar x \doteq x \lstar \pvar v \doteq v \lstar x \mapsto \pvar k, l, r \lstar \bst{r, \tree_r} \lstar \bst{l, \tree_l} \lstar \tree \doteq ((x, \pvar k), \tree_l, \tree_r) \lstar \\ \tree_l < \pvar k \lstar \pvar k < \tree_r \star \predd{\bstroot}{x, \tree}  \lstar \shade{\pvar v \dotlt \pvar k} \lstar \pvar y \doteq \nil \lstar \alpha \doteq \predd{height}{\tree}} \\
	\tab\tab\tab\tab \pderef{\pvar y}{\pvar x+1}; \\
	\tab\tab\tab\tab \specline{\exsts {l, r, \tree_l, \tree_r} \pvar x \doteq x \lstar \pvar v \doteq v \lstar x \mapsto \pvar k, l, r \lstar \bst{r, \tree_r} \lstar \shade{\pvar y \doteq l \lstar \bst{l, \tree_l} \lstar \alpha - 1 \doteq \predd{height}{\tree_l}} \lstar \\ \tree \doteq ((x, \pvar k), \tree_l, \tree_r) \lstar  \tree_l < \pvar k \lstar \pvar k < \tree_r \star \predd{\bstroot}{x, \tree}  \lstar \pvar v \dotlt \pvar k \lstar \alpha \doteq \predd{height}{\tree}} \\
	\tab\tab\tab\tab \passign{\pvar y}{\mathtt{BSTInsert}(\pvar v,\pvar y)}; \\
	\tab\tab\tab\tab \specline{\exsts {l, r, \tree_l, \tree_r} \pvar x \doteq x \lstar \pvar v \doteq v \lstar x \mapsto \pvar k, l, r \lstar \bst{r, \tree_r} \lstar \\ \shade{(\exists x'.~ \bst{\pvar{y}, \predd{BSTInsert}{\tree_l, (x', v)}} \lstar \predd{\bstroot}{l, \tree_l} \lstar \alpha - 1 \doteq \predd{height}{\tree_l})} \lstar \\ \tree \doteq ((x, \pvar k), \tree_l, \tree_r) \lstar  \tree_l < \pvar k \lstar \pvar k < \tree_r \star \predd{\bstroot}{x, \tree}  \lstar \pvar v \dotlt \pvar k \lstar \alpha \doteq \predd{height}{\tree}} \\
	\tab\tab\tab\tab \pmutate{\pvar x+1}{\pvar y} \\
	\tab\tab\tab\tab \specline{\exsts {l, r, \tree_l, \tree_r} \pvar x \doteq x \lstar \pvar v \doteq v \lstar x \mapsto \pvar k, \shade{\pvar y}, r \lstar \bst{r, \tree_r} \lstar \\ {(\exists x'.~ \bst{\pvar{y}, \predd{BSTInsert}{\tree_l, (x', v)}} \lstar \predd{\bstroot}{l, \tree_l} \lstar \alpha - 1 \doteq \predd{height}{\tree_l})} \lstar \\ \tree \doteq ((x, \pvar k), \tree_l, \tree_r) \lstar  \tree_l < \pvar k \lstar \pvar k < \tree_r \star \predd{\bstroot}{x, \tree}  \lstar \pvar v \dotlt \pvar k \lstar \alpha \doteq \predd{height}{\tree}} \\
	\tab\tab\tab\tab \text{\polish{// Forget $\alpha - 1 = \predd{height}{\tree_l}$, as it can be entailed}} \\
	\tab\tab\tab\tab \text{\polish{// Forget $\exists l.~\predd{\bstroot}{l, \tree_l}$, as it is a tautology}} \\
	\tab\tab\tab\tab \text{\polish{// Re-instate the existential $l$ it as the value of $\pvar y$}} \\
	\tab\tab\tab\tab \specline{Q_3: \exsts {x', l, r, \tree_l, \tree_r} \pvar x \doteq x \lstar \pvar y \doteq l \lstar \pvar v \doteq v \lstar x \mapsto \pvar k, l, r \lstar  \bst{l, \predd{BSTInsert}{\tree_l, (x', v)}} \lstar \bst{r, \tree_r} \lstar \\  \tree \doteq ((x, \pvar k), \tree_l, \tree_r) \lstar  \tree_l < \pvar k \lstar \pvar k < \tree_r \star \predd{\bstroot}{x, \tree}  \lstar \pvar v \dotlt \pvar k \lstar \alpha \doteq \predd{height}{\tree}} \\
	\tab\tab\tab \pifelsem \\
	\tab\tab\tab\tab \specline{\exsts {l, r, \tree_l, \tree_r} \pvar x \doteq x \lstar \pvar v \doteq v \lstar x \mapsto \pvar k, l, r \lstar \bst{r, \tree_r} \lstar \bst{l, \tree_l} \lstar \tree \doteq ((x, \pvar k), \tree_l, \tree_r) \lstar \\ \tree_l < \pvar k \lstar \pvar k < \tree_r \star \predd{\bstroot}{x, \tree}  \lstar \shade{\pvar v \dotgt \pvar k} \lstar \pvar y \doteq \nil \lstar \alpha \doteq \predd{height}{\tree}} \\
	\tab\tab\tab\tab \pderef{\pvar y}{\pvar x+2};  \passign{\pvar y}{\mathtt{BSTInsert}(\pvar v, \pvar y)};  \pmutate{\pvar x+2}{\pvar y} \\
\end{array}
\]
\[
\small
\begin{array}{l}
	\tab\tab\tab\tab \text{\polish{// Analogously to previous branch}} \\
		\tab\tab\tab\tab \specline{Q_4: \exsts {x', l, r, \tree_l, \tree_r} \pvar x \doteq x \lstar \pvar y \doteq r \lstar \pvar v \doteq v \lstar x \mapsto \pvar k, l, r \lstar \bst{l, \tree_l} \lstar \\
                                           \tab\tab \bst{r, \predd{BSTInsert}{\tree_r, (x', v)}}  \lstar \predd{\bstroot}{x_r, \tree_r} \lstar \tree \doteq ((x, \pvar k), \tree_l, \tree_r) \lstar  \tree_l < \pvar k \lstar \pvar k < \tree_r \star \\
                                           \tab\tab \predd{\bstroot}{x, \tree} \lstar \pvar v \dotgt \pvar k \lstar \alpha \doteq \predd{height}{\tree}} \\
	\tab\tab\tab \pifelsee \\
	\tab\tab \pifelsee \\
	\tab \preturn{\pvar x} \\
	\tab \specline{\exists x', v', k', y'.~((Q_1 \lor Q_2 \lor Q_3 \lor Q_4) \lstar \pvar{ret} \doteq \pvar x)[x',v',k',y'/\pvar x, \pvar v, \pvar k, \pvar y]} \\
	\tab \specline{(\exists x', v', k', y'.~(Q_1 \lstar \pvar{ret} \doteq \pvar x)[x',v',k',y'/\pvar x, \pvar v, \pvar k, \pvar y])~\lor \\
	               (\exists x', v', k', y'.~(Q_2 \lstar \pvar{ret} \doteq \pvar x)[x',v',k',y'/\pvar x, \pvar v, \pvar k, \pvar y])~\lor \\
	               (\exists x', v', k', y'.~(Q_3 \lstar \pvar{ret} \doteq \pvar x)[x',v',k',y'/\pvar x, \pvar v, \pvar k, \pvar y])~\lor \\
	               (\exists x', v', k', y'.~(Q_4 \lstar \pvar{ret} \doteq \pvar x)[x',v',k',y'/\pvar x, \pvar v, \pvar k, \pvar y])} \\
	\tab \specline{(Q(\alpha) \lstar  \tree \doteq \emptytree)~\lor \\
	               (Q(\alpha) \lstar (\exists \tree_l, \tree_r.~\tree \doteq ((x, v), \tree_l, \tree_r)))~\lor \\
	               (Q(\alpha) \lstar (\exists k, \tree_l, \tree_r.~\tree \doteq ((x, k), \tree_l, \tree_r) \lstar v \dotlt k))~\lor \\
	               (Q(\alpha) \lstar (\exists k, \tree_l, \tree_r.~\tree \doteq ((x, k), \tree_l, \tree_r) \lstar v \dotgt k))}  \\
	\tab \text{\polish{// The four additional assertions cover the space of possibilities }} \\
	\tab \specline{Q(\alpha)}  \\
	\} \\
	\specline{Q(\alpha)}  \\
\end{array}
\]

\[
\begin{array}{r@{~}l}
& \exists x', v', k', y'.~(Q_1 \lstar \pvar{ret} \doteq \pvar x)[x',v',k',y'/\pvar x, \pvar v, \pvar k, \pvar y] \\
\Leftrightarrow & \exists ...~\bst{x', \predd{BSTInsert}{\tree, (x', v)}} \lstar \predd{\bstroot}{x, \tree} \lstar \alpha \doteq \predd{height}{\tree} \lstar \tree \doteq \emptytree \lstar v' \doteq v \lstar \\
& \tab\tab\tab\tab k', y' \doteq \nil \lstar \pvar{ret} \doteq x' \\
& \text{\polish{// Drop $v'$, $k'$, $y'$ }} \\
\Leftrightarrow & \exists x'.~\bst{x', \predd{BSTInsert}{\tree, (x', v)}} \lstar \predd{\bstroot}{x, \tree} \lstar \alpha \doteq \predd{height}{\tree} \lstar \pvar{ret} \doteq x' \lstar \tree \doteq \emptytree \\
& \text{\polish{// Substitute $\pvar{ret}$ for $x'$ in first parameter of BST}} \\
& \text{\polish{// Lose $\pvar{ret} \doteq x'$, it is entailed given definition of BSTInsert when $\tree \doteq \emptytree$ }} \\
\Leftrightarrow & \exists x'.~\bst{\pvar{ret}, \predd{BSTInsert}{\tree, (x', v)}} \lstar \predd{\bstroot}{x, \tree} \lstar \alpha \doteq \predd{height}{\tree} \lstar \tree \doteq \emptytree \\
\Leftrightarrow & Q(\alpha) \lstar \tree \doteq \emptytree
\end{array}
\]

\[
\begin{array}{r@{~}l}
& \exists x', v', k', y'.~(Q_2 \lstar \pvar{ret} \doteq \pvar x)[x',v',k',y'/\pvar x, \pvar v, \pvar k, \pvar y] \\
\Leftrightarrow & \exists x', v', k', y', l, r, \tree_l, \tree_r. \\
& x' \doteq x \lstar v' \doteq v \lstar x \mapsto k', l, r \lstar \bst{r, \tree_r} \lstar \bst{l, \tree_l} \lstar \tree \doteq ((x, k'), \tree_l, \tree_r) \lstar \\ & \tree_l < k' \lstar k' < \tree_r \star \predd{\bstroot}{x, \tree} \lstar k' \doteq v' \lstar y' \doteq \nil \lstar \alpha \doteq \predd{height}{\tree} \lstar \pvar{ret} \doteq x' \\
\Leftrightarrow & \exists x', k', l, r, \tree_l, \tree_r. \\
& \text{\polish{// Drop $v'$ (keeping $k' \doteq v$), $y'$ }} \\
& \text{\polish{// Add $Q_{\mathit{aux}}: (\exists \tree_l, \tree_r.~\tree \doteq ((x, v), \tree_l, \tree_r))$ (it can be entailed)}} \\
& \text{\polish{// Substitute $x$ for $x'$ in the unfolding of BST}} \\
& \text{\polish{// Forget $x' \doteq x$ and $k' \doteq v$, as they can be entailed given rest and $Q_{\mathit{aux}}$}} \\
& x' \mapsto k', l, r \lstar \bst{r, \tree_r} \lstar \bst{l, \tree_l} \lstar \tree \doteq ((x', k'), \tree_l, \tree_r) \lstar \\ & \tree_l < k' \lstar k' < \tree_r \star \predd{\bstroot}{x, \tree} \lstar \alpha \doteq \predd{height}{\tree} \lstar \pvar{ret} \doteq x' \lstar (\exists \tree_l, \tree_r.~\tree \doteq ((x, v), \tree_l, \tree_r))  \\
& \text{\polish{// Fold $\bst{x', \tree}$; by definition, $\predd{BSTInsert}{\tree, (x', v)} = \tree$ when $Q_\mathit{aux}$ holds}} \\
\Leftrightarrow & \exists x'.~\bst{x', \predd{BSTInsert}{\tree, (x', v)}} \lstar \predd{\bstroot}{x, \tree} \lstar \alpha \doteq \predd{height}{\tree} \lstar \pvar{ret} \doteq x' \lstar \\
& \tab\tab\tab\tab (\exists \tree_l, \tree_r.~\tree \doteq ((x, v), \tree_l, \tree_r))  \\
\Leftrightarrow & Q(\alpha) \lstar (\exists \tree_l, \tree_r.~\tree \doteq ((x, v), \tree_l, \tree_r))
\end{array}
\]

\[
\begin{array}{r@{~}l}
& \exists x', v', k', y'.~(Q_3 \lstar \pvar{ret} \doteq \pvar x)[x',v',k',y'/\pvar x, \pvar v, \pvar k, \pvar y] \\
\Leftrightarrow & \exists x', v', k', y', l, r, \tree_l, \tree_r, x''. \\
&  x' \doteq x \lstar y' \doteq l \lstar v' \doteq v \lstar x \mapsto k', l, r \lstar  \bst{l, \predd{BSTInsert}{\tree_l, (x'', v)}} \lstar \bst{r, \tree_r} \lstar \\
& \tree \doteq ((x, k'), \tree_l, \tree_r) \lstar  \tree_l < k' \lstar k' < \tree_r \star \predd{\bstroot}{x, \tree}  \lstar  v' \dotlt k' \lstar \alpha \doteq \predd{height}{\tree} \lstar \pvar{ret} \doteq x' \\
\Leftrightarrow & \exists x', x'', k', l, r, \tree_l, \tree_r. \\
& \text{\polish{// Drop $v'$, $y'$ }} \\
& \text{\polish{// Add $Q_{\mathit{aux}}: (\exists k, \tree_l, \tree_r.~\tree \doteq ((x, k), \tree_l, \tree_r) \lstar v \dotlt k)$ (it can be entailed)}} \\
& \text{\polish{// Substitute $x$ for $x'$ in the unfolding of BST}} \\
& \text{\polish{// Forget $x' \doteq x$ as it can be entailed given rest and $Q_{\mathit{aux}}$}} \\
&  x' \mapsto k', l, r \lstar  \bst{l, \predd{BSTInsert}{\tree_l, (x'', v)}} \lstar \bst{r, \tree_r} \lstar \tree \doteq ((x', k'), \tree_l, \tree_r) \lstar \\
&   \tree_l < k' \lstar k' < \tree_r \star \predd{\bstroot}{x, \tree}  \lstar  v \dotlt k' \lstar \alpha \doteq \predd{height}{\tree} \lstar \pvar{ret} \doteq x' \lstar \\
&   (\exists k, \tree_l, \tree_r.~\tree \doteq ((x, k), \tree_l, \tree_r) \lstar v \dotlt k)\\
\Leftrightarrow & \exists x', x'', k', l, r, \tree_l, \tree_r. \\
& \text{\polish{// Set up folding using the definition of BSTInsert and the present inequalities}} \\
& \text{\polish{// Drop $v < k'$ as it can be entailed given rest and $Q_{\mathit{aux}}$}} \\
&  x' \mapsto k', l, r \lstar  \bst{l, \predd{BSTInsert}{\tree_l, (x'', v)}} \lstar \bst{r, \tree_r} \lstar \\
&  \predd{BSTInsert}{\tree, (x'', v)} \doteq ((x, k), \predd{BSTInsert}{\tree_l, (x'', v)}, \tree_r) \lstar \\
&   \predd{BSTInsert}{\tree_l, (x'', v)} < k' \lstar k' < \tree_r \lstar \predd{\bstroot}{x, \tree}  \lstar \alpha \doteq \predd{height}{\tree} \lstar \\
& \pvar{ret} \doteq x' \lstar (\exists k, \tree_l, \tree_r.~\tree \doteq ((x, k), \tree_l, \tree_r) \lstar v \dotlt k)\\
& \text{\polish{// Fold }} \\
\Leftrightarrow & \exists x', x''.~\bst{x', \predd{BSTInsert}{\tree, (x'', v)}} \lstar \predd{\bstroot}{x, \tree}  \lstar \alpha \doteq \predd{height}{\tree} \lstar \pvar{ret} \doteq x' \lstar \\ & (\exists k, \tree_l, \tree_r.~\tree \doteq ((x, k), \tree_l, \tree_r) \lstar v \dotlt k) \\
& \text{\polish{// Substitute $\pvar{ret}$ for $x'$, drop $x'$, rename $x''$ as $x'$}} \\
\Leftrightarrow & \exists x'.~\bst{\pvar{ret}, \predd{BSTInsert}{\tree, (x', v)}} \lstar \predd{\bstroot}{x, \tree} \lstar \alpha \doteq \predd{height}{\tree} \lstar \\
& (\exists k, \tree_l, \tree_r.~\tree \doteq ((x, k), \tree_l, \tree_r) \lstar v \dotlt k)  \\
\Leftrightarrow & Q(\alpha) \lstar (\exists k, \tree_l, \tree_r.~\tree \doteq ((x, k), \tree_l, \tree_r) \lstar v \dotlt k)
\end{array}
\]

The reasoning for $Q_4$ is analogous to that for $Q_3$, obtaining
\[
\begin{array}{r@{~}l}
& \exists x', v', k', y'.~(Q_3 \lstar \pvar{ret} \doteq \pvar x)[x',v',k',y'/\pvar x, \pvar v, \pvar k, \pvar y] \\
\Leftrightarrow & \exists x', v', k', y', l, r, \tree_l, \tree_r, x''. \\
&  x' \doteq x \lstar y' \doteq l \lstar v' \doteq v \lstar x \mapsto k', l, r \lstar  \bst{l, \predd{BSTInsert}{\tree_l, (x'', v)}} \lstar \bst{r, \tree_r} \lstar \\
& \tree \doteq ((x, k'), \tree_l, \tree_r) \lstar  \tree_l < k' \lstar k' < \tree_r \star \predd{\bstroot}{x, \tree}  \lstar  v' \dotlt k' \lstar \alpha \doteq \predd{height}{\tree} \lstar \pvar{ret} \doteq x' \\
\Leftrightarrow & ... \\
\Leftrightarrow & Q(\alpha) \lstar (\exists k, \tree_l, \tree_r.~\tree \doteq ((x, k), \tree_l, \tree_r) \lstar v \dotgt k)
\end{array}
\]

\pagebreak

\newpage
\section{SL Proof of List-length Algorithm}%
\label{apdx:sl-list-length}

\[
\begin{array}{l}
\hspace*{-0.46cm}
\text{\polish{// SL proofs do not require the measure $\alpha$}} \\
\hspace*{-0.46cm}\Gamma\vdash
\specline{\pvar x \doteq x \lstar \llist{x, n} \lstar \pvar r \doteq \nil } \\
\pifelses{ \pvar{x} = \nil } \\
\tab\specline{\pvar x \doteq x \lstar \llist{x, n} \lstar \pvar r\doteq\nil \lstar {\pvar x \doteq \nil}} \\ 
\tab\text{\polish{// Forward consequence: From $\pvar x \doteq x$, $\pvar x \doteq \nil$, and $\llist{x, n}$, learn that $n = 0$. Forget $\pvar x$.}} \\
\tab\specline{\llist{x, n} \lstar \pvar r\doteq\nil \lstar n \doteq 0} \\ 
\tab\passign{\pvar{r}}{0} \\
\tab\specline{\llist{x, n}  \lstar \pvar r \doteq 0 \lstar n \doteq 0} \\ 
\tab\text{\polish{// Forward consequence: From $\pvar r \doteq 0$ and $n \doteq 0$, learn that $\pvar r \doteq n$. Forget $\pvar r = 0$ and $\pvar n = 0$.}} \\
\tab\specline{\llist{x, n} \lstar \pvar r \doteq n } \\
\pifelsem \\
\tab\specline{\pvar x \doteq x \lstar {\llist{x,n}} \lstar \pvar r\doteq\nil \lstar {\pvar x \dotneq \nil}} \\ 
\tab\text{\polish{// Forward consequence: Unfold $\llist{x, n}$:}} \\
\tab\text{\polish{// $\models \llist{\pvar x, n} \lstar \pvar x \dotneq \nil \Rightarrow \exists v, x'.~\pvar x \mapsto v, x' \lstar \llist{x', n-1}$.}} \\
\tab\text{\polish{// Note, as with the ISL proof, the equivalence in the unfold step has been replaced by}} \\
\tab\text{\polish{// an implication, however, the step remains unchanged modulo this change.}} \\
\tab\specline{{\exists v, x'}.~\pvar x \doteq x \lstar  {x \mapsto v, x' \lstar \llist{x', n-1}}\lstar \pvar r \doteq \nil} \\ 
\tab\passign{\pvar{x}}{[\pvar x + 1]}; \\
\tab\specline{\exists v, x'.~{\pvar x \doteq x'} \lstar  x \mapsto v, x' \lstar {\llist{x', n-1} } \lstar \pvar r \doteq \nil} \\ 
\tab\passign{\pvar{r}}{\mathtt{LLen}(\pvar x)}; \\
\tab\specline{\shade{\exists v, x'}.~\pvar x \doteq x' \lstar \shade{x \mapsto v, x' \lstar \llist{x', n-1}} \lstar {\pvar r \doteq n -1}} \\ 
\tab\text{\polish{// Forward consequence: Forget $\pvar x$ and fold $\llist{x, n}$:}} \\
\tab\text{\polish{// $\models \exists v, x'.~\pvar x \mapsto v, x' \lstar \llist{x', n-1} \Rightarrow \llist{\pvar x, n}$.}} \\
\tab\specline{\shade{\llist{x, n}} \lstar{\pvar r \doteq n -1}} \\ 
\tab\passign{\pvar{r}}{\pvar r + 1} \\
\tab\specline{{\llist{x, n}} \lstar{\pvar r \doteq n}} \\ 
\pifelsee\\
\specline{\llist{x, n} \lstar{\pvar r \doteq n}} 
\end{array}
\]

\end{document}